\definecolor{darkgreen}{rgb}{0.0,0.5,0.0}
\begin{document}
\renewcommand{\log}{\ln}
\newcommand{\pn}{\text{M}}
\newcommand{\pnm}{m}
\newcommand{\bd}[1]{\bm{#1}}
\newcommand{\pxa}{p_X^A}
\newcommand{\pxb}{s}
\newcommand{\pza}{p_Z^A}
\newcommand{\pzb}{1-s}
\newcommand{\etadeti}{\eta_{\text{det,i}}}
\newcommand{\nx}{n_X}
\newcommand{\nk}{n_K}
\newcommand{\errxobs}{e^{\text{obs}}_X}
\newcommand{\errzobs}{e^{\text{obs}}_{Z}}
\newcommand{\errzobsb}{\bd{e}^{\textbf{obs}}_\textbf{Z}}
\newcommand{\errkobs}{{e}^{\text{key}}_{{X}}}

\newcommand{\errx}{\bd{e}^{\textbf{obs}}_{\textbf{X}}}
\newcommand{\errk}{\bd{e}^{\textbf{key}}_{\textbf{X}}}
\newcommand{\gserf}[1]{\gamma^{#1}_{serf}}
\newcommand{\epat}{\epsilon_{\text{AT}}}
\newcommand{\epatsingle}{\epsilon_{\text{AT-single}}}
\newcommand{\epata}{\epsilon_{\text{AT-a}}}
\newcommand{\epev}{\epsilon_{\text{EV}}}
\newcommand{\eppa}{\epsilon_{\text{PA}}}

\newcommand{\nkb}{\bd{\nk}}
\newcommand{\nxb}{\bd{\nx}}

\newcommand{\nmulti}{\bd{n_{>1}}}
\newcommand{\mcrateb}{\bd{e}_{\textbf{mc}}}
\newcommand{\mcrate}{{e}_{\text{mc}}}
\newcommand{\lec}{\lambda_{\text{EC}}}

\newcommand{\errks}{\bd{{e}_{{{(\bm{X},\textbf{\pn})}}}^{\textbf{key}}}}
\newcommand{\eppnt}{\epsilon_{\text{PNE}}}
\newcommand{\eppntb}{\epsilon_{\text{PNE}}}
\newcommand{\lm}{\lambda_{\min}}
\newcommand{\ntkb}{\boldsymbol{\tilde{n}_\textbf{K,1}}}
\newcommand{\Nxobs}{{N}_{X}^{\text{obs}}}
\newcommand{\Nxobsb}{\bd{N}_{\bm{X}}^{\textbf{obs}}}
\newcommand{\xnoeobssb}{\bd{{N}_{{X,1}}}^{\textbf{obs}}}
\newcommand{\xnoeobsmb}{\bd{N}_{{X},\bd{>1}}^{\textbf{obs}}}
\newcommand{\xnoeobs}{{N}_{X}^{\text{obs}}}
\newcommand{\bimp}{\mathcal{B}_{\text{imp}(a,\delta,q_Z)}}
\newcommand{\K}{\mathcal{B}_{\text{K}}}
\newcommand{\gmzero}{\gamma_0}
\newcommand{\epazua}{\epsilon_{\text{Az-a}}}
\newcommand{\epazub}{\epsilon_{\text{Az-b}}}
\newcommand{\epzero}{\epsilon_{0}}
\newcommand{\epatd}{\epsilon_{\text{AT-d}}}
\newcommand{\epats}{\epsilon_{\text{AT-s}}}
\newcommand{\epatt}{\epsilon_{\text{AT-t}}}

\newcommand{\etadet}{\eta_{\text{det}}}
\newcommand{\deltaeta}{\Delta_{{\eta}}}
\newcommand{\ddet}{d_{\text{det}}}
\newcommand{\deltadc}{\Delta_{\text{dc}}}

\newcommand{\nt}{\tilde{n}}
\newcommand{\idd}{\mathbb{I}}
\newcommand{\POVMZeq}{\Gamma_{(Z,=)}}
\newcommand{\POVMZneq}{\Gamma_{(Z,\neq)}}
\newcommand{\POVMZeqM}{\Gamma^{(\pn)}_{(Z,=)}}
\newcommand{\POVMZneqM}{\Gamma^{(\pn)}_{(Z,\neq)}}
\newcommand{\GZeq}{G^{(\pn)}_{(Z,=)}}
\newcommand{\GZneq}{G^{(\pn)}_{(Z,\neq)}}
\newcommand{\POVMXeq}{\Gamma_{(X,=)}}
\newcommand{\POVMXneq}{\Gamma_{(X,\neq)}}
\newcommand{\POVMXeqM}{\Gamma^{(\pn)}_{(X,=)}}
\newcommand{\POVMXneqM}{\Gamma^{(\pn)}_{(X,\neq)}}
\newcommand{\GXeq}{G^{(\pn)}_{(X,=)}}
\newcommand{\GXneq}{G^{(\pn)}_{(X,\neq)}}
\newcommand{\fx}{F_X}
\newcommand{\fz}{F_Z}
\newcommand{\fxone}{F_X^{(1)}}
\newcommand{\fzone}{F_Z^{(1)}}
\newcommand{\ftx}{{\tilde{F}^{(\pn)}_X}}
\newcommand{\ftz}{{\tilde{F}^{(\pn)}_Z}}
\newcommand{\ftxone}{{\tilde{F}^{(1)}_X}}
\newcommand{\ftzone}{{\tilde{F}^{(1)}_Z}}
\newcommand{\GtZeq}{\tilde{G}^{(\pn)}_{(Z,=)}}
\newcommand{\GtZneq}{\tilde{G}^{(\pn)}_{(Z,\neq)}}
\newcommand{\GtXeq}{\tilde{G}^{(\pn)}_{(X,=)}}
\newcommand{\GtXneq}{\tilde{G}^{(\pn)}_{(X,\neq)}}
\newcommand{\GtZeqone}{\tilde{G}^{(1)}_{(Z,=)}}
\newcommand{\GtZneqone}{\tilde{G}^{(1)}_{(Z,\neq)}}
\newcommand{\GtXeqone}{\tilde{G}^{(1)}_{(X,=)}}
\newcommand{\GtXneqone}{\tilde{G}^{(1)}_{(X,\neq)}}

\newcommand{\kb}[1]{\ketbra{#1}}
\newcommand{\POVMBXone}{\Gamma_{(X,1)}^{\text{B}}}
\newcommand{\POVMBXzero}{\Gamma_{(X,0)}^{\text{B}}}
\newcommand{\POVMBZone}{\Gamma_{(Z,1)}^{\text{B}}}
\newcommand{\POVMBZzero}{\Gamma_{(Z,0)}^{\text{B}}}
\newcommand{\POVMBXoneM}{\Gamma_{(X,1)}^{(\text{B},\pn)}}
\newcommand{\POVMBXzeroM}{\Gamma_{(X,0)}^{(\text{B},\pn)}}
\newcommand{\POVMBZoneM}{\Gamma_{(Z,1)}^{(\text{B},\pn)}}
\newcommand{\POVMBZzeroM}{\Gamma_{(Z,0)}^{(\text{B},\pn)}}
\newcommand{\POVMBXdc}{\Gamma_{(X,dc)}^{\text{B}}}
\newcommand{\POVMBZdc}{\Gamma_{(Z,dc)}^{\text{B}}}
\newcommand{\POVMBcc}{\Gamma_{\mathrm{cc}}^{\text{B}}}
\newcommand{\POVMBvac}{\Gamma_{\mathrm{nc}}^{\text{B}}}
\newcommand{\POVMBmc}{\Gamma_{\mathrm{mc}}^{\text{B}}}

\newcommand{\POVMAXone}{\Gamma_{(X,1)}^{\text{A}}}
\newcommand{\POVMAXzero}{\Gamma_{(X,0)}^{\text{A}}}
\newcommand{\POVMAZone}{\Gamma_{(Z,1)}^{\text{A}}}
\newcommand{\POVMAZzero}{\Gamma_{(Z,0)}^{\text{A}}}

\newcommand{\half}{\frac{1}{2}}
\newcommand{\event}[1]{\Omega(#1)}

\newcommand{\condistate}[1]{\rho_{|\event{#1}}}
\newcommand{\X}{\bm{X}}
\newcommand{\Z}{\bm{Z}}
\newcommand{\fserf}{f_\text{serf}}
\newcommand{\PiN}{\Pi_{\leq \pn}}
\newcommand{\Pign}{\Pi_{>\pn}}
\newcommand{\ngN}{\tilde{n}_{(>\pn)}}
\newcommand{\ngNb}{\bd{\tilde{n}_{(>\textbf{\pn})}}}
\newcommand{\nlambb}{\bd{\tilde{n}_{(\lm\Pi_{>\textbf{\pn}})}}}
\newcommand{\neb}{\bd{n_{mc}}}
\newcommand{\netb}{\bd{\tilde{n}_{\mathrm{mc}}}}
\newcommand{\num}{\bm{N}}
\newcommand{\condi}{|\Omega}
\newcommand{\nmc}{{n_{\mathrm{mc}}}}
\newcommand{\nmcb}{\bd{n_{\mathrm{mc}}}}
\newcommand{\nmct}{{\tilde{n}_{mc,>\text{\pn}}}}
\newcommand{\nmctb}{\bd{\tilde{n}_{(\mathrm{mc},>\textbf{\pn})}}}

\newcommand{\errxs}{\boldsymbol{{e}_{{\textbf{X,1}}}}}
\newcommand{\POVMBXones}{\Gamma_{(X,1)}^{\text{B},\leq1}}
\newcommand{\POVMBXzeros}{\Gamma_{(X,0)}^{\text{B},\leq1}}
\newcommand{\POVMBZones}{\Gamma_{(Z,1)}^{\text{B},\leq1}}
\newcommand{\POVMBZzeros}{\Gamma_{(Z,0)}^{\text{B},\leq1}}

\newcommand{\nks}{\tilde{n}_{{(K,1)}}}
\newcommand{\nxs}{\tilde{n}_{{(X,1)}}}
\newcommand{\nksb}{\bd{\tilde{n}}_{\bm{(K,1)}}}
\newcommand{\nxsb}{\bd{\tilde{n}}_{\bm{(X,1)}}}

\newcommand{\ngone}{n_{>1}}
\newcommand{\ngoneb}{\bd{\tilde{n}_{(>1)}}}
\newcommand{\V}{\bm{V}}

\newcommand{\Pigone}{\Pi_{>1}}
\newcommand{\binfunction}[3]{F(#1,#2,#3)}

\newcommand{\Nxpred}{\bar{N}_X}
%azuma
\newcommand{\rvx}{{\textbf{X}}_i}
\newcommand{\rvy}{{\textbf{Y}}_i}
\newcommand{\rvyt}{{\tilde{\textbf{Y}}}_i}
\newcommand{\rvm}{\textbf{M}_i}
\newcommand{\rvmt}{\tilde{\textbf{M}}_i}
\newcommand{\rvmn}{\textbf{M}_{{n}}}
\newcommand{\rvmtn}{\tilde{\textbf{M}}_{{n}}}
\newcommand{\rvE}[1]{\textbf{E}(#1)}
\newcommand{\Mhis}[1]{\textbf{M}_0^{#1}}
\newcommand{\mhis}[1]{m_0^{#1}}
\newcommand{\Mhist}[1]{\tilde{\textbf{M}}_0^{#1}}
\newcommand{\Xhis}[1]{\textbf{X}_1^{#1}}
\newcommand{\xhis}[1]{x_1^{#1}}
\newcommand{\tX}{\textbf{X}}

% imperfect proof operator
\newcommand{\Fonesc}{F^{(\pn)}_{\mathrm{sc}}}
\newcommand{\Fonescone}{F^{(1)}_{\mathrm{sc}}}
\newcommand{\Gonemc}{\Gamma^{(\pn)}_{\mathrm{mc}}}
\newcommand{\Gonenc}{\Gamma^{(\pn)}_{\mathrm{nc}}}
\newcommand{\nkzero}{\boldsymbol{\tilde{n}_{\bm{(K,0)}}}}
\newcommand{\Nkoneb}{\bd{\tilde{N}^{\textbf{key}}_{{(\bm{X},\textbf{\pn})}}}}
\newcommand{\Nkone}{{N}^{\text{key}}_{{(X,\pn)}}}
\newcommand{\Nxoneb}{\bd{\tilde{N}_{(\bm{X},\textbf{\pn})}}}
\newcommand{\Nxone}{{\tilde{N}}_{{(X,\pn)}}}
\newcommand{\nxone}{\tilde{n}_{{(X,\pn)}}}
\newcommand{\nxoneb}{\bd{\tilde{n}_{{(\bm{X},\textbf{\pn})}}}}
\newcommand{\nkone}{\tilde{n}_{{(K,\pn)}}}
\newcommand{\nkoneb}{\bd{\tilde{n}_{{(\bm{K},\textbf{\pn})}}}}
\newcommand{\Gzeromc}{\Gamma^{(0)}_{\mathrm{mc}}}
\newcommand{\Gzeronc}{\Gamma^{(0)}_{\mathrm{nc}}}
\newcommand{\GzeroXeq}{\Gamma^{(0)}_{({X,=})}}
\newcommand{\GzeroXneq}{\Gamma^{(0)}_{({X},{\neq})}}
\newcommand{\GzeroZeq}{\Gamma^{(0)}_{({Z,=})}}
\newcommand{\GzeroZneq}{\Gamma^{(0)}_{({Z},{\neq})}}
\newcommand{\GzeroZ}{\Gamma^{(0)}_{{Z}}}
\newcommand{\nonet}{\tilde{n}_{(\mathrm{sc},\pn)}}
\newcommand{\nonetb}{\bd{\tilde{n}_{(\mathrm{sc},\textbf{\pn})}}}
\newcommand{\nkgone}{\bd{\tilde{n}_{({K},>\textbf{\pn})}}}
\newcommand{\GoneXeq}{\Gamma^{(1)}_{(X,=)}}
\newcommand{\GoneXneq}{\Gamma^{(1)}_{({X},{\neq})}}
\newcommand{\GoneZeq}{\Gamma^{(1)}_{(Z,=)}}
\newcommand{\GoneZneq}{\Gamma^{(1)}_{({Z},{\neq})}}

\newcommand{\GMXeq}{\Gamma^{(\pn)}_{({X,=})}}
\newcommand{\GMXneq}{\Gamma^{(\pn)}_{({X},{\neq})}}
\newcommand{\GMZeq}{\Gamma^{(\pn)}_{({Z,=})}}
\newcommand{\GMZneq}{\Gamma^{(\pn)}_{({Z},{\neq})}}
%appendix
\newcommand{\etaxzero}{\eta_{(X,0)}}
\newcommand{\etaxone}{\eta_{(X,1)}}
\newcommand{\etaxmin}{\eta_{(X,\text{{min}})}}
\newcommand{\etaxmax}{\eta_{(X,\text{{max}})}}
\newcommand{\etazzero}{\eta_{(Z,0)}}
\newcommand{\etazone}{\eta_{(Z,1)}}
\newcommand{\etazmin}{\eta_{(Z,\text{{min}})}}
\newcommand{\etazmax}{\eta_{(Z,\text{{max}})}}
\newcommand{\etamax}{\eta_{\text{max}}}
\newcommand{\etamin}{\eta_{\text{min}}}

\newcommand{\dxzero}{d_{(X,0)}}
\newcommand{\dxone}{d_{(X,1)}}
\newcommand{\dxmin}{d_{(X,\text{min})}}
\newcommand{\dxmax}{d_{(X,\text{max})}}
\newcommand{\dzzero}{d_{(Z,0)}}
\newcommand{\dzone}{d_{(Z,1)}}
\newcommand{\dzmin}{d_{Z,\text{{min}}}}
\newcommand{\dzmax}{d_{Z,\text{{max}}}}

\newcommand{\dmax}{d_{\text{max}}}
\newcommand{\dmin}{d_{\text{min}}}

\newcommand{\etat}{\tilde{p}}
\newcommand{\infnorm}[1]{\norm{#1}_{\infty}}
\newcommand{\infnormlong}[1]{{\norm{#1}_{\infty}}}
\newcommand{\inorm}[1]{\norm{#1}_{\text{ind},\infty}}
\newcommand{\onenorm}[1]{\norm{#1}_{\text{ind},1}}
\newcommand{\sone}{s_1}
\newcommand{\stwo}{s_2}
\newcommand{\sthree}{s_2}

\newcommand{\POVMBXonehat}{\hat{\Gamma}_{(X,1)}^{\text{B}}}
\newcommand{\POVMBXzerohat}{\hat{\Gamma}_{(X,0)}^{\text{B}}}
\newcommand{\POVMBZonehat}{\hat{\Gamma}_{(Z,1)}^{\text{B}}}
\newcommand{\POVMBZzerohat}{\hat{\Gamma}_{(Z,0)}^{\text{B}}}

\newcommand{\POVMABXeq}{\Gamma_{X,=}^{\text{AB}}}
\newcommand{\POVMABXneq}{\Gamma_{X,\neq}^{\text{AB}}}
\newcommand{\POVMABXcon}{F_X}

\newcommand{\POVMABZeq}{\Gamma_{Z,=}^{\text{AB}}}
\newcommand{\POVMABZneq}{\Gamma_{Z,\neq}^{\text{AB}}}
\newcommand{\POVMABZcon}{F_Z}
\newcommand{\POVMABcon}{F^{(1)}_{sc}}
\newcommand{\POVMBX}{\Gamma_{(X,\text{con})}^{{(B,1)}}}
\newcommand{\POVMBZ}{\Gamma_{(Z,\text{con})}^{{(B,1)}}}
\newcommand{\POVMBcon}{\Gamma_{\text{con}}^{{(B,1)}}}
\newcommand{\POVMBconM}{\Gamma_{\text{con}}^{{(B,\pn)}}}
\newcommand{\POVMBXM}{\Gamma_{(X,\text{con})}^{{(B,\pn)}}}
\newcommand{\POVMBZM}{\Gamma_{(Z,\text{con})}^{{(B,\pn)}}}
\newcommand{\gmxe}{\Gamma_{X,=}}
\newcommand{\gmxne}{\Gamma_{X,\neq}}
\newcommand{\gmze}{\Gamma_{(Z,=)}}
\newcommand{\gmzne}{\Gamma_{(Z,\neq)}}
\newcommand{\gmx}{\Gamma_{X}}
\newcommand{\gmz}{\Gamma_{Z}}
\newcommand{\Gmc}{\Gamma_{\mathrm{mc}}}

\newcommand{\etacom}{\eta_{\text{min}}}
\newcommand{\Pimc}{\Pi_{\mathrm{mc}}}
\newcommand{\Pinc}{\Pi_{\mathrm{nc}}}
\newcommand{\Pisc}[1]{\Pi_{\mathrm{sc},#1}}

\newcommand{\rhoN}{\rho^{(N)}}
%decoy
\newcommand{\muk}{\mu_k}
\newcommand{\mukvec}{\mu_{\vec{k}}}
\newcommand{\errkss}{\boldsymbol{{e}_{(K,\textbf{\pn,single})}}^{\textbf{key}}}
\newcommand{\nxbs}{\bd{n_{(\bd{X},\textbf{single})}}}
\newcommand{\nkbs}{\bd{n_{(\bd{K},\textbf{single})}}}
\newcommand{\nkbss}{\bd{\tilde{n}}_{\textbf{(K,\pn,{single})}}}

\newcommand{\xnoeobss}{\bd{N_{(X,\textbf{,single})}^{\textbf{obs}}}}
\newcommand{\bmaxone}{\mathcal{B}^{\text{decoy}}_{\text{max-1}}}
\newcommand{\bminone}{\mathcal{B}^{\text{decoy}}_{\text{min-1}}}
\newcommand{\bminzero}{\mathcal{B}^{\text{decoy}}_{\text{min-0}}}
\newcommand{\nxmuvec}{\bd{n_{X,\mukvec}}}
\newcommand{\nxneqmuvec}{\bd{n_{X\neq,\mukvec}}}
\newcommand{\nkmuvec}{\bd{n_{K,\mukvec}}}
\newcommand{\nomuvec}{\bd{n_{O,\mukvec}}}

\newcommand{\nxmu}[1]{\bd{n_{X,\mu_{#1}}}}
\newcommand{\nxneqmu}[1]{\bd{n_{X\neq,\mu_{#1}}}}
\newcommand{\nkmu}[1]{\bd{n_{K,\mu_{#1}}}}
\newcommand{\nomu}[1]{\bd{n_{O,\mu_{#1}}}}
\newcommand{\nomuplus}[1]{\bd{n_{O,\mu_{#1}}^+}}
\newcommand{\nomuminus}[1]{\bd{n_{O,\mu_{#1}}^-}}
\newcommand{\no}{\bd{n_O}}

\newcommand{\be}{\mathcal{B}_e}
\newcommand{\bone}{\mathcal{B}_1}

\newcommand{\nxmuvecobs}{{n_{X,\mukvec}}}
\newcommand{\nxneqmuvecobs}{{n_{X\neq,\mukvec}}}
\newcommand{\nkmuvecobs}{{n_{K,\mukvec}}}
\newcommand{\nomuvecobs}{{n_{O,\mukvec}}}

\newcommand{\nkbssobs}{{\tilde{n}}_{({{K},\pn},\text{single})}}
\newcommand{\Evente}{\Omega_{(\nxmuvecobs,\nkmuvecobs,\nxneqmuvecobs,e_{Z}^{\text{obs}},\nmc,\nkbssobs)}}
\newcommand{\Se}{S_{(\nxmuvecobs,\nkmuvecobs,\nxneqmuvecobs,e_{Z}^{\text{obs}},\nmc)}}
\newcommand{\Hmindecoy}[1]{H_{\text{min}}^{\sqrt{\kappa{(\bar{\Omega})}}}(Z^{#1}|E^nC)_{\rho_{|\bar{\Omega}}}}
\newcommand{\Hmaxdecoy}[1]{H_{\text{max}}^{\sqrt{\kappa{(\bar{\Omega})}}}(X^{\nkbssobs}|{#1})_{\rho_{|\bar{\Omega}}}}
% coloring
\newcommand{\red}[1]{\textcolor{red}{#1}}
\newcommand{\blue}[1]{\textcolor{blue}{#1}}
\newcommand{\brown}[1]{\textcolor{brown}{#1}}
\newcommand{\black}[1]{\textcolor{black}{#1}}
\newcommand{\green}[1]{\textcolor{green}{#1}}

% correlated detector
\newcommand{\Kcor}{\mathcal{B}_{\text{rounds}}^{\text{cor}}}
\newcommand{\Fc}{F_{\text{click}}}
\newcommand{\Fcgm}{F^{(>1)}_{\text{click}}}
\newcommand{\Fcm}{F^{(1)}_{\text{click}}}
\newcommand{\FonescCor}{F^{(1,\text{keep})}_{\mathrm{sc}}}
\newcommand{\FonescCoruncor}{F^{(1)}_{\mathrm{sc}}}

\newcommand{\nmccor}{n_{\mathrm{mc}}}
\newcommand{\nxcor}{n_{X}}
\newcommand{\nkcor}{n_{K}}
\newcommand{\Nxobscor}{N_{X}^{\text{obs}}}
\newcommand{\errzobscor}{e^{\text{obs}}_{Z}}
\newcommand{\ncor}{n}
\newcommand{\ngNcor}{\tilde{n}_{(>1,\text{keep})}}
\newcommand{\ngNcorb}{\bm{\tilde{n}_{(>1,\textbf{keep})}}}

\newcommand{\nmcbcor}{\bm{n_{\mathrm{mc}}}}
\newcommand{\nxbcor}{\bm{n}_{\bm{X}}}
\newcommand{\nkbcor}{\bm{n}_{\bm{K}}}
\newcommand{\Nxobsbcor}{\bm{N_{X}^{\textbf{obs}}}}
\newcommand{\errzobsbcor}{e^{\text{obs}}_{Z}}

\newcommand{\ngNbcor}{\bd{\tilde{n}_{(>1,\textbf{keep,click})}}}
\newcommand{\ngNbcorobs}{{\tilde{n}_{(>1,\text{keep,click})}}}

\newcommand{\nkzerocor}{\bd{\tilde{n}}_{\bm{(K,0)}}}
\newcommand{\nkzerocorobs}{{\tilde{n}}_{{(K,0)}}}
\newcommand{\nxzerocorobs}{{\tilde{n}}_{{X},\text{0,keep}}}
\newcommand{\nzeromccorobs}{{\tilde{n}}_{{mc},\text{0,keep}}}
\newcommand{\nonecor}{\bd{\tilde{n}_{1,\textbf{keep,click}}}}
\newcommand{\nonecorobs}{{\tilde{n}_{1,\text{keep,click}}}}
\newcommand{\nmctbcor}{\bd{\tilde{n}_{(\mathrm{mc},>1)}}}
\newcommand{\nonesccor}{\tilde{n}_{(\mathrm{sc},1)}}
\newcommand{\ftxcor}{{\tilde{F}^{(1,\text{keep})}_X}}
\newcommand{\ftzcor}{{\tilde{F}^{(1,\text{keep})}_Z}}

\newcommand{\Nxonebcor}{\bd{\tilde{N}_{(\bm{X}\textbf{,1})}}}
\newcommand{\Nkonebcor}{\bd{\tilde{N}^{\textbf{key}}_{(\bm{X}\textbf{,1})}}}
\newcommand{\nkonebcor}{\bd{\tilde{n}}_{\bm{(K,1)}}}
\newcommand{\nkonecor}{{\tilde{n}}_{{(K,1)}}}
\newcommand{\nxonebcor}{\bd{\tilde{n}}_{\bm{(X,1)}}}

\newcommand{\netbcor}{\bm{\tilde{n}_{(\mathrm{mc},>1)}}}
\newcommand{\bimpcor}{\mathcal{B}^{\text{cor}}_{\text{imp}(a,\delta,q_Z)}}
\newcommand{\errkscor}{\bd{{e}_{{{(X,1)}}}^{\textbf{key}}}}

\newcommand{\sminentropy}[2]{H^{\sqrt{\kappa}}_{\text{min}}(#1|#2)}
\newcommand{\smaxentropy}[2]{H^{\sqrt{\kappa}}_{\text{max}}(#1|#2)}
\newcommand{\sminentropynew}[3]{H^{\sqrt{\kappa(#3)}}_{\text{min}}(#1|#2)}
\newcommand{\smaxentropynew}[3]{H^{\sqrt{\kappa(#3)}}_{\text{max}}(#1|#2)}
\newcommand{\filteredstate}{{\rho_{A^{n_{(\text{keep,click})}}B^{n_{(\text{keep,click})}}E^n}}_{|\Omega(n_{(\text{keep,click})})}}
\newcommand{\nxmuvecobscor}{{n_{X,\mukvec}}}
\newcommand{\nxneqmuvecobscor}{{n_{X\neq,\mukvec}}}
\newcommand{\nkmuvecobscor}{{n_{K,\mukvec}}}
\newcommand{\nxmuveccor}{\bm{n_{X,\mukvec}}}
\newcommand{\nxneqmuveccor}{\bm{n_{X\neq,\mukvec}}}
\newcommand{\nkmuveccor}{\bm{n_{K,\mukvec}}}
% source imp
\newcommand{\fcx}{\mathcal{F}_X}
\newcommand{\fcz}{\mathcal{F}_Z}
\newcommand{\Sindep}{S_{0,0}}
\newcommand{\Prindep}{\Pr_{\Sindep}}
\newcommand{\epssrc}{\varepsilon_\mathrm{ind}}\newcommand{\Eindep}{\mathcal{E}_{0,0}}
\newcommand{\epsATb}{\varepsilon_\mathrm{dep-1}}
\newcommand{\epsATc}{\varepsilon_\mathrm{dep-2}}
\newcommand{\Sdep}{S_{\delta_1,\delta_2}}
\newcommand{\Prdep}{\Pr_{\Sdep}}
\newcommand{\Edep}{\mathcal{E}_{\delta_1,\delta_2}}
\newcommand{\GtZeqcor}{\tilde{G}^{(1,\text{keep})}_{(Z,=)}}
\newcommand{\GtZneqcor}{\tilde{G}^{(1,\text{keep})}_{(Z,\neq)}}
\newcommand{\GtXeqcor}{\tilde{G}^{(1,\text{keep})}_{(X,=)}}
\newcommand{\GtXneqcor}{\tilde{G}^{(1,\text{keep})}_{(X,\neq)}}
%fix the textcircle thingy
\newcommand{\circletext}[1]{\textcircled{\raisebox{-0.9pt}{#1}}}
% some random stuff
\newcommand{\diag}{\text{diag}}
\theoremstyle{definition} 
\newtheorem{theorem}{Theorem}
\newtheorem{definition}[theorem]{Definition}
\newtheorem{lemma}{Lemma}
\newtheorem{corollary}{Corollary}[lemma]
\newtheorem{remark}{Remark}
\newtheorem{prot}{Protocol}
\crefname{section}{Sec.}{Sec.}
\crefname{remark}{Remark}{Remark}
\crefname{equation}{Eq.}{Eqs.}
\crefname{figure}{Fig.}{Figs.}
\crefname{lemma}{Lemma}{Lemmas}
\crefname{corollary}{Corollary}{Corollaries}
\crefname{appendix}{Appendix}{Appendix}
\crefname{theorem}{Theorem}{Theorems}
\crefname{enumi}{Step}{Step}

\title{Phase error estimation for passive detection setups with imperfections and memory effects}
\author{Zhiyao Wang}
\email{z2425wan@uwaterloo.ca}
\author{Devashish Tupkary}
\email{djtupkary@uwaterloo.ca}
\author{Shlok Nahar}
\email{sanahar@uwaterloo.ca}
% \author{Norbert Lütkenhaus}
\affiliation{Institute for Quantum Computing and Department of Physics and Astronomy,\\
University of Waterloo, Waterloo, Ontario, Canada, N2L 3G1}

\begin{abstract}
% We develop a generic framework to estimate phase error rate for quantum key distribution (QKD) systems with passive detection setups with memory effects. This framework can be used in proof techniques based on the entropic uncertainty relation (EUR) or phase error correction to prove security in the finite-size regime against coherent attack for passive QKD protocols. In particular, we prove security for the decoy-state BB84 passive protocol when the detectors are imperfect and exhibit memory effects. We also show that our proof is applicable when the detection setups are partially characterised. That is, the beam splitting ratio of the beam splitter, the efficiency and dark count rates of each detectors are only known in some ranges. Moreover, our proof can incorporate on-the-fly announcements of click/no-click outcomes on Bob's side. We show that our proof can be combined with proof regarding source imperfections in a modular manner if one considers imperfect detectors without memory effect. We also quantatively compute the key rate performance of decoy-state BB84 protocol with memory effects.
We develop a generic framework to bound the phase error rate for quantum key distribution (QKD) protocols using passive detection setups with imperfections and memory effects. This framework can be used in proof techniques based on the entropic uncertainty relation (EUR) or phase error correction, to prove security in the finite-size regime against coherent attacks. Our framework can incorporate on-the-fly announcements of click/no-click outcomes on Bob’s side. In the case of imperfections without memory effects, it can be combined with proofs addressing source imperfections in a modular manner. We apply our framework to compute key rates for the decoy-state BB84 protocol, when the beam splitting ratio, the detection efficiency, and dark counts of the detectors are only known to be within some ranges. We also compute key rates in the presence of memory effects in the detectors. In this case, our results allow for protocols to be run at higher repetition rates, resulting in a significant improvement in the secure key generation rate. 
\end{abstract}
\maketitle

\section{Introduction}
Prepare‑and‑measure quantum key distribution (QKD) systems are often preferred over measurement\--device\--independent (MDI) schemes \cite{Lo_2012} because they are simpler and easier to implement. However, they rely on detection modules that inevitably deviate from the ideal models assumed in security proofs. In particular, threshold detectors can exhibit memory effects such as afterpulsing and detector dead times, which introduce correlations between successive protocol rounds and invalidate the usual assumption of independent measurements across the rounds. We refer to these kinds of detectors as correlated detectors. Addressing such memory effects in any security proof is a crucial step towards enabling the protocol to run at high repetition rates.
Further, passive detection schemes are attractive for their simplicity and stability at high protocol repetition rates. However, certain proof techniques, such as those based on the entropic uncertainty relation (EUR) and phase error correction (PEC), are unable to prove security for passive protocols~\cite{bdr2} \footnote{These proof techniques can be used for idealised passive protocols where the basis-choice beam splitting ratio is set so that the protocol can be considered to be equivalent to an active protocol \cite{Fung_2011,Gittsovich_2014}. However, they are not robust to even infinitesimal deviations from this idealised setting.}. While other proof techniques based on the entropy accumulation theorems \cite{MEAT,MEATapplication,eat} and the postselection technique \cite{postselection,larsPS} can directly prove security for passive detection setups, none of them can currently address memory effects in detectors. Our primary contributions to QKD security proofs are two-fold:
\begin{enumerate}
    \item We incorporate certain types of detector memory effects into phase error rate based proof techniques. Prior to our work, this could not be done in \emph{any} proof technique \footnote{Note that \cite[Section V. B.]{shlok} describes a proof sketch to address detector memory effects via entropy accumulation-based proof techniques.}, and is a foundational advancement to the field.
    % We also incorporate detector imperfections within these proof techniques. 
    \item We develop a generic framework to prove security for passive protocols within the EUR-based and PEC-based proof techniques. This was already possible in postselection~\cite{postselection,larsPS} and entropy accumulation theorems (EAT)-based~\cite{MEAT,MEATapplication,eat} proof techniques, but constitutes a major advancement in the commonly used EUR/PEC-based proof techniques. 
    Indeed, this analysis forms the basis for our treatment of detector memory effects.
    
    % \footnote{There is a recent work handling passive detection setup in phase error correction approach~\cite{shun}. This work is independent of that work.}.
\end{enumerate}
In addition to the primary contributions of our work, we have a number of other important contributions:
\begin{enumerate}
    \item  Our analysis is applicable to imperfections that are only partially characterised. In order to obtain this result, we extend the method in Ref.~\cite{lars} to bound the probability that a state input into Bob's detection setup has few photons \footnote{In our work, we more accurately bound the number of multi-photon rounds input into Bob's detection setup.}. This is an important step to address imperfect detectors in other proof techniques as well \cite{shlok}.
    % It is used to handle partially characterized passive detectors. This tool is modular and can be adapted for use in post-selection and EAT-based proofs as well \cite{shlok}.
    \item Our analysis is compatible with on-the-fly announcements of click/no-click outcomes~\cite{iterativesifting,Pfister_2016}, a common feature in practical implementations aimed at reducing hardware memory requirements. 
    \item Our analysis for memoryless imperfect detectors is compatible with imperfect sources via the method in Ref.~\cite{deviceimp}. However, the analysis for detectors with memory effects is not currently compatible in the same way, as Ref.~\cite{deviceimp} requires additional assumptions on the detector side. That said, our analysis is modular—if these additional assumptions could be removed or relaxed, our results would remain applicable.
    \item Our analysis is performed in the single-mode setting, but can be easily extended to a simple multi-mode detector model.
\end{enumerate}

Recently, Ref.~\cite{shun} independently proposed a similar approach to this work, for phase error estimation for passive detection setups. Ref.~\cite{shun} focuses on the asymptotic regime and imposes several simplifying assumptions: Bob’s detectors are assumed to have identical efficiencies within each basis; dark count rates are taken to be the same across all four detectors; and the case of partially characterized detector imperfections is not addressed. In contrast, this work allows for distinct and partially characterized imperfections in each of the four detectors, is carried out in the finite-size regime, and supports detectors with memory effects as well as protocols involving on-the-fly announcements.

This work is organized as follows. In~\cref{sec:active vs passive}, we describe the major differences and challenges between active and passive detection setups when estimating phase error rate. In \cref{sec:basicprotdescription} we describe the QKD protocol we analyze in this work, which is suitably modified to incorporate decoy-states and memory effects in later sections. In~\cref{sec:subspace}, we describe an extension to compute the minimum multi-click probability for multi-photon input signals on Bob's side.
In~\cref{sec:phase error estimation}, we present a framework to bound the phase error rate in the decoy-state BB84 protocol with memoryless detector imperfections. In~\cref{sec:correlated detectors}, we extend this framework to the case of correlated imperfect detectors.
In~\cref{sec:results}, we present key rate calculations for both analyses. In~\cref{sec:extension}, we extend the memoryless detector analysis to incorporate imperfect sources. We also generalize the correlated detector analysis to include on-the-fly announcements and provide additional comments on the multi-mode detector model and side-channel attacks in~\cref{sec:extension}.

 \section{phase error Estimation in Active vs Passive Detection Setups}
\label{sec:active vs passive}

We will first highlight the difference between active and passive setups when using EUR-based or PEC-based proofs, which will help focus the rest of our discussion. These proof techniques proceed by reducing the task of proving security to that of upper bounding the ``phase error rate''. This can intuitively be understood as follows (see \cref{appendix:security} for a more formal discussion in the EUR framework): Consider a BB84 protocol where Alice generates her key string from rounds in which she measures in the $Z$-basis. Then, the phase errors are the errors between her bit string if she had instead measured her key generation rounds in the $X$-basis, and the string obtained by Bob performing \emph{some} measurement on his quantum systems. Importantly, Bob has some freedom in deciding this measurement and how exactly to map his measurement outcomes to a bit string that approximates Alice's $X$-basis string. A lower phase error rate leads to higher secrecy of the $Z$-basis string.

For active protocols, Bob actively chooses between the $X$ and $Z$ bases. Thus, it is natural for Bob to use his $X$-basis measurements outcomes \footnote{Here, there is some ambiguity in how to map double-clicks to bits. A common choice is to assign them randomly to $0$ or $1$.} to estimate Alice's (virtual) $X$-basis string. For an active detection setup, this estimation follows from standard sampling arguments \footnote{This analysis becomes slightly more involved in the presence of imperfect detectors \cite{devEUR}, but the underlying intuition remains the same.} such as Serfling's inequality \cite{serfling_probability_1974}, which relies on the fact that the choice of whether a round is used for testing ($X$-basis) or key generation ($Z$-basis) is made randomly and is trusted.

On the other hand, in passive protocols, Bob \emph{passively} chooses between the $X$ and $Z$ bases using a beam splitter (see \cref{diag:passivedetection}). Thus, the probability that a round is used for testing depends on the number of photons entering Bob's measurement setup. For example, in a perfect passive detection setup with beam splitting ratio $s$ to the $X$-basis and a signal containing $\pnm$ photons, the probability of obtaining a test ($X$-basis) round, conditioned on single-click and double-click outcomes, is  
    $p = \frac{s^{\pnm}}{s^{\pnm} + (1 - s)^{\pnm}}$.  
This means that an adversary can increase or decrease the testing fraction (depending on whether $s > \frac{1}{2}$ or not), and can vary it across different rounds. This makes it difficult to use the same sampling arguments as in the active case for estimating the phase error rate, since the probability of testing depends on the photon number (which is under Eve's control). This is the primary difficulty in bounding the phase error rate in passive detection setups, and the bulk of this work is devoted to addressing this sampling issue.

To circumvent this problem, we borrow intuition from the flag-state squasher technique \cite{fss} used in other 
security proofs~\cite{nicky,lars}, which uses multiple detector click 
outcomes to upper bound the probability of a multi-photon signal entering Bob's lab. Our approach first uses the cross-clicks and double-clicks to estimate photon numbers of pulses entering Bob's detection setup. Therefore, cross-click outcomes cannot be discarded in our approach. This, in turn, allows us to bound the phase error rate within the single-photon rounds, where the analysis becomes more tractable. (Note that in this work, we use the phrase `single-photon' or `multi-photon' pulse to refer to the number of photons entering Bob's detection setup, and not the number of photons leaving Alice's lab. When we wish to discuss the latter, we explicitly mention Alice.) 

Using standard properties of smooth min-entropy, the entropy of all key rounds can be lower bounded by the entropy of single-photon key rounds. Thus, we can apply the EUR statement along with the upper bound on the phase error rate in the single-photon rounds to prove security (see~\cref{appendix:security}). Thus,  we require the following two bounds to prove security:
\begin{enumerate}
    \item An upper bound on the phase error rate in the single-photon rounds;
    \item A lower bound on the number of key rounds in the single-photon rounds.
\end{enumerate}

We emphasize that although our key rate computations are carried out within the EUR framework, our sampling arguments can readily be applied to the PEC approach as well by assuming that all key generation rounds outside the single-photon rounds result in phase errors. 
\begin{figure}
    \centering
    \scalebox{1}{\begin{tikzpicture}
\tikzset{
  pics/detector/.style args={#1,#2,#3}{
    code={
      \draw[line width=0.5mm] (0,-1) -- (0,1);
      % Draw small horizontal lines
      \draw[line width=0.5mm] (0,1) -- ++(0.1,0);
      \draw[line width=0.5mm] (0,-1) -- ++(0.1,0);
      \draw[line width=0.5mm] (0.1,1) arc (90:-90:1);
      \node[#3, align=center] (#1) at (0.5,0) {#2};
    }
  }
}
\tikzset{
  pics/beamsplitter/.style args={#1,#2,#3}{
    code={
    % Define coordinates
    \coordinate (A) at (0.3,0.3);
    \coordinate (B) at (-0.3,-0.3);
    \coordinate (#1) at (0,0);
    % Draw line
    \draw[line width=0.3mm] (A) -- (B);
    \node[align=center,font=\fontsize{7}{7}\selectfont] at #3 {#2};
    }
  }
}
\tikzset{
  pics/pulse/.style args = {#1}{
    code ={
    \draw[#1] plot[smooth,tension=1] coordinates {(0,0) (0.3,0.2) (0.5,1) (0.7,0.2) (1,0)};
    }
  }
}
\tikzstyle{process} = [rectangle, line width=0.3mm, minimum width=1cm, minimum height=1cm, text centered, text width=1cm, draw=black]

% 50/50 BS and HV-basis measurement
\pic{beamsplitter={5050,(1-s)/s,(0,-0.5)}};
\pic[above of = 5050, yshift = 2cm]{beamsplitter={HV,PBS,(-0.5,0)}};
\pic [right of = HV,xshift = 1cm,scale = 0.6]{detector={H, $H$, black}};
\pic [above of = HV,yshift = 1cm,scale = 0.6,rotate=90]{detector={V, $V$, black}};
% Lines
\draw (-4,0) -- (5050);
\draw (5050) -- (HV);
\draw (HV) -- ([xshift = -5cm]H);
\draw (HV) -- ([yshift = -5cm]V);

% AD-basis measurement
\node[process,right of = 5050,xshift = 2cm](PolRot){Pol.\\Rot.};
\pic[right of = PolRot, xshift = 2cm]{beamsplitter={AD,PBS,(0,-0.5)}};
\pic[above of = AD, yshift = 1cm, ,scale = 0.6, rotate = 90]{detector = {A,$A$, black}};
\pic[right of = AD, xshift = 1cm, scale = 0.6]{detector = {D,$D$, black}};
% Lines
\draw (5050) -- (PolRot);
\draw (PolRot) -- (AD);
\draw (AD) -- ([xshift = 0cm]A);
\draw (AD) -- ([yshift = 0cm]D);

% Pulses
\pic at (-3,0) {pulse = {red}};
\pic  at (-2.9,0) {pulse = {blue}};

% loss and dark counts
\node[right of = H,xshift=0.2cm]{\((\eta_1,d_1)\)};
\node[right of = V,xshift=0.5cm]{\((\eta_2,d_2)\)};
\node[right of = D,xshift=0.5cm]{\((\eta_3,d_3)\)};
\node[right of = A,xshift=0.3cm]{\((\eta_4,d_4)\)};

\end{tikzpicture}}
    \caption{Passive detection setups on Bob's side. The beam splitter has a splitting ratio $s$. Each of the four detectors has a different efficiency $\eta_i$ and dark count rate $d_i$. The parameters $s$, $\eta_i$, and $d_i$ are only partially characterized: specifically, $\eta_i \in [\eta^l_i, \eta^u_i]$, $d_i \in [d^l_i, d^u_i]$, and the beam splitting ratio, denoted $\pxb$, lies within the interval $[s^* - \theta, s^* + \theta]$. Loss can be modeled as a beam splitter sitting in front of a perfect threshold detector. Dark counts can be modeled as a post-processing map applied to the outcomes obtained without dark counts.}
    \label{diag:passivedetection}
\end{figure}
\section{Subspace estimation for imperfect passive detectors}\label{sec:subspace}
A crucial step in the approach discussed in \cref{sec:active vs passive} is to connect multi-click outcomes (including cross-clicks and double clicks) to the number of multi-photon signals. Similar to the analysis required for the application of the flag-state squasher technique, this proceeds by lower bounding the minimum eigenvalue of Bob's multi-click POVM element in the multi-photon subspace. Following past work \cite{lars}, we refer to this as “subspace estimation”. Ref.~\cite{lars} provides a method to compute this minimum eigenvalue for generic passive detection setups with loss, when the loss values are known \emph{exactly}.

In particular, the analysis \cite[Theorem 3]{lars} first provides a subspace estimation bound for lossless detection setups. Ref.~\cite{lars} extends this result to the case of lossy setups is addressed as follows.
First, an equivalent passive detection setup is constructed, which can be described as a lossy linear optical setup, followed by a lossless passive detection setup. The lossy components are then assumed to be under Eve's control, and the subspace estimation is carried out on the equivalent lossless detection setup.
However, note that the security must also be proved for this equivalent lossless passive detection setup.

Two difficulties might arise while applying this method:
\begin{enumerate}
    \item In general, the new passive detection setup produced by this construction is different from the original one. As a result, we may lose useful structural properties of the original setup, which can cause difficulties if the security proof relies on them. For example, the setup in \cref{diag:passivedetection} has the desirable property that when the input contains exactly one photon, the measurement behaves like an active detection setup. The equivalent setup obtained via the method of Ref.~\cite{lars} typically does not preserve this property.
    \item The method is not straightforwardly applicable when the detector parameters are only partially known, which is the practically relevant case.
    % One possible extension is to perform a grid search of the minimum eigenvalue over the allowed imperfection parameter ranges using \cite[Theorem~3]{lars}. After that, continuity bounds can be applied to quantify the maximum possible change in the minimum eigenvalue in a neighborhood of each grid point, thereby ensuring a valid lower bound. However, this approach may be computationally expensive.
\end{enumerate}
To address the above-mentioned issues, we extend the method in Ref.~\cite{lars} to directly perform subspace estimation for lossy passive detection setups. Our method, which applies even in the case when the loss and dark count rates are only partially known, is critical for practical applications. Moreover, our results are not limited to the subspace estimation methods in Ref.~\cite{lars}, and can straightforwardly be used to extend other (protocol-specific) subspace estimation methods \cite{nicky,nahar_imperfect_2023} that were previously applicable only to lossless detection setups.

We now give a brief description of our subspace estimation technique, and refer to \cref{appendix:lmcalculation} for the formal details. We model a generic passive detection setup as some linear optics, followed by threshold detectors with losses and dark counts. \cref{diag:passivedetection} is an example of such a setup. We now introduce some notation to better describe the main steps used in our subspace estimation technique:
\begin{itemize}
    \item \( \Gamma_{\mathrm{mc}}^{(\vec{\eta},\vec{d})} \):  
    Bob's multi-click POVM element corresponding to the original setup with both losses and dark counts \footnote{As we see in \cref{sec:multibound}, we want to compute the minimum eigenvalue for the joint multi-click POVM element. But since we use all multi-click outcomes for the subspace estimation, the corresponding multi-click POVM element has an identity on Alice's system. Therefore, we can focus on Bob's multi-click POVM only. We drop the superscript $B$ for simplicity in this section}.  
    % We denote \(\Gamma_{\mathrm{mc}}^{(\vec{\eta},\vec{d})} = \Gmc \).  
    Here \( \vec{\eta} = [\eta_1, \ldots, \eta_k] \) and \( \vec{d} = [d_1, \ldots, d_k] \) are the detector efficiencies and dark count rates of the various threshold detectors used in the detection setup.
    
    \item \( \Gamma_{\mathrm{mc}}^{\vec{\eta}} \):  
    The multi-click POVM element corresponding to the setup with losses only (i.e., without dark counts).
    
    \item \( \Gamma_{\mathrm{mc}}^{\etacom} \):  
    The multi-click POVM element corresponding to the setup without dark counts in which all detectors have the same efficiency  
    \( \etacom := \min\{\eta_1, \ldots, \eta_k\} \).  
    That is, we increase the loss of each detector until they match the most lossy one.
    
    \item \( \Gamma_{\mathrm{mc}}^{\text{perfect}} \):  
    The multi-click POVM element corresponding to the setup with perfect threshold detectors, which have no loss or dark counts.
\end{itemize}
% We now outline the proof idea in terms of the multi-click POVM element and refer the interested reader to \cref{appendix:lmcalculation} for the detailed derivations.
To compute the minimum eigenvalue $\lm(\Pign\Gamma_{\mathrm{mc}}^{(\vec{\eta},\vec{d})}\Pign)$, we follow the following steps:
\begin{enumerate}
    \item We model dark counts as a classical post-processing performed on the outcomes obtained from the setup without dark counts. Specifically, the classical post-processing maps no-click and single-click outcomes to multi-click outcomes with non-zero probability, but it never maps multi-click outcomes to no-click or single-click outcomes. Thus, we have 
    \begin{align}
        \Gamma_{\mathrm{mc}}^{\vec{\eta}}\leq \Gamma_{\mathrm{mc}}^{(\vec{\eta},\vec{d})}.
    \end{align} 
    This is formally proved via \cref{lemma:mcdarkcounts} and \cref{cor:darkcounts}.
    \item Next, we show in \cref{lemma:mcmonoton} and \cref{cor:lmmono} that the multi-click POVM element is non-increasing as a function of the loss in each detector. 
    % Thus, for the purposes of bounding the eigenvalue, we can increase the loss of each detector to the most lossy one. That is, 
    \begin{align}
        \Gamma_{\mathrm{mc}}^{{\etacom}}\leq\Gamma_{\mathrm{mc}}^{\vec{\eta}}.
    \end{align}
    \item Lastly, we compute the minimum eigenvalue $\lm(\Pign\Gamma_{\mathrm{mc}}^{\etacom}\Pign)$ by expressing it as a function of the common loss $\etacom$ and the minimum eigenvalue $\lm(\Pign\Gamma_{\mathrm{mc}}^{\text{perfect}}\Pign)$. The bound on the minimum eigenvalue of the perfect multi-click POVM element can be computed using Ref.~\cite[Theorem 1]{lars} for generic passive detection setups, or via protocol-dependent methods as in Refs.~\cite{nicky,nahar_imperfect_2023}.
\end{enumerate}
For the case of interest, namely the passive detection setup in \cref{diag:passivedetection} with imperfection parameters satisfying \( \eta_i \in [\eta_i^l,\eta_i^u] \) and \( s \in [s^{*}-\theta,\, s^{*}+\theta] \), the estimation of the multi-photon subspace admits the following closed-form expression:
\begin{align}
\lm(\Pigone\Gamma_{\mathrm{mc}}^{(\vec{\eta},\vec{d})}\Pigone)
    \;\geq\; \etamin^{2}\, 2\bar{s}(1-\bar{s}),
\end{align}
where \( \eta_{\min} \) is the minimum efficiency across all detectors, and 
\( \bar{s} = \tfrac{1}{2} + \max\!\left\{ \lvert s^{*} - \theta - \tfrac{1}{2} \rvert,\; \lvert s^{*} + \theta - \tfrac{1}{2} \rvert \right\} \), 
which corresponds to the value of \( s \) in the allowed range that is farthest from \( \tfrac{1}{2} \).
The proofs of the lemmas and corollaries, together with the detailed calculations, are given in \cref{appendix:lmcalculation}. We make use of this minimum eigenvalue in the proof in \cref{sec:multibound}.
\begin{remark}
    Although the above description mirrors Ref.~\cite{lars} in our choice of multi-click as the outcome used for subspace estimation, our results (\cref{lemma:mcdarkcounts,lemma:mcmonoton}) can be directly applied to other choices of outcomes used for subspace estimation. The only restrictions on the choice of outcomes required to apply the results described in \cref{appendix:lmcalculation} is as follows. Let \( \{m_1,\ldots,m_k\} \) represent a click pattern, where \( m_l = 1 \) if the \( l^{\text{th}} \) detector clicks and \( m_l = 0 \) otherwise.  
    We partition all click patterns into two sets, \( \mathcal{A} \) and \( \mathcal{E} \), and we use the coarse-grain POVM element corresponding to the set $\mathcal{E}$ to perform subspace estimation. For example, $\mathcal{E}$ could be chosen to be the set of cross-click outcomes of detection setup \cref{diag:passivedetection}. The sets $\mathcal{A}$ and $\mathcal{E}$ should satisfy the following equivalent conditions:
\begin{enumerate}
\item For any click pattern \( c \in \mathcal{E} \), changing any 0 (no-click) entry in \( c \) to 1 (click) must produce another pattern \( c' \) that also lies in \( \mathcal{E} \).  
For example, for any cross-click pattern, adding additional clicks leaves the pattern within the set of cross-clicks.

\item For any click pattern \( c \in \mathcal{A} \), changing any 1 (click) entry in \( c \) to 0 (no-click) must produce another pattern \( c' \) that also lies in \( \mathcal{A} \).  
For example, for the set containing no-click, single-click, and double-click patterns (which is the set of all outcomes that are not cross-clicks), removing clicks keeps the pattern within the same set.

\end{enumerate}
Additionally, the application of our results to a full security proof requires the additional computation of the minimum eigenvalue of the chosen outcome consisting of click patterns in $\mathcal{E}$ for the detection setup with no loss or dark counts.

\end{remark}
This extension allows one to perform subspace estimation for generic passive detection setups without requiring any change in the POVM considered in the rest of the security proof. This is a crucial step in the security analysis of protocols with imperfect detectors, not only for the phase error estimation approach to prove security, but also for other proof techniques. In particular, it is critically required to apply Ref.~\cite[Theorem~2]{shlok} with the postselection technique~\cite{postselection}, or with EAT-based proof techniques~\cite{eat}.

\section{Protocol Description} \label{sec:basicprotdescription}
In this section, we describe the main QKD protocol whose phase error rate we bound in this work. We consider the simpler setting where Alice uses a perfect single-photon source and Bob uses a passive detection setup. The decoy-state version of this protocol is a straightforward extension described in \cref{subsec:partialimperfections}. The protocol modifications we require in order to handle memory effects in the detectors are described in \cref{sec:correlated protocol description}.

\begin{description}
    \item [State Preparation]Alice sends a perfect single-photon state in $X$ ($Z$) basis with probability $\pxa$ $(\pza)$. If she chooses $Z$ basis, she sends the state in $\{\ket{0},\ket{1}\}$ with equal probability. If she chooses $X$ basis, she sends the state in $\{\ket{+},\ket{-}\}$ with equal probability. Using the source replacement scheme \cite{sourcerep1,sourcerep2}, Alice's state preparation is equivalent to preparing $\ket{\psi}_{AA'} = \frac{1}{\sqrt{2}}(\ket{00}_{AA'}+\ket{11}_{AA'})$, sending $A^{\prime}$ system to Bob through Eve, and measuring system $A$ using the following POVM:
\begin{align}\label{eq:definePOVMA}
    \begin{split}
    &\POVMAZzero = \pza\kb{0}, \qquad \POVMAZone = \pza\kb{1}, \\
    &\POVMAXzero = \pxa\kb{+}, \qquad \POVMAXone = \pxa\kb{-}.
    \end{split}
\end{align}
    \item [Measurement]Bob performs his measurement using POVM $\{\vec{\Gamma}^B\}$ on the state and records the click pattern of detectors.
    Since Bob uses threshold detectors, Bob's POVM elements are block-diagonal in photon number $\pnm$. Here, we label Bob's POVM as 
\begin{align}
    \begin{split}
    \text{single-click outcome: }&\POVMBZzero,\;\POVMBZone,\;\POVMBXzero,\;\POVMBXone,\\
    \text{multi-click outcome: }&\POVMBmc \text{ (more than 1 detector clicks)},\\
    \text{no-click outcome: }&\POVMBvac.
    \end{split}
\end{align}

    In this work, we use the notation $\{\vec{\Gamma}\} = \{\Gamma_1, \Gamma_2, ...\}$ to denote a POVM. Note that 
   Bob has access to more fine-grained click patterns, such as cross-clicks and double-clicks in each basis. However, these are not used in our proof.

    \item [Repetition] Alice and Bob repeat this process for $n$ rounds.
    \item [Post-processing on Bob's side] Bob applies a post-processing map to the click patterns (each click pattern can take 1 out of 16 values). Specifically, he discards no-click outcomes. Whenever Bob obtains a single-click outcome in the $X$ ($Z$) basis, he labels it as an $X$ ($Z$) basis round. If more than one detector clicks in a round, he records it as a multi-click outcome, and he records the total number of multi-click outcomes $\nmc$. We use the number of multi-click outcomes to bound the number of multi-photon rounds. This part of the analysis is referred to as photon-number estimation (PNE).

     We can now define the following joint POVM elements for Alice and Bob corresponding to measurement outcomes in the $X$ ($Z$) basis, where Alice’s and Bob’s bits either agree ($=$) or disagree ($\neq$):
\begin{align}\label{eq:definejoint}
    \begin{split}
    &\POVMXeq = \POVMAXzero \otimes \POVMBXzero + \POVMAXone \otimes \POVMBXone\\
    &\POVMXneq = \POVMAXzero \otimes \POVMBXone + \POVMAXone \otimes \POVMBXzero\\
    &\POVMZeq = \POVMAZzero \otimes \POVMBZzero + \POVMAZone \otimes \POVMBZone\\
    &\POVMZneq = \POVMAZzero \otimes \POVMBZone + \POVMAZone \otimes \POVMBZzero\\
    &\Gamma_{\mathrm{mc}} = \idd_A\otimes\POVMBmc\\
    &\Gamma_{\mathrm{nc}} = \idd_A\otimes\POVMBvac\\
    &\Gamma_{\text{basis mismatch}}\text{ (basis-mistached POVM element, irrelavant to our proof)}.
    \end{split}
\end{align}
We denote Alice's (Bob's) POVM elements with an appropriate superscript. For the joint POVM elements, we omit the superscript. Note that Alice and Bob's joint POVM elements inherit a block-diagonal structure in photon number $\pnm$, due to the fact that Bob’s POVM is block-diagonal in $\pnm$.
    \item[Classical Announcement and Sifting] Alice and Bob announce the bases they used and perform sifting. They discard basis-mismatched rounds. All $X$-basis rounds are used for testing. They denote the number of $X$-basis rounds as $\nx$ and the number of erroneous outcomes in $X$-basis rounds as $\Nxobs$. A small portion of $Z$-basis rounds is revealed for error correction. The error rate in these rounds is denoted as $\errzobs$. The rest of the $Z$-basis rounds are used for key generation. They denote the number of rounds used in key generation as $\nk$.
    \item[Variable-length Decision] When event $\Omega{(\errzobs,\Nxobs,\nx,\nk,\nmc)}$ occurs, Alice and Bob compute \\$\lec(\errzobs,\Nxobs,\nx,\nk,\nmc)$ (number of bits to be used for one-way error-correction) and decide final key length $l(\errzobs,\Nxobs,\nx,\nk,\nmc)$ (the final key length to be produced). The exact formula for $l(\errzobs,\Nxobs,\nx,\nk,\nmc)$ is in \cref{appendix:security}.
    \item[Error Correction and Error Verification] Alice and Bob implement error-correction using a one-way error-correction protocol with $\lec(\errzobs,\Nxobs,\nx,\nk,\nmc)$ bits of information. They implement error-verification by implementing a common two-universal hash function to log(2/$\epev$) bits.
    \item[Privacy Amplification] Alice and Bob implement a common two-universal hash function to produce a secret key with length $l(\errzobs,\Nxobs,\nx,\nk,\nmc)$.
\end{description}
We use bold symbols $(\errzobsb,\Nxobsb,\nxb,\nkb,\nmcb)$ to distinguish random variables from the actual values \\ $(\errzobs,\Nxobs,\nx,\nk,\nmc)$ they take. 
We use the notation $\Omega{(\nx...)}$ to denote the event where the random variables $(\nxb...)$ take specific values $(\nx...)$.

\section{Phase error estimation with memoryless detector imperfections}\label{sec:phase error estimation}

In this section, we bound the phase error rate for the weak coherent pulse (WCP) \footnote{The source is still assumed to be a perfect phase-randomized WCP source. For compatibility with source imperfections, see~\cref{sec:sourceimp}.
} decoy-state BB84 protocol~\cite{hwang2003,lo2005,XBWang,ma2005,hayashi2014,curty2014,charles}, assuming Bob uses memoryless imperfect detectors with partially characterised imperfections (see \cref{diag:passivedetection}). We start by analyzing the simpler protocol described in \cref{sec:basicprotdescription}. The general proof idea is outlined in~\cref{sec:proofsketch}. The complete proof for the single-photon source protocol is provided in~\cref{sec:reformulatingmeasurements,sec:NumberofErrorBound,sec:multibound,sec:zerobound,sec:combine bounds}. We then extend this analysis to the more practical WCP decoy-state protocol, incorporating partially characterized detector imperfections in~\cref{subsec:partialimperfections}.
\begin{remark}
   While our framework is more general, we consider an explicit model of detectors to compute key rates,  where the efficiency and the dark count rate of each detector are only known in some ranges ($\eta_i \in [\eta^l_i, \eta^u_i]$, $d_i \in [d^l_i, d^u_i]$). For the explicit model, we use the polarization-encoding BB84 protocol as an example. As we will see later in this section, our analysis relies only on the fact that Bob’s POVM has a block-diagonal structure with respect to photon number. Therefore, our analysis can be applied to more general QKD protocols, as long as Bob’s positive operator-valued measure (POVM) is block-diagonal and the relevant parameters (see later in this section) defined by the POVM elements can be computed.
\end{remark}
 
\subsection{Proof Idea}\label{sec:proofsketch}
\begin{figure}
    \centering
    \scalebox{1}{\begin{tikzpicture} [
    dashedbox/.style={draw, rectangle, minimum width=2.5cm, minimum height=1cm, dashed, align=center},
    box/.style={draw, rectangle, minimum width=2cm, minimum height=1cm, align=center},
    arrow/.style={-Stealth},
    discard/.style={->, thick, dashed},
    every node/.style={font=\small}
]
\node(top) {\( \rho_{A^{n}B^{n}E^{n}} \)};
\node[box, below=1cm of top] (qnd) {QND};
\node [ below left=1cm and 0.8cm of qnd](0part){\( \rho_{A^{{\tilde{n}_{(0)}}}B^{\tilde{n}_{(0)}}} \)};
\node [ below  = 1cm of qnd](1part){\( \rho_{A^{{\tilde{n}_{(1)}}}B^{\tilde{n}_{(1)}}E^nC^{\tilde{n}_{(0)}}C^{\tilde{n}_{(>1)}}} \)};
\node [ below right=1cm and 0.8cm of qnd](mpart){\( \rho_{A^{{\tilde{n}_{(>1)}}}B^{\tilde{n}_{(>1)}}} \)};
\node [box, below  = 1cm of 1part](bc){coarse-grained \\measurement\\ for basis choice};
\node [ below =1cm of bc,xshift=-1.5cm](x){\( \rho_{A^{{\tilde{n}_{({X},1)}}}B^{\tilde{n}_{({X},1)}}} \)};
\node [ below =1cm of bc,xshift=1.5cm](k){\( \rho_{A^{{\tilde{n}_{({K},1)}}}B^{\tilde{n}_{({K},1)}}E^nC^n} \)};
\node [ box,below =1cm of k,xshift=1.5cm](key){key};
\node [ box,below =1cm of k,xshift=-1.5cm](virtual){phase-errors};
\node [ box,below =1cm of x,xshift=-1.5cm](test){testing};
\node [ align=center,right = 1cm of bc](discard){discard no-click and\\basis-mismatched rounds};
% \node [ below =0.5cm of k,xshift=1.5cm](key){key};
\node [ align=center, below = 2.3cm of k](EUR){\blue{EUR}\\(\cref{appendix:security})};
\node [align=center, below = 2.3cm of x](sam){\blue{Sampling}\\ (\cref{sec:NumberofErrorBound})};
\node [ align=center, above right= 0.1cm and 0.1cm of mpart](mc){\blue{Estimate using multi-click events} \\ (\cref{sec:multibound})};
\node [ align=center,above left= 0.1cm and 0.1cm of 0part](dc){\blue{Estimate using dark counts}\\(\cref{sec:zerobound})};

\draw[arrow] (top) -- (qnd);
\draw[arrow] (qnd) --  node[right] {\( \black{{\tilde{n}}_{(0)}} \)}(0part);
\draw[arrow] (qnd) -- node[right] {\( \black{{\tilde{n}}_{(1)}} \)}(1part);
\draw[arrow] (qnd) -- node[right] {\( \black{{\tilde{n}}_{(>1)}} \)}(mpart);
\draw[arrow] (1part) -- (bc);

\draw[arrow] (bc) -- node[right] {\( \black{{\tilde{n}}_{({X},1)}} \)}(x);
\draw[arrow] (bc) -- node[right] {\( \black{{\tilde{n}}_{({K},1)}} \)}(k);

\draw[arrow] (k) -- (key);
\draw[dashed,arrow] (k) -- (virtual);
\draw[arrow] (x) -- (test);
\draw[arrow] (bc) -- (discard);
\draw[dashed,blue] (key) -- (EUR);
\draw[dashed,blue] (virtual) -- (EUR);
\draw[dashed,blue] (virtual) -- (sam);
\draw[dashed,blue] (test) -- (sam);

\draw[dashed,blue] (dc) -- (0part);
\draw[dashed,blue] (mpart) -- (mc);

\end{tikzpicture}}
    \caption{{Proof structure:} The tilde notation ($\tilde{n}$) on values or random variables such as $\tilde{n}_{{(X,1)}}$ indicates that these quantities are not directly accessible to Alice and Bob in the actual protocol.}
    \label{diag:proofstructure}
\end{figure}

As we mentioned in \cref{sec:active vs passive}, single-photon signals behave ``nicely'' in passive detection setups. 
In security proofs based on the EUR statement together with leftover hashing lemmas or the phase error correction approach, estimating the phase error rate is the main task for computing key rates. In this work, we choose to prove security via the leftover hashing lemma~\cite[Proposition 9]{tomamichel2017}, which requires a lower bound on the smooth min-entropy of $Z$ measurements on the key generation rounds (as these form the raw key that undergoes privacy amplification). Our approach proceeds in the following way:

\begin{enumerate}
    \item Since Bob’s POVM elements are block-diagonal in the photon number \( \pnm \), his measurement is equivalent to first performing a quantum non-demolition (QND) measurement to determine whether the incoming signal is  0-, 1-, or multi-photon round, followed by the original measurement within each subspace (see \cref{diag:proofstructure}). Conditioning on the QND measurement outcome, the key string from all key rounds is partitioned into three substrings, corresponding to different photon-number.

    \item By exploiting the properties of the smooth min-entropy~\cite[Lemma 6.7]{tomamichelbook}, we can reduce the problem of bounding the entropy of the entire key string to that of the substring corresponding to the single-photon. 

    \item We then apply the EUR statement to this substring, reducing the task of lower bounding the smooth min-entropy to that of \begin{itemize}
        \item lower bounding the number of single-photon key generation rounds, and
        \item upper bounding the phase error rate in the single-photon key generation rounds.
    \end{itemize} 
\end{enumerate}

The mathematical details of this procedure are provided in~\cref{appendix:security}. The remainder of this section is devoted to obtaining the two key ingredients for bounding the smooth min-entropy: an upper bound on the phase error rate in the single-photon rounds and a lower bound on the number of single-photon key rounds.

To bound the phase error rate, let us first develop some intuition from the case of single-photon signals arriving at Bob's passive detection setups. Given a beam splitter with splitting ratio \( s \), and an incoming signal containing exactly one photon, the probability that the photon is directed to the \( X \)- or \( Z \)-basis measurement setup is precisely \( s \) or \( (1-s) \), respectively. Thus, conditioned on the input being a single-photon signal, the beam splitter effectively performs an active basis choice for Bob. We use this intuition to obtain the bound on the phase error rate in the single-photon rounds. (Note that while this intuition is helpful, our proof does not rely on any formal reduction to the active basis scenario, since we wish to also handle basis-efficiency mismatch and imperfections in the detectors.)

Following this intuition, we can decompose the sampling to the following (see \cref{diag:proofstructure}): we first apply a QND measurement to separate the incoming signals by photon number: 0, 1, and \( > \)1. For the single-photon signals, the beam splitter is equivalent to an active basis choice, and Bob's measurement can be interpreted as first determining the basis, and then measuring the outcome in that basis. For the rounds in which both Alice and Bob choose the \( Z \)-basis (denoted by the state \( \rho_{A^{\tilde{n}_{(K,1)}} B^{\tilde{n}_{(K,1)}} C^nE^n} \)), Alice performs a \( Z \)-basis measurement to generate part of her raw key in the single-photon subspace. The virtual \( X \)-basis measurements by both Alice and Bob on this state  \( \rho_{A^{\tilde{n}_{(K,1)}} B^{\tilde{n}_{(K,1)}} C^nE^n} \) yield the phase error rate we aim to bound.

To lower bound the number of single-photon key rounds, we equivalently upper bound the number of key rounds arising from 0-photon and multi-photon signals. In particular, for input signals containing 0 photons, any detection outcomes must arise from dark counts. We use this information to lower bound the number of key rounds from 0-photon signals. For multi-photon signals, we upper bound the number of these rounds using multi-click outcomes~\cite{fss,lars,nicky}.
\begin{remark}
    We remove key rounds from 0-photon signals when bounding the smooth min-entropy for better key rate performance. Note that one can instead keep key rounds from 0-photon signals; In this case, one has to upper bound the phase error rate in the $(\leq1)$-photon rounds.
\end{remark}

For our formal proof, we generalize the single-photon subspace to ($1\leq\pnm\leq\pn$)-photon subspace, where $\pn$ is the photon-number cut-off. Therefore, the goal of this section is as follows:  
we aim to upper bound $\errks$, the phase error rate within the $(1 \leq \pnm \leq \pn)$-photon rounds, in terms of observable quantities, and to lower bound the number of key rounds from  $(1 \leq \pnm \leq \pn)$-photon signals $\nkoneb$:
\begin{align}\label{eq:phaseerroridea}
    &\Pr\biggl(\errks \geq \bimp(\nxb,\nkb,\Nxobsb,\nmcb) \ \lor\  \nkoneb \leq \K(\nkb,\nmcb)\biggr) \leq \epsilon^2,
\end{align}
where $\errks := \frac{\Nkoneb}{\nkoneb}\notag$, and 
where $a$, $\delta$, and $q_Z$ are parameters determined by the detector imperfections (explained later).  Thus, the rest of this subsection (and a bulk of this work), is devoted to obtaining the functions $\bimp$ and $\K$ that satisfy \cref{eq:phaseerroridea}.

We achieve this goal by establishing the following three bounds:
\begin{enumerate}
    \item An upper bound on the number of phase errors, $\Nkoneb$;
    \item An upper bound on the number of key rounds from 0-photon signals;
    \item An upper bound on the number of key rounds from $(>\pn)$-photon signals. 
\end{enumerate}
The second and the third bounds will give a lower bound on the number of key rounds from $(1 \leq \pnm \leq \pn)$-photon signals, $\nkoneb$.

\subsection{Constructing the phase error estimation protocol}\label{sec:reformulatingmeasurements}

We now formally construct a phase error rate estimation protocol that is equivalent to the original protocol. We do so by reformulating the measurements undertaken by Alice and Bob as a multi-step process. The equivalent measurement process is described in \cref{diag:phaseerror}.

\subsubsection{QND measurements}
Since the joint POVM elements are block-diagonal in photon number $\pnm$, we can construct an equivalent protocol where Bob first performs a QND measurement $\{\Pi_0,\Pi_{\pn},\Pign\}$ to determine the photon number on each round. These projectors correspond to 0, $(1\leq\pnm\leq\pn)$, ($>\pn$) photon subspaces, respectively. After the QND measurement, the state is separated into three subspaces. Alice and Bob then measure by $\{\vec{\Gamma}^{(0)}\}$, $\{\vec{\Gamma}^{(\pn)}\}$ or $\{\vec{\Gamma}^{(> {\pn})}\}$, depending on the outcomes of QND measurement, where each $\{\vec{\Gamma}^{(i)}\}$ acts within the corresponding subspace. (For simplicity, we use ($\pn$) denotes the subspace $1\leq \pnm \leq \pn$, and use $\{\vec{\Gamma}^{(\pn)}\}$ to represent $\{\vec{\Gamma}^{(1\leq \pnm \leq \pn)}\}$.)
\begin{remark}
Since the measurements \( \{ \vec{\Gamma}^{(0)} \} \), \( \{ \vec{\Gamma}^{(\pn)} \} \), and \( \{ \vec{\Gamma}^{(> \pn)} \} \) are applied to states with photon number 0, \( 1 \leq \pnm \leq \pn \), and \( > \pn \), respectively, we can, without loss of generality,  truncate the zeros in the subspace they do not live in and treat the corresponding POVM elements as follows: 
\( \Gamma^{(0)} \) can be treated as a finite-dimensional POVM element acting entirely within the 0-photon subspace; 
\( \Gamma^{(\pn)} \) can be treated as a finite-dimensional POVM element acting within the (\( 1 \leq \pnm \leq \pn \))-photon subspace; 
and \( \Gamma^{(> \pn)} \) is an infinite-dimensional POVM element that  lives in the $(>\pn)$-photon subspace.
\end{remark}
\subsubsection{\texorpdfstring{Measurements in the $1\leq m < M$ rounds}{}}
Next, we need the following two-step lemma~\cite[Lemma 1]{devEUR} to help us break one measurement down to multiple steps of measurements.
\begin{restatable}{lemma}{twosteplemma}\cite[Lemma 1]{devEUR} \label{lemma:twostep}
				Let $\{\Gamma_{k} \in \text{Pos}(Q) | k \in \mathcal{A}\}$ be a POVM, and let $\{ \mathcal{A}_i\}_{i \in \mathcal{P}_\mathcal{A}}$ be a partition of $\mathcal{A}$, and let  $\rho \in S_\bullet(Q)$ be a state. The classical register storing the measurement outcomes when $\rho$ is measured using $\{\Gamma_k\}_{k \in \mathcal{A}}$ is given by
		\begin{equation}
			\rho_\text{final} \coloneq	\sum_{k \in \mathcal{A}} \Tr(\Gamma_{k} \rho)  \ketbra{k}.
		\end{equation}
		This measurement procedure is equivalent (in the sense of being the same quantum to classical channel) to the following two-step measurement procedure: First, perform a coarse-grained measurement of $i$, using POVM $\{ \tilde{F}_i \}_{ i \in \tilde{\mathcal{A}}}$, where
		\begin{equation}
			\begin{aligned}
				\tilde{F}_i  &\coloneq \sum_{j \in \mathcal{A}_i} \Gamma_{j}, \quad \quad \text{leading to the post-measurement state} \\
				\rho^\prime_\text{intermediate} &= \sum_{i \in \mathcal{P}_{\mathcal{A}}}  \sqrt{\tilde{F}_i} \rho \sqrt{\tilde{F}_i}^\dagger  \otimes \ketbra{i}.
			\end{aligned}
		\end{equation}
		Upon obtaining outcome $i$ in the first step, measuring using  POVM $\{ G_{k} \}_{ k \in \mathcal{A}_i}$, where
		\begin{equation} 
			\begin{aligned}
				G_{k} &\coloneq \sqrt{\tilde{F}}^+_i \Gamma_{k} \sqrt{\tilde{F}}^+_i + P_{k} \quad \quad \text{leading to the post-measurement classical state} \\
				\rho^\prime_\text{final} &= \sum_{i \in \mathcal{P}_{\mathcal{A}}} \sum_{k \in \mathcal{A}_i}  \Tr( G_{k }\sqrt{\tilde{F}_i} \rho \sqrt{\tilde{F}_i})  \ketbra{k}.
			\end{aligned}
		\end{equation}
		$F^+$ denotes the pseudo-inverse of $F$, and $P_{k}$ are any positive operators satisfying $\sum_{k \in \mathcal{A}_i} P_k = \idd - \Pi_{\tilde{F}_i}$, where $\Pi_{\tilde{F}_i}$ denotes the projector onto the support of $\tilde{F}_i$.
\end{restatable}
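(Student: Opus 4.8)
The plan is to verify that the two channels coincide by checking that they assign the same probability to every outcome $k$, for an arbitrary input state $\rho \in S_\bullet(Q)$. Since both $\rho_{\text{final}}$ and $\rho'_{\text{final}}$ are diagonal in the outcome basis $\{\ket{k}\}$, it suffices to show that $\Tr(\Gamma_k \rho) = \Tr(G_k \sqrt{\tilde{F}_i}\rho\sqrt{\tilde{F}_i})$ for each $i \in \mathcal{P}_{\mathcal{A}}$ and each $k \in \mathcal{A}_i$. First I would substitute the definition $G_k = \sqrt{\tilde{F}}^+_i \Gamma_k \sqrt{\tilde{F}}^+_i + P_k$ into the right-hand side and use cyclicity of the trace to bring both factors of $\sqrt{\tilde{F}_i}$ adjacent to $G_k$, obtaining $\Tr[(\sqrt{\tilde{F}_i}\sqrt{\tilde{F}}^+_i \Gamma_k \sqrt{\tilde{F}}^+_i\sqrt{\tilde{F}_i} + \sqrt{\tilde{F}_i} P_k \sqrt{\tilde{F}_i})\rho]$.

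The next step is to dispose of the $P_k$ contribution. The operators $P_k$ satisfy $0 \le P_k \le \sum_{k' \in \mathcal{A}_i} P_{k'} = \idd - \Pi_{\tilde{F}_i}$, so their support lies in the orthogonal complement of $\text{supp}(\tilde{F}_i)$; hence $\Pi_{\tilde{F}_i} P_k = P_k \Pi_{\tilde{F}_i} = 0$. Writing $\sqrt{\tilde{F}_i} = \sqrt{\tilde{F}_i}\Pi_{\tilde{F}_i}$ then gives $\sqrt{\tilde{F}_i} P_k \sqrt{\tilde{F}_i} = 0$, so the $P_k$ term drops out entirely. For the remaining term I would invoke the defining property of the pseudo-inverse of a positive operator, $\sqrt{\tilde{F}_i}\sqrt{\tilde{F}}^+_i = \sqrt{\tilde{F}}^+_i\sqrt{\tilde{F}_i} = \Pi_{\tilde{F}_i}$, which collapses it to $\Tr(\Pi_{\tilde{F}_i}\Gamma_k\Pi_{\tilde{F}_i}\rho)$.

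Finally, to close the argument I would show $\Pi_{\tilde{F}_i}\Gamma_k\Pi_{\tilde{F}_i} = \Gamma_k$. Since $\tilde{F}_i = \sum_{j \in \mathcal{A}_i}\Gamma_j$ is a sum of positive operators, every summand obeys $\Gamma_k \le \tilde{F}_i$, and a standard argument (any vector annihilated by $\tilde{F}_i$ is annihilated by $\Gamma_k$) yields $\text{supp}(\Gamma_k)\subseteq\text{supp}(\tilde{F}_i)$, so $\Pi_{\tilde{F}_i}$ acts as the identity on both sides of $\Gamma_k$. This gives $\Tr(G_k\sqrt{\tilde{F}_i}\rho\sqrt{\tilde{F}_i}) = \Tr(\Gamma_k\rho)$ for every $k$ and $\rho$, which is precisely the claimed equivalence of quantum-to-classical channels.

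I expect the delicate part to be the careful bookkeeping of supports and pseudo-inverses rather than any deep idea: one must be precise that $\sqrt{\tilde{F}_i}$ and $\tilde{F}_i$ share the same support, that the pseudo-inverse is taken relative to that support, and that the $P_k$ genuinely live on its complement. It is also worth verifying in passing that $\{G_k\}_{k\in\mathcal{A}_i}$ is itself a valid POVM, since $\sum_{k\in\mathcal{A}_i}G_k = \sqrt{\tilde{F}}^+_i\tilde{F}_i\sqrt{\tilde{F}}^+_i + (\idd - \Pi_{\tilde{F}_i}) = \Pi_{\tilde{F}_i} + (\idd-\Pi_{\tilde{F}_i}) = \idd$. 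This last observation clarifies that the role of the free operators $P_k$ is solely to complete the second-step measurement on the subspace unreachable after the coarse-grained first step, without affecting any outcome statistics.
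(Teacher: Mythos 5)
Your proposal is correct. Note that the paper does not actually prove this lemma itself---it imports it verbatim as \cite[Lemma 1]{devEUR}---so there is no in-paper proof to compare against. Your direct verification is the canonical one and is complete: the identity $\sqrt{\tilde{F}_i}\sqrt{\tilde{F}_i}^+=\Pi_{\tilde{F}_i}$, the observation that $0\le P_k\le \idd-\Pi_{\tilde{F}_i}$ forces $\mathrm{supp}(P_k)\perp\mathrm{supp}(\tilde{F}_i)$ so the $P_k$ term is annihilated by the sandwiching, and the support inclusion $\mathrm{supp}(\Gamma_k)\subseteq\mathrm{supp}(\tilde{F}_i)$ (from $0\le\Gamma_k\le\tilde{F}_i$) giving $\Pi_{\tilde{F}_i}\Gamma_k\Pi_{\tilde{F}_i}=\Gamma_k$ are exactly the three facts needed, and your closing check that $\sum_{k\in\mathcal{A}_i}G_k=\idd$ confirms $\{G_k\}$ is a bona fide POVM. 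No gaps.
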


\begin{figure}[h]
    \centering
    \scalebox{0.8}{% \usetikzlibrary{arrows.meta, positioning}
\begin{tikzpicture} [
    dashedbox/.style={draw, rectangle, minimum width=2.5cm, minimum height=1cm, dashed, align=center},
    box/.style={draw, rectangle, minimum width=2cm, minimum height=1cm, align=center},
    arrow/.style={-Stealth},
    discard/.style={->, thick, dashed},
    every node/.style={font=\small}
]
\node(top) {\( \rho_{(1\leq \pnm \leq \pn \text{ subspace})} \)};
\node[box, below=1cm of top] (original) {\(\{\POVMXeqM,\POVMXneqM,\POVMZeqM,\POVMZneqM,\Gamma^{(\pn)}_{\text{other}}\}\)};
\node[below right = 2cm and 2cm of top](equal){\(\leftrightarrow\)};
\node[below = 0.1cm of equal]{Equivalence 1};

\node[right=4.5cm of top](rightrho){\( \rho_{(1\leq \pnm \leq \pn\text{ subspace})} \)};
\node[box, below=1cm of rightrho] (F) {\(\{\Fonesc,\idd-\Fonesc\}\)};
\node[right=0.5cm of F,align=center](discard){discard or \\used for other analysis };
\node[box, below=1cm of F](ftx){\(\{\GXneq,\GXeq,\GZneq,\GZneq\}\)};
% \node[box, below =1cm of ftx,xshift=-1cm](gx){\(\{\GtXeq,\GtXneq\}\)};
% \node[box, below right=1cm and 2cm of ftx](gz){\(\{\GtZeq,\GtZneq\}\)};
% \node[dashedbox, below =1cm of ftx,xshift=1.5cm](virtual){\(\{\GtXeq,\GtXneq\}\)};
% \node[below=0.5cm of virtual](phase){\(\text{Phase-error}\)};
\draw[arrow] (top) -- (original);
\draw[arrow] (rightrho) --(F);
\draw[arrow] ([xshift = -0.4cm,yshift = 0.3cm]F.south) --(ftx);
\draw[arrow] ([xshift = -0.2cm]F.east) -- (discard);
% \draw[arrow] (ftx) -- (gx);
% \draw[arrow] (ftx) -- (gz);
% \draw[dashed,->]  (no) --(virtual);
\node[below right = 2cm and 2cm of rightrho](rightequal){\(\leftrightarrow\)};
\node[below = 0.1cm of rightequal]{Equivalence 2};
\node[right=4cm of rightrho](rightrightrho){\( \rho_{(1\leq \pnm \leq \pn\text{ subspace})} \)};
\node[box, below=1cm of rightrightrho] (rightF) {\(\{\Fonesc,\idd-\Fonesc\}\)};
\node[right=0.5cm of rightF,align=center](rightdiscard){discard or \\used for other analysis };
\node[box, below=1cm of rightF](rightftx){\(\{\ftx,\ftz\}\)};
\node[box, below =1cm of rightftx,xshift=-1cm](rightgx){\(\{\GtXeq,\GtXneq\}\)};
\node[box, below =1cm of rightftx,xshift=2cm](rightgz){\(\{\GtZeq,\GtZneq\}\)};
% \node[dashedbox, below =1cm of rightftx,xshift=1.5cm](rightvirtual){\(\{\GtXeq,\GtXneq\}\)};
% \node[below=0.5cm of rightvirtual](phase){\(\text{Number of phase-errors }\Nkoneb\)};
% \node[below=0.5cm of rightgx](obserror){\(\text{X basis errors}\)};
\node[ below =0.5cm of rightftx,xshift=2.5cm](no){\(\)};

\draw[arrow] (rightrightrho) -- (rightF);
\draw[arrow] ([xshift = -0.4cm,yshift = 0.3cm] rightF.south) --(rightftx);
\draw[arrow] ([xshift = -0.3cm,yshift=0.3cm]rightftx.south) -- (rightgx);
\draw[arrow] ([xshift = 0.3cm,yshift=0.3cm]rightftx.south) --  (rightgz) ;
% \draw[dashed,->]  (no) --(rightvirtual);
\draw[arrow] ([xshift = -0.2cm]rightF.east) -- (rightdiscard);
% \draw[arrow] ([xshift = 0.3cm,yshift = 0.2cm]rightvirtual.south) -- (phase);
% \draw[arrow] (rightgx) -- (phase);

% \node[circle,draw,fill=gray!30,above = 0.3cm of rightgz,xshift=-0.8cm]{Step 3};
\end{tikzpicture}}
    \caption{Applying \cref{lemma:twostep} to break  measurements into a multi-step process, for the rounds where Bob receives a photon number $m$ such that $1\leq m <M$.}
    \label{diag:phaseerror}
\end{figure}
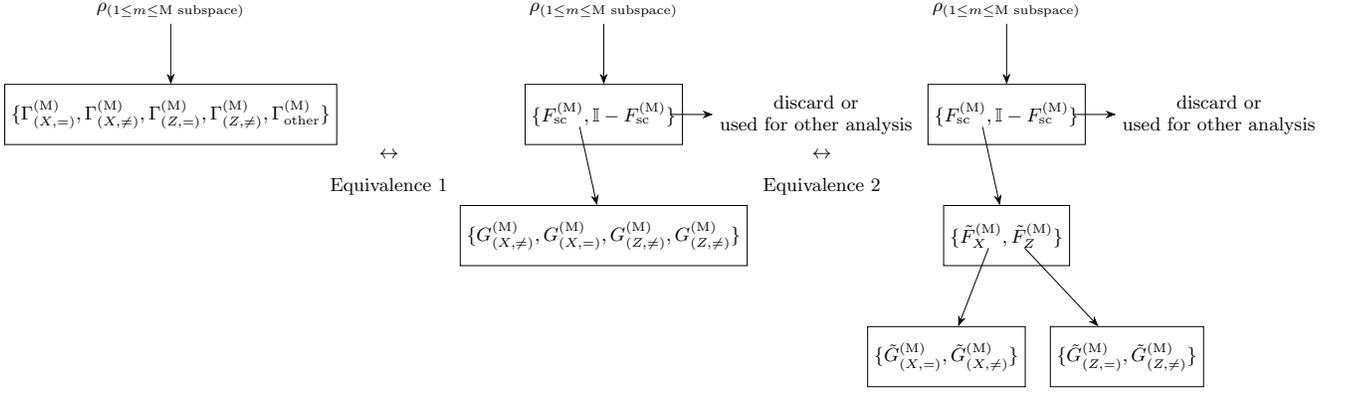
Let us focus on the measurement $\vec{\Gamma}^{(\pn)}$ in the ($1\leq\pnm\leq\pn$)-photon subspace after the QND measurement.  
 \cref{lemma:twostep} allows us to break down Alice and Bob’s measurement as shown in \cref{diag:phaseerror}.
\begin{description}
    \item[Equivalence 1] In ($1\leq\pnm\leq\pn$)-photon subspace. We use \cref{lemma:twostep} to group the single-click, basis-matched POVM elements together. Thus, Alice and Bob first perform a coarse-grained measurement \( \{ \Fonesc, \idd - \Fonesc \} \) to determine whether the rounds correspond to single-click, basis-matched outcomes.  The operator \( \Fonesc \) is defined as the sum of the single-click, basis-matched POVM elements:
    \begin{align}\label{eq:Fonesc}
        \begin{split}
        &\fx:=\POVMXeqM + \POVMXneqM\\
        &\fz:=\POVMZeqM + \POVMZneqM\\
        &\Fonesc := \fx+\fz\\&\;\phantom{\Fonesc}=\POVMXeqM + \POVMXneqM + \POVMZeqM + \POVMZneqM.
        \end{split}
    \end{align}
If the outcome corresponds to \( \idd - \Fonesc \), they complete the measurement using some POVM.  If the outcome corresponds to \( \Fonesc \), they proceed with a refined measurement given by the POVM \( \{ \GXneq, \GXeq, \GZneq, \GZeq \} \), where
\begin{align}
    \begin{split}
    &G^{(\pn)}_{(X,=)} = \sqrt{\Fonesc}^+{\Gamma}^{(\pn)}_{(X,=)}\sqrt{\Fonesc}^++\frac{a}{a+1}(\idd^{(\pn)}-\Pi_{\Fonesc})\\
     &G^{(\pn)}_{(X,\neq)} = \sqrt{\Fonesc}^+{\Gamma}^{(\pn)}_{(X,\neq)}\sqrt{\Fonesc}^+\\
    &G^{(\pn)}_{(Z,=)} = \sqrt{\Fonesc}^+{\Gamma}^{(\pn)}_{(Z,=)}\sqrt{\Fonesc}^++\frac{1}{a+1}(\idd^{(\pn)}-\Pi_{\Fonesc})\\
     &G^{(\pn)}_{(Z,\neq)} = \sqrt{\Fonesc}^+{\Gamma}^{(\pn)}_{(Z,\neq)}\sqrt{\Fonesc}^+.
    \end{split}
\end{align}
$\Pi_{\Fonesc}$ denotes the projector onto the support of ${\Fonesc}$. $\idd^{(\pn)}$ is the identity on $(1 \leq \pnm \leq \pn)$-photon subspace. $a$ is a non-negative real number we can pick for our convenience.
\begin{remark}
In most of the cases we consider, \( \Fonesc \) has full support on the \( (1 \leq \pnm \leq \pn) \)-photon subspace, and \( \idd^{(\pn)} - \Pi_{\Fonesc} \) is the zero matrix. We keep these terms here for technical completeness, and we choose the prefactors \( \frac{a}{a+1} \) and \( \frac{1}{a+1} \) so that they cancel out later when computing parameters related to these POVM elements. For these reasons, these terms can be ignored in the entirety of this work.
\end{remark}
    \item[Equivalence 2] We apply \cref{lemma:twostep} again to the POVM \( \{ \GXneq, \GXeq, \GZneq, \GZeq \} \). The first step is a coarse-grained measurement \( \{ \ftx, \ftz \} \), where Alice and Bob determine whether the round is measured in the $X$-basis or $Z$-basis:
\begin{align}\label{eq:ftx}
    \begin{split}
\ftx :& = \GXneq + \GXeq\\
& = \sqrt{\Fonesc}^+\left({\Gamma}^{(\pn)}_{(X,=)}+{\Gamma}^{(\pn)}_{(X,\neq)}\right)\sqrt{\Fonesc}^++\frac{a}{a+1}(\idd^{(\pn)}-\Pi_{\Fonesc})\\
 &=  \sqrt{\Fonesc}^+\fx\sqrt{\Fonesc}^++\frac{a}{a+1}(\idd^{(\pn)}-\Pi_{\Fonesc}).
    \end{split}
\end{align}
Similarly, 
\begin{align}\label{eq:ftz}
    \begin{split}
\ftz :=  \sqrt{\Fonesc}^+\fz\sqrt{\Fonesc}^++\frac{1}{a+1}(\idd^{(\pn)}-\Pi_{\Fonesc}).
    \end{split}
\end{align}

Upon obtaining the outcome \( \ftx \), they complete the measurement with the POVM \( \{ \GtXeq, \GtXneq \} \). Similarly, upon obtaining the outcome \( \ftz \), they complete the measurement with the POVM \( \{ \GtZeq, \GtZneq \} \). From \cref{lemma:twostep}, we have
\begin{align}\label{eq:Gtilde}
    \begin{split}
         &\tilde{G}^{(\pn)}_{(X,=)} = \sqrt{\ftx}^+G^{(\pn)}_{(X,=)}\sqrt{\ftx}^++\idd^{(\pn)}-\Pi_{\ftx}\\
          &\tilde{G}^{(\pn)}_{(X,\neq)} = \sqrt{\ftx}^+G^{(\pn)}_{(X,\neq)}\sqrt{\ftx}^+\\
         &\tilde{G}^{(\pn)}_{(Z,=)} = \sqrt{\ftz}^+G^{(\pn)}_{(Z,=)}\sqrt{\ftz}^++\idd^{(\pn)}-\Pi_{\ftz}\\
           &\tilde{G}^{(\pn)}_{(Z,\neq)} = \sqrt{\ftz}^+G^{(\pn)}_{(Z,\neq)}\sqrt{\ftz}^+,
    \end{split}
\end{align}
where $\Pi_{\ftx}$ and $\Pi_{\ftz}$ denote the projectors onto the supports of $\ftx$ and $\ftz$ respectively.
\end{description}

\subsubsection{The full phase error estimation protocol} \label{subsubsec:fullphaseprotocol}
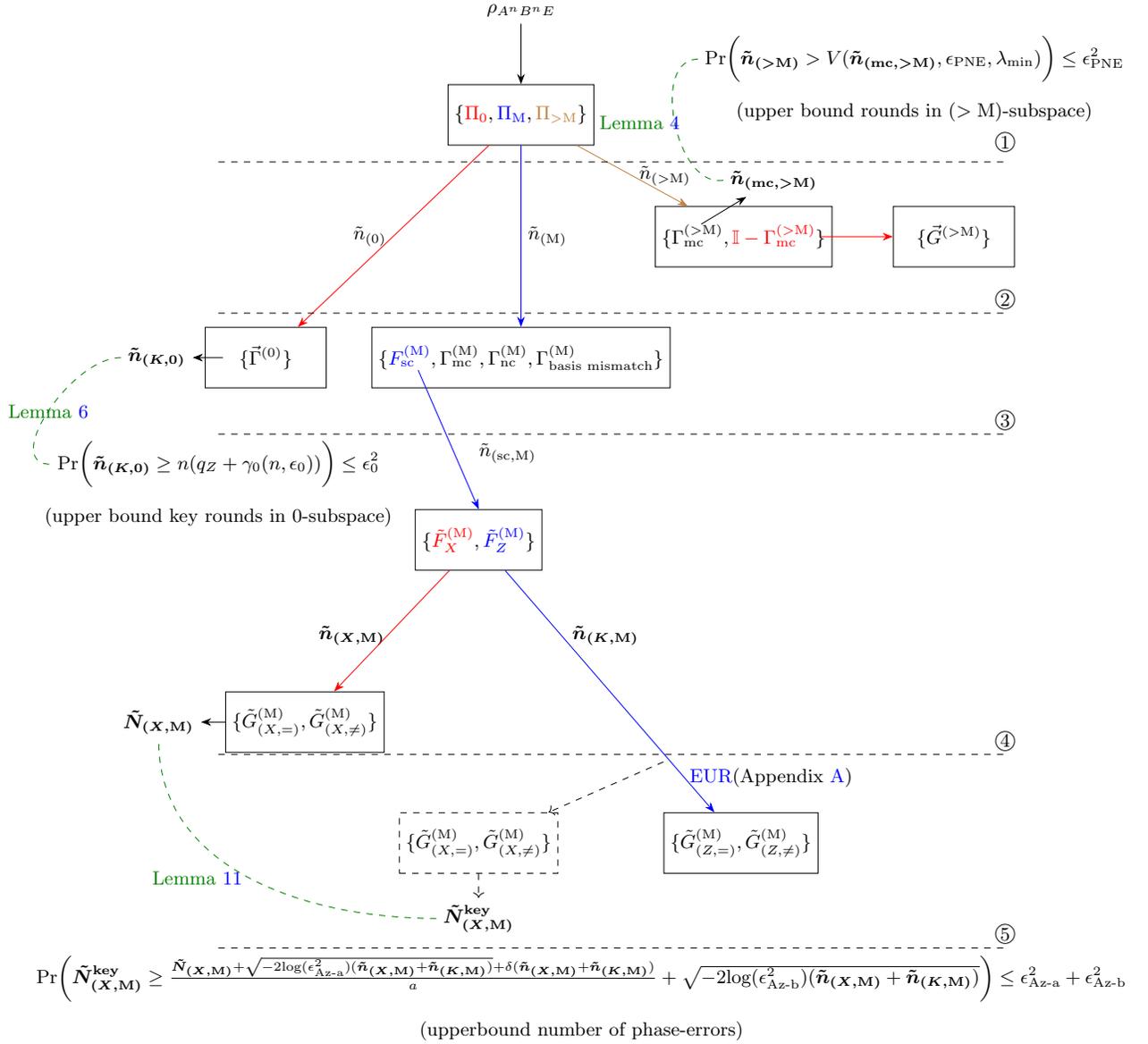
\begin{figure}
    \centering
    \scalebox{0.9}{% \usetikzlibrary{arrows.meta, positioning}
\begin{tikzpicture}
    [
    dashedbox/.style={draw, rectangle, minimum width=2.5cm, minimum height=1cm, dashed, align=center},
    box/.style={draw, rectangle, minimum width=2cm, minimum height=1cm, align=center},
    arrow/.style={-Stealth},
    discard/.style={->, thick, dashed},
    every node/.style={font=\small}
]
\node(top) {\( \rho_{A^{{n}}B^{{n}}E} \)};
\node[box, below=1cm of top] (qnd) {\(\{\red{\Pi_0},\blue{\Pi_{\pn}},\brown{\Pi_{>\pn}}\}\)};
\node[box, below left=3cm and 2cm of qnd] (0part){\(\{\vec{\Gamma}^{(0)}\} \)};
\node[left =0.2cm of 0part](n0part){\(\nkzero\)};
\node[box, below=3cm of qnd] (discard) {\(\{\blue{F^{(\pn)}_{\mathrm{sc}}},\Gamma^{(\pn)}_{\mathrm{mc}},\Gamma^{(\pn)}_{\mathrm{nc}},\Gamma^{(\pn)}_{\text{basis mismatch}}\} \)};
\node[box, below right=1cm and 1cm of qnd] (mc) {\(\{\Gamma^{(>\pn)}_{\mathrm{mc}},\red{\mathbb{I}-\Gamma^{(>\pn)}_{\mathrm{mc}}}\}\)};
\node[above =0.15cm of mc,xshift=0.5cm](nmc){\(\nmctb\)};
\node[box, right=1cm of mc](mccomplete){\(\{\vec{G}^{(>\pn)}\}\)};
\node[box, below=2cm of discard, xshift=-0.7cm] (f) {\( \{\red{\ftx}, \blue{\ftz} \}\)};
\node[box, below left=2cm and 0.5cm of f] (ca1) {\(\{ \GtXeq, \GtXneq\} \)};
\node[dashedbox, below=4cm of f] (ca2) {\(\{ \GtXeq, \GtXneq\} \)};
\node[box, below right=4cm and 2cm of f] (ca3){\(\{ \GtZeq, \GtZneq\} \)};
\node[below right=3cm and 2cm of f] (middle) {\(\)};
\node[below right=0.8cm and 0.2cm of f] (nk1) {\(\;\;\nkoneb\)};
\node[below left=0.8cm and 0.2cm of f] (nx1) {\(\nxoneb\;\;\)};
\node[below =0.8cm of discard,xshift=-0.4cm] (nx1) {\(\;\;\;\;\nonet\)};
\node[below =0.4cm of ca2](phase){\(\Nkoneb\)};
\node[left =0.4cm of ca1](test){\(\Nxoneb\)};

\node at (6.5,-0.8)(eqmulti){\(\text{Pr}\biggl(\ngNb > V(\nmctb,\eppntb,\lm)\biggr)\leq \eppntb^2\)};
\node at (-5,-7.5)(eqzero){\(\text{Pr}\biggl(\nkzero \geq n(q_Z+\gmzero(n,\epzero))\biggr) \leq \epzero^2\)};
\node at (1,-16)(eqerror){\(\text{Pr}\biggl(\Nkoneb \geq \frac{\Nxoneb+\sqrt{-2\text{log}(\epazua^2)(\nxoneb+\nkoneb)}+\delta(\nxoneb+\nkoneb)}{a}+\sqrt{-2\text{log}(\epazub^2)(\nxoneb+\nkoneb)}\biggr)\leq \epazua^2 + \epazub^2\)};

\node[below =0.1cm of eqmulti]{(upper bound rounds in ($>\pn$)-subspace)};
\node[below =0.1cm of eqzero]{(upper bound key rounds in 0-subspace)};
\node[below =0.1cm of eqerror]{(upperbound number of phase-errors)};

\node at (8,-2.2){\circletext{1}};
\node at (8,-4.8){\circletext{2}};
\node at (8,-6.8){\circletext{3}};
\node at (8,-12.1){\circletext{4}};
\node at (8,-15.3){\circletext{5}};
\node[ above=0.5cm of ca3,xshift=0.5cm,yshift=-0.2cm](EUR){\textcolor{blue}{EUR}(\cref{appendix:security})};

% Arrows
\draw[arrow] (top) -- (qnd);
\draw[arrow,red] (qnd) --  node[left] {\( \black{\tilde{n}_{(0)}} \)} (0part);
\draw[arrow,blue] (qnd) -- node[right] {\( \black{\tilde{n}_{(\pn)}} \)} (discard);
\draw[arrow,brown] (qnd) -- node[right] {\( \black{\ngN} \)} (mc);
\draw[arrow,blue] ([xshift=-1.7cm,yshift=0.3cm] discard.south) -- (f.north);
\draw[arrow,red] ([xshift=-0.2cm]mc.east) -- (mccomplete);
\draw[arrow] ([xshift=-0.7cm,yshift=-0.3cm] mc.north) -- (nmc);
\draw[arrow] ([xshift=0.3cm] 0part.west) -- (n0part);
\draw[arrow,red] (f) -- (ca1);
\draw[dashed,->] (middle) -- (ca2);
\draw[dashed,->] (ca2) -- (phase);
\draw[arrow] (ca1) -- (test);
\draw[dashed] (-5,-2.5) -- (8,-2.5);
\draw[dashed] (-5,-5) -- (8,-5);
\draw[dashed] (-5,-7) -- (8,-7);
\draw[dashed] (-5,-12.3) -- (8,-12.3);
\draw[dashed] (-5,-15.5) -- (8,-15.5);

\draw[arrow,blue] (f) -- (ca3);
\draw[dashed,darkgreen] (phase.west) to[out=180,in=-90] node[left]{\cref{lemma:azuma}}(test.south);
\draw[dashed,darkgreen] (n0part.west) to[out=180,in=180] node{\cref{lemma:smallPOVM}}(eqzero.west);
\draw[dashed,darkgreen] (nmc.west) to[out=190,in=180] node[left,xshift=0.2cm]{\cref{lemma:orderingonPOVMs}}(eqmulti.west);
% % Discard annotation
% \draw[discard] (discard.west) -- ++(-1,-1) node[left]{discard};

\end{tikzpicture}}
    \caption{The phase error rate estimation protocol described in \cref{subsubsec:fullphaseprotocol}: The colors indicate which POVM is used to complete the measurement given the outcome of the previous step. A tilde \( \nt \) denotes variables that cannot be directly observed. The superscript \( (0, \pn, >\pn) \) indicates that the POVM element belongs to the corresponding photon number subspace.}
    \label{diag:equivalent}
\end{figure}
We can now use the analysis in the previous subsections to obtain the full phase error estimation protocol. Alice and Bob perform measurements in the following steps as described in \cref{diag:equivalent}. We state the sampling statements required by our proof at each step and prove them later in \cref{sec:NumberofErrorBound,sec:zerobound,sec:multibound}.

\begin{enumerate}[label=\circletext{\arabic*}, leftmargin=*]
    \item They first perform QND measurement with $\{\Pi_0,\Pi_{\pn},\Pign\}$. $\Pi_{\pn}$ is the projector onto ($1\leq\pnm\leq\pn$)-photon subspace. $\Pi_{0}$ is the projector on $0$-subspace. $\Pi_{>\pn}$ is the projector on $(>\pn)$-subspace.
    \item Upon the result $\Pign$ at stage \circletext{1} in \cref{diag:equivalent}, they complete the multi-click measurement $\{\Gamma^{(>\pn)}_{\mathrm{mc}},\idd-\Gamma^{(>\pn)}_{\mathrm{mc}}\}$. Denote the number of $\Gamma^{(>\pn)}_{\mathrm{mc}}$ outcomes as $\nmctb$, the number of multi-click outcomes from $(>\pn)$-photon signals. This gives an upper bound on  $\ngNb$ (number of multi-photon signals) in terms of $\nmctb$ (number of multi-click outcomes from $(>\pn)$-photon signals):
    \begin{align}\label{step2multibound}
        &\text{Pr}\biggl(\ngNb >V(\nmctb,\eppntb,\lm)\biggr)\leq \eppntb^2,
    \end{align}
    where 
    \begin{align}\label{step2multiboundsub}
        V(\nmctb,\eppntb,\lm) := \frac{-2\log(\eppntb)}{4{\lm}^2}+\frac{\nmctb}{\lm}+\frac{\sqrt{(-\log(\eppntb))(-\log(\eppntb)+4\lm\nmctb)}}{2{\lm}^2}.
    \end{align}
    The proof of this bound and the definition of $\lm$ are given in \cref{sec:multibound}.
    They then complete all other measurements on $(>\pn)$-photon signals using some POVM $\{\vec{G}^{(>\pn)}\}$. The specific structure of $\{\vec{G}^{(>\pn)}\}$ is not relevant to the present proof.

    \begin{remark}
         In this proof, we use a tilde mark ($\bd{\tilde{n}}$) to denote random variables that \textbf{cannot} be directly observed by Alice and Bob during the protocol. For example, $\nmctb$ is the number of multi-click outcomes from $(>\pn)$-photon signals, which cannot be observed directly in our protocol. What Bob can actually measure in the protocol is $\nmcb$,  the number of multi-click outcomes from all signals.
    \end{remark}
    \item (a) They complete the measurement for 0-photon signals with $\{\vec{\Gamma}^{(0)}\}$. We denote $\nkzero$ as the number of $Z$-basis clicks (key rounds) from 0-photon signals. $\nkzero$ is upper bounded through dark count rates:
    \begin{align}\label{step3zerobound}
        \text{Pr}\biggl(\nkzero \geq n\biggl(q_Z+\gmzero(n,\epzero)\biggr)\biggr) \leq \epzero^2,
    \end{align}
    where $q_Z$ is a fixed number calculated from dark count rates. $\gmzero(n,\epzero)$ is a finite-size correction term from Hoeffding's inequality. The proof of this bound is given in \cref{sec:zerobound}.
    
    (b) They also perform a coarse-grained measurement
    $\{\Fonesc,\Gonemc,\Gonenc,\Gamma^{(\pn)}_{\text{basis mismatch}}\}$ on ($1\leq\pnm\leq\pn$)-photon subspace. We recall from \cref{eq:Fonesc} that
    \begin{align}
        \begin{split}\label{eq:definefx}
        &\fx^{(\pn)} = \GMXeq + \GMXneq; \;\; \fz^{(\pn)} = \GMZeq + \GMZneq;\\
        & \Fonesc = \fx^{(\pn)} + \fz^{(\pn)} = \GMXeq + \GMXneq+\GMZeq + \GMZneq.
        \end{split}
    \end{align}
    $\Fonesc$ is the POVM element corresponding to single-click, basis-matched outcomes.
    \item They complete the measurements for \( (1 \leq \pnm \leq \pn) \)-photon signals. 
    In particular, they first perform another coarse-grained basis choice measurement \( \{ \ftx, \ftz \} \). Recall from \cref{eq:ftx} and \cref{eq:ftz} that
\begin{align}\label{eq:defineftx}
    \ftx& = {\sqrt{\Fonesc}}^+\fx^{(\pn)}{\sqrt{\Fonesc}}^++\frac{a}{a+1}(\idd^{(\pn)}-\Pi_{\Fonesc}),   \\
    \ftz& = {\sqrt{\Fonesc}}^+\fz^{(\pn)}{\sqrt{\Fonesc}}^++\frac{1}{a+1}(\idd^{(\pn)}-\Pi_{\Fonesc}).
\end{align}
    Then, they complete the $X$-basis measurement with \( \{ \GtXeq, \GtXneq \} \) upon obtaining the result \( \ftx \). We recall from \cref{eq:Gtilde}  that \( \{ \GtXeq, \GtXneq \} \) is the POVM followed from \cref{lemma:twostep} corresponding to $\GMXeq$ and $\GMXneq$ in the original POVM $\{\vec{\Gamma}^{(\pn)}\}$.
    At this point, there are \( \Nxoneb \) number of erroneous outcomes corresponding to \( \GtXneq \) 
    from \( X \)-basis measurements on \( (1 \leq \pnm \leq \pn) \)-photon signals.

    \begin{remark}
         For proving security using EUR statement, we apply EUR on the state with the result $\ftz$ after this stage conditioned on event $\Omega(\errzobs,\Nxobs,\nx,\nk,\nmc,\nkone)$, namely \\${\rho_{A^{\nkone}B^{\nkone}Z_A^{\tilde{n}_{{(K,0)}}}Z_A^{\ngN}E^nC}}_{\condi(\errzobs,\Nxobs,\nx,\nk,\nmc,\nkone)}$. Then, we use the bound on phase error rate in  \( (1 \leq \pnm \leq \pn) \)-photon rounds to bound the smooth min-entropy of key string for these rounds, as shown in \cref{appendix:security}. This argument of reduction on smooth min-entropy is essentially identical to the decoy-state argument in \cite{devEUR,charles}.
    \end{remark}

    \begin{remark}
    Alice's original $X$ and $Z$ measurements have $c_q = 1$, where $c_q$ quantifies the overlap between the measurements in the EUR statement~\cite{EUR1}. However, after applying \cref{lemma:twostep} to reformulate the measurements, it is not immediately obvious that the $c_q$ factor remains 1. Intuitively, since we only reformulate Bob's measurement while keeping Alice's measurements unchanged, the $c_q$ factor should still be 1. We explicitly verify this in \cref{appendix:cq}.
\end{remark}
   
    \item They complete the $Z$-basis measurement for ($1 \leq \pnm \leq \pn$)-photon signals using $\{\GtZeq,\GtZneq\}$ (recall \cref{eq:Gtilde}) on the state ${\rho_{A^{\nkone}B^{\nkone}Z_A^{\tilde{n}_{(K,0)}}Z_A^{\ngN}E^nC}}_{\condi(\errzobs,\Nxobs,\nx,\nk,\nmc,\nkone)}$. The number of phase errors $\Nkoneb$ is defined as the number of $\GtXneq$ outcomes if they complete the measurement using $\{\GtXeq,\GtXneq\}$ on the same state instead. We aim to bound the number of phase errors:
    \begin{align}
        \begin{split}
        &\text{Pr}\biggl(\Nkoneb \geq \frac{\Nxoneb+\sqrt{-2\ln(\epazua^2)(\nxoneb+\nkoneb)}+\delta\times(\nxoneb+\nkoneb)}{a}\\&\;\;\;\;+\sqrt{-2\ln(\epazub^2)(\nxoneb+\nkoneb)}\biggr)\\
        &\leq \epazua^2 + \epazub^2, \label{NumberofErrorBound}
        \end{split}
    \end{align}
    where $\nxoneb(\nkoneb)$ is the number of signals measured by $X$ ($Z$)-basis measurement for ($1 \leq \pnm \leq \pn$)-photon signals. $a$ is a parameter one can pick freely. $\delta$ is a parameter computed from $\ftx,\ftz$ and $a$, which characterises the efficiency mismatch in detectors. The proof is given in \cref{sec:NumberofErrorBound}.
\end{enumerate}
\begin{remark}
In the estimation protocol, without loss of generality, we can view that Alice and Bob complete the measurements at each stage for the relevant rounds before proceeding to the next stage. For example, at stage \circletext{2} in~\cref{diag:equivalent}, they complete the measurements on the rounds with more than $\pn$ photons before moving to stage \circletext{3}. 
Although the measurement outcome \( \nmctb \) is not directly accessible to Alice and Bob, it allows us to discuss the outcomes of the measurement 
\(
\{ \Gamma^{(>\pn)}_{\mathrm{mc}}, \idd - \Gamma^{(>\pn)}_{\mathrm{mc}} \}
\) 
on a subset of \( \ngN \) rounds. This is important because it allows us to perform our statistical analysis on a specific subset of rounds, as will be shown later.
\end{remark}
We restate our goal here: we want to upper bound the phase error rate in the ($1\leq\pnm\leq\pn$)-photon rounds, $\errks={\Nkoneb}/{\nkoneb}$, and lower bound the number of rounds we bound phase errors on, $\nkoneb$.
% \begin{align}
%     \errks \text{ (phase error rate)}= \frac{\Nkoneb\text{ (number of phase errors)}}{\nkoneb\text{ (number of rounds we estimate phase error on)}}
% \end{align}
\cref{NumberofErrorBound} provides an upper bound on $\Nkoneb$, the number of phase errors. \cref{step2multibound} and \cref{step3zerobound} together provide a lower bound on $\nkoneb$, the number of rounds on which phase errors are bounded. The upper bound on $\Nkoneb$ \cref{NumberofErrorBound} and the lower bound on $\nkoneb$ together provide a lower bound on $\errks$. Later in \cref{sec:combine bounds}, we replace varibles $\nkoneb,\; \nxoneb,$ etc. with observed quantities $\nkb,\; \nxb,$ etc..
\subsection{\texorpdfstring{Upper bound on the number of phase errors~(\cref{NumberofErrorBound})}{}}\label{sec:NumberofErrorBound} 
Let us first focus on the middle branch in \cref{diag:equivalent} for $(1 \leq \pnm \leq \pn)$-photon rounds.
We use Azuma's inequality~\cite{azuma} to bound the number of phase errors in the $(1 \leq \pnm \leq \pn)$-photon rounds. We begin by stating the lemma that will be used in the sampling argument.
% We can use Azuma's inequality and sample the number of phase errors.
\begin{restatable}{lemma}{operatorazuma}[Sampling with Azuma's inequality]\label{lemma:operatorazuma}
    Let $\rho_{Q^n} \in S_{\circ}(Q^{\otimes n})$ be an arbitrary state. Let $\{\Gamma_{\text{a}},\Gamma_{\text{b}},...\}$ be a $k$-element POVM. Let $\bd{n_{\textbf{a}}}$($\bd{n_{\textbf{b}}}$) be the number of ${\text{a}}({\text{b}})$-outcomes of the measurement when $\rho_{Q^n}$ is measured by $\{\Gamma_{\text{a}},\Gamma_{\text{b}},...\}$. Let $a,\epazua,\epazub\geq0$. If the following bound holds:
    \begin{align}
        \infnorm{\Gamma_{\text{a}} - a\Gamma_{\text{b}}} \leq \delta,
    \end{align}
    then
    \begin{align}
  \text{Pr}\biggl(\bd{n_{\textbf{b}}} \geq \frac{\bd{n_{\textbf{a}}}+\sqrt{-2\ln(\epazua^2)n}+\delta n}{a}+\sqrt{-2\ln(\epazub^2)n}\biggr)\leq \epazua^2 + \epazub^2  .
    \end{align}
\end{restatable}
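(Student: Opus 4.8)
The plan is to reduce the statement to a classical martingale concentration argument via Azuma's inequality, exploiting the fact that the operator bound $\infnorm{\Gamma_{\text{a}} - a\Gamma_{\text{b}}} \leq \delta$ controls the per-round outcome probabilities. Since each POVM element acts on a single tensor factor $Q_i$, measuring all $n$ systems with $\{\Gamma_{\text{a}},\Gamma_{\text{b}},\dots\}$ produces a well-defined joint distribution over outcome strings; I would measure the systems sequentially and work with the filtration $\mathcal{F}_{i}$ generated by the first $i$ outcomes. Writing $X_i^a, X_i^b \in \{0,1\}$ for the indicators that round $i$ yields outcome $\text{a}$ (resp.\ $\text{b}$), we have $\bd{n_\textbf{a}} = \sum_i X_i^a$ and $\bd{n_\textbf{b}} = \sum_i X_i^b$. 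The key observation is that the conditional expectation of each indicator is $\mathbb{E}[X_i^a \mid \mathcal{F}_{i-1}] = \Tr(\Gamma_{\text{a}}\, \rho^{(i)})$, where $\rho^{(i)}$ is the normalized reduced state on $Q_i$ after conditioning on the outcomes of the first $i-1$ rounds---a genuine density operator---so tracing the spectral inequality $\Gamma_{\text{a}} - a\Gamma_{\text{b}} \succeq -\delta\,\idd$ against $\rho^{(i)}$ yields the per-round bound $\mathbb{E}[X_i^a \mid \mathcal{F}_{i-1}] \geq a\,\mathbb{E}[X_i^b \mid \mathcal{F}_{i-1}] - \delta$.

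Next I would introduce the two martingales $M_k^a = \sum_{i=1}^k \big(\mathbb{E}[X_i^a \mid \mathcal{F}_{i-1}] - X_i^a\big)$ and $M_k^b = \sum_{i=1}^k \big(X_i^b - \mathbb{E}[X_i^b \mid \mathcal{F}_{i-1}]\big)$, each with increments bounded in absolute value by $1$ (since the indicators lie in $\{0,1\}$ and their conditional means lie in $[0,1]$). Applying Azuma's inequality to each gives tail bounds of the form $\exp(-t^2/2n)$; choosing the thresholds so that these equal $\epazua^2$ and $\epazub^2$ produces, with probability at least $1 - \epazua^2$, the bound $A := \sum_i \mathbb{E}[X_i^a \mid \mathcal{F}_{i-1}] < \bd{n_\textbf{a}} + \sqrt{-2\ln(\epazua^2)n}$, and, with probability at least $1 - \epazub^2$, the bound $\bd{n_\textbf{b}} < B + \sqrt{-2\ln(\epazub^2)n}$, where $B := \sum_i \mathbb{E}[X_i^b \mid \mathcal{F}_{i-1}]$. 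Summing the per-round operator inequality over all rounds gives the deterministic relation $B \leq (A + \delta n)/a$. Chaining these on the intersection of the two good events (whose complement has probability at most $\epazua^2 + \epazub^2$ by the union bound) gives $\bd{n_\textbf{b}} < (A + \delta n)/a + \sqrt{-2\ln(\epazub^2)n} < \big(\bd{n_\textbf{a}} + \sqrt{-2\ln(\epazua^2)n} + \delta n\big)/a + \sqrt{-2\ln(\epazub^2)n}$, which is precisely the claimed threshold; hence the event in the statement has probability at most $\epazua^2 + \epazub^2$.

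I expect the main obstacle to be the careful formalization of the quantum-to-classical reduction rather than any hard estimate: one must argue that the sequential single-system measurements define a legitimate martingale with respect to the outcome filtration, and that the conditional reduced states $\rho^{(i)}$ are normalized density operators, so that the spectral bound $-\delta\,\idd \preceq \Gamma_{\text{a}} - a\Gamma_{\text{b}} \preceq \delta\,\idd$ transfers to the conditional outcome probabilities on every round, regardless of Eve's arbitrary (and possibly correlated across rounds) input state $\rho_{Q^n}$. The remaining points are routine: the other $k-2$ POVM outcomes play no role, since the indicators $X_i^a, X_i^b$ are simply mutually exclusive, and the increment bound $c_i = 1$ is exactly what produces the $\sqrt{-2\ln(\cdot)\,n}$ finite-size correction terms appearing in the statement.
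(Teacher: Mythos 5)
Your proposal is correct and follows essentially the same route as the paper's proof: two applications of Azuma's inequality to the indicator-based martingales, transfer of the operator inequality $\Gamma_{\text{a}} - a\Gamma_{\text{b}} \succeq -\delta\,\idd$ to the conditional outcome probabilities via the conditioned state in each round, and a union bound to chain the two tail events. The only cosmetic difference is that the paper packages the Azuma step as a separate restated lemma (\cref{lemma:azuma}) rather than inlining the martingale construction.
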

The proof of this lemma is in \cref{technical} and constitutes two uses of Azuma's inequality together with a union bound.

To use this lemma, we consider the following state: Let $\nonetb$ be the random variable recording the number of outcomes corresponding to $\Fonesc$ in the measurement $\{\Fonesc,\Gonemc,\;\Gonenc,\;$$\Gamma^{(\pn)}_{\text{basis mismatch}}\}$ on the ($1 \leq \pnm \leq \pn$)-photon rounds, and consider the state conditioned on the event $\Omega(\nonetb=\nonet)$. Alice and Bob will perform $\{\ftx,\ftz\}$ followed by $\{\GtXeq\;,\;\GtXneq\}$ on this state. This can be written as a single POVM measurement on the state at stage \circletext{3} in \cref{diag:equivalent} given by:
\begin{align}
    \{\sqrt{\ftx} \GtXeq \sqrt{\ftx},\;\;\;\sqrt{\ftx} \GtXneq \sqrt{\ftx},\;\;\;\sqrt{\ftz} \GtXeq \sqrt{\ftz},\;\;\;\sqrt{\ftz} \GtXneq \sqrt{\ftz}\}
\end{align}
We are interested in $\Nxoneb$, the number of outcomes corresponding to $\sqrt{\ftx} \GtXneq \sqrt{\ftx}$ (the number of errors in $X$-basis measurement from ($1\leq\pnm\leq\pn$)-photon signals) and $\Nkoneb$, the number of outcomes corresponding to $\sqrt{\ftz} \GtXneq \sqrt{\ftz}$ (phase errors).
In the case where the detectors are fully characterised and all detector parameters ($s,\;\eta_i,\;d_i$) are known exactly (implying that the POVM elements are precisely known), one can explicitly compute \( \delta \) via
\begin{align}\label{deltabound}
    \delta:=\infnormlong{\sqrt{\ftx} \GtXneq \sqrt{\ftx} - a\sqrt{\ftz} \GtXneq \sqrt{\ftz}} ,
\end{align}
where $a$ is a free parameter.  The parameter $\delta$ quantifies the basis-efficiency mismatch between the two bases. (The parameter $a$ can be picked to minimize the value of $\delta$.) When \( \pn = 1 \) and there is no basis-efficiency mismatch, the optimal choice of \( a \) is the ratio between Alice and Bob's effective \( X \)- and \( Z \)-basis choice probabilities, i.e., \( a = \frac{\pxa \pxb}{\pza (\pzb)} \), and in this case \( \delta = 0 \). In this work, we choose the same value \( a = \frac{\pxa \pxb}{\pza (\pzb)} \) even in the presence of detector imperfections.

\begin{remark}
We remind the reader that since the measurements are applied to states with photon number \( 1 \leq \pnm \leq \pn \), we can, without loss of generality, treat the POVM element \( \Gamma^{(\pn)} \) as a finite-dimensional POVM element acting within the (\( 1 \leq \pnm \leq \pn \))-photon subspace. Consequently, the POVM elements \( \ftx \), \( \ftz \), and \( \GtXneq \) live entirely within the (\( 1 \leq \pnm \leq \pn \))-photon subspace, which simplifies the computation of $\delta$. Further, we provide a way to compute $\delta$ for the case $\pn=1$ in \cref{appendix:adelta}. In this method we do not need to use $\GtXneq$: knowledge of \( \ftx \), \( \ftz \) suffices. 
\end{remark}

Applying \cref{lemma:operatorazuma} on these $\nonet$ rounds, and using the POVM elements and $\delta$ from \cref{deltabound}, we obtain:
\begin{align}
   & \text{Pr}\biggl(\Nkoneb \geq \frac{\Nxoneb+\sqrt{-2\ln(\epazua^2)\nonet}+\delta\nonet}{a}+\sqrt{-2\ln(\epazub^2)\nonet}\biggr)_{\condi{(\nonet)}}\leq \epazua^2 + \epazub^2  
\end{align}
We sum over all possible $\nonet$ and replace $ \nxoneb+\nkoneb$ with $\nonetb $ (since they are equal) to obtain:
\begin{align}
    \begin{split}
    &\text{Pr}\biggl(\Nkoneb \geq \frac{\Nxoneb+\sqrt{-2\ln(\epazua^2)(\nxoneb+\nkoneb)}+\delta(\nxoneb+\nkoneb)}{a}\\&\;\;\;+\sqrt{-2\ln(\epazub^2)(\nxoneb+\nkoneb)}\biggr)\\[15pt]
    & = \text{Pr}\biggl(\Nkoneb \geq \frac{\Nxoneb+\sqrt{-2\ln(\epazua^2)\nonetb}+\delta\nonetb}{a}+\sqrt{-2\ln(\epazub^2)\nonetb}\biggr)\\[15pt]
    & = \sum_{\nonet}\text{Pr}\biggl(\Omega({\nonetb=\nonet})\biggr)\\
    &\times \text{Pr}\biggl(\Nkoneb \geq \frac{\Nxoneb+\sqrt{-2\ln(\epazua^2){\nonet}}+\delta{\nonet}}{a}+\sqrt{-2\ln(\epazub^2){\nonet}}\biggr)_{\condi{(\nonet)}}\\[15pt]
    &\leq\sum_{\nonet}\text{Pr}\biggl(\Omega({\nonetb=\nonet})\biggr) \biggl(\epazua^2 + \epazub^2 \biggr) \\
    &=\epazua^2 + \epazub^2 . 
    \end{split}
\end{align}
\begin{remark} \label{remark:partialcharacterizeone}
When imperfections are only partially characterized and lie within known ranges, we can identify the worst-case scenario to obtain an upper bound on \( \delta \) and a lower bound on \( a \), and proceed with the same proof using these worst-case values. This approach is valid because the bound we derived above is monotonically increasing with respect to \( \delta \) and decreasing with respect to \( a \). The details of how \( \delta \) is computed and how \( a \) is chosen in the presence of partially known imperfections are provided in~\cref{subsec:partialimperfections}.
\end{remark}
\begin{remark}
For the case $\pn=1$, the beam splitter is effectively doing an active basis choice. Therefore, one can sample the phase error rate using the Serfling bound \cite{serfling_probability_1974} in the same way as in \cite[Section V]{devEUR} for imperfect active setups. This idea is also used when we consider source imperfections in \cref{sec:sourceimp}. However, the analysis using the Serfling bound does extend straightforwardly to the scenario where we consider memory effects (see \cref{sec:correlated detectors}). In addition, Serfling's bound requires the beam splitting ratio \( s \) to remain fixed across all \( n \) rounds, which may not hold for practical beam splitters. In contrast, Azuma's inequality allows \( s \) to vary across different rounds in the protocol. Therefore, we use the more general Azuma's inequality in our analysis to ensure robustness of the proof and to accommodate larger photon-number cutoffs.
\end{remark}
\begin{remark}
One could use Kato’s inequality~\cite{kato} instead of Azuma’s inequality to prove a similar lemma as \cref{lemma:operatorazuma} and bound the number of phase errors. Kato's inequality takes the following form:
Given $\alpha \in \mathbb{R}, \beta>\abs{\alpha}$,
\begin{align}
    &\text{Pr}\biggl(\sum_{i=1}^n\rvE{\rvy|\Xhis{i-1}} - \sum_{i=1}^n\rvy \geq \biggl(\beta+\alpha(\frac{2\sum_{i=1}^n\rvy}{n}-1)\biggr)\sqrt{n}\biggr)\leq \exp(-\frac{2(\beta^2-\alpha^2)}{(1+\frac{4\alpha}{3\sqrt{n}})^2})\;.
\end{align}
Kato's inequality can outperform Azuma's inequality when a prediction for \( \sum_{i=1}^n \rvy \) (typically the observed number of \( X \)-basis errors) is available, and \( \sum_{i=1}^n \rvy \) is expected to be much smaller than the total number of random variables $n$. Given such a prediction, one can choose parameters \( \alpha \) and \( \beta \) to obtain a tighter bound than Azuma's inequality (see~\cite[Eq.~32]{curras2021tight} for the optimal choice of \( \alpha \) and \( \beta \)). 
However, we do not use Kato's inequality in this work for pedagogical clarity, as it requires an additional lower bound on \( \nxoneb + \nkoneb \).

\end{remark}

We summarize this subsection in the following lemma.
\begin{lemma}\label{lemma:NumberofErrorBound}
 Let $\rho_{A^nB^n} \in S_{\circ}(A^nB^n)$ be arbitrary state representing $n$ rounds of a generic QKD protocol. Suppose the joint POVM for Alice and Bob can be written as follows:
    \begin{align}
        \{\vec{\Gamma}\} = \{\Gamma_{(\text{test},=)},\;\Gamma_{(\text{test},\neq)},\;\Gamma_{(\text{key},=)},\;\Gamma_{(\text{key},\neq)},\;\Gamma_e,\;\Gamma_{\text{other}}\}
    \end{align}
    and the joint POVM inherits a block-diagonal structure from the fact that Bob's POVM has a block-diagonal structure:
    \begin{align}
        \Gamma_i = \bigoplus_{\pnm=0}^{\infty} \Gamma_i^{(\pnm)} .
    \end{align}
       Then, Alice and Bob's measurement can be viewed as the following:
     \begin{enumerate}[label=\circletext{\arabic*}, leftmargin=*]
    \item They first perform QND measurement with projectors on the subspaces $\{\Pi_0,\Pi_{\pn},\Pign\}$. They complete measurements on the 0 and $(>\pn)$ subspaces when convenient.
    \item They then perform a coarse-grained measurement
    $\{F^{(\pn)},1-F^{(\pn)}\}$ on ($1\leq\pnm\leq\pn$) subspace, where
    \begin{align}
        F^{(\pn)}:= \Gamma^{(\pn)}_{(\text{test},=)}+\Gamma^{(\pn)}_{(\text{test},\neq)}+\Gamma^{(\pn)}_{(\text{key},=)}+\Gamma^{(\pn)}_{(\text{key},\neq)}
    \end{align}
    \item They complete the measurements for ($1\leq\pnm\leq\pn$) subspace. They first perform another coarse-grained basis choice measurement $\{\tilde{F}^{(\pn)}_{\text{test}},\tilde{F}^{(\pn)}_{\text{key}}\}$. We denote the number of ($1\leq\pnm\leq\pn$)-photon test rounds (corresponding to $\tilde{F}^{(\pn)}_{\text{test}}$) as $\bd{\tilde{n}_{\textbf{(test,\pn)}}}$. Then, they complete the measurement with $\{\tilde{G}^{(\pn)}_{(\text{test},=)},\tilde{G}^{(\pn)}_{(\text{test},\neq)}\}$ upon the result $\tilde{F}^{(\pn)}_{\text{test}}$. We denote the number of erroneous outcomes corresponding to $\tilde{G}^{(\pn)}_{(\text{test},\neq)}$ as $\bd{\tilde{N}_{\textbf{(test,\pn)}}}$
    \begin{align}
    \tilde{F}^{(\pn)}_{\text{test}} = {\sqrt{F^{(\pn)}}}^+(\Gamma^{(\pn)}_{(\text{test},=)}+\Gamma^{(\pn)}_{(\text{test},\neq)}){\sqrt{F^{(\pn)}}}^+;\;\;\;    \tilde{F}^{(\pn)}_{\text{key}} = {\sqrt{F^{(\pn)}}}^+(\Gamma^{(\pn)}_{(\text{key},=)}+\Gamma^{(\pn)}_{(\text{key},\neq)}){\sqrt{F^{(\pn)}}}^+
    \end{align}
    \item They complete the key measurement for ($1\leq\pnm\leq\pn$) subspace using $\{\tilde{G}^{(\pn)}_{(\text{key},=)},\tilde{G}^{(\pn)}_{(\text{key},\neq)}\}$ upon the result $\tilde{F}^{(\pn)}_{\text{key}}$. We denote the number of ($1\leq\pnm\leq\pn$)-photon key rounds (corresponding to $\tilde{F}^{(\pn)}_{\text{key}}$) as $\bd{\tilde{n}_{\textbf{(key,\pn)}}}$. The number of phase error $\bd{\tilde{N}^{\textbf{ph}}_{\textbf{\pn}}}$ is defined as the number of $\tilde{G}_{(\text{test},\neq)}$ outcomes if they complete the measurement using $\{\tilde{G}^{(\pn)}_{(\text{test},=)},\tilde{G}^{(\pn)}_{(\text{test},\neq)}\}$ instead.
  \begin{align}
            \tilde{G}^{(\pn)}_{(\text{test},\neq)} = \sqrt{\tilde{F}^{(\pn)}_{\text{test}}}^+\sqrt{F_{\mathrm{sc}}^{(\pn)}}^+\Gamma^{(\pn)}_{(\text{test},\neq)}\sqrt{F_{\mathrm{sc}}^{(\pn)}}^+\sqrt{\tilde{F}^{(\pn)}_{\text{test}}}^+
        \end{align}
\end{enumerate}
   For any \( a > 0 \), let \( \delta \) be defined as the following infinity norm (or an upper bound thereof):
\begin{equation}
    \delta \geq \left\lVert 
        \sqrt{\tilde{F}^{(\pn)}_{\text{test}}} \, \tilde{G}^{(\pn)}_{(\text{test},\neq)} \, \sqrt{\tilde{F}^{(\pn)}_{\text{test}}} 
        - a \sqrt{\tilde{F}^{(\pn)}_{\text{key}}} \, \tilde{G}^{(\pn)}_{(\text{test},\neq)} \, \sqrt{\tilde{F}^{(\pn)}_{\text{key}}}
    \right\rVert_{\infty}
\end{equation}
    Given $\epazua>0$ and $\epazub>0$ , the following relation holds:
       \begin{align}
        \begin{split}
        &\text{Pr}\biggl(\bd{\tilde{N}^{\textbf{ph}}_{\textbf{\pn}}} \geq \frac{\bd{\tilde{N}_{\textbf{(test,\pn)}}}+\sqrt{-2\ln(\epazua^2)(\bd{\tilde{n}_{\textbf{(test,\pn)}}}+\bd{\tilde{n}_{\textbf{(key,\pn)}}})}+\delta(\bd{\tilde{n}_{\textbf{(test,\pn)}}}+\bd{\tilde{n}_{\textbf{(key,\pn)}}})}{a}\\&\;\;\;\;+\sqrt{-2\ln(\epazub^2)(\bd{\tilde{n}_{\textbf{(test,\pn)}}}+\bd{\tilde{n}_{\textbf{(key,\pn)}}})}\biggr)\\
        &\leq \epazua^2 + \epazub^2.
        \end{split}
    \end{align}
\begin{proof}
    The proof is described in this subsection by identifying the following
 \begin{align}
    \begin{split}
           &\Gamma_{(X,=)}=\Gamma_{(\text{test},=)};\;\;\Gamma_{(X,\neq)}=\Gamma_{(\text{test},\neq)}\\
           &\Gamma_{(Z,=)}=\Gamma_{(\text{key},=)};\;\;\;\Gamma_{(Z,\neq)}=\Gamma_{(\text{key},\neq)}\\
           &\nxoneb=\bd{\tilde{n}_{\textbf{(test,\pn)}}};\;\; \nkoneb=\bd{\tilde{n}_{\textbf{(key,\pn)}}}\\
           &\Nxoneb = \bd{\tilde{N}_{\textbf{(test,\pn)}}};\;\Nkoneb = \bd{\tilde{N}^{\textbf{ph}}_{\textbf{\pn}}}.
    \end{split}
 \end{align}
\end{proof}
\end{lemma}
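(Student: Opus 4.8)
The plan is to observe that the generic block-diagonal structure assumed here reproduces \emph{exactly} the decomposition already carried out for the decoy-state protocol in this subsection, so the proof amounts to replaying that argument under the dictionary given in the statement. First I would invoke \cref{lemma:twostep} twice, precisely as in Equivalence~1 and Equivalence~2 above, to justify that Alice and Bob's measurement is equivalent (as a quantum-to-classical channel) to the multi-step process listed in the statement: the QND measurement $\{\Pi_0,\Pi_{\pn},\Pign\}$, which is admissible because the joint POVM is block-diagonal in $\pnm$; then the coarse-grained single-click, basis-matched filter $\{F^{(\pn)},\idd-F^{(\pn)}\}$ on the $(1\leq\pnm\leq\pn)$ subspace; then the coarse-grained basis choice $\{\tilde{F}^{(\pn)}_{\text{test}},\tilde{F}^{(\pn)}_{\text{key}}\}$; and finally the refined outcome POVMs. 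The two applications of \cref{lemma:twostep} supply precisely the forms of $\tilde{F}^{(\pn)}_{\text{test}}$, $\tilde{F}^{(\pn)}_{\text{key}}$, and $\tilde{G}^{(\pn)}_{(\text{test},\neq)}$ appearing in the statement.

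Next I would condition on the event fixing the number of $F^{(\pn)}$ outcomes on the $(1\leq\pnm\leq\pn)$ subspace, say on $\tilde{n}_{(\text{sc},\pn)}$ rounds, where $\tilde{n}_{(\text{sc},\pn)}=\bd{\tilde{n}_{\textbf{(test,\pn)}}}+\bd{\tilde{n}_{\textbf{(key,\pn)}}}$. On exactly these rounds, completing the basis choice $\{\tilde{F}^{(\pn)}_{\text{test}},\tilde{F}^{(\pn)}_{\text{key}}\}$ followed by the test-outcome POVM $\{\tilde{G}^{(\pn)}_{(\text{test},=)},\tilde{G}^{(\pn)}_{(\text{test},\neq)}\}$ is a single four-element POVM on the conditioned state, two of whose outcome operators are
\begin{align}
    \Gamma_{\text{a}} := \sqrt{\tilde{F}^{(\pn)}_{\text{test}}}\,\tilde{G}^{(\pn)}_{(\text{test},\neq)}\,\sqrt{\tilde{F}^{(\pn)}_{\text{test}}}, \qquad
    \Gamma_{\text{b}} := \sqrt{\tilde{F}^{(\pn)}_{\text{key}}}\,\tilde{G}^{(\pn)}_{(\text{test},\neq)}\,\sqrt{\tilde{F}^{(\pn)}_{\text{key}}}.
\end{align}
The count of the $\Gamma_{\text{a}}$ outcome is $\bd{\tilde{N}_{\textbf{(test,\pn)}}}$, and the count of the $\Gamma_{\text{b}}$ outcome is, by definition, the phase-error count $\bd{\tilde{N}^{\textbf{ph}}_{\textbf{\pn}}}$.

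I would then apply \cref{lemma:operatorazuma} with these $\Gamma_{\text{a}},\Gamma_{\text{b}}$ and with $n=\tilde{n}_{(\text{sc},\pn)}$; its hypothesis $\infnorm{\Gamma_{\text{a}}-a\Gamma_{\text{b}}}\leq\delta$ is exactly the defining inequality for $\delta$ in the statement. This yields the stated bound conditioned on $\tilde{n}_{(\text{sc},\pn)}$. Finally I would remove the conditioning by summing over all values of $\tilde{n}_{(\text{sc},\pn)}$ weighted by $\text{Pr}(\Omega(\tilde{n}_{(\text{sc},\pn)}))$ and using $\bd{\tilde{n}_{\textbf{(test,\pn)}}}+\bd{\tilde{n}_{\textbf{(key,\pn)}}}=\tilde{n}_{(\text{sc},\pn)}$ to rewrite $n$ in terms of the test and key round counts, exactly as in the displayed summation earlier in this subsection. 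The symbol identifications $\Gamma_{(\text{test},\cdot)}\leftrightarrow\Gamma_{(X,\cdot)}$, $\Gamma_{(\text{key},\cdot)}\leftrightarrow\Gamma_{(Z,\cdot)}$, etc., listed in the proof then close the argument.

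The step requiring the most care is the conceptual one in the second paragraph: arguing that the \emph{physically measured} test-error count $\bd{\tilde{N}_{\textbf{(test,\pn)}}}$ and the \emph{virtual} (counterfactual) phase-error count $\bd{\tilde{N}^{\textbf{ph}}_{\textbf{\pn}}}$ are jointly realized as two outcome frequencies of one and the same POVM acting on the common post-$F^{(\pn)}$ state, even though in the actual protocol only one branch per round is physically completed. This is legitimate because \cref{lemma:operatorazuma} constrains the joint outcome statistics of \emph{any} single POVM on \emph{any} state, and it is precisely what licenses a single simultaneous application of Azuma's inequality to both counts. Everything else—the admissibility of the QND split, the well-definedness of the $\tilde{n}_{(\text{sc},\pn)}$-round subset, and the final desummation—follows from the two-step structure guaranteed by \cref{lemma:twostep} and is routine bookkeeping.
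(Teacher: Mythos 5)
Your proposal is correct and follows essentially the same route as the paper's proof: condition on the number of $F^{(\pn)}$-outcomes, view the basis choice followed by the test-outcome measurement as a single four-element POVM whose operators $\sqrt{\tilde{F}^{(\pn)}_{\text{test}}}\,\tilde{G}^{(\pn)}_{(\text{test},\neq)}\,\sqrt{\tilde{F}^{(\pn)}_{\text{test}}}$ and $\sqrt{\tilde{F}^{(\pn)}_{\text{key}}}\,\tilde{G}^{(\pn)}_{(\text{test},\neq)}\,\sqrt{\tilde{F}^{(\pn)}_{\text{key}}}$ count the test errors and phase errors respectively, apply \cref{lemma:operatorazuma}, and then desum over the conditioning. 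Your closing remark about why the physical and virtual counts can legitimately be treated as joint outcome frequencies of one POVM is exactly the point the paper relies on implicitly, so nothing is missing.
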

\subsection{\texorpdfstring{Upper bound on the number of $(>\pn)$-photon signals (\cref{step2multibound})}{} }\label{sec:multibound}
Next, we bound the number of $(>\pn)$-photon signals, $\ngNb$, on the right branch of \cref{diag:equivalent}.
The main idea is taken from other security proofs~\cite{nicky,var,lars} where the multi-click outcomes are used to bound the number of $(>\pn)$-photon rounds. In our proof, we use the minimum eigenvalue of the multi-click POVM element. 
If we can find the minimum eigenvalue of \( \Pign \Gamma_{\mathrm{mc}} \Pign \) (which we denote using \( \lm(\Pign \Gamma_{\mathrm{mc}} \Pign) \)), then we have:
\begin{align} \label{lm_relation}
    \Pign\Gamma_{\mathrm{mc}}\Pign =\Gamma_{\mathrm{mc}}^{(>\pn)} \geq \lm(\Pign\Gamma_{\mathrm{mc}}\Pign)\;\Pign.
\end{align}

We denote \( \lm(\Pign \Gamma_{\mathrm{mc}} \Pign) \) as \( \lm \) for simplicity. Intuitively, $\lambda_{\min}$ is the minimum probability that Bob obtains a multi-click outcome when the incoming signal contains more than $\pn$ photons.
For a passive detection setup with no loss, 
\( \lm = 1 - (\pxb)^{(\pn+1)} - (\pzb)^{(\pn+1)} \), 
where \( \pxb \) is the splitting ratio of Bob's beam splitter~\cite[Eq.~(54)]{lars}. We provide a method to compute this minimum eigenvalue as we discuss in \cref{sec:subspace} and \cref{appendix:lmcalculation}.
We aim to use \( \lm \) to bound \( \ngNb \), 
the number of signals arriving at Bob with \( (>\pn) \)-photon.
 
Let us consider the state \( \rho_{|\event{\tilde{n}_{(0)},\tilde{n}_{(\pn)},\ngN}} \), 
the state after performing the QND measurement \( \{ \Pi_0, \Pi_{\pn}, \Pign \} \) and condition on the results of QND measurement. In particular, there are $n_{(>\pn)}$ signals with more than $\pn$ photons.
At stage \circletext{2} in \cref{diag:equivalent}, we focus on completing the measurement for those signals with outcome \( \Pign \).
For the rounds with outcome \( \Pign \), we complete the measurement with 
\( \{ \Gamma_{\mathrm{mc}}^{(>\pn)}, \idd - \Gamma_{\mathrm{mc}}^{(>\pn)} \} \), which is in turn followed
followed by some measurement upon obtaining the outcome \( \idd - \Gamma_{\mathrm{mc}}^{(>\pn)} \). 
The \( \{ \Gamma_{\mathrm{mc}}^{(>\pn)}, \idd - \Gamma_{\mathrm{mc}}^{(>\pn)} \} \) measurement produces the number of multi-click outcomes 
from \( (>\pn) \)-photon signals \( \nmctb \).

We use the following lemma to relate the random variable \( \nmctb \), 
obtained by measuring those $\ngN$ rounds  with \( \{ \Gamma_{\mathrm{mc}}^{(>\pn)}, \idd - \Gamma_{\mathrm{mc}}^{(>\pn)} \} \), 
to the number of outcomes corresponding to $\lm\Pign$ \( (\nlambb) \), obtained by measuring the same rounds with 
\( \{ \lm \Pign, \idd - \lm \Pign \} \). The lemma turns an operator inequality on the POVM elements into a statistical claim on the number of corresponding outcomes obtained on measuring a state using the POVM. 

\begin{lemma}\cite[Lemma 8]{devEUR} \label{lemma:orderingonPOVMs}
    Let $\rho_{Q^n} \in S_\circ(Q^{\otimes n})$ be an arbitrary state. Let $\{P,I - P\}$ and $\{P^\prime,I-P^\prime\}$ be two sets of POVM elements such that $P \leq P^\prime $. Then, for any $e$, it is the case that
    \begin{equation}
        \Pr({\num_{P}} \geq e ) \leq \Pr({\num_{P^\prime}} \geq e), \notag
    \end{equation}
    where $\num_{P}(\num_{P^\prime})$ is the number of outcomes corresponding to $P(P^{\prime})$
\end{lemma}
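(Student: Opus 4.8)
The plan is to prove this stochastic-dominance statement by a deterministic coupling obtained from refining the two binary POVMs into a single common three-outcome measurement, so that no concentration inequality is needed at all. Since $\{P, I-P\}$ and $\{P', I-P'\}$ are POVMs and $P \leq P'$ by hypothesis, we have $0 \leq P \leq P' \leq I$, so the three operators $M_a := P$, $M_b := P' - P$, and $M_c := I - P'$ are all positive and satisfy $M_a + M_b + M_c = I$. Hence $\{M_a, M_b, M_c\}$ is a valid POVM on a single copy of $Q$, and it refines both binary POVMs simultaneously: merging the outcomes $b$ and $c$ recovers $\{P, I-P\}$, while merging $a$ and $b$ recovers $\{P', I-P'\}$.

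First I would measure each of the $n$ subsystems of $\rho_{Q^n}$ with the refined POVM $\{M_a, M_b, M_c\}$, producing an outcome string $\vec{y} \in \{a,b,c\}^n$. Because the $n$ measurements act on distinct tensor factors, the corresponding POVM elements on different subsystems commute, so the joint outcome distribution is $\Pr(\vec{y}) = \Tr\big((\bigotimes_{i=1}^n M_{y_i})\,\rho_{Q^n}\big)$, independent of the order in which the rounds are measured and of the choice of Kraus operators; this is precisely the statement that measurements on disjoint subsystems cannot disturb one another's outcome statistics. On this single probability space I would then define $\num_P := |\{i : y_i = a\}|$ and $\num_{P'} := |\{i : y_i \in \{a,b\}\}|$.

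The next step is to verify that these two random variables carry the correct marginals. Coarse-graining the refined outcome string by the classical map $a \mapsto 1$, $\{b,c\} \mapsto 0$ sends $\Pr(\vec{y})$ to the distribution $\Tr\big((\bigotimes_i \Gamma_{x_i})\rho\big)$ with $\Gamma_1 = P$ and $\Gamma_0 = M_b + M_c = I - P$, which is exactly the outcome law of the $\{P, I-P\}$ measurement; hence $\num_P$ is distributed as the $P$-count. The symmetric coarse-graining $\{a,b\} \mapsto 1$, $c \mapsto 0$ yields $\Gamma_1 = M_a + M_b = P'$ and reproduces $\num_{P'}$ as the $P'$-count. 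Finally, since $\{i : y_i = a\} \subseteq \{i : y_i \in \{a,b\}\}$ holds pointwise on every realization, the coupling gives $\num_P \leq \num_{P'}$ surely, so $\{\num_P \geq e\} \subseteq \{\num_{P'} \geq e\}$ and therefore $\Pr(\num_P \geq e) \leq \Pr(\num_{P'} \geq e)$ for every $e$.

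I expect the only delicate point to be the second paragraph: one must argue explicitly that measuring distinct subsystems gives a product outcome distribution that is unaffected by post-measurement disturbance, so that the coarse-grained counts extracted from the single refined experiment genuinely have the same laws as the counts of the two separate original experiments. Once this commuting-subsystem observation is secured, the refinement together with the pointwise inclusion of the two index sets makes the dominance immediate, and the claim holds for arbitrary (possibly entangled) $\rho_{Q^n}$ with no appeal to sampling or martingale bounds.
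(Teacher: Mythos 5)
Your proposal is correct. The paper does not prove this lemma itself but imports it from Ref.~[devEUR]; your refinement argument --- splitting $I = P + (P'-P) + (I-P')$ into a common three-outcome POVM, noting that both binary measurements arise as coarse-grainings of it, and reading off $\num_{P} \leq \num_{P^\prime}$ pointwise on the coupled outcome string --- is exactly the standard way this operator-inequality-to-stochastic-dominance statement is established, and it handles arbitrary entangled $\rho_{Q^n}$ as required. The one point you flag as delicate (that coarse-graining the refined product measurement reproduces the outcome laws of the two original experiments) is immediate from linearity of the trace applied to $\Pr(\vec{y}) = \Tr\bigl((\bigotimes_i M_{y_i})\rho_{Q^n}\bigr)$, so there is no gap.
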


Using \cref{lm_relation} with \cref{lemma:orderingonPOVMs} directly gives:
\begin{align}\label{lemmaresult}
    &\text{Pr}\biggl(\nmctb \geq \ngN\biggl(\lm + \gamma_{\eppntb}(\ngN)\biggr)\biggr)_{\condi{({\tilde{n}}_{(0)},{\tilde{n}}_{(\pn)},\ngN)}} \notag\\\geq&
    \text{Pr}\biggl(\nlambb \geq \ngN\biggl(\lm + \gamma_{\eppntb}(\ngN)\biggr)\biggr)_{\condi{({\tilde{n}}_{(0)},{\tilde{n}}_{(\pn)},\ngN)}} ,
\end{align}
where $\gamma_{\eppntb}(\ngN) = \sqrt{\frac{-\log{(\eppntb)}}{ \ngN}}$. The quantity \( \ngN(\lm + \gamma_{\eppntb}(\ngN)) \) is not immediately useful on its own, since we do not have access to $(\ngN)$ from our measurements.

In order to make it useful, we will first relate \( \nlambb \) and \( \ngN \). 
Note that \( \nlambb \) is the number of rounds corresponding to \( \lm \Pign \) 
if Alice and Bob apply the POVM \( \{ \lm \Pign, \idd - \lm \Pign \} \) 
to the rounds with more than $\pn$ photons. 
Since these rounds have photon-number \( (>\pn) \), measuring them with \( \{ \lm \Pign, \idd -\lm \Pign \} \) 
is equivalent to measuring with \( \{ \lm \idd, (1 -\lm) \idd \} \).
But the distribution of the outcome of $\{\lm\idd,(1-\lm)\idd\}$ is equivalent to Bernoulli sampling.
Therefore,
\begin{align}
    \begin{split}
    \text{Pr}\left(\nlambb <\ngN(\lm - c)\right)_{\condi({\tilde{n}}_{(0)},{\tilde{n}}_{(\pn)},\ngN)}
    \leq& \sum_{i=0}^{\lfloor\ngN(\lm - c)\rfloor}{\ngN\choose i}\lm^i(1-\lm)^{\ngN-i}\label{eq:multiloose} \\
    \leq& \text{ exp}\left(-2\ngN\biggl(\lm-\lm+c\biggr)^2\right)\\
    =&\text{ exp}(-2\ngN\; c^2),
    % =&\eppntb^2,
    \end{split}
\end{align}
 where we use Hoeffding's inequality to find the tail bound for the binomial distribution for the second inequality with $c\geq0$. We pick $c = \sqrt{\frac{-\ln{(\eppntb)}}{\ngN}}$ such that $\text{exp}(-2\ngN\; c^2) = \eppntb^2$.
Then,
 \begin{align}
    \begin{split}
    \text{Pr}\biggl(\nlambb <\ngN\biggl(\lm - \sqrt{\frac{-\ln{(\eppntb)}}{\ngN}}\biggr)\biggr)_{\condi({\tilde{n}}_{(0)},{\tilde{n}}_{(\pn)},\ngN)}\leq\eppntb^2.
    \end{split}
\end{align}
\begin{remark}\label{remark:tightbound}
    Using Hoeffding's inequality introduces looseness in the bound. 
    One can set \cref{eq:multiloose} equal to \( \eppntb^2 \) 
    and solve for \( c \) in terms of \( \lm \) and \( \ngN \) 
    to obtain a tighter bound. The tight bound is not analytical, however, and we opt for simplicity over achieving the tightest possible analysis.
\end{remark}

Following from the bound we obtained above, we have
\begin{align}
    &\text{Pr}\biggl(\nlambb\ \geq \lm\ngN + \sqrt{-\log(\eppnt)}\sqrt{\ngN}\biggr)_{\condi({\tilde{n}}_{(0)},{\tilde{n}}_{(\pn)},\ngN)} > 1-\eppntb^2\;.
\end{align}
Combine with \cref{lemmaresult} and sum all possible    $\tilde{n}_{(0)}$, $\tilde{n}_{(\pn)}$ and $\ngN$  we have the following bound:
\begin{align}
    \begin{split}
   &\text{Pr}\biggl(\nmctb \geq \lm\ngNb + \sqrt{-\log(\eppnt)}\sqrt{\ngNb}\biggr) > 1-\eppntb^2\\
   \iff &\text{Pr}\biggl(\nmctb < \lm\ngNb + \sqrt{-\log(\eppnt)}\sqrt{\ngNb}\biggr) \leq \eppntb^2\\
   \implies &\text{Pr}\biggl(\nmctb < \lm\ngNb + \sqrt{-\log(\eppnt)}\sqrt{\ngNb}\;\;\land\;\; \ngNb \geq 0\biggr) \leq \eppntb^2\\
   \iff & \text{Pr}\biggl(\lm(\sqrt{\ngNb})^2-\sqrt{\log(\eppntb)}\sqrt{\ngNb}-\nmctb>0\;\;\land\;\; \ngNb \geq 0\biggr) \leq \eppntb^2\\
   \iff & \text{Pr}\biggl(\ngNb > \frac{-2\log(\eppntb)}{4{\lm}^2}+\frac{\nmctb}{\lm}+\frac{\sqrt{(-\log(\eppntb))(-\log(\eppntb)+4\lm\nmctb)}}{2{\lm}^2}\biggr) \leq \eppntb^2\;.
    \end{split}
\end{align}
The first step comes from changing the direction of the inequality.
The second step uses the fact that \( \Pr(A \land B) \leq \Pr(A) \). 
The third step is simply rearranging terms in the inequality. 
In the final step, we solve for \( \ngNb \). This is the bound $V$ we define in \cref{step2multiboundsub}.
Since \( \nmcb \geq \nmctb \) and the right-hand side of the inequality is monotonically increasing with respect to \( \nmctb \), we can replace \( \nmctb \) by \( \nmcb \). Finally we obtain
\begin{align}\label{step3multibound}
    \text{Pr}\biggl(\ngNb > \frac{\nmcb}{\lm}+\frac{\sqrt{(-\log(\eppntb))(-\log(\eppntb)+4\lm\nmcb)}}{2{\lm}^2}+\frac{-2\log(\eppntb)}{4{\lm}^2}\biggr) \leq \eppntb^2.
\end{align}
\begin{remark} \label{remark:partialcharacterizetwo}
     Note that we can replace $\lm$ with a lower bound on it by the monotonicity of the bound we find. This is necessary when we do not know imperfection precisely. We extend the method of finding $\lm$ in Ref.~\cite{lars} to the partially characterised case. We show the details in \cref{subsec:partialimperfections}.
\end{remark}
We summarize this subsection in the following lemma.
\begin{lemma}\label{lemma:multibound}
 Let $\rho_{A^nB^n} \in S_{\circ}(A^nB^n)$ be arbitrary state representing $n$ rounds of a generic QKD protocol. Suppose the joint POVM for Alice and Bob can be written as follows:
    \begin{align}
        \{\Gamma_e,\;\Gamma_{\text{other}},...\},
    \end{align}
    The joint POVM inherits a block-diagonal structure from the fact that Bob's POVM has a block-diagonal structure:
    \begin{align}
        \Gamma_i = \bigoplus_{\pnm=0}^{\infty} \Gamma_i^{(\pnm)} .
    \end{align}
       Then, Alice and Bob's measurement can be viewed as the following:
     \begin{enumerate}[label=\circletext{\arabic*}, leftmargin=*]
    \item They first perform QND measurement with projectors on the subspaces $\{\Pi_0,\Pi_{\pn},\Pign\}$. They complete measurements on the 0 and $(1\leq\pnm\leq\pn)$ subspaces when convenient. We denote the number of rounds corresponding to $\Pign$ as $\ngNb$.
    \item  Upon the result $\Pign$ at stage \circletext{1}, they then complete the measurement $\{\Gamma^{(>\pn)}_{e},\idd-\Gamma^{(>\pn)}_{e}\}$. Denote the number of $\Gamma^{(>\pn)}_{e}$ outcomes as $\bd{\tilde{n}_e}$, the number of $e$ outcomes from $(>\pn)$ subspace.  
    They then complete all other measurements on $(>\pn)$ subspace using some POVM $\{\vec{G}^{(>\pn)}\}$. We also denote the number of $\Gamma_{e}$ outcomes as  $\bd{{n}_e}$
\end{enumerate}
Let \( \lambda_{\min} \) be the minimum eigenvalue of \( \Gamma_e \) (or a lower bound thereof) in the \( (>\pn) \) subspace:
\begin{equation}
    \lm\leq\lm(\Pign\Gamma_e\Pign).
\end{equation}
Then, the following relation holds:
\begin{align}
    \text{Pr}\biggl(\ngNb > \frac{\bd{n_e}}{\lm}+\frac{\sqrt{(-\log(\eppntb))(-\log(\eppntb)+4\lm\bd{n_e})}}{2{\lm}^2}+\frac{-2\log(\eppntb)}{4{\lm}^2}\biggr) \leq \eppntb^2
\end{align}
for any $\eppntb>0$.
\end{lemma}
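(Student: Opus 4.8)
The plan is to lift the concrete multi-click argument of \cref{sec:multibound} to the abstract element $\Gamma_e$; the entire proof hinges on one operator inequality followed by two elementary probabilistic steps, so I would not expect to need anything beyond \cref{lemma:orderingonPOVMs} and a Hoeffding tail bound. First I would record the defining operator inequality. Since $\lm \leq \lm(\Pign\Gamma_e\Pign)$ by hypothesis, on the $(>\pn)$ subspace
\begin{align}
\Gamma_e^{(>\pn)} = \Pign\Gamma_e\Pign \geq \lm(\Pign\Gamma_e\Pign)\,\Pign \geq \lm\Pign .
\end{align}
Thus $\{\lm\Pign,\idd-\lm\Pign\}$ and $\{\Gamma_e^{(>\pn)},\idd-\Gamma_e^{(>\pn)}\}$ are binary POVMs with $\lm\Pign \leq \Gamma_e^{(>\pn)}$. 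Conditioning on the QND outcome that fixes the photon-number partition (in particular fixing $\ngN$ rounds in the $(>\pn)$ subspace) and applying \cref{lemma:orderingonPOVMs} with $P=\lm\Pign$ and $P'=\Gamma_e^{(>\pn)}$ turns this ordering into the stochastic-dominance statement
\begin{align}
\text{Pr}(\bd{\tilde{n}_e} \geq x)_{\condi} \geq \text{Pr}(\nlambb \geq x)_{\condi}
\end{align}
for every threshold $x$, where $\nlambb$ counts the $\lm\Pign$-outcomes.

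Next I would compute the law of $\nlambb$. Because the conditioned rounds all lie in the $(>\pn)$ subspace, $\Pign$ acts there as the identity, so measuring them with $\{\lm\Pign,\idd-\lm\Pign\}$ is identical to measuring with the trivial POVM $\{\lm\idd,(1-\lm)\idd\}$. A POVM proportional to the identity produces state-independent outcomes whose joint distribution factorizes, so across the $\ngN$ rounds $\nlambb$ is exactly a sum of independent $\mathrm{Bernoulli}(\lm)$ trials, irrespective of any correlations in the arbitrary state $\rho_{A^nB^n}$. A lower-tail Hoeffding bound then gives $\text{Pr}(\nlambb < \ngN(\lm-c))_{\condi}\leq \exp(-2\ngN c^2)$, and choosing $c=\sqrt{-\log(\eppntb)/\ngN}$ sets the right-hand side to $\eppntb^2$. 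Combining with the dominance above yields $\text{Pr}(\bd{\tilde{n}_e} < \lm\ngN - \sqrt{-\log(\eppntb)}\sqrt{\ngN})_{\condi}\leq \eppntb^2$.

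Finally I would remove the conditioning by summing over all QND partitions, giving $\text{Pr}(\bd{\tilde{n}_e} < \lm\ngNb - \sqrt{-\log(\eppntb)}\sqrt{\ngNb}) \leq \eppntb^2$. Writing $y=\sqrt{\ngNb}\geq0$, the complementary event is $\lm y^2 - \sqrt{-\log(\eppntb)}\,y - \bd{\tilde{n}_e} > 0$; since this upward parabola has exactly one nonnegative root, solving for $y$ and squaring isolates $\ngNb$ and produces the stated threshold with $\bd{\tilde{n}_e}$ in place of $\bd{n_e}$. The last move replaces the inaccessible $\bd{\tilde{n}_e}$ by the observable $\bd{n_e}$: every $e$-outcome from a $(>\pn)$ round is also an $e$-outcome overall, so $\bd{n_e}\geq\bd{\tilde{n}_e}$ pointwise, and as the threshold is increasing in its argument the bound only weakens under this substitution.

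I expect the conceptual crux to be the Bernoulli-sampling step: justifying that on the $\ngN$ rounds certified to contain more than $\pn$ photons, the coarse measurement $\{\lm\Pign,\idd-\lm\Pign\}$ behaves as genuinely independent coin flips even though $\rho_{A^nB^n}$ is arbitrary and the rounds may be correlated under a coherent attack. The resolution is that restricting $\lm\Pign$ to the $(>\pn)$ subspace makes it proportional to the identity, whose outcome statistics factorize independently of the state; the remaining care is purely in the direction of the tail bound and in checking that the quadratic has a unique relevant root, both of which are routine.
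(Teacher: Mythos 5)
Your proposal is correct and follows essentially the same route as the paper's proof in \cref{sec:multibound}: the operator inequality $\Gamma_e^{(>\pn)}\geq\lm\Pign$, \cref{lemma:orderingonPOVMs} for stochastic dominance, the reduction of $\{\lm\Pign,\idd-\lm\Pign\}$ to Bernoulli sampling on the $(>\pn)$ rounds, a Hoeffding lower-tail bound, summing out the QND conditioning, solving the quadratic in $\sqrt{\ngNb}$, and finally replacing $\bd{\tilde{n}_e}$ by $\bd{n_e}$ via monotonicity. Your sign bookkeeping in the quadratic step is in fact cleaner than the paper's (which contains a couple of harmless sign typos before arriving at the same final expression).
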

\begin{proof}
    The proof is described in this subsection by identifying $\Gamma_{\mathrm{mc}}=\Gamma_e$ and $\nmcb=\bd{n_e}$.
\end{proof}
\subsection{\texorpdfstring{Upper bound on the number of 0-photon signals in key generation rounds~(\cref{step3zerobound})}{}}\label{sec:zerobound}
Finally, we bound the number of 0-photon key rounds, $\nkzero$, on the left branch of \cref{diag:equivalent}.
After stage \circletext{2} in \cref{diag:equivalent}, Alice and Bob have a post-measurement state with classical registers 
\( {\tilde{n}}_{(0)},{\tilde{n}}_{(\pn)},\ngN \) recording the number of rounds corresponding to different photon numbers.
We condition on the event \( \event{{\tilde{n}}_{(0)},{\tilde{n}}_{(\pn)},\ngN} \).
Specifically, there are \(\tilde{n}_{(0)}  \) signals with 0 photons.

At stage \circletext{3} in \cref{diag:equivalent}, Alice and Bob complete the measurement on these \(\tilde{n}_{(0)}  \) rounds with the POVM $\{\vec{\Gamma}^{(0)}\}=$
\( \{ \GzeroZeq,\; \GzeroZneq,\; \GzeroXeq, \;\GzeroXneq, \;\Gzeromc,\; \Gzeronc \} \). 
Define \( \GzeroZ = \GzeroZeq + \GzeroZneq \). 
Using \cref{lemma:twostep} again, we can view this as first measuring with 
\( \{ \GzeroZ, \idd - \GzeroZ \} \) to obtain \( \nkzero \), 
and then completing the measurement with some other POVM.

We use the following lemma, which states that a small POVM element leads to fewer observed outcomes.
\begin{lemma}\cite[Lemma 4]{devEUR}
    \label{lemma:smallPOVM}
    Let $\rho_{Q^n} \in S_\circ(Q^{\otimes n})$ be an arbitrary state. Let $\{P,I - P\}$ be a POVM such that $\norm{P}_\infty \leq \delta$. Then
    \begin{equation*}
        \Pr(\frac{\num_P}{n} \geq \delta+c) \le \sum_{i = n (\delta+c)}^{n} {n \choose i} \delta^ i (1-\delta)^{n-i} ,
    \end{equation*}
    where $\num_P$ is the number of $P$-outcomes when each subsystem of $\rho_{Q^n}$ is measured using POVM $\{P,\idd - P\}$.
\end{lemma}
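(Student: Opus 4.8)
The plan is to exploit the fact that $\infnorm{P}\le\delta$ is equivalent to the operator inequality $P\le\delta\,\idd$, which means that \emph{regardless} of the state being measured, the probability of a $P$-outcome in any single round is at most $\delta$. First I would view the measurement of $\rho_{Q^n}$ by the product POVM $\{P,\idd-P\}^{\otimes n}$ as a \emph{sequential} process: measure subsystem $1$, then subsystem $2$, and so on, recording indicator bits $X_1,\dots,X_n$ with $X_i=1$ iff round $i$ yields the $P$-outcome, so that $\num_P=\sum_{i=1}^n X_i$. The key structural observation is that, conditioned on any history $x_1,\dots,x_{i-1}$ of earlier outcomes, the conditional (post-measurement, renormalised) reduced state $\sigma_i$ of subsystem $i$ is a valid density operator, and
\[
  \Pr(X_i=1\mid x_1,\dots,x_{i-1}) = \Tr(P\,\sigma_i) \le \Tr(\delta\,\idd\,\sigma_i) = \delta
\]
uniformly over all histories. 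Hence the $X_i$ form an adaptively bounded Bernoulli process whose conditional success probability never exceeds $\delta$.

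The second step is to convert this conditional-probability bound into a statement about the full distribution of $\num_P$ via a standard stochastic-dominance (coupling) argument. I would introduce i.i.d.\ uniform random variables $U_1,\dots,U_n$ on $[0,1]$, independent across rounds, and realise the process so that $X_i=\mathbf{1}[U_i\le p_i]$, where $p_i:=\Pr(X_i=1\mid x_1,\dots,x_{i-1})\le\delta$. Defining $Y_i:=\mathbf{1}[U_i\le\delta]$, the $Y_i$ are i.i.d.\ $\mathrm{Bernoulli}(\delta)$, and since $p_i\le\delta$ we have $X_i\le Y_i$ pointwise. Summing gives $\num_P=\sum_i X_i\le\sum_i Y_i=:B$ with $B\sim\mathrm{Binomial}(n,\delta)$, so stochastic dominance yields
\[
  \Pr\!\left(\tfrac{\num_P}{n}\ge\delta+c\right)=\Pr\bigl(\num_P\ge n(\delta+c)\bigr)\le\Pr\bigl(B\ge n(\delta+c)\bigr)=\sum_{i=n(\delta+c)}^{n}\binom{n}{i}\delta^i(1-\delta)^{n-i},
\]
which is exactly the claimed tail bound.

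I expect the main obstacle to be the rigorous bookkeeping of the first step rather than the probabilistic argument itself. One must argue that measuring the product POVM is genuinely equivalent (as a classical-to-classical channel on the outcome string) to the adaptive sequential measurement, so that the statistics of $\num_P$ are unchanged, and that the conditional reduced states $\sigma_i$ remain well-defined even when the $n$ subsystems are arbitrarily correlated (as Eve may arrange). The inequality $\Tr(P\sigma_i)\le\delta$ is then immediate from $P\le\delta\,\idd$ once $\sigma_i$ is confirmed to be a normalised state, so no quantitative estimate is delicate; the care is entirely in setting up the filtration and verifying that the adaptive conditional bound holds \emph{uniformly} over histories. It is precisely this uniform adaptive bound that licenses the coupling to i.i.d.\ $\mathrm{Bernoulli}(\delta)$ variables, and hence delivers the exact Binomial tail rather than a merely Chernoff-type estimate.
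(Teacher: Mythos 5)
Your proposal is correct and follows essentially the same route as the proof in the cited reference [devEUR, Lemma 4]: the operator inequality $P\le\delta\,\idd$ gives a uniform bound $\Tr(P\sigma_i)\le\delta$ on the conditional single-round success probability over all measurement histories, and the adaptive process is then stochastically dominated by an i.i.d.\ $\mathrm{Binomial}(n,\delta)$ variable, yielding the exact binomial tail. Note that this paper only imports the lemma by citation, so your argument (including the explicit uniform-coupling step) is a faithful reconstruction of the intended proof rather than an alternative to it.
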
 
Bob's POVM element in the 0-photon subspace is a one-dimensional subspace associated with dark count rates. Thus, for the joint POVM for Alice and Bob, one can compute 
\begin{align}
    \begin{split}\label{eq:defineqz}
   & \infnorm{\POVMZeq^{(0)} + \POVMZneq^{(0)}}=: q_Z,
%  &\GzeroZeq+\GzeroZneq \leq q_Z\idd_A\otimes\kb{\text{vac}},
  \end{split}
\end{align}
where $q_Z$ depends on dark count rates $d_i$. We explicitly compute $q_Z$ in \cref{appendix:qz}. Then, we use \cref{lemma:smallPOVM} to bound the number of key rounds from 0-photon signals.
\begin{align}
    \begin{split}
    \text{Pr}\biggl(\nkzero \geq \tilde{n}_{(0)}  (q_Z+c)\biggr)_{\condi({\tilde{n}}_{(0)},{\tilde{n}}_{(\pn)},\ngN)} &\leq \sum_{i = \tilde{n}_{(0)}  (q_Z+c)}^{\tilde{n}_{(0)}  }{{\tilde{n}_{(0)}}\choose i}{q_Z}^i(1-q_Z)^{\tilde{n}_{(0)}-i}\\
    & \leq \text{exp}(-2n_0\Biggl(1-q_Z-\biggl(1-\frac{\tilde{n}_{(0)}  }{\tilde{n}_{(0)}  }(q_Z+c)\biggr)\Biggr)^2)\\
    % & = \text{exp}(-2\biggl(\frac{n}{\sqrt{n_0}}(q_Z+c)-\sqrt{n_0}q_Z\biggr)^2)\\
    % & \leq \text{exp}(-2\biggl(\frac{n}{\sqrt{n}}(q_Z+c)-\sqrt{n}q_Z\biggr)^2)\\
    & =\text{exp}(-2\tilde{n}_{(0)} c^2)\;.
    \end{split}
\end{align}
For the second inequality, we use Hoeffding's inequality to find the tail bound for the binomial distribution (see \cref{remark:tightbound} for a tighter bound).
Let $c^2 = \frac{n}{\tilde{n}_{(0)} }{\big(\gmzero(n,\epzero)\big)}^2$, where $\gmzero(n,\epzero) = \sqrt{\frac{-\ln(\epzero)}{n}}$. We have

\begin{align}
    \text{Pr}\biggl(\nkzero \geq \tilde{n}_{(0)} q_Z+\sqrt{\tilde{n}_{(0)}  }\sqrt{n}\gmzero(n,\epzero)\biggr)_{\condi({\tilde{n}}_{(0)},{\tilde{n}}_{(\pn)},\ngN)} \leq \text{exp}(-2n{\gmzero(n,\epzero)}^2) = \epzero^2
\end{align}
Using the monotonicity of the bound, we replace $\tilde{n}_{(0)}$ with $n$, since $n \geq \tilde{n}_{(0)}$ ($n$ is the total number of rounds and $\tilde{n}_{(0)}$ is the number of rounds with 0 photons):
\begin{align}
    \Pr\biggl(\nkzero \geq \tilde{n}_{(0)} q_Z + \sqrt{\tilde{n}_{(0)}} \sqrt{n} \, \gmzero(n,\epzero) \biggr)_{\condi(\tilde{n}_{(0)},\tilde{n}_{(\pn)},\ngN)} \leq \epzero^2 .
\end{align}
Summing over all possible values of ${\tilde{n}}_{(0)}$, ${\tilde{n}}_{(\pn)}$, and $\ngN$ to remove the conditioning, we obtain:
\begin{align}
    \Pr\biggl( \nkzero \geq n \bigl( q_Z + \gmzero(n,\epzero) \bigr) \biggr) \leq \epzero^2 .
\end{align}
\begin{remark} \label{remark:partialcharacterizethree}
    Observe that we can replace $q_Z$ with an upper bound on it. This is required when we extend the proof to the case of partially characterized imperfections in \cref{subsec:partialimperfections}.
\end{remark}
\begin{remark}
In the analysis of the active BB84 in~\cite{devEUR}, the key rate becomes zero if the detectors have a minimum dark count rate of zero. This oddity is avoided in our analysis, as we upper bound the number of 0-photon key rounds, which is not done in that analysis.
\end{remark}
We summarize this subsection in the following lemma.
\begin{lemma}\label{lemma:zerobound}
Let $\rho_{A^nB^n} \in S_{\circ}(A^nB^n)$ be arbitrary state representing $n$ rounds of a generic QKD protocol. Suppose the joint POVM for Alice and Bob can be written as follows:
    \begin{align}
        \{\vec{\Gamma}\} = \{\Gamma_{(\text{test},=)},\;\Gamma_{(\text{test},\neq)},\;\Gamma_{(\text{key},=)},\;\Gamma_{(\text{key},\neq)},\;\Gamma_{\text{other}},\;...\},
    \end{align}
    and the joint POVM inherits a block-diagonal structure from the fact that Bob's POVM has a block-diagonal structure:
    \begin{align}
        \Gamma_i = \bigoplus_{\pnm=0}^{\infty} \Gamma_i^{(\pnm)} .
    \end{align}
    Then, Alice and Bob's measurement can be viewed as the following:
     \begin{enumerate}[label=\circletext{\arabic*}, leftmargin=*]
    \item They first perform QND measurement with projectors on the subspaces $\{\Pi_0,\Pi_{\pn},\Pign\}$. They complete measurements on the $(1\leq\pnm\leq\pn)$ and $(>\pn)$ subspaces when convenient.
    \item  Upon the result $\Pi_0$ at stage \circletext{1}, They complete the measurement for 0 subspace with $\{\vec{\Gamma}^{(0)}\}$. Denote the number of 0-photon key rounds (corresponding to $\Gamma_{(\text{key},=)}^{(0)}$ and $\Gamma_{(\text{key},\neq)}^{(0)}$) as $\bd{\tilde{n}_{(\textbf{key,0})}}$.
\end{enumerate}
Let \( q_Z \) be the infinity norm of the POVM elements (or an upper bound thereof) in the \( 0 \)-photon subspace:
\begin{equation}
    q_Z \geq \infnorm{\Gamma_{(\text{key},=)}^{(0)}+\Gamma_{(\text{key},\neq)}^{(0)}}.
\end{equation}
Then, the following relation holds 
\begin{align}
    \text{Pr}\biggl(\bd{\tilde{n}}_{\textbf{(key,0)}} \geq nq_Z+\sqrt{{(-\ln(\epzero))}{n}}\biggr) \leq \epzero^2
\end{align}
\begin{proof}
    The proof is described above in this subsection by identifying $\Gamma_{(Z,=)}=\Gamma_{(\text{key},=)}$, $\Gamma_{(Z,\neq)}=\Gamma_{(\text{key},\neq)}$ and $\nkzero=\bd{\tilde{n}_{(\textbf{key,0})}}$.
\end{proof}
\end{lemma}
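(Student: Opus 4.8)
The plan is to follow the same route as the concrete $0$-photon bound derived earlier in \cref{sec:zerobound}, but phrased for the abstract key/test POVM structure. First I would invoke the block-diagonal structure $\Gamma_i = \bigoplus_{\pnm}\Gamma_i^{(\pnm)}$ to rewrite Alice and Bob's measurement as a QND measurement $\{\Pi_0,\Pi_{\pn},\Pign\}$ followed by a measurement within each photon-number subspace, exactly as in step \circletext{1} of the lemma statement. I would then condition on the QND outcome, fixing the number of $0$-photon rounds to some value $\tilde{n}_{(0)}$; the entire argument is carried out on these $\tilde{n}_{(0)}$ rounds, and the conditioning is removed only at the very end.

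Within the $0$-photon subspace the goal is to count the key rounds, i.e.\ the outcomes corresponding to $\Gamma_{(\text{key},=)}^{(0)}$ and $\Gamma_{(\text{key},\neq)}^{(0)}$. I would coarse-grain these two outcomes together using \cref{lemma:twostep}, collapsing the full $0$-photon POVM into a binary POVM $\{G,\idd-G\}$ with $G = \Gamma_{(\text{key},=)}^{(0)}+\Gamma_{(\text{key},\neq)}^{(0)}$, so that $\bd{\tilde{n}_{(\textbf{key,0})}}$ is precisely the number of $G$-outcomes. By hypothesis $\infnorm{G}\le q_Z$ on this subspace, which is exactly the smallness condition required by \cref{lemma:smallPOVM}. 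Applying that lemma gives a binomial-tail bound on $\bd{\tilde{n}_{(\textbf{key,0})}}/\tilde{n}_{(0)}$, which I would then loosen to a clean exponential via Hoeffding's inequality, obtaining a bound of the form $\exp(-2\tilde{n}_{(0)}c^2)$ for the event $\bd{\tilde{n}_{(\textbf{key,0})}} \ge \tilde{n}_{(0)}(q_Z+c)$.

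The finishing step is to choose the finite-size slack so that the statement becomes independent of the (inaccessible) value $\tilde{n}_{(0)}$. I would set $c^2 = \frac{n}{\tilde{n}_{(0)}}\bigl(\gmzero(n,\epzero)\bigr)^2$ with $\gmzero(n,\epzero)=\sqrt{-\ln(\epzero)/n}$, which makes the failure probability exactly $\epzero^2$ and turns the deviation term $\tilde{n}_{(0)}c$ into $\sqrt{\tilde{n}_{(0)}}\sqrt{n}\,\gmzero(n,\epzero)$. Using the monotonicity of the threshold in the number of rounds together with $\tilde{n}_{(0)}\le n$, I would replace $\tilde{n}_{(0)}$ by $n$ inside both the $q_Z$ term and the square-root term, yielding the threshold $n q_Z + \sqrt{(-\ln(\epzero))\,n}$ claimed in the lemma. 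Since this threshold no longer depends on $\tilde{n}_{(0)}$, summing the conditioned inequality over all QND outcomes $\tilde{n}_{(0)},\tilde{n}_{(\pn)},\ngN$ simply removes the conditioning while preserving the bound $\epzero^2$. The only genuinely delicate point is that the states produced by the QND measurement may be arbitrary and adversarially correlated across rounds, so I would lean on the fact that \cref{lemma:smallPOVM} already holds for any $\rho_{Q^n}\in S_\circ(Q^{\otimes n})$; this is what lets the Hoeffding step and the subsequent replacement $\tilde{n}_{(0)}\mapsto n$ go through without any product-state assumption.
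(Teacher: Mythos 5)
Your proposal is correct and follows essentially the same route as the paper: condition on the QND outcome, coarse-grain the $0$-photon key outcomes into a binary POVM via \cref{lemma:twostep}, apply \cref{lemma:smallPOVM} followed by Hoeffding with the choice $c^2 = \frac{n}{\tilde{n}_{(0)}}\gmzero(n,\epzero)^2$, replace $\tilde{n}_{(0)}$ by $n$ using monotonicity, and sum over the QND outcomes to remove the conditioning. No gaps.
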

\subsection{Combining bounds}\label{sec:combine bounds}
In this subsection, we combine the bounds we find above to obtain our main goal \cref{eq:phaseerroridea}. That is, we want an upper bound on the phase error rate in ($1\leq\pnm\leq\pn$)-photon rounds and a lower bound on the number of ($1\leq\pnm\leq\pn$)-photon key rounds in terms of observed quantities.
From \cref{NumberofErrorBound}, the phase error rate is bounded by:
\begin{align}\label{eq:memorylessbounds}
    \begin{split}
    &\text{Pr}\biggl(\frac{\Nkoneb}{\nkoneb} \geq \frac{\Nxoneb+\sqrt{-2\ln(\epazua^2)(\nxoneb+\nkoneb)}+\delta(\nxoneb+\nkoneb)}{a\nkoneb}\\
    &\phantom{\text{Pr}\biggl(\frac{\Nkoneb}{\nkoneb} \geq}+\frac{\sqrt{-2\ln(\epazub^2)(\nxoneb+\nkoneb)}}{\nkoneb}\biggr)\\
    &=\text{Pr}\biggl(\errks \geq \frac{\Nxoneb}{a\nkoneb}+\sqrt{\frac{-2\ln(\epazua^2)}{a}\biggl(\frac{\nxoneb}{\nkoneb^2}+\frac{1}{\nkoneb}\biggr)}\\
    &\phantom{=\text{Pr}\biggl(\errks \geq}+\frac{\delta}{a}\biggl(\frac{\nxoneb}{\nkoneb}+1\biggr)+\sqrt{{-2\ln(\epazub^2)}\biggl(\frac{\nxoneb}{\nkoneb^2}+\frac{1}{\nkoneb}\biggr)}\biggr)\\
    &\leq \epazua^2 + \epazub^2 .
    \end{split}
\end{align}
We can replace $\Nxoneb$ by $\Nxobsb$, $\nxoneb$ by $\nxb$ since $\Nxoneb\leq\Nxobsb$, $\nxoneb\leq\nxb$, and the bound on phase error rate is monotonically increasing in these variables. Doing so, we obtain:
\begin{align}\label{eq:numberoferror}
    \begin{split}
    &\text{Pr}\Biggl(\errks \geq \frac{\Nxobsb}{a\nkoneb}+\sqrt{\frac{-2\ln(\epazua^2)}{a}\biggl(\frac{\nxb}{\nkoneb^2}+\frac{1}{\nkoneb}\biggr)}\\&\phantom{\text{Pr}\biggl(\errks\geq}+\frac{\delta}{a}\biggl(\frac{\nxb}{\nkoneb}+1\biggr)+\sqrt{{-2\ln(\epazub^2)}\biggl(\frac{\nxb}{\nkoneb^2}+\frac{1}{\nkoneb}\biggr)}\Biggr)\\
    &\leq \epazua^2 + \epazub^2.
\end{split}
\end{align}

To obtain an estimate on $\nkoneb$, we need to use the fact that $\nkoneb = \nkb -\nkzero-\nkgone$ and $\ngNb \geq \nkgone$. Then,
\begin{align}\label{eq:uncorrelated nk}
    \nkoneb \geq \nkb - \nkzero-\ngNb\;.
\end{align}
Combining~\cref{step3zerobound} and~\cref{step3multibound} using the union bound (see~\cref{appendix:combine bounds}), we obtain a lower bound on the number of rounds used to bound the phase errors \( \nkoneb \):
\begin{align}\label{eq:K}
    &\text{Pr}\biggl(\nkoneb \leq \K(\nkb,\nmcb)\biggr) \leq \epzero^2+\eppntb^2,
\end{align}
where
\begin{align}
    &\K(\nkb,\nmcb) = \nkb - n q_Z-\frac{\nmcb}{\lm}-\sqrt{{-\ln(\epzero)}n}-\frac{\sqrt{(-\log(\eppntb))(-\log(\eppntb)+4\lm\nmcb)}}{2{\lm}^2}-\biggl(\frac{-2\log(\eppntb)}{4{\lm}^2}\biggr)\;.
\end{align}
Then together with \cref{eq:numberoferror} and \cref{eq:K},  the required statement follows:
\begin{align}\label{eq:mainresult}
    \text{Pr}&\biggl(\errks \geq  \bimp(\nxb,\nkb,\Nxobsb,\nmcb)\;\lor\;\nkoneb \leq \K(\nkb,\nmcb) \biggr)\notag\\
    &\leq \epazua^2 + \epazub^2 +\eppntb^2+\epzero^2,
\end{align}
where 
\begin{align}
    \begin{split}
    &\bimp(\nxb,\nkb,\Nxobsb,\nmcb)=\\
    &\frac{\Nxobsb}{a\K(\nkb,\nmcb)}+\sqrt{\frac{-2\ln(\epazua^2)}{a}\biggl(\frac{\nxb}{\K(\nkb,\nmcb)^2}+\frac{1}{\K(\nkb,\nmcb)}\biggr)}+\frac{\delta}{a}\biggl(\frac{\nxb}{\K(\nkb,\nmcb)}+1\biggr)\\\
    &+\sqrt{{-2\ln(\epazub^2)}\biggl(\frac{\nxb}{\K(\nkb,\nmcb)^2}+\frac{1}{\K(\nkb,\nmcb)}\biggr)}.
    \end{split}
    % &K(\nkb,\nmcb) = \nkb - n(q_Z+\gmzero(n))-\frac{-2\log\eppntb}{4{\lm}^2}-\frac{\nmcb}{\lm}-\frac{\sqrt{(-\log\eppntb)(-\log\eppntb+4\lm\nmcb)}}{2{\lm}^2}
\end{align}
Then, the key rate formula for the single-photon source BB84 protocol is the following:
\begin{align}
    \begin{split}
    &l(\errzobs,\Nxobs,\nx,\nk,\nmc)\\
    &:= \text{max}\{\K(\nk,\nmc)\biggl(1-h(\bimp(\nx,\nk,\Nxobs,\nmc))\biggr)\\&\;\;\;\;\;\;-\lambda_{\text{EC}}(\errzobs,\Nxobs,\nx,\nk,\nmc)-2\text{log}(1/2\eppa)-\text{log}(2/\epev),0\}\\
    &  \epat^2:=\epazua^2 + \epazub^2 +\eppntb^2+\epzero^2 
     \end{split}
\end{align}
and the protocol is $(2\epat+\eppa+\epev)$-secure, as shown in \cref{appendix:security}.
\begin{remark}
    Although the proof itself is generalized to accommodate any value of \( \pn \), 
the resulting bound on the phase error rate may not be useful when \( \pn > 1 \). 
The reason is that, for the bound \cref{eq:mainresult} to be useful, we must find a value of \( a \) such that \( \delta \) is small (We define them in \cref{deltabound}).  
However, since all double-click outcomes are announced, \( \ftx \) and \( \ftz \) 
do not include the POVM elements corresponding to double-click outcomes.
To gain some intuition, consider the case of perfect detectors: 
for \( \pn = 1 \), we expect no double-click outcomes, and hence \( \delta = 0 \). 
For \( \pn = 2 \), however, when we look at the 2-photon subspace, $\delta=0$ only when we also include the double-click POVM elements in \( \ftx \) and \( \ftz \). The absence of the double-click terms 
can lead to a significant increase in \( \delta \), 
potentially making the phase error rate bound ineffective. 
This is also why, in a standard EUR proof, Bob randomly maps them to one of the outcomes.

One possible solution is for Bob to map them so that only cross-click outcomes are announced. 
For estimation in the \( (>\pn) \) subspace, this would require determining 
the minimum eigenvalue of the cross-click POVM element \( \Gamma_{\mathrm{cc}} \) 
within that subspace; i.e., computing \( \lm(\Pign \Gamma_{\mathrm{cc}} \Pign) \). 
This introduces additional challenges, as the method in Ref.~\cite{lars} 
does not directly apply to the cross-click POVM element \( \Gamma_{\mathrm{cc}} \). However, these challenges are not fundamental. 
\end{remark}
\begin{remark}
The proof relies on the fact that Bob’s POVM has a block-diagonal structure, which allows us to decompose the original joint measurement \( \{ \vec{\Gamma} \} \) into three branches of sequential measurements, as illustrated in~\cref{diag:equivalent}. For the proofs in the branches, they require only that one can compute (or choose) the parameters \( a \), \( \delta \), \( \lambda_{\min} \), and \( q_Z \), all of which are defined by the POVM elements in the respective branches.
We will later assume polarization encoding for the calculation of these parameters, and we also assume a model for loss and dark counts for detectors in \cref{diag:passivedetection}. However, the proof does not rely on any specific property of polarization encoding. For a generic QKD protocol, as long as the POVM has the block-diagonal structure and the required parameters can be obtained from the POVM elements, our proof can be applied.
\end{remark}
We summarize the results in the following theorem.
\begin{theorem}
    Let $\rho_{A^nB^n} \in S_{\circ}(A^nB^n)$ be arbitrary state representing $n$ rounds of a generic QKD protocol. Suppose the joint POVM for Alice and Bob can be written as follows:
    \begin{align}
        \{\vec{\Gamma}\} = \{\Gamma_{(\text{test},=)},\;\Gamma_{(\text{test},\neq)},\;\Gamma_{(\text{key},=)},\;\Gamma_{(\text{key},\neq)},\;\Gamma_e,\;\Gamma_{\text{other}}\}
    \end{align}
    and the joint POVM inherits a block-diagonal structure from the fact that Bob's POVM has a block-diagonal structure:
    \begin{align}
        \Gamma_i = \bigoplus_{\pnm=0}^{\infty} \Gamma_i^{(\pnm)} .
    \end{align}
    Then, Alice and Bob's measurement can be viewed as the following:
\begin{enumerate}[label=\circletext{\arabic*}, leftmargin=*]
    \item They first perform QND measurement with projectors on the subspaces $\{\Pi_0,\Pi_{\pn},\Pign\}$. 
    \item  Upon the result $\Pign$ at stage \circletext{1}, they then complete the measurement $\{\Gamma^{(>\pn)}_{e},\idd-\Gamma^{(>\pn)}_{e}\}$.
    %  Denote the number of $\Gamma^{(>\pn)}_{e}$ outcomes as $\bd{\tilde{n}_e}$, the number of $e$ outcomes from $(>\pn)$ subspace.
    They then complete all other measurements on $(>\pn)$ subspace using some POVM $\{\vec{G}^{(>\pn)}\}$.
    \item (a) They complete all the measurements for 0 subspace.
    %  We denote $\nkzero$ as the number of key rounds from 0 subspace.

    (b) They then perform a coarse-grained measurement
    $\{F^{(\pn)},1-F^{(\pn)}\}$ on ($1\leq\pnm\leq\pn$) subspace, where
    \begin{align}
        F^{(\pn)}:= \Gamma^{(\pn)}_{(\text{test},=)}+\Gamma^{(\pn)}_{(\text{test},\neq)}+\Gamma^{(\pn)}_{(\text{key},=)}+\Gamma^{(\pn)}_{(\text{key},\neq)}
    \end{align}
    \item They start to complete the measurements for ($1\leq\pnm\leq\pn$) subspace. They first perform another coarse-grained basis choice measurement $\{\tilde{F}^{(\pn)}_{\text{test}},\tilde{F}^{(\pn)}_{\text{key}}\}$. Then, they complete the measurement with $\{\tilde{G}^{(\pn)}_{(\text{test},=)},\tilde{G}^{(\pn)}_{(\text{test},\neq)}\}$ upon the result $\tilde{F}^{(\pn)}_{\text{test}}$. 
    \begin{align}
    \tilde{F}^{(\pn)}_{\text{test}} = {\sqrt{F^{(\pn)}}}^+(\Gamma^{(\pn)}_{(\text{test},=)}+\Gamma^{(\pn)}_{(\text{test},\neq)}){\sqrt{F^{(\pn)}}}^+;\;\;\;    \tilde{F}^{(\pn)}_{\text{key}} = {\sqrt{F^{(\pn)}}}^+(\Gamma^{(\pn)}_{(\text{key},=)}+\Gamma^{(\pn)}_{(\text{key},\neq)}){\sqrt{F^{(\pn)}}}^+
    \end{align}
    \item They complete the key measurement for ($1\leq\pnm\leq\pn$) subspace using $\{\tilde{G}^{(\pn)}_{(\text{key},=)},\tilde{G}^{(\pn)}_{(\text{key},\neq)}\}$ upon the result $\tilde{F}^{(\pn)}_{\text{key}}$. We denote the number of ($1\leq\pnm\leq\pn$) key rounds as $\bd{\tilde{n}_{(\textbf{key,\pn})}}$. The number of phase error $\bd{\tilde{N}^{\textbf{ph}}_{\textbf{\pn}}}$ is defined as the number of $\tilde{G}^{(\pn)}_{(\text{test},\neq)}$ outcomes if they complete the measurement using $\{\tilde{G}^{(\pn)}_{(\text{test},=)},\tilde{G}^{(\pn)}_{(\text{test},\neq)}\}$ instead.
  \begin{align}
            \tilde{G}^{(\pn)}_{(\text{test},\neq)} = \sqrt{\tilde{F}^{(\pn)}_{\text{test}}}^+\sqrt{F_{\mathrm{sc}}^{(\pn)}}^+\Gamma^{(\pn)}_{(\text{test},\neq)}\sqrt{F_{\mathrm{sc}}^{(\pn)}}^+\sqrt{\tilde{F}^{(\pn)}_{\text{test}}}^+
        \end{align}
\end{enumerate}
Let us define the following random variables:
\begin{itemize}
    \item $\bd{n_{\textbf{test}}}$: The number of rounds corresponding to $\Gamma_{(\text{test},=)}$ and $\Gamma_{(\text{test},\neq)}$.
    \item $\bd{n_{\textbf{key}}}$: The number of rounds corresponding to $\Gamma_{(\text{key},=)}$ and $\Gamma_{(\text{key},\neq)}$.
    \item $\bd{N^{\textbf{obs}}_{\textbf{test}}}$: The number of rounds corresponding to $\Gamma_{(\text{test},\neq)}$.
    \item $\bd{n_e}$: The number of rounds corresponding to $\Gamma_e$.
\end{itemize}
Define phase error rate in ($1\leq\pnm\leq\pn$) rounds
\begin{align}
    \bd{e}^{\textbf{ph}}_{\textbf{\pn}}:= \frac{\bd{\tilde{N}^{\textbf{ph}}_{\textbf{\pn}}}}{\bd{\tilde{n}_{\textbf{(key,\pn)}}}}\;.
\end{align}
For any $a>0$, if one can compute the following parameters from the POVM elements,
\begin{align}
    \begin{split}
    &\lm \leq \lm(\Pign\Gamma_e\Pign)\\
    & q_Z \geq \infnorm{\Gamma_{(\text{key},=)}^{(0)}+\Gamma_{(\text{key},\neq)}^{(0)}}\\
    & \delta \geq \infnorm{\sqrt{\tilde{F}^{(\pn)}_{\text{test}}} \tilde{G}^{(\pn)}_{(\text{test},\neq)} \sqrt{\tilde{F}^{(\pn)}_{\text{test}}} - a\sqrt{\tilde{F}^{(\pn)}_{\text{key}}} \tilde{G}^{(\pn)}_{(\text{test},\neq)} \sqrt{\tilde{F}^{(\pn)}_{\text{key}}}}
     \end{split}
\end{align}
then the following bounds hold for any $\epzero$, $\eppntb$, $\epazua$, $\epazub>0$:
\begin{align}
    % &\text{Pr}\biggl(\bd{\tilde{n}_{\textbf{key,\pn}}} \leq K(\bd{n_{\textbf{key}}},\bd{n_e})\biggr) \leq \epzero^2+\eppntb^2\\
    &\text{Pr}\biggl(\bd{e}^{\textbf{ph}}_{\textbf{\pn}} \geq  \bimp(\bd{n_{\textbf{test}}},\bd{n_{\textbf{key}}},\bd{N^{\textbf{obs}}_{\textbf{test}}},\bd{n_e}) \;\lor\; \bd{\tilde{n}_{\textbf{(key,\pn)}}} \leq \K(\bd{n_{\textbf{key}}},\bd{n_e})\biggr)\notag\\&\leq \epazua^2 + \epazub^2 +\eppntb^2+\epzero^2,
\end{align}
where 
\begin{align}
    &\K(\bd{n_{\textbf{key}}},\bd{n_e}) = \bd{n_{\textbf{key}}} - nq_Z-\frac{\bd{n_e}}{\lm}-\frac{\sqrt{(-\log(\eppntb))(-\log(\eppntb)+4\lm\bd{n_e})}}{2{\lm}^2}-\sqrt{{-\ln(\epzero)}n}-\biggl(\frac{-2\log(\eppntb)}{4{\lm}^2}\biggr)\\
    &\bimp(\bd{n_{\textbf{test}}},\bd{n_{\textbf{key}}},\bd{N^{\textbf{obs}}_{\textbf{test}}},\bd{n_e})=\notag\\
    &\frac{\bd{N^{\textbf{obs}}_{\textbf{test}}}}{a\K(\bd{n_{\textbf{key}}},\bd{n_e})}+\frac{\delta}{a}\biggl(\frac{\bd{n_{\textbf{test}}}}{\K(\bd{n_{\textbf{key}}},\bd{n_e})}+1\biggr)+\sqrt{\frac{-2\ln(\epazua^2)}{a}\biggl(\frac{\bd{n_{\textbf{test}}}}{\K(\bd{n_{\textbf{key}}},\bd{n_e})^2}+\frac{1}{\K(\bd{n_{\textbf{key}}},\bd{n_e})}\biggr)}\notag\\
    &+\sqrt{{-2\ln(\epazub^2)}\biggl(\frac{\bd{n_{\textbf{test}}}}{\K(\bd{n_{\textbf{key}}},\bd{n_e})^2}+\frac{1}{\K(\bd{n_{\textbf{key}}},\bd{n_e})}\biggr)}.
\end{align}
\end{theorem}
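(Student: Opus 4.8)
The plan is to assemble the three generic counting lemmas already established—\cref{lemma:NumberofErrorBound}, \cref{lemma:multibound}, and \cref{lemma:zerobound}—and then reproduce the combining argument of \cref{sec:combine bounds} in this notation. Each lemma is stated for exactly the block-diagonal POVM structure assumed in the theorem, so the first task is purely bookkeeping: I would check that the multi-step measurement in the theorem statement is the common reformulation to which all three lemmas apply, and then read off their conclusions. \cref{lemma:NumberofErrorBound} yields the upper bound on the phase-error count $\bd{\tilde{N}^{\textbf{ph}}_{\textbf{\pn}}}$ in terms of $\bd{\tilde{N}_{\textbf{(test,\pn)}}}$, $\bd{\tilde{n}_{\textbf{(test,\pn)}}}$, $\bd{\tilde{n}_{\textbf{(key,\pn)}}}$, $a$, and $\delta$; \cref{lemma:multibound} gives the upper bound on $\ngNb$ in terms of $\bd{n_e}$ and $\lm$; and \cref{lemma:zerobound} gives the upper bound on the $0$-photon key count $\bd{\tilde{n}_{(\textbf{key,0})}}$ in terms of $q_Z$.

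Next I would convert the phase-error \emph{count} bound into a phase-error \emph{rate} bound by dividing through by $\bd{\tilde{n}_{\textbf{(key,\pn)}}}$, exactly as in the passage leading to \cref{eq:memorylessbounds}. At this stage the right-hand side still contains the unobservable quantities $\bd{\tilde{N}_{\textbf{(test,\pn)}}}$ and $\bd{\tilde{n}_{\textbf{(test,\pn)}}}$, which I would eliminate using the counting inequalities $\bd{\tilde{N}_{\textbf{(test,\pn)}}} \leq \bd{N^{\textbf{obs}}_{\textbf{test}}}$ and $\bd{\tilde{n}_{\textbf{(test,\pn)}}} \leq \bd{n_{\textbf{test}}}$—the single-photon test counts being a subset of all test counts—together with the fact that the bound is monotonically increasing in these variables, mirroring the step from \cref{eq:memorylessbounds} to \cref{eq:numberoferror}.

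To turn this into a bound on $\bd{e}^{\textbf{ph}}_{\textbf{\pn}}$ in terms of observables, I would lower bound $\bd{\tilde{n}_{\textbf{(key,\pn)}}}$ using the decomposition $\bd{\tilde{n}_{\textbf{(key,\pn)}}} = \bd{n_{\textbf{key}}} - \bd{\tilde{n}_{(\textbf{key,0})}} - \bd{\tilde{n}_{(\text{key},>\pn)}}$ together with $\ngNb \geq \bd{\tilde{n}_{(\text{key},>\pn)}}$, as in \cref{eq:uncorrelated nk}. Substituting the upper bounds on $\bd{\tilde{n}_{(\textbf{key,0})}}$ and $\ngNb$ from the two counting lemmas yields the lower bound $\bd{\tilde{n}_{\textbf{(key,\pn)}}} \geq \K(\bd{n_{\textbf{key}}},\bd{n_e})$ with the stated failure probability. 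Finally I would assemble everything with a union bound over the failure events, using the monotonic dependence of $\bimp$ on $\K$ to substitute the lower bound for $\bd{\tilde{n}_{\textbf{(key,\pn)}}}$, reproducing \cref{eq:mainresult}.

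The only genuine subtlety, and the step I would be most careful about, is the bookkeeping of the conditioning: the two counting lemmas are proved conditioned on the QND outcomes $(\tilde{n}_{(0)}, \tilde{n}_{(\pn)}, \ngN)$ and must be averaged over these outcomes before the union bound is applied. I would verify that all the monotonicity replacements remain simultaneously valid on the same underlying probability space—rather than only after conditioning—so that the four failure probabilities $\epazua^2$, $\epazub^2$, $\eppntb^2$, and $\epzero^2$ add up cleanly to the claimed total. Since every inequality used (the operator-counting inequalities and the monotonicity of $\bimp$ and $\K$) holds pointwise in the conditioning variables, this averaging is routine, and the theorem follows directly from the three lemmas with no new analytic input.
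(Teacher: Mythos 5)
Your proposal is correct and follows essentially the same route as the paper, whose proof of this theorem is simply the identification of the generic notation with the analysis of \cref{sec:phase error estimation} — i.e., the three lemmas (\cref{lemma:NumberofErrorBound,lemma:multibound,lemma:zerobound}) combined via the union-bound and monotonicity argument of \cref{sec:combine bounds}. Your attention to averaging over the QND conditioning before applying the union bound matches what the paper does when it sums over $\tilde{n}_{(0)},\tilde{n}_{(\pn)},\ngN$ in each sub-bound.
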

\begin{proof}
    The proof of this theorem follows from the analysis presented in \cref{sec:phase error estimation} by identifying
    \begin{align}
            \begin{split}
           &\Gamma_{(X,=)}=\Gamma_{(\text{test},=)};\;\;\Gamma_{(X,\neq)}=\Gamma_{(\text{test},\neq)};\;\;\Gamma_{(Z,=)}=\Gamma_{(\text{key},=)};\;\;\Gamma_{(Z,\neq)}=\Gamma_{(\text{key},\neq)};\;\;\Gamma_{\mathrm{mc}} = \Gamma_e\\
           &\nxb=\bd{{n}_{\textbf{test}}};\;\; \nkb=\bd{{n}_{\textbf{key}}};\;\;\nmcb=\bd{n_e};\;\;\nkoneb=\bd{\tilde{n}_{\textbf{(key,\pn)}}};\\
           &\Nxobsb = \bd{{N}^{\textbf{obs}}_{\textbf{test}}};\;\;\Nkoneb = \bd{\tilde{N}^{\textbf{ph}}_{\textbf{\pn}}};\;\;\errks=\bd{e}^{\textbf{ph}}_{\textbf{\pn}}.
    \end{split}
    \end{align}
\end{proof}
\subsection{Extensions to Imperfectly Characterised Detectors and the Decoy-state Protocol}\label{subsec:partialimperfections}
In the previous section, we presented the proof under the assumption that detector imperfections are fully characterized, 
meaning that if we take our model in \cref{diag:passivedetection}, the efficiency \( \eta_i \), the dark count rate \( d_i \), and the beam-splitting ratio \( \pxb \) are exactly known.
We now consider the case where these parameters lie within known ranges: 
\( \eta_i \in [\eta^l_i, \eta^u_i] \), \( d_i \in [d^l_i, d^u_i] \), and 
\( \pxb \in [s^* - \theta, s^* + \theta] \).
As we mentioned in \cref{sec:phase error estimation},
two challenges arise when the imperfections are not precisely known. 
The first missing ingredient is determining a lower bound on \( \lambda_{\min}(\Pign \Gamma_{\mathrm{mc}} \Pign) \) when \( \Gamma_{\mathrm{mc}} \) is not known exactly. This is addressed in~\cref{sec:subspace}.
Second, we need to determine the detector parameters \( a \), \( \delta \), and \( q_Z \), 
which characterize the basis-efficiency mismatches. 
We show in~\cref{appendix:adelta} that, for the case \( \pn = 1 \), these three parameters can be appropriately bounded given that the imperfections lie within known ranges. Once these bounds are obtained, one uses the monotonicity of the computed bounds to replace the relevant parameters with their maximum or minimum values, as described in \cref{remark:partialcharacterizeone,remark:partialcharacterizetwo,remark:partialcharacterizethree}. This addresses the extension to imperfectly characterized detectors. 

In the decoy-state protocol, when Alice prepares a quantum state, she additionally selects an intensity $\muk \in \{\mu_1, \mu_2, \mu_3\}$ (required for the decoy analysis we use) with corresponding probabilities $p_{\muk}$. We assume the intensities satisfy $\mu_1 > \mu_2 + \mu_3$ and $\mu_2 > \mu_3 \geq 0$. She then encodes the bit using a phase-randomized weak coherent pulse with the selected intensity. Because the pulses are phase-randomized, they can be viewed as a classical mixture of states with different photon numbers. In some rounds, Alice sends out single-photon signals, although this information is not directly accessible during the protocol. 
The key idea in the decoy-state protocol~\cite{hwang2003,lo2005,XBWang} is that, since Alice selects the pulse intensities randomly and reveals her choices only after all signals have been sent and measured by Bob, she can use this information—independent of Bob’s detection setup—to estimate the number of rounds in which she sent single-photon signals.

Therefore, the analysis for the decoy-state protocol is a straightforward extension of the single-photon source analysis. We first bound the phase error rate in terms of quantities associated with Alice sending single-photon signals (for example, the number of key rounds that Alice sends with single-photon signals), following the same analysis as in the single-photon source case. Then, the quantities corresponding to Alice sending single photons are estimated using decoy-state analysis based on actual observations in the protocol (for example, the number of key rounds sent by Alice with different intensities). We adopt the decoy-state method from Ref.~\cite[Appendix A]{charles}. For the detailed decoy-state analysis and the corresponding secret key rate formula, we refer the reader to~\cref{appendix:decoy}.

\section{phase error estimation with correlated detector imperfections}\label{sec:correlated detectors}
We now consider the case where Bob's detectors exhibit some amount of correlation, such as afterpulsing or detector dead time \cite{burenkov2010}, where a detection outcome may influence subsequent detections for some period of time. 
 First, we state the model for memory effects we study in this work, which is identical to the one from Ref.~\cite{devEUR}. 
Let \( k_i \) be the outcome of Bob's measurement in the \( i \)th round, 
and let \( k_i^j \) denote the sequence of measurement outcomes from round \( i \) to \( j \). 
Let \( l_c \) denote the correlation length, defined as the number of no-click rounds required for the detectors to return to uncorrelated behavior.
In the \( i \)th round, Alice and Bob measures using the POVM \( \{\vec{\Gamma}^{k_{i-l_c}^{i-1}}\} \), where ${k_{i-l_c}^{i-1}}$ are the measurement outcomes for previous $l_c$ rounds. Therefore, the POVM depends on the outcomes of the previous $l_c$ rounds.
If the previous \( l_c \) rounds are all no-click outcomes, 
then they measure with the uncorrelated POVM \( \{\vec{\Gamma}^{k_{i-l_c}^{i-1}}\} = \{\vec{\Gamma}\} \). We assume that Bob has characterised the imperfections for the uncorrelated POVM \( \{\vec{\Gamma}\} \). Our methods do not require any knowledge about the correlated POVMs.

The main problem when dealing with correlated detectors is that it's not clear which state we should apply the EUR statement to. Even if we use a coarse-grained measurement to decide the basis for each round and detect vs no-detect outcome for each round, \textit{this} POVM itself requires the previous round to be completely measured in order to be specified. However, if the previous round has been completely measured, then the EUR statement cannot be applied on it. This poses a fundamental problem when we attempt to establish secret key from all rounds, in the presence of memory effects, using the EUR proof technique.

This problem is addressed in Ref.~\cite[Lemma 5]{devEUR} by adding a click/no-click filtering step before doing any basis measurements. The main idea is to include a protocol modification that rejects rounds whose previous $l_c$ measurements contained a click. Since memory effects only last for $l_c$ rounds,  if all previous $l_c$ rounds are no-click outcomes, then one measures the current round with the uncorrelated POVM. Therefore, the extra click/no-click filtering step help us to filter out those rounds with memory effects. Once those rounds are rejected, we can apply the analysis from the memoryless case (\cref{sec:phase error estimation}) on the kept rounds.

\begin{lemma}\cite[Lemma 5]{devEUR}\label{lemma:correlated filter}
        Consider the state $\rho_{Q^n}$, and let the $i$th subsystem be measured using POVM \( \{\vec{\Gamma}^{k_{i-l_c}^{i-1}}\} \), where $k_i$ denotes the outcome of the $i$th measurement, and $k^{i-1}_{i-l_c}$ denotes the string of outcomes in the previous $l_c$ rounds. Suppose that $\{ \vec{\Gamma}\}$ (corresponding to $k^{i-1}_{i-l_c} = \text{ no-click})$ is such that it can be described by a two-step measurement, where the first measurement $\{\tilde{F},\idd-\tilde{F}\}$ determines the detect vs no-detect, followed by a second step measurement.
    Consider the state obtained after rejecting all rounds $i$ for which a detection occurred in the previous $l_c$ rounds. Then this state can be obtained via a procedure that only performs the $\{\tilde{F}, \idd -\tilde{F} \}$ measurements on the rounds that are not rejected.
\end{lemma}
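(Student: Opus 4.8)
The plan is to establish the statement as an equivalence of two measurement procedures (quantum-to-classical channels acting on $\rho_{Q^n}$), proved by processing the $n$ rounds sequentially from left to right. The goal is to show that the state on the surviving rounds produced by the click/no-click filtering can equally be produced by a procedure that, on each surviving round, applies only the coarse measurement $\{\tilde{F}, \idd - \tilde{F}\}$ and leaves the refined (second-step) measurement undone.

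First I would isolate the two structural facts that do all the work. \emph{(i)} By the memory model, whenever rounds $i-l_c, \dots, i-1$ all gave no-click outcomes, round $i$ is measured with the uncorrelated POVM $\{\vec{\Gamma}\}$, which by hypothesis factors into the detect/no-detect measurement $\{\tilde{F}, \idd - \tilde{F}\}$ followed by a second step. \emph{(ii)} Whether round $i$ is rejected is, by definition of the filtering, a function of only the detect/no-detect outcomes of rounds $i-l_c, \dots, i-1$, and never of the finer second-step outcomes. Fact \emph{(ii)} is the key observation: the filtering reads nothing beyond what $\{\tilde{F}, \idd - \tilde{F}\}$ already reveals.

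Next I would run the sequential procedure, carrying along the classical register of detect/no-detect outcomes generated so far. At round $i$ I read off from this register whether the previous $l_c$ rounds were all no-click. If they were, round $i$ is kept, and by Fact \emph{(i)} it is measured with $\{\vec{\Gamma}\}$; I then apply only its first step $\{\tilde{F}, \idd - \tilde{F}\}$ and defer the second step. If they were not, round $i$ is rejected; I complete its (correlated) measurement, record only its detect/no-detect value into the register, and discard its subsystem. The statement then reduces to showing that the deferred second-step measurements on the kept rounds can be commuted to the very end and then omitted. This is where Fact \emph{(ii)} is used: every later keep/reject decision, and every later POVM, is fixed by the detect/no-detect register, which is already determined once $\tilde{F}$ has acted; hence each deferred second step operates on a subsystem that no later operation touches, commutes to the end, and --- after the rejected rounds are discarded --- leaves the marginal state of the kept rounds unchanged.

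The hard part will be pinning down Fact \emph{(ii)} rigorously, i.e. verifying that specifying the measurement on every rejected round, and thereby deciding the freshness of all later rounds, requires only the coarse detect/no-detect history and not the refined outcomes of the kept rounds that detected. This is precisely what the two-step hypothesis on $\{\vec{\Gamma}\}$ buys us --- the first step is exactly the detect/no-detect measurement the filter keys on --- together with the physical feature that the correlations are both triggered and extinguished by click/no-click events within a window of length $l_c$. Once this is in place, the commutation of the deferred second steps is routine, since measurements on distinct subsystems commute and a discarded subsystem cannot affect the retained marginal.
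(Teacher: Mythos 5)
The paper does not actually prove this statement---it is imported verbatim as \cite[Lemma 5]{devEUR}, so there is no in-paper proof to compare against. Judged on its own merits, your high-level strategy (process the rounds sequentially, apply only the first-step measurement on kept rounds, defer the refined measurements to the end, and argue they commute past everything in between) is the natural and intended route. However, your ``Fact (ii)'' is overclaimed at exactly the point you yourself flag as the hard part, and the two-step hypothesis does not close it.

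The keep/reject \emph{decision} for round $i$ is indeed a function of only the click/no-click history---that much is true. But the \emph{POVM} applied to a rejected round $i$ is $\{\vec{\Gamma}^{k_{i-l_c}^{i-1}}\}$, indexed by the \emph{full} outcome string of the previous $l_c$ rounds, and every rejected round has in its window at least one round that clicked; that clicking round may itself be a kept round $j$ (its own window was all no-click), whose fine-grained outcome $k_j$ you have deferred. At the moment round $i$ is measured, $k_j$ is undetermined, so the classical control of round $i$'s measurement is not ``fixed by the detect/no-detect register'' as you assert. The standard way to still make sense of this is to realize the controlled measurement as a joint quantum operation on $Q_j Q_i$ of the form $\sum_{k_j} \mathcal{G}_{k_j}\otimes\mathcal{M}^{k_j}_{c}$; but then tracing out $Q_i$ while \emph{conditioning} on its click/no-click outcome $c$ (which you must do, since $c$ feeds the later keep/reject decisions and hence is part of the event $\Omega$) produces, on $Q_j$, the map $\sigma\mapsto\sum_{k_j}\Pr(c\,|\,k_j,\cdot)\,\mathcal{G}_{k_j}(\sigma)$-type back-action, which is generically \emph{not} the identity on top of the first-step measurement. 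So the deferred second step does not simply ``commute to the end and get omitted''; a measurement cannot in general be commuted past an operation that is classically controlled on its outcome, even when the controlled operation acts on a subsystem that is later discarded, precisely because its retained classical outcome is correlated with the control. To repair the argument you must either (a) restrict the model so that the correlated POVMs depend only on the coarse click/no-click pattern of the preceding rounds (in which case your argument goes through essentially as written), or (b) supply an explicit computation showing that, for the particular way the rejected rounds' outcomes enter $\Omega$, this back-action on the kept rounds vanishes. As written, the proposal asserts rather than proves the step on which the whole lemma turns.
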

This lemma states that, for each round using the uncorrelated POVM, 
Alice and Bob must apply a filter \( \{\Fc, \idd - \Fc\} \) 
to determine whether it is a click or no-click outcome \emph{before} completing the measurement. 
Thus, after applying \cite[Lemma 5]{devEUR}, we can interpret Alice and Bob's measurement procedure as follows:
\begin{enumerate}
    \item The first round is always kept. 
    The filter \( \{\Fc, \idd - \Fc\} \) is applied to determine whether it is a click or no-click outcome.
    
    \item At the \( i^{\text{th}} \) round, one of two things can happen. If any of the previous \( l_c \) rounds had a click, 
    Bob rejects this round, and the measurement can be completed using a correlated POVM. Since he has completed the measurement he knows whether it was a click or a no-click. On the other hand, if none of the previous $l_c$ rounds had a click, then he keeps the round.
    If he keeps the round, the filter \( \{\Fc, \idd - \Fc\} \) is applied. 
    In either case, whether the round results in a click or a no-click outcome is determined at this step. 
    This information is passed to the next round to decide whether to reject or keep it.
    
    \item After all \( n \) rounds have passed through this filtering process, 
    Alice and Bob complete the measurements on the kept rounds.
\end{enumerate}

We refer to this process as the \emph{filtering process} in this section, 
and we call \( \{\Fc, \idd - \Fc\} \) the \emph{filter}.

\begin{remark}
Note that Ref.~\cite{devEUR} uses \cref{lemma:correlated filter} on active detection setups and provides a proof sketch for estimating the phase error rate in the presence of correlated detectors. With perfect (same loss and dark count rates for each detector) active detectors, one can first perform a QND measurement as the filter to figure out whether it is a click or a no-click outcome. Later one can continue the proof on the rounds without memory effects. However,   for imperfect detectors, a QND measurement is not able to correctly implement the required filter. Moreover, the measurement that does implement the correct filter no longer leads to active basis choice, rendering the approach from Ref.~\cite{devEUR} to be inapplicable. 
(While Ref.~\cite{devEUR} proposes that one can lift the perfect case to the imperfect case by transferring imperfections to Eve through Ref.~\cite{shlok}, this is not straightforward.)

This problem is avoided in this work due to the use of Azuma's inequality, which allows us to construct any convenient 
\( \ftx \) and \( \ftz \), as long as we can bound their \( \infty \)-norm difference. Therefore, we can construct the filter as the click/no-click measurement at the begining of each uncorrelated round. We apply \cite[Lemma 5]{devEUR} to reduce the analysis to those kept rounds without memory effects.  We observe that our proofs \cref{lemma:NumberofErrorBound} and \cref{lemma:multibound} for memoryless imperfect passive detection setups only require a global state $\rho_{A^nB^n}$ and some POVM with block-diagonal structure. Thus, the analysis in the memoryless case still holds for the state \emph{after} applying the filter to all rounds conditioned on the filtering process outcomes. Therefore, the phase error rate can be bounded using the same method as in \cref{sec:phase error estimation} on the state after the filtering process.
\end{remark}

\subsubsection{Protocol modification for memory affects}\label{sec:correlated protocol description}
We note that we require the following protocol modification to incorporate memory effects. Alice and Bob's protocol is identical to the one described in \cref{sec:basicprotdescription}, except that they include the following additional post-processing after all $n$ rounds are sent and measured: Bob only keeps those rounds for which the previous $l_c$ rounds are all no-click outcomes. For the rest of the rounds that do not satisfy this condition, he rejects (discards) them. He announces this information to Alice so that Alice also rejects those correlated rounds. The protocol proceeds by only considering the observations and raw key generated from the kept rounds.

Note that due to this protocol modification, the observed quantities used in the key length decision used in this protocol are different from the quantities discussed in \cref{sec:phase error estimation}. However, we use the same notation to refer to them. That is: $\nmc,\nx,\nk,\Nxobs,\errzobs$ denote the various quantities within the kept rounds.

Note that on-the-fly announcements of click/no-click outcomes are naturally accommodated in the proof. We show this extension in \cref{sec:onthefly}
 
\subsection{Proof Idea}

In this subsection, we present the main idea behind our proof. The proof is formally undertaken in \cref{appendix:commute}.

We choose the filter as the click/no-click POVM $\{\Fc,\idd-\Fc\}$\footnote{The measurement in the filter can be more than click/no-click measurement. For instance, one can include the 0-photon subspace measurement inside the filter. }, where $\Fc = \idd-\Gamma_{\mathrm{nc}}$: if Alice and Bob keep a round, he applies the filter to that round. If he rejects, they complete the full measurement on this round. After applying this filtering process to all $n$ rounds, they share a filtered state (see~\cref{fig:correlated channel}): 
\begin{align}
    \begin{split}
        &\rho_{\text{filtered}} = {\filteredstate}
        % &\Omega = \Omega(\nkzerocorobs, \nxzerocorobs, \nonecorobs, \ngNbcorobs),
    \end{split}
\end{align}
where $\Omega$ denotes the classical outcomes of the filtering process.
\begin{figure}
    \centering
    \scalebox{0.9}{\begin{tikzpicture}
    [
    dashedbox/.style={draw, rectangle, minimum width=2.5cm, minimum height=1cm, dashed, align=center},
    box/.style={draw, rectangle, minimum width=2cm, minimum height=1cm, align=center},
    arrow/.style={-Stealth},
    discard/.style={->, thick, dashed},
    every node/.style={font=\small}
]
\node(rho0){\(\rho_{A^nB^nE^n}\)};
% \node[draw,minimum width=1cm, minimum height=1cm,right = 1.5cm of rho0](channel1){\huge$\varepsilon_1$};
% \node[draw,minimum width=1cm, minimum height=1cm,right = 4.5cm of channel1](channel2){\huge$\varepsilon_2$};
% \node[draw,minimum width=1cm, minimum height=1cm,right = 4cm of channel2](channeln){\huge$\varepsilon_n$};
% \node[draw,right = 1cm of channel1](Eve1){Eve};
% \node[right = 1cm of channel2](Eve2){};
% \node[draw,right = 1cm of channeln](Even){Eve};
%filter1
\node[draw,fill=gray!10, minimum width=2cm, minimum height=1.5cm, below= 2cm of rho0,xshift=-5cm] (outbox1) {\(\)};
\node[draw, fill=green!10, minimum width=1cm, minimum height=1cm] at (outbox1.center){Filter};
% \draw[thick] ([xshift=-0.2cm,yshift=-0.1cm]outbox1.north) -- ([xshift=-0.3cm,yshift=0.1cm]outbox1.south);

%filter2
\node[draw,fill=gray!10, minimum width=4cm, minimum height=1.5cm, right = 2cm of outbox1] (outbox2) {\(\)};
\node[draw, fill=green!10, minimum width=1cm, minimum height=1cm] at ([xshift=-1.1cm]outbox2.center){Filter};
\node[draw, fill=red!10, minimum width=1cm, minimum height=1.2cm,align=center] at ([xshift=0.9cm]outbox2.center) {Full\\ Measurement};
\draw[thick] ([xshift=-0.2cm,yshift=-0.1cm]outbox2.north) -- ([xshift=-0.3cm,yshift=0.1cm]outbox2.south);

\node[right = 2cm of outbox2](processes){......};

%filtern
\node[draw,fill=gray!10, minimum width=4cm, minimum height=1.5cm, right = 0.1cm of processes] (outboxn) {\(\)};
\node[draw, fill=green!10, minimum width=1cm, minimum height=1cm] at ([xshift=-1.1cm]outboxn.center){Filter};
\node[draw, fill=red!10, minimum width=1cm, minimum height=1.2cm,align=center] at ([xshift=0.9cm]outboxn.center) {Full\\ Measurement};
\draw[thick] ([xshift=-0.2cm,yshift=-0.1cm]outboxn.north) -- ([xshift=-0.3cm,yshift=0.1cm]outboxn.south);
\node[right = 1cm of outboxn](rhon){\(\rho_{\text{filtered}}\)};

\node[above = 0.2cm of outbox1,xshift=0.3cm]{\(A_1B_1\)};
\node[above = 0.2cm of outbox2,xshift=0.3cm]{\(A_2B_2\)};
\node[above = 0.2cm of outboxn,xshift=0.3cm]{\(A_nB_n\)};

\draw[arrow](rho0.south) to [out=-90,in=90] (outbox1.north);
\draw[arrow](rho0.south) to [out=-90,in=90] (outbox2.north);
\draw[arrow](rho0.south) to [out=-90,in=90] (outboxn.north);

% \draw[arrow] (rho0) -- (channel1);

% \draw[arrow] (channel1) -- node[left]{\(A_1B_1\)}(outbox1);
% \node[above right=0.01cm and 0.2cm of channel1,yshift=-0.45cm]{\(E_{1}\)};

% \draw[arrow](channel1) -- (Eve1);
% \draw[dashed,arrow] (outbox1.east) to[out=0,in=-90] node[draw,fill=gray!10,yshift=0.2cm]{click/no-click}(Eve1.south);
% \draw[arrow](Eve1) -- (channel2);
% x\node[above right=0.01cm and 0.2cm of Eve1,yshift=-0.3cm,xshift=1cm]{\(A_2A^{'}_2\)};
\draw[dashed,arrow] (outbox1.east) -- node[draw,fill=gray!10]{\textcolor{darkgreen}{keep}/\red{rej}} (outbox2.west);

% \draw[arrow] (channel2) -- node[left]{\(A_2B_2\)}(outbox2);
% \draw[arrow](channel2) -- (Eve2);
% \node[above right=0.01cm and 0.2cm of channel2,yshift=-0.45cm]{\(E_{2}\)};
% \draw[dashed,arrow] (outbox2.east) to[out=0,in=-90] node[draw,fill=gray!10,yshift=0.2cm]{click/no-click}(Eve2.south);
% \draw[dashed,arrow] (outbox2.east) -- ([xshift=-0.4cm]outboxn.west);
\draw[dashed,arrow] (outbox2.east) -- node[draw,fill=gray!10]{\textcolor{darkgreen}{keep}/\red{rej}} (processes.west);
% \draw[arrow]([xshift=2cm]Eve2.east) -- (channeln);
% \draw[arrow] (channeln) -- node[left]{\(A_nB_n\)}(outboxn);
% \draw[dashed,arrow] (outboxn.east) to[out=0,in=-90] node[draw,fill=gray!10,yshift=0.2cm]{click/no-click}(Even.south);
% \draw[arrow](channeln) -- (Even);
% \node[above right=0.01cm and 0.2cm of channeln,yshift=-0.45cm]{\(E_{n}\)};
\draw[arrow](outboxn) -- (rhon);

% \node[draw,fill=gray!10,align=center,draw,below = 2cm of outbox2](later){Measurements\\to be comepleted later};

% \draw[arrow](outbox1.south) to [out=-90,in=90](later.north);
% \draw[dotted,arrow]([xshift=-1cm,yshift=0.2cm]outbox2.south) to [out=-90,in=90](later.north);
% \draw[dotted,arrow]([xshift=-1cm,yshift=0.2cm]outboxn.south) to [out=-90,in=90](later.north);
\end{tikzpicture}}
\caption{The filtering process. In each round, Bob performs either the filter or the full measurement depending on the click patterns observed in the previous $l_c$ rounds, to determine whether the current round results in a click or no-click. After $n$ rounds, the final shared state, conditioned on the filtering outcomes, is denoted by $\filteredstate$.}
    \label{fig:correlated channel}
\end{figure}
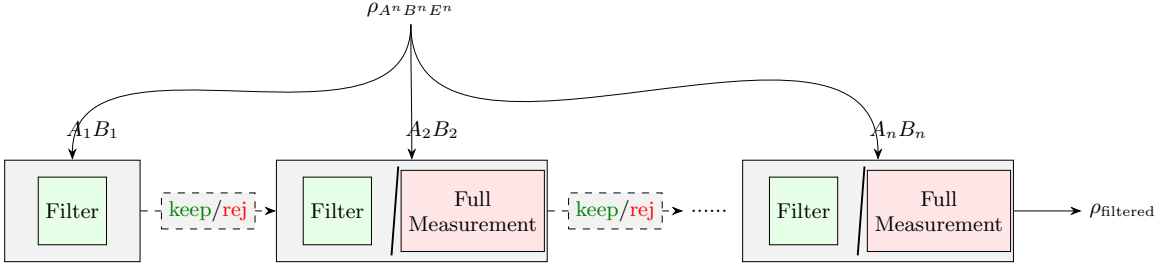 

Alice and Bob complete the measurement in a similar manner as \cref{diag:equivalent}, and we set $\pn=1$. After the filtering process, Alice and Bob measure $\filteredstate$ with POVM 
\begin{align}
    \begin{split}\label{eq:filteredPOVM}
    \{&\sqrt{\Fc}^+\Gamma_{(X,=)}\sqrt{\Fc}^+,\;\sqrt{\Fc}^+\Gamma_{(X,\neq)}\sqrt{\Fc}^+,\;\sqrt{\Fc}^+\Gamma_{(Z,=)}\sqrt{\Fc}^+,\;\sqrt{\Fc}^+\Gamma_{(Z,\neq)}\sqrt{\Fc}^+,\\&\sqrt{\Fc}^+\Gamma_{\mathrm{mc}}\sqrt{\Fc}^+,\;\sqrt{\Fc}^+\Gamma_{\text{other}}\sqrt{\Fc}^++\idd-\Pi_{\Fc}\},
     \end{split}
\end{align}
where $\Pi_{\Fc}$ is the projector onto the support of $\Fc$. This POVM has a block-diagonal structure on photon number $\pnm$.

Therefore, we can apply the same measurement break down as \cref{diag:equivalent} using QND measurement and \cref{lemma:twostep} on this POVM. That is, we perform the QND measurement, break down measurements on 0-,1- and ($>1$)-photon subspaces, and construct an estimation protocol as shown in \cref{fig:correlated_equivalent}.
\begin{figure}
    \centering
    \scalebox{0.85}{% \usetikzlibrary{arrows.meta, positioning}
\begin{tikzpicture}
    [
    dashedbox/.style={draw, rectangle, minimum width=2.5cm, minimum height=1cm, dashed, align=center},
    box/.style={draw, rectangle, minimum width=2cm, minimum height=1cm, align=center},
    arrow/.style={-Stealth},
    discard/.style={->, thick, dashed},
    every node/.style={font=\small}
]
%filtering stuff
\node(top) {\( i^{\text{th}}\text{ round}\)};
\node[draw,fill=gray!10, minimum width=5cm, minimum height=1.5cm, below=0.6cm of top] (outbox) {\(\)};
\node[box,below=0.6cm of outbox] (qnd) {QND};
\node[draw, fill=green!10, minimum width=2cm, minimum height=1.2cm] at ([xshift=-1.2cm]outbox.center) {Filtering};
\node[draw, fill=red!10, minimum width=2cm, minimum height=1.2cm,align=center] at ([xshift=1.2cm]outbox.center) (reject) {Full\\ Measurement};

\node[draw, fill=gray!10,minimum width=3cm, minimum height=1.5cm,left = 2.5cm of outbox](previous) {\( \text{previous }l_c \text{ rounds}\)};

\node[draw,fill=gray!10,right = 0.3cm of previous,yshift=0.4cm](keep){\textcolor{darkgreen}{keep}/\red{reject}};

\node[draw, fill=gray!10,minimum width=2cm, minimum height=1.5cm,right = 3cm of outbox](next) {\( \text{next rounds}\)};

\node[draw,fill=gray!10,left = 0.4cm of next,yshift=0.4cm](click){click/no click};

% \node[box, below left=3cm and 2cm of qnd] (0part){\(\{\vec{F}^{(0)}\} \)};
\node[box, left =2.5cm of qnd](n0){\(\{\vec{F}^{(0,\text{keep})}\}\)};
\node[left =0.5cm of n0](n0part){\(\nkzerocor\)};
\node[box, below=3cm of qnd] (discard) {\(\{\blue{F^{(1,\text{keep})}_{\mathrm{sc}}},F^{(1,\text{keep})}_{\mathrm{mc}},F^{(1,\text{keep})}_{\text{basis mismatch}}\} \)};
\node[box, below right=1cm and 1cm of qnd] (mc) {\(\{F^{(>1,\text{keep})}_{\mathrm{mc}},\red{\mathbb{I}-F^{(>1,\text{keep})}_{\mathrm{mc}}}\}\)};
\node[below =0.2cm of mc,xshift=-1cm](nmc){\(\netbcor\)};
\node[box, right=1cm of mc](mccomplete){\(\{\vec{G}^{(>1,\text{keep})}\}\)};
\node[box, below=2cm of discard, xshift=-0.7cm] (f) {\( \{\red{\ftxcor}, \blue{\ftzcor} \}\)};
\node[box, below left=1.5cm and 0.2cm of f] (ca1) {\(\{ {\GtXeqcor}, {\GtXneqcor}\} \)};
\node[dashedbox, below=4cm of f] (ca2) {\(\{ {\GtXeqcor}, {\GtXneqcor}\} \)};
\node[box, below right=4cm and 2cm of f] (ca3) {\(\{ {\GtZeqcor}, {\GtZneqcor}\} \)};
\node[below right=3cm and 2cm of f] (middle) {\(\)};
\node[below right=0.8cm and 0.2cm of f] (nk1) {\(\;\;\nkonebcor\)};
\node[below left=0.8cm and 0.2cm of f] (nx1) {\(\nxonebcor\;\;\)};
\node[below =0.8cm of discard,xshift=-0.4cm] (nx1) {\(\;\;\;\;\nonesccor\)};
\node[below =0.4cm of ca2](phase){\(\Nkonebcor\)};
\node[left =0.4cm of ca1](test){\(\Nxonebcor\)};

% \node at (11.5,-3.5){\(\text{Pr}\biggl(\ngNb > V(\nmctb,\eppntb,\lm)\biggr)\leq \eppntb^2\)};
% \node at (10,-6){\(\text{Pr}\biggl(\nkzero \geq n(q_Z+\gmzero(n))\biggr) \leq \epzero^2\)};
% \node at (3,-16){\(\text{Pr}\biggl(\Nkoneb \geq \frac{\Nxoneb+\sqrt{-2\text{log}(\epazua^2)(\nxoneb+\nkoneb)}+\delta(\nxoneb+\nkoneb)}{a}+\sqrt{-2\text{log}(\epazub^2)(\nxoneb+\nkoneb)}\biggr)\leq \epazua^2 + \epazub^2\)};
\node at (8,-4.2){\circletext{1}};
\node at (8,-6.6){\circletext{2}};
\node at (8,-8.3){\circletext{3}};
\node at (8,-13.6){\circletext{4}};
\node at (8,-16.8){\circletext{5}};
\node[above=0.5cm of ca3](EUR){\textcolor{blue}{EUR}};
\draw[dashed] (-5,-4.5) -- (8,-4.5);
\draw[dashed] (-5,-6.8) -- (8,-6.8);
\draw[dashed] (-5,-8.5) -- (8,-8.5);
\draw[dashed] (-5,-13.8) -- (8,-13.8);
\draw[dashed] (-5,-17) -- (8,-17);

% Arrows
% \draw[arrow] (top) -- (outbox);
% \draw[arrow,red] (qnd) --  node[left] {\( \black{n_{0}} \)} (0part);
\draw[arrow,blue] (qnd) -- node[right] {\( \black{\tilde{n}_{(1,\text{keep,click})}} \)} (discard);
\draw[arrow,brown] (qnd) -- node[right,yshift=0.2cm] {\( \black{\tilde{n}_{(>1,\text{keep,click})} }\)} (mc);
\draw[arrow] (qnd) -- node[below] {\( \black{\tilde{n}_{(0,\text{keep,click})} }\)} (n0);
\draw[arrow,blue] ([xshift=-1.7cm,yshift=0.3cm] discard.south) -- (f.north);
\draw[arrow,red] ([xshift=-0.2cm]mc.east) -- (mccomplete);
\draw[arrow] ([xshift=-0.7cm,yshift=0.3cm] mc.south) -- (nmc);
\draw[arrow] (n0) -- (n0part);
\draw[arrow,red] (f) -- (ca1);
\draw[dashed,->] (middle) -- (ca2);
\draw[dashed,->] (ca2) -- (phase);
\draw[arrow] (ca1) -- (test);

\draw[arrow]([xshift=-1cm,yshift=0.2cm]outbox.south) -- (qnd.north);
\draw[arrow,blue] (f) -- (ca3);
\draw[arrow](previous) --  (outbox);
\draw[arrow](outbox) -- (next);

\draw[thick] (0.05,-1.1) -- (-0.1,-2.15);
% % Discard annotation
% \draw[discard] (discard.west) -- ++(-1,-1) node[left]{discard};

\end{tikzpicture}}
    \caption{
The estimation protocol for correlated detectors: 
The colors indicate which measurement is used to complete the measurement 
based on the outcomes from previous steps.}
    \label{fig:correlated_equivalent}
\end{figure}
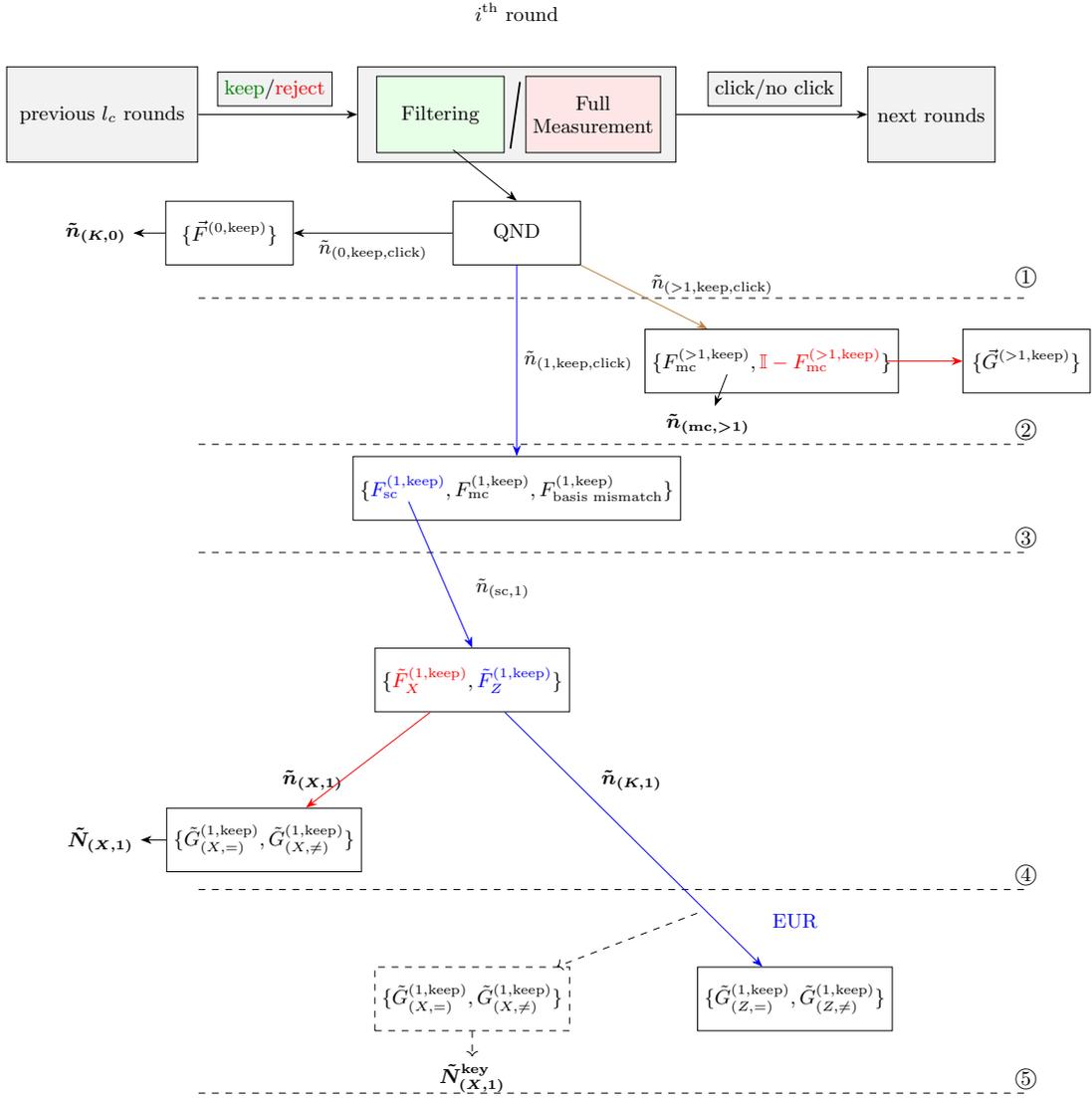 
\begin{remark}
    We add an extra superscript ``keep'' on the POVM elements to remind that these POVM elements are technically different from what we have in the memoryless case, since we construct them from the POVM \cref{eq:filteredPOVM} after the click/no-click filter.
\end{remark}

\begin{itemize}
    \item For single-photon rounds, one can bound the number of phase-error in single-photon rounds using \cref{lemma:NumberofErrorBound}. The only missing step is computing $\delta$ given some choice of $a$.
   We show in~\cref{appendix:commute} that \( \ftxcor ( \ftzcor )\) is equal to \( \ftxone ( \ftzone) \) in the memoryless case with $\pn=1$ for the model we consider. Therefore, we can use the same \( a \) and \( \delta \) computed for the memoryless case in~\cref{appendix:adelta}, as they depend only on \( \ftxone \) and \( \ftzone \). We use~\cref{lemma:NumberofErrorBound} and sum over all possible $n_{(\text{keep,click})}$. Then, the following relation holds:
    \begin{align}
    &\text{Pr}\biggl(
        \Nkonebcor \geq 
        \frac{
            \Nxonebcor 
            + \sqrt{-2 \log(\epazua^2) (\nxonebcor + \nkonebcor)} 
            + \delta (\nxonebcor + \nkonebcor)
        }{a} \notag \\
    &\hspace{30pt}
    + \sqrt{-2 \log(\epazub^2) (\nxonebcor + \nkonebcor)}
    \biggr) \notag \\
    &\leq \epazua^2 + \epazub^2. \label{NumberofErrorBound2}
\end{align}
    \item For multi-photon rounds,
    % Alice and Bob complete the measurement with $\{ F_{\mathrm{mc}}^{(>1,\text{keep})},\idd- F_{\mathrm{mc}}^{(>1,\text{keep})}\}$ on the rounds with multi-photon ($\ngNbcor$) to obtain the number of multi-click outcomes from multi-photon signals among those kept rounds, which we denote as $\nmctbcor$. 
    we show in \cref{appendix:commute} that $F_{\mathrm{mc}}^{(>1,\text{keep})}\geq \lm(\Pi_{(>1)}\Gamma_{\mathrm{mc}}\Pi_{(>1)})\Pi_{(>1)}$. Thus, we can use the same $\lm$ in the memoryless case to upper bound the minimum eigenvalue of $F_{\mathrm{mc}}^{(>1,\text{keep})}$.  We use \cref{lemma:multibound} and sum over all possible $n_{(\text{keep,click})}$. Then, the following relation holds:
        \begin{align}\label{step2multibound2}
        &\text{Pr}\biggl(\ngNbcor > \frac{\nmcbcor}{\lm}+\frac{\sqrt{(-\log(\eppntb))(-\log(\eppntb)+4\lm\nmcbcor)}}{2{\lm}^2}+\frac{(-2\log(\eppntb))}{4{\lm}^2}\biggr) \leq \eppntb^2\;.
    \end{align}
    \item For 0-photon rounds, we cannot use \cref{lemma:zerobound} since after the click/no-click filter, the POVM elements corresponding to 0-photon key rounds are no longer small. Instead, we apply Azuma's inequality to bound the number of key rounds from 0-photon signals $\nkzerocor$. We show that the following relation holds:
        \begin{align}\label{step3zerobound2}
        \text{Pr}\biggl(\nkzerocor \geq n\big(q_Z+2\gmzero(n,\epzero)\big)\biggr) \leq \epzero^2\;,
    \end{align}
    where $\gmzero(n,\epzero) = \sqrt{\frac{-\ln(\epzero)}{n}}$. We observe that we pick up a factor of 2 in front of the finite-size correction term compared to the memoryless case.
\end{itemize}
The detailed proofs of \cref{NumberofErrorBound2}, \cref{step2multibound2} and \cref{step3zerobound2} are in \cref{appendix:commute}.
Combining these three bounds together we obtain our goal:

\begin{align}
   & \text{Pr}\biggl(\errkscor \geq  \bimpcor (\nxbcor,\nkbcor,\Nxobsbcor,\nmcbcor) \lor \nkonebcor \leq {\Kcor}(\nkbcor,\nmcbcor)\biggr)\leq \epazua^2 + \epazub^2 +\eppntb^2+\epzero^2,
\end{align}
where 
\begin{align}
        &\Kcor(\nkbcor,\nmcbcor) = \notag\\ &\nkbcor - nq_Z -\frac{\nmcbcor}{\lm}-\frac{-2\log(\eppntb)}{4{\lm}^2}-2\sqrt{({-\ln(\epzero)})n}-\frac{\sqrt{(-\log(\eppntb))(-\log(\eppntb)+4\lm\nmcbcor)}}{2{\lm}^2}\\
    &\bimpcor(\nxbcor,\nkbcor,\Nxobsbcor,\nmcbcor)=\notag\\
    &\frac{\Nxobsbcor}{a\Kcor(\nkb,\nmcb)}+\sqrt{\frac{-2\ln(\epazua^2)}{a}\biggl(\frac{\nxbcor}{\Kcor(\nkbcor,\nmcbcor)^2}+\frac{1}{\Kcor(\nkbcor,\nmcbcor)}\biggr)}+\frac{\delta}{a}\biggl(\frac{\nxbcor}{\Kcor(\nkbcor,\nmcbcor)}+1\biggr)\notag\\&
    +\sqrt{{-2\ln(\epazub^2)}\biggl(\frac{\nxbcor}{\Kcor(\nkbcor,\nmcbcor)^2}+\frac{1}{\Kcor(\nkbcor,\nmcbcor)}\biggr)}\;.
    % &K(\nkb,\nmcb) = \nkb - n(q_Z+\gmzero(n))-\frac{-2\log\eppntb}{4{\lm}^2}-\frac{\nmcb}{\lm}-\frac{\sqrt{(-\log\eppntb)(-\log\eppntb+4\lm\nmcb)}}{2{\lm}^2}
\end{align}
The only difference compared to the memoryless case is the extra factor of 2 in front of the finite-size correction term in $\Kcor(\cdot)$. This arises due to the use of Azuma's inequality in this proof, as compared to the use of binomial tail bounds and Hoeffding's inequality in the memoryless case. 
For the decoy-state protocol, the decoy analysis~\cite{charles} we use in this work is independent of detection setups. Again, we can first bound the phase error rate and number of key rounds for those single-photon kept rounds in the in terms of quantities associated with Alice sending single-photon signals. Later, we estimate these single-photon quantities using decoy analysis.

We state the key rate formula for decoy-state BB84 with passive, imperfect, correlated detection setups without proof in \cref{eq:keyratecorrelated} in \cref{appendix:decoy}.

\section{Key Rates Simulation}\label{sec:results}
We compute key rates for the decoy-state protocol in the case where Alice uses a phase-randomized WCP source without imperfections and Bob uses passive detection setups with basis-efficiency mismatch. 
We refer the reader to~\cref{appendix:recipememoryless} and~\cref{appendix:recipecorrelated} for step-by-step recipes to compute the key rate in the memoryless and correlated cases, respectively. For plots, we perform the analysis in the case where the photon number cutoff $\pn=1$. The security parameters are set as follows: 
\( \epazua = \epazub = \eppntb = \epzero = \epatd = \epev = \eppa = 10^{-12} \), 
which leads to \( \epat = \sqrt{13} \times 10^{-12} \) and an overall security parameter of \( (2\sqrt{13} + 2) \times 10^{-12} \) from \cref{eq:epat}. In principle, our proof is valid when we choose completely different ranges of loss and dark count rates for each detector. However, for simplicity, we use the same range for all detectors and vary the width $\Delta$ of the range.
% \begin{align}
%     \begin{split}
%     &\eta_i \in [0.7(1 - \deltaeta),0.7(1 + \deltaeta)],\\ 
%     &d_i \in [10^{-6}(1 - \deltadc),10^{-6}(1 + \deltadc)],\\
%     &s \in [0.3 (1- \Delta_s), 0.3 (1+ \Delta_s)] \quad \text{(Bob's beam splitter ratio to X branch)},
%     \end{split}
% \end{align}
\begin{align}\label{eq:detectormodel}
    \begin{split}
    &\eta_i \in [0.7(1 - \Delta),0.7(1 + \Delta)],\\ 
    &d_i \in [10^{-6}(1 - \Delta),10^{-6}(1 + \Delta)],\\
    &s \in [0.3 (1- \Delta), 0.3 (1+ \Delta)] \quad \text{(Bob's beam splitter ratio to }X\text{ branch)},
    \end{split}
\end{align}
Alice chooses the $X$ basis with probability 0.3. The decoy intensities are selected with equal probabilities \( \frac{1}{3} \), optimizing over \( \mu_1 \) and \( \mu_2 \), while fixing \( \mu_3 = 0 \). The devices have a physical misalignment of \( 2^\circ \) \footnote{One can further optimize over parameters controlled by Bob, such as the beam splitter ratio \( s \), the decoy intensities \( \muk \), and the probabilities \( p_{\muk} \) associated with choosing each intensity. This is left for future work}. 

\subsection{Simulation with Imperfect Characterised Detectors}
\begin{figure}
    \centering
    \includegraphics[width=0.8\textwidth]{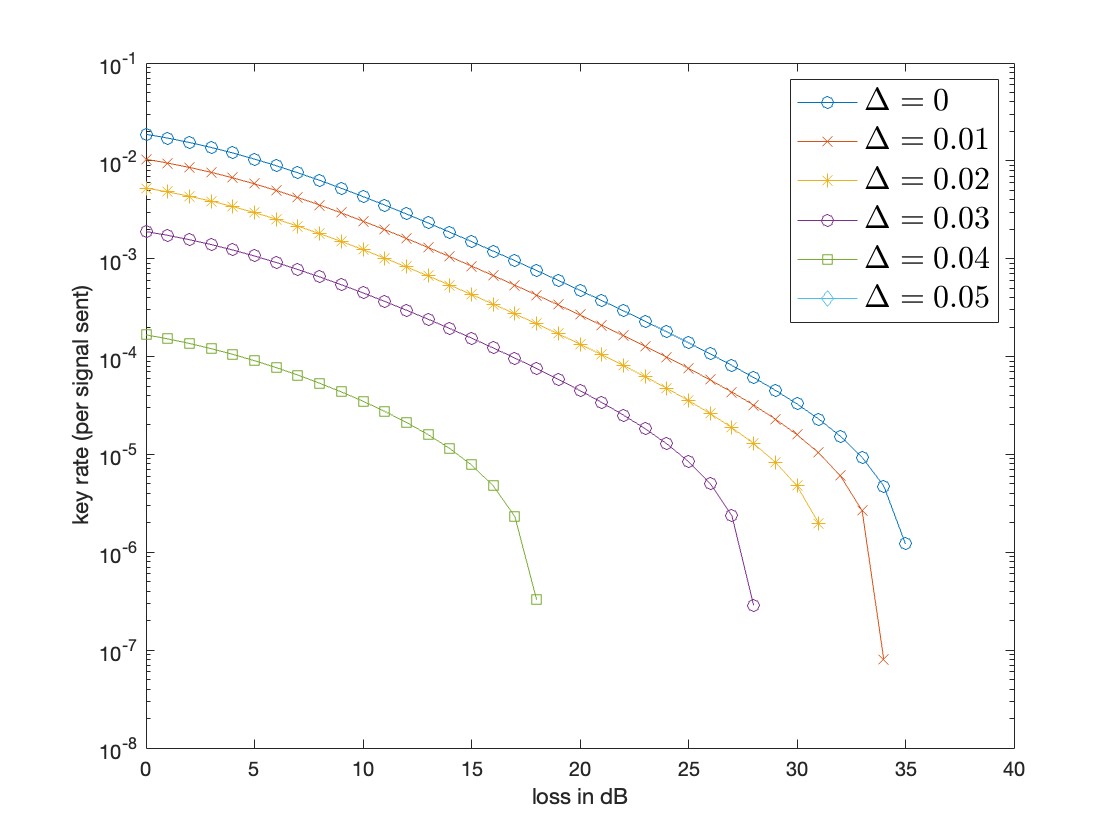}
    \caption{Finite-size key rates for decoy-state BB84 with memoryless passive detectors. The total number of signals sent is $10^{12}$. Alice chooses the $X$ basis with probability 0.3. The decoy intensities are selected with equal probabilities \( \frac{1}{3} \), optimizing over \( \mu_1 \) and \( \mu_2 \), while fixing \( \mu_3 = 0 \). Bob measures with partially characterised imperfect memoryless detectors as described in \cref{eq:detectormodel}. We increase $\Delta$ to increase the efficiency mismatch on the detectors.
    The devices have a physical misalignment of \( 2^\circ \). The $y$-axis is key rate (per signal sent). We do not have key rate when $\Delta=0.05$ because the efficiency mismatch is too large, causing the key rate to drop to zero.}
    \label{decoy}
\end{figure}
We present key rates for the decoy-state BB84 passive protocol from \cref{sec:basicprotdescription,subsec:partialimperfections} as a function of channel loss in \cref{decoy}. In particular, Bob uses memoryless passive detection setups with basis-efficiency mismatch.
We set $\Delta$ to be different values for simplicity in plotting. The total number of signals is \( n = 10^{12} \). 
We simulate the observations by computing the key rate for expected behavior (expected frequencies of outcomes) $\vec{F}^{\text{obs}}$ of the channel. We obtain the observations by multiplying the frequencies by \( n \). From~\cref{decoy}, key rates drop as the amount of basis-efficiency mismatch increases. We see that our analysis can tolerate a fair amount of basis-efficiency mismatch. 

\subsection{Including Correlations}
Next, we present key rates for the decoy protocol with correlated detectors. We set \( \Delta = 0.01 \).
\begin{figure}
    \centering
    \includegraphics[width=0.8\textwidth]{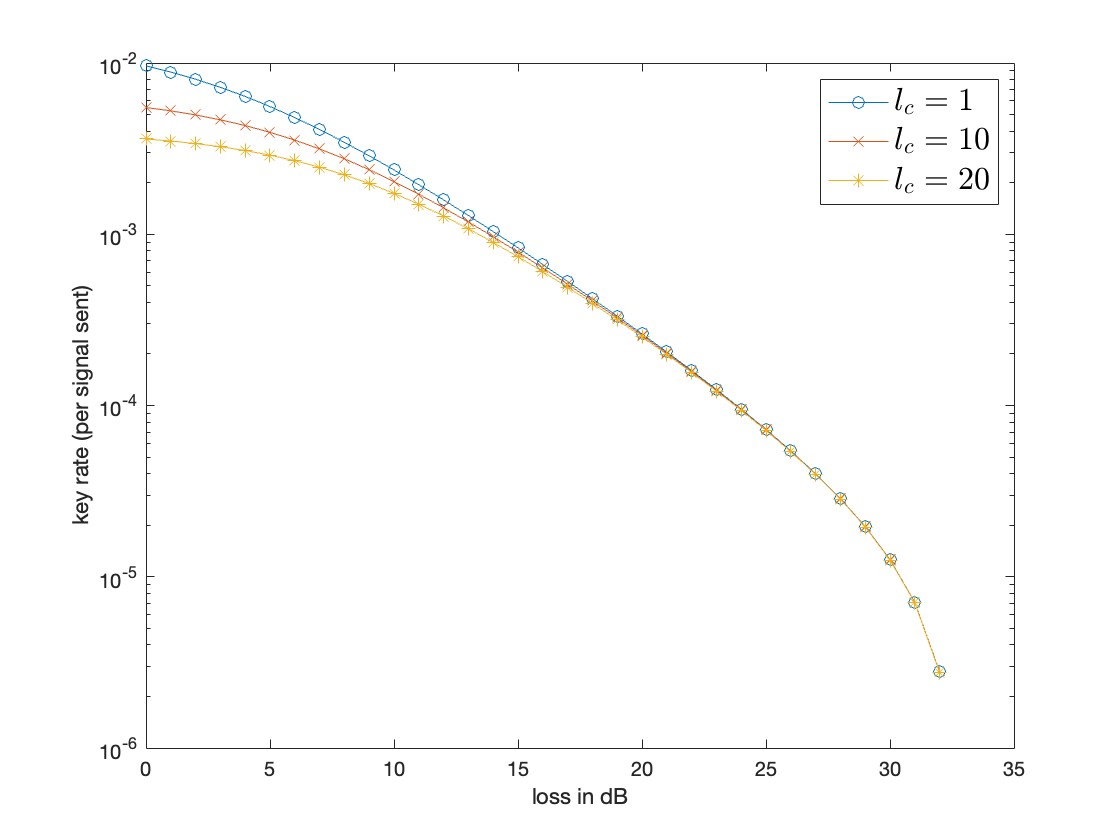}
    \caption{Finite-size key rates for decoy-state BB84 with passive correlated detectors. The total number of signals sent is $10^{12}$. Alice chooses the $X$ basis with probability 0.3. The decoy intensities are selected with equal probabilities \( \frac{1}{3} \), optimizing over \( \mu_1 \) and \( \mu_2 \), while fixing \( \mu_3 = 0 \). Bob measures with partially characterised imperfect correlated detectors as described in \cref{eq:detectormodel}. We fix $\Delta=0.01$ for efficiency mismatch on the detectors and vary the correlation length $l_c$.
    The devices have a physical misalignment of \( 2^\circ \). The $y$-axis is the key rate (per signal sent).}
    \label{fig:correlatedresult}
\end{figure}
For computing the observations from the protocol, recall that Bob obtains a kept outcome only when the previous $l_c$ rounds are no-click outcomes. 
For channel simulation, we assume that the probability of no-click $p_{\text{no click}}$ is independent of past round data. Thus, the probability that a given round is kept is $(p_{\text{no click}})^{l_c}$. We compute the observations by multiplying the frequencies \( \vec{F}^{\text{obs}} \) by \( n \times (p_{\text{no click}})^{l_c} \), where \( p_{\text{no click}} \) is the expected no-click frequency computed from \( \vec{F}^{\text{obs}} \), and $\vec{F}^{\text{obs}}$ is the expected frequencies obtained from the memoryless case. This is a heuristic model we use for simulating the channel and correlated detectors. A more thorough channel model that accurately captures the correlated effects on the statistics remains an open problem.
Note that this simplification is only undertaken for the channel simulation, and does not affect the rigor of our security proof in any way. 

% We first compare key rates for different correlation lengths \( l_c \) in \cref{fig:correlatedresult}. Note that in the high-loss regime, most of the outcomes on Bob's side are no-click outcomes. As a result, Bob does not discard too many signals because of memory effects, and the key rates in the high-loss regime for different \( l_c \) values are therefore similar.

% Next, we want to compare the trade-off between the number of rounds and the correlation length. Dead times and afterpulsing are common memory effects in detectors. These effects last for some time. If the security proof cannot deal with memory effects, then this time limits the repetition rate of the protocol. However, we relax this condition to allow the protocol to run at higher repetition rates without compromising security.

We first compare key rates for different correlation lengths \( l_c \) in \cref{fig:correlatedresult} with $n={10}^{12}$. 
In the high-loss regime, most of the outcomes on Bob's side are no-click outcomes. 
As a result, Bob discards relatively few signals due to memory effects, and the key rates in this regime are therefore similar for different \( l_c \) values.

After looking at how \( l_c \) alone affects performance, 
we now examine the tradeoff between the number of rounds and the correlation length. 
Dead times and afterpulsing are common detector memory effects, 
and they last for a fixed amount of time. 
If the security proof does not handle memory effects, 
this time sets a limit on the repetition rate of the protocol. 
Our proof removes this restriction, 
allowing the protocol to run at higher repetition rates without compromising security.
To illustrate the potential impact on key generation, consider a toy protocol that runs for 100 seconds with a detector dead time of $10^{-9}$ seconds. Without a security proof that accounts for correlated detector effects, the maximum number of rounds that can be executed in this time is limited to no more than $100/10^{-9} = 10^{11}$ rounds at a repetition rate of $1$~GHz, as the detectors require time to reset between rounds. With our security proof, however, higher performance is possible, as shown in~\cref{fig:tradeoff}, where the protocol can run at a higher repetition rate but with a longer correlation length ($l_c=$ repetition rate $\times$ dead time). We see from~\cref{fig:tradeoff} that increasing the repetition rate can yield more key.
We leave a detailed analysis of this tradeoff, along with the development of a physical justification of the model, for future work.
\begin{figure} 
    \centering
    \includegraphics[width=0.8\textwidth]{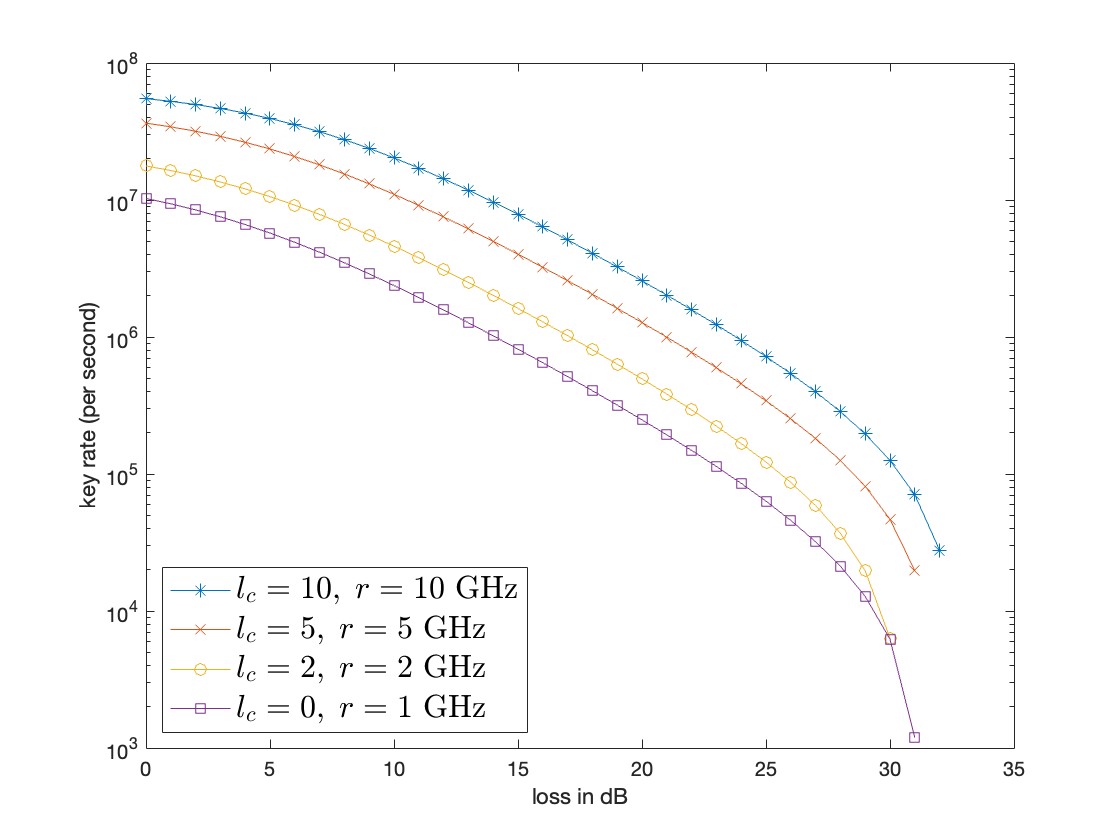}
\caption{Tradeoff between repetition rate $r$ and the correlation length for decoy-state BB84 with passive correlated detectors. We set $\Delta = 0.01$. The protocol runs for $100$ seconds, and the detector dead time is assumed to be $10^{-9}$ seconds. At a repetition rate of $1$~GHz, there are $10^{11}$ rounds and no correlation effects. Increasing the repetition rate $r$ to $2$~GHz, $5$~GHz, and $10$~GHz yields $2\times10^{11}$, $5\times10^{11}$, and $10^{12}$ rounds. These repetition rates result in correlation lengths of $2$, $5$, and $10$, respectively. The $y$-axis shows the key rate (per second).}
    \label{fig:tradeoff}
\end{figure}
\section{Other Extensions} \label{sec:extension}
\subsection{Compatibility with Source Imperfections}\label{sec:sourceimp}
So far, we have only discussed detector imperfections, assuming a perfect source. There exist several works addressing source imperfections within phase error estimation-based proofs~\cite{sourceimp1,sourceimp2,sourceimp3,sourceimp4,sourceimp5,sourceimp6,sourceimp7,sourceimp8}. The analysis for a single-photon source with memoryless imperfect detectors presented in~\cref{sec:phase error estimation} is compatible with source imperfections in the following way.
The key idea is that, if a bound on the phase error rate can be established for an imperfect single-photon source with perfect active detectors - such as the one provided in Ref.~\cite{sourceimp3} --- then, using~\cite[Theorem 1]{deviceimp}, this bound can be translated into a bound for the case of imperfect sources and imperfect active detectors.

% In the memoryless case, when we restrict to the case \( \pn = 1 \), the photon number estimation allows us to statistically argue that most of the signals received by Bob are single-photon signals. For these rounds, the beam splitter effectively performs an active basis choice on Bob's side.
%  This property is essential for combining our results with source imperfection analysis via the method in~\cite{deviceimp} and allows us to directly use results that were proved for active protocols.
% Howeverm one can also modify our proof to more directly apply to passive protocols more generally (which we would leave for future work).

In the memoryless case, when we restrict to \( \pn = 1 \), the photon-number estimation allows us to statistically argue that most of the signals received by Bob are single-photon signals. For these rounds, the beam splitter effectively performs an active basis choice on Bob's side. This property is essential for combining our results with source imperfection analysis via the method in Ref.~\cite{deviceimp}, and it allows us to directly apply results that were originally proved for active protocols.
However, it may be possible to modify our proof to more directly handle general passive protocols. We leave this modification for future work.

In the remainder of this section, we apply the method from Ref.~\cite{deviceimp}, originally developed for active detection setups, to the case of passive memoryless detection setups. We do it in 3 steps.

\begin{enumerate}
    \item We restrict our attention to the single-photon rounds using the analysis in~\cref{sec:phase error estimation} with $\pn=1$, where we argue that the measurements can be thought of as an active basis choice.
    \item Given a bound on the phase error rate in the single-photon rounds with an imperfect source and perfect passive detectors (in the sense that, for the single-photon rounds, perfect passive detectors can be viewed as perfect active detectors), we apply the result from Ref.~\cite{deviceimp} to obtain the bound on the phase error rate (see~\cref{eq2:objective}) in the single-photon rounds for the case of an imperfect source and imperfect detectors.
    \item Since we do not know the quantities explicitly in the single-photon rounds (e.g., $\bd{\tilde{n}_{(K,1)}}$), we perform a worst-case analysis on our estimates of them.
\end{enumerate}

We begin by arguing that measurements can be thought of as an active detection setups.
 When \( \pn = 1 \) and we examine the single-photon subspace after stage \circletext{2} 
in \cref{diag:equivalent}, the measurement used can be replaced 
with the one from Ref.~\cite[Theorem 1]{deviceimp}.
\begin{enumerate}
    \item In this case, the state first goes through a basis independent filter $\{\tilde{F},\idd-\tilde{F}\}$.
    \item For those rounds with outcome $\tilde{F}$, they perform a basis choice measurement $\{\ftxone,\ftzone,\idd-\ftxone-\ftzone\}$, where $\ftxone=p_X^{'}\idd$, $\ftzone=p_Z^{'}\idd$ and $p_X^{'}+p_Z^{'}\leq 1$.
    \item For those rounds with results $\ftxone$, Bob applies a $X$-basis dependent filter $\{\fcx,\idd-\fcx\}$. Then he measures with some POVM $\{\GtXeqone,\GtXneqone\}$.
    \item For those rounds with results $\ftzone$, Bob applies a $Z$-basis dependent filter $\{\fcz,\idd-\fcz\}$. Then he measures with some POVM $\{\GtXeqone,\GtXneqone\}$ to obtain phase errors.
\end{enumerate}
We consider the state \( \rho_{\condi{(\tilde{n}_{(1)})}} \), 
conditioned on the outcome of the QND measurement after stage \circletext{2} 
in the middle branch of \cref{diag:equivalent}. 
The new measurement mentioned above essentially corresponds to the active setup 
required by \cite[Theorem 1]{deviceimp}. 

For the second step,  
we consider the case in which a bound is provided on the phase error rate  
for the state \( \rho_{\condi{(\tilde{n}_{(0)},\tilde{n}_{(1)},\tilde{n}_{(>1)})}} \),  
assuming source imperfections and perfect active detection  
(perfect and active in the sense that \( \ftxone+\ftzone = \idd \) and \( \fcx = \fcz \),  
so that the basis-dependent filters \( \fcx \) and \( \fcz \) are no longer needed).
\begin{equation}
    \label{eq2:guarantee_BIDE}
    \Pr\biggl({\bd{{e}_{{\bm{(X,1)}}}^{\textbf{key}}}} > 	\Eindep(\bm{\vec{\tilde{n}}_{(X,1)}},\nksb)\biggr)_{\condi{(\tilde{n}_{(0)},\tilde{n}_{(1)},\tilde{n}_{(>1)})}} \leq \epssrc^2\;\;.
\end{equation}  
Here, $\bm{\vec{\tilde{n}}_{(X,1)}}$ is the vector containing all the statistics from single-photon, $X$-basis rounds, and  ${\bd{{e}_{{{(X,1)}}}^{\textbf{key}}}}:=\frac{\bd{N_{{(X,1)}}^{\textbf{key}}}}{\nksb}$.

Then, \cite[Theorem 1]{deviceimp} gives us an upper bound on phase error rate in single-photon rounds when there exist both source and detector imperfections:
\begin{equation}
    \label{eq2:objective}
    \Pr\left({\bd{{e}_{{{(X,1)}}}^{\textbf{key}}}} > \Edep\big(\bm{\vec{\tilde{n}}_{(X,1)}},\nksb\big)\right)_{\condi{(\tilde{n}_{(0)},\tilde{n}_{(1)},\tilde{n}_{(>1)})}} \leq \epssrc^2+\epsATb^2+\epsATc^2,
    \end{equation}
where
\begin{align}
    \begin{split}
&\Edep(\bm{\vec{\tilde{n}}_{(X,1)}},\nksb) = \frac{\max_{n \in \mathcal{W}_{\delta_2}(\nksb)}  \tilde{n}_1 \, \Eindep (\bm{\vec{\tilde{n}}_{(X,1)}},\tilde{n}_1)}{\nksb} + \frac{ \delta_1 + \gamma^{\epsATb}_{\mathrm{bin}}(\nksb, \delta_1)}{1-\delta_2-\gamma^{\epsATc}_{\mathrm{bin}}(\nksb, \delta_2)}\\
&\mathcal{W}_{\delta_2}(z) \coloneqq \mathcal{N}_{\leq} \left(\left\lfloor\frac{z}{1-\delta_2-\gamma^{\epsATc}_{\mathrm{bin}}(z, \delta_2)} \right \rfloor\right)\\
&\gamma^{\varepsilon}_{\mathrm{bin}}(n, \delta) := \min\left\{x \geq 0 : \sum_{i = \lfloor n(\delta+x) \rfloor}^n \binom{n}{i} \delta^i (1-\delta)^{n-i} \leq \varepsilon^2\right\}\\
& \delta_1:=\infnorm{(\idd_A\otimes\sqrt{\fcz})\GtXneqone(\idd_A\otimes\sqrt{\fcz}) - (\idd_A\otimes\sqrt{\fcx})\GtXneqone(\idd_A\otimes\sqrt{\fcx})}\\
&\delta_2:=\infnorm{\idd-\fcz}
  \end{split}
\end{align}
with $\mathcal{N}_{\leq}(z) \coloneqq \{0,1,...,z\}$ denoting the set of non-negative integers up to $z$; and $\gamma^{\varepsilon}_{\mathrm{bin}}$ is a function introducing finite-size deviation terms that approach zero as the number of key rounds $\nksb$ approaches infinity. One can sum over all possible ${\tilde{n}_{(0)},\tilde{n}_{(1)},\tilde{n}_{(>1)}}$ to remove the conditioning and obtain an upper bound on phase error rate for the single-photon rounds. 

Lastly, we need to find upper and lower bounds on $\bm{\vec{\tilde{n}}_{(X,1)}}$ and $\nksb$ in terms of observations. We already have both upper and lower bounds for $\nksb$. The upper bound is simply $\nkb>\nksb$. For the lower bound, the bounds on 0-photon key rounds \( \nkzero \) 
and the multi-photon rounds \( \ngoneb \) 
do not depend on the source. 
They are solely determined by the properties of Bob's POVM. 
Therefore, the bound on \( \nksb \) still holds with imperfect source:
\begin{equation}
    \text{Pr}\biggl(\nksb \leq \K(\nkb,\nmcb)\biggr) \leq \epzero^2+\eppntb^2
\end{equation}
For $\bm{\vec{\tilde{n}}_{\textbf(X,1)}}$, trivial upper bounds are given by $\bm{\vec{\tilde{n}}_{\textbf{X}}}$, the announced statistics from the testing rounds. Lower bounds can be obtained in a similar manner as in the memoryless case—by subtracting the estimated contributions from the 0-photon and multi-photon subspaces. 

Let us define $\mathcal{S}$ to be the set of values that satisfy the above bounds:
\begin{equation}
    \mathcal{S} (\vec{\tilde{n}}_{X},\nk):=\{(\vec{\tilde{n}}_{(X,1)},\nks)| \vec{\tilde{n}}_{(X,1)},\nks \text{ satisfy upper/lower bounds in terms of }\vec{\tilde{n}}_{X},\nk\}.
\end{equation}
Suppose we have the following:
\begin{align}
\text{Pr}\biggl((\bm{\vec{\tilde{n}}_{(X,1)}},\nksb)\notin  \mathcal{S}(\bm{\vec{\tilde{n}}_{{X}}},\nkb)\biggr) \leq \epsilon^2_{\text{bounds}}.
\end{align}
Then we can optimize over $\mathcal{S}$ to obtain a bound on the phase error rate in terms of observed quantities, as follows:
\begin{align}\label{eq:constraints}
\Pr&\left({\bd{{e}_{{{(X,1)}}}^{\textbf{key}}}} > \max\limits_{({\vec{\tilde{n}}_{(X,1)}},\nks)\in \mathcal{S}(\bm{\vec{\tilde{n}}_{{X}}},\nkb)}\;\Edep\big({\vec{\tilde{n}}_{(X,1)}},\nks\big)\lor \nksb \leq \K(\nkb,\nmcb)\right)\notag\\& \leq \epssrc^2+\epsATb^2+\epsATc^2 + \epsilon^2_{\text{bounds}},
\end{align}
where we combine \cref{eq2:objective} and \cref{eq:constraints} using the union bound (see \cref{appendix:combine bounds}). Then we obtain the bound on the phase error rate in the single-photon rounds and the number of single-photon key rounds \cref{eq:phaseerroridea}.

However, we are not able to easily make the proof for correlated detectors compatible with source imperfections using the approach from Ref.~\cite{deviceimp} in this work. In general, we can no longer replace the measurement in the single-photon subspace with an active-like one. Specifically, the active-like basis choice measurement $\{\ftxone, \ftzone, \idd - \ftxone - \ftzone\}$, where $\ftxone = p_X^{'} \idd$ and $\ftzone = p_Z^{'} \idd$ with $p_X^{'} + p_Z^{'} \leq 1$, requires that $\ftxone$ and $\ftzone$ include some no-click outcomes so that they are proportional to the identity operator. However, due to the additional click/no-click measurements introduced by the filter, no-click outcomes are eliminated from the filtered state. As a result, further work is required - such as proving a result analogous to Ref.~\cite{deviceimp} using Azuma's inequality instead of the Serfling bound - to handle this case rigorously.
\begin{remark}
To prove a similar statement about the phase error rate in single-photon rounds for the decoy-state protocol, one can use~\cite[Theorem 2]{deviceimp}. However, the decoy-state analysis used in that theorem differs from the one employed in this work. Therefore, for bounding the 0-photon and multi-photon subspaces, one may either use trivial upper bounds (as only upper bounds are required) or rederive a similar result using the decoy-state analysis presented in~\cite{deviceimp}.
\end{remark}
\subsection{On-the-fly Announcements}\label{sec:onthefly}
On-the-fly announcements \cite{iterativesifting,Pfister_2016} of click/no-click outcomes are naturally accommodated in the proof we used for correlated detectors. They can also be included in the analysis of memoryless detectors by simply setting $l_c=0$. In each round, Bob performs the filtering process and immediately announces whether the outcome is a click or a no-click. This process is illustrated in~\cref{fig:onthefly}. 
Since our analysis for the single-photon and multi-photon subspaces does not depend on how the global state \( \rho_{\text{filtered}} \) is generated, incorporating the announcement of click/no-click outcomes during the filtering process does not affect this part of the analysis. For the 0-photon key rounds, the analysis remains unchanged due to the nature of Azuma’s inequality, which conditions on previous outcomes—including the announcements. Therefore, the same analysis for correlated detectors can be applied directly to the filtered state \( \rho_{\text{filtered}} \).

However, we do not incorporate on-the-fly announcements of basis choices. Doing so would require moving the basis-choice measurement into the filtering process. As a result, the analysis on the filtered state would no longer be a straightforward extension of the correlated case. In particular, the multi-photon analysis — which connects multi-click outcomes to the number of multi-photon signals—becomes significantly more complicated. Additional work is needed to handle on-the-fly announcements of basis choices within this framework.
\begin{figure}[H]
    \centering
    \scalebox{0.9}{\begin{tikzpicture}
    [
    dashedbox/.style={draw, rectangle, minimum width=2.5cm, minimum height=1cm, dashed, align=center},
    box/.style={draw, rectangle, minimum width=2cm, minimum height=1cm, align=center},
    arrow/.style={-Stealth},
    discard/.style={->, thick, dashed},
    every node/.style={font=\small}
]
\node(rho0){\(\rho_{A_1^n {A^{'}}^n_1}\)};
\node[draw,minimum width=1cm, minimum height=1cm,right = 1.5cm of rho0](channel1){$\mathcal{A}_1$};
\node[draw,minimum width=1cm, minimum height=1cm,right = 4.5cm of channel1](channel2){$\mathcal{A}_2$};
\node[right = 1.5cm of channel2]{......};
\node[draw,minimum width=1cm, minimum height=1cm,right = 4cm of channel2](channeln){$\mathcal{A}_n$};
\node[draw,right = 1cm of channel1](Eve1){Eve};
\node[right = 1cm of channel2](Eve2){};
\node[draw,right = 1cm of channeln](Even){Eve};
\node[right = 0.6cm of Even](rhon){\(\rho_{\text{filtered}}\)};
\node[right = 0.4cm of Even,xshift=-0.2cm,yshift=0.3cm]{\(E_n\)};
%filter1
\node[draw,fill=gray!10, minimum width=2cm, minimum height=1.5cm, below=2cm of channel1] (outbox1) {\(\)};
\node[draw, fill=green!10, minimum width=1cm, minimum height=1cm] at (outbox1.center){Filter};
% \draw[thick] ([xshift=-0.2cm,yshift=-0.1cm]outbox1.north) -- ([xshift=-0.3cm,yshift=0.1cm]outbox1.south);

%filter2
\node[draw,fill=gray!10, minimum width=4cm, minimum height=1.5cm, below=2cm of channel2] (outbox2) {\(\)};
\node[draw, fill=green!10, minimum width=1cm, minimum height=1cm] at ([xshift=-1.1cm]outbox2.center){Filter};
\node[draw, fill=red!10, minimum width=1cm, minimum height=1.2cm,align=center] at ([xshift=0.9cm]outbox2.center) {Full\\ Measurement};
\draw[thick] ([xshift=-0.2cm,yshift=-0.1cm]outbox2.north) -- ([xshift=-0.3cm,yshift=0.1cm]outbox2.south);

%filtern
\node[draw,fill=gray!10, minimum width=4cm, minimum height=1.5cm, below=2cm of channeln] (outboxn) {\(\)};
\node[draw, fill=green!10, minimum width=1cm, minimum height=1cm] at ([xshift=-1.1cm]outboxn.center){Filter};
\node[draw, fill=red!10, minimum width=1cm, minimum height=1.2cm,align=center] at ([xshift=0.9cm]outboxn.center) {Full\\ Measurement};
\draw[thick] ([xshift=-0.2cm,yshift=-0.1cm]outboxn.north) -- ([xshift=-0.3cm,yshift=0.1cm]outboxn.south);

\node[above right=0.01cm and 0.2cm of rho0,yshift=-0.2cm]{\({A^{'}}^{n}_1\)};
\draw[arrow] (rho0) -- (channel1);

\draw[arrow] (channel1) -- node[right]{\(B_1\)}(outbox1);
\node[above right=0.01cm and 0.2cm of channel1,yshift=-0.45cm]{\(E_{1}^{'}\)};

\draw[arrow](channel1) -- (Eve1);
\draw[dashed,arrow] (outbox1.east) to[out=0,in=-90] node[draw,fill=gray!10,yshift=0.2cm]{click/no-click}(Eve1.south);
\draw[arrow](Eve1) -- (channel2);
\node[above right=0.01cm and 0.2cm of Eve1,yshift=-0.2cm]{\(E_{1}\)};
\draw[dashed,arrow] (outbox1.east) -- node[draw,fill=gray!10]{\green{keep}/\red{rej}} (outbox2.west);

\draw[arrow] (channel2) -- node[right]{\(B_2\)}(outbox2);
\draw[arrow](channel2) -- (Eve2);
\node[above right=0.01cm and 0.2cm of channel2,yshift=-0.45cm]{\(E_{2}^{'}\)};
\draw[dashed,arrow] (outbox2.east) to[out=0,in=-90] node[draw,fill=gray!10,yshift=0.2cm]{click/no-click}(Eve2.south);
\draw[dashed,arrow] (outbox2.east) -- ([xshift=-0.4cm]outboxn.west);

\draw[arrow]([xshift=2cm]Eve2.east) -- (channeln);
\draw[arrow] (channeln) -- node[right]{\(B_n\)}(outboxn);
\draw[dashed,arrow] (outboxn.east) to[out=0,in=-90] node[draw,fill=gray!10,yshift=0.2cm]{click/no-click}(Even.south);
\draw[arrow](channeln) -- (Even);
\node[above right=0.01cm and 0.2cm of channeln,yshift=-0.45cm]{\(E_{n}^{'}\)};
\draw[arrow](Even) -- (rhon);

\node[draw,fill=gray!10,align=center,draw,below = 2cm of outbox2](later){Measurements\\to be comepleted later};

\draw[arrow](outbox1.south) to [out=-90,in=90](later.north);
\draw[dotted,arrow]([xshift=-1cm,yshift=0.2cm]outbox2.south) to [out=-90,in=90](later.north);
\draw[dotted,arrow]([xshift=-1cm,yshift=0.2cm]outboxn.south) to [out=-90,in=90](later.north);

% \node[left =0.1cm of outbox1,yshift=0.2cm](A1){\(A_1\;\;\;\;\)};
% \node[left =0.05cm of outbox2,yshift=0.2cm](A2){\(A_2\)};
% \node[left =0.05cm of outboxn,yshift=0.2cm](An){\(A_n\)};

\draw[arrow](rho0.south) to [out=-90,in=90]node[left]{\(A_1\)}(outbox1.north);

\draw[arrow](rho0.north) to [out=90,in=90]node[left]{\(A_2\)}(outbox2.north);

\draw[arrow](rho0.north) to [out=90,in=90]node[left]{\(A_n\)}(outboxn.north);
\end{tikzpicture}}
\caption{The filtering process with on-the-fly announcement of click/no-click outcomes. The map $\mathcal{A}_i$ represents the quantum channels that Eve forwards systems $B$ and $E$ to Bob and next Eve. System $A$ is in Alice's lab. In each round, Eve forwards $B$ to Bob, and Bob performs either the filter or the full measurement, depending on the click patterns observed in the previous $l_c$ rounds, to determine whether the current round results in a click or no-click. He then announces the click/no-click outcome. Eve may update her attack ($E' \rightarrow E$) based on this announcement.}
    \label{fig:onthefly}
\end{figure} 
\subsection{Side-channel Attacks and Multi-mode Detectors}
Regarding detector imperfections more generally, recall that our analysis remains valid when the efficiencies and dark count rates vary within known intervals (i.e., $\eta_i \in [\eta^l_i, \eta^u_i]$ and $d_i \in [d^l_i, d^u_i]$). Such variability may arise due to imperfect device characterization on Bob's side or as a result of active side-channel attacks launched by Eve \cite[Section VIII]{devEUR}. As long as these imperfections ultimately manifest in this bounded form, our analysis remains applicable.

Furthermore, the analysis can be readily extended to a simple multi-mode scenario, in which the POVM elements take the form:
\begin{align}
    \Gamma^{\text{multi}}_i = \bigoplus_{\textbf{d}} \Gamma_i(\eta_i(\textbf{d}), d_i(\textbf{d})),
\end{align}
where $\textbf{d}$ denotes a spatial-temporal mode, and $\Gamma_i(\eta_i(\textbf{d}), d_i(\textbf{d}))$ is the single-mode POVM element corresponding to that mode. This extension can be handled by computing the worst-case scenario over all modes. In particular, all we need to do is find the parameters \( a \), \( \delta \), \( q_Z \), and \( \lambda_{\min} \) from the POVMs \( \{ \vec{\Gamma}(\eta_i(\mathbf{d}), d_i(\mathbf{d})) \} \) for each mode \( \mathbf{d} \). These parameters can then be replaced by their worst-case values \footnote{For \( a \), both upper and lower bounds are required. For \( \delta \), \( q_Z \), and \( \lambda_{\min} \), only worst-case lower bounds are needed.}. This extension is the same as the one proposed in~\cite[Section VIII]{devEUR}.

\section{Conclusion}

In this work, we developed a framework to bound the phase error rate for QKD protocols with passive detection setups, in the presence of imperfections and memory effects.
Memory effects in detectors have so far not been addressed by \textit{any} proof technique prior to this work, while passive detection setups have not been addressed within phase error rate based techniques. Therefore, this work makes significant contributions towards developing security proofs for practical QKD implementations.
In particular, our results on memory effects have some interesting consequences for implementations. Prior to our work, the repetition rate of QKD protocols was limited to such that the detectors had no memory effects. However, our results show that the protocol can be run at higher repetition rates, leading to more secret key per unit time.
% In particular, our analysis of memory effects enables security proofs to support higher protocol repetition rates without sacrificing security. 
% Our key rate calculations show that, for a fixed operation time, allowing higher repetition rates in the presence of memory effects can lead to a larger amount of secret key to be produced in the same time. 

Our results allow for Bob’s detectors to be only partially characterized.
In particular, we extend the subspace estimation method from Ref.~\cite{lars} for computing $\lambda_{\min}(\Pign \Gamma_{\mathrm{mc}} \Pign)$ to the case of partially characterized detectors.
This extension is modular and can be applied to other proof techniques, such as EAT-based and postselection proofs. In fact, this result addresses a crucial missing step required to prove security for partially characterized detectors in other proof techniques \cite{shlok}.

Our results naturally extend to the convenient protocol variation which announces detection events during the quantum phase of the protocol. In practice, this enables real-time data processing and reduces the hardware storage requirements of QKD devices by allowing Alice and Bob to discard rounds that were not detected before the completion of the quantum phase of the protocol.

In addition, we extend the phase error estimation framework for memoryless detectors so that it is compatible with existing analyses of source imperfections. 
This allows works addressing source imperfections in the EUR/PEC framework to be combined with ours, enabling full security proofs for realistic QKD implementations.  
Taken together, these contributions significantly strengthen security proofs in both the EUR-based and PEC-based frameworks and improve the implementation security of QKD with trusted detectors.

There are a few points of interest for future work.
We consider one specific model for memory effects, in which whenever a detector clicks, the following \( l_c \) rounds exhibit memory effects. 
In this model, the correlation length \( l_c \) is finite, and the click outcome serves as an honest indicator of whether subsequent rounds are affected. 
Future work should examine whether this is the most appropriate model for memory effects, and explore alternative models informed by realistic detector behavior and experimental data. Further, our proof requires the protocol to discard rounds following a click outcome, independent of the strength of the correlations. It would be interesting to develop a more refined analysis that depends on the strength of the correlations, which could result in higher key rates for weak correlations. In addition, carrying out a complete experiment based on the proof developed in this work is an interesting open problem.

Moreover, our analysis on on-the-fly announcements can only accommodate detect/no-detect outcomes. It would be useful to extend this to account for basis choice announcements as well \cite{iterativesifting}.

We also provide one way to combine source imperfection analysis with our framework, using the method from Ref.~\cite{deviceimp}. 
However, Ref.~\cite{deviceimp} was developed for active protocols. 
Extending this approach to our correlated-detector analysis in passive setups is non-trivial, and could be addressed by developing an analogous result specifically for passive protocols.

Finally, we believe that a more careful analysis would result in tighter bounds using the same ideas presented in this work. Some examples of a more careful analysis would constitute using Kato's inequality instead of Azuma's inequality, bounding $\delta$ more directly instead of going through the simplifications described in \cref{appendix:adelta}, and using tighter tail bounds for the binomial distribution than Hoeffding's inequality.

% \red{future outlook}
\section{acknowledgements}
% \red{Thanks to Norbert Lütkenhaus. Thanks to Masato Koashi. NSERC M\&O blah blah blah...}
We thank Norbert Lütkenhaus for the discussion on this project.
We thank Masato Koashi for the useful discussion on the early stage of this project. We thank Lars Kamin for helpful discussions on subspace estimation.  
We also thank Aodhán Corrigan for comments on this manuscript, and Aodhán Corrigan and John Burniston for reviewing the code used to generate the plots for this project.
This work was funded by the NSERC Alliance QUINT and DND-MicroNet. It was conducted at the Institute for Quantum Computing, University of Waterloo, which is funded by the Government of Canada through ISED. DT is partially funded by the Mike and Ophelia Lazaridis Fellowship.
\bibliography{technical}

%apsrev4-2.bst 2019-01-14 (MD) hand-edited version of apsrev4-1.bst
%Control: key (0)
%Control: author (72) initials jnrlst
%Control: editor formatted (1) identically to author
%Control: production of article title (-1) disabled
%Control: page (0) single
%Control: year (1) truncated
%Control: production of eprint (0) enabled
\begin{thebibliography}{59}%
\makeatletter
\providecommand \@ifxundefined [1]{%
 \@ifx{#1\undefined}
}%
\providecommand \@ifnum [1]{%
 \ifnum #1\expandafter \@firstoftwo
 \else \expandafter \@secondoftwo
 \fi
}%
\providecommand \@ifx [1]{%
 \ifx #1\expandafter \@firstoftwo
 \else \expandafter \@secondoftwo
 \fi
}%
\providecommand \natexlab [1]{#1}%
\providecommand \enquote  [1]{``#1''}%
\providecommand \bibnamefont  [1]{#1}%
\providecommand \bibfnamefont [1]{#1}%
\providecommand \citenamefont [1]{#1}%
\providecommand \href@noop [0]{\@secondoftwo}%
\providecommand \href [0]{\begingroup \@sanitize@url \@href}%
\providecommand \@href[1]{\@@startlink{#1}\@@href}%
\providecommand \@@href[1]{\endgroup#1\@@endlink}%
\providecommand \@sanitize@url [0]{\catcode `\\12\catcode `\$12\catcode `\&12\catcode `\#12\catcode `\^12\catcode `\_12\catcode `\%12\relax}%
\providecommand \@@startlink[1]{}%
\providecommand \@@endlink[0]{}%
\providecommand \url  [0]{\begingroup\@sanitize@url \@url }%
\providecommand \@url [1]{\endgroup\@href {#1}{\urlprefix }}%
\providecommand \urlprefix  [0]{URL }%
\providecommand \Eprint [0]{\href }%
\providecommand \doibase [0]{https://doi.org/}%
\providecommand \selectlanguage [0]{\@gobble}%
\providecommand \bibinfo  [0]{\@secondoftwo}%
\providecommand \bibfield  [0]{\@secondoftwo}%
\providecommand \translation [1]{[#1]}%
\providecommand \BibitemOpen [0]{}%
\providecommand \bibitemStop [0]{}%
\providecommand \bibitemNoStop [0]{.\EOS\space}%
\providecommand \EOS [0]{\spacefactor3000\relax}%
\providecommand \BibitemShut  [1]{\csname bibitem#1\endcsname}%
\let\auto@bib@innerbib\@empty
%</preamble>
\bibitem [{\citenamefont {Lo}\ \emph {et~al.}(2012)\citenamefont {Lo}, \citenamefont {Curty},\ and\ \citenamefont {Qi}}]{Lo_2012}%
  \BibitemOpen
  \bibfield  {author} {\bibinfo {author} {\bibfnamefont {H.-K.}\ \bibnamefont {Lo}}, \bibinfo {author} {\bibfnamefont {M.}~\bibnamefont {Curty}},\ and\ \bibinfo {author} {\bibfnamefont {B.}~\bibnamefont {Qi}},\ }\bibfield  {journal} {\bibinfo  {journal} {Physical Review Letters}\ }\textbf {\bibinfo {volume} {108}},\ \href {https://doi.org/10.1103/physrevlett.108.130503} {10.1103/physrevlett.108.130503} (\bibinfo {year} {2012})\BibitemShut {NoStop}%
\bibitem [{\citenamefont {Tupkary}\ \emph {et~al.}(2025)\citenamefont {Tupkary}, \citenamefont {Tan}, \citenamefont {Nahar}, \citenamefont {Kamin},\ and\ \citenamefont {Lütkenhaus}}]{bdr2}%
  \BibitemOpen
  \bibfield  {author} {\bibinfo {author} {\bibfnamefont {D.}~\bibnamefont {Tupkary}}, \bibinfo {author} {\bibfnamefont {E.~Y.~Z.}\ \bibnamefont {Tan}}, \bibinfo {author} {\bibfnamefont {S.}~\bibnamefont {Nahar}}, \bibinfo {author} {\bibfnamefont {L.}~\bibnamefont {Kamin}},\ and\ \bibinfo {author} {\bibfnamefont {N.}~\bibnamefont {Lütkenhaus}},\ }\href {https://arxiv.org/abs/2502.10340} {\bibinfo {title} {Qkd security proofs for decoy-state bb84: protocol variations, proof techniques, gaps and limitations}} (\bibinfo {year} {2025}),\ \Eprint {https://arxiv.org/abs/2502.10340} {arXiv:2502.10340 [quant-ph]} \BibitemShut {NoStop}%
\bibitem [{Note1()}]{Note1}%
  \BibitemOpen
  \bibinfo {note} {These proof techniques can be used for idealised passive protocols where the basis-choice beam splitting ratio is set so that the protocol can be considered to be equivalent to an active protocol \cite {Fung_2011,Gittsovich_2014}. However, they are not robust to even infinitesimal deviations from this idealised setting.}\BibitemShut {Stop}%
\bibitem [{\citenamefont {Arqand}\ and\ \citenamefont {Tan}(2025)}]{MEAT}%
  \BibitemOpen
  \bibfield  {author} {\bibinfo {author} {\bibfnamefont {A.}~\bibnamefont {Arqand}}\ and\ \bibinfo {author} {\bibfnamefont {E.~Y.~Z.}\ \bibnamefont {Tan}},\ }\href {https://arxiv.org/abs/2502.02563} {\bibinfo {title} {Marginal-constrained entropy accumulation theorem}} (\bibinfo {year} {2025}),\ \Eprint {https://arxiv.org/abs/2502.02563} {arXiv:2502.02563 [quant-ph]} \BibitemShut {NoStop}%
\bibitem [{\citenamefont {Kamin}\ \emph {et~al.}(2024)\citenamefont {Kamin}, \citenamefont {Arqand}, \citenamefont {George}, \citenamefont {Lütkenhaus},\ and\ \citenamefont {Tan}}]{MEATapplication}%
  \BibitemOpen
  \bibfield  {author} {\bibinfo {author} {\bibfnamefont {L.}~\bibnamefont {Kamin}}, \bibinfo {author} {\bibfnamefont {A.}~\bibnamefont {Arqand}}, \bibinfo {author} {\bibfnamefont {I.}~\bibnamefont {George}}, \bibinfo {author} {\bibfnamefont {N.}~\bibnamefont {Lütkenhaus}},\ and\ \bibinfo {author} {\bibfnamefont {E.~Y.~Z.}\ \bibnamefont {Tan}},\ }\href {https://arxiv.org/abs/2406.10198} {\bibinfo {title} {Finite-size analysis of prepare-and-measure and decoy-state qkd via entropy accumulation}} (\bibinfo {year} {2024}),\ \Eprint {https://arxiv.org/abs/2406.10198} {arXiv:2406.10198 [quant-ph]} \BibitemShut {NoStop}%
\bibitem [{\citenamefont {Kamin}\ \emph {et~al.}(2025{\natexlab{a}})\citenamefont {Kamin}, \citenamefont {Burniston},\ and\ \citenamefont {Tan}}]{eat}%
  \BibitemOpen
  \bibfield  {author} {\bibinfo {author} {\bibfnamefont {L.}~\bibnamefont {Kamin}}, \bibinfo {author} {\bibfnamefont {J.}~\bibnamefont {Burniston}},\ and\ \bibinfo {author} {\bibfnamefont {E.~Y.~Z.}\ \bibnamefont {Tan}},\ }\href {https://arxiv.org/abs/2504.12248} {\bibinfo {title} {R\'enyi security framework against coherent attacks applied to decoy-state qkd}} (\bibinfo {year} {2025}{\natexlab{a}}),\ \Eprint {https://arxiv.org/abs/2504.12248} {arXiv:2504.12248 [quant-ph]} \BibitemShut {NoStop}%
\bibitem [{\citenamefont {Nahar}\ \emph {et~al.}(2024)\citenamefont {Nahar}, \citenamefont {Tupkary}, \citenamefont {Zhao}, \citenamefont {Lütkenhaus},\ and\ \citenamefont {Tan}}]{postselection}%
  \BibitemOpen
  \bibfield  {author} {\bibinfo {author} {\bibfnamefont {S.}~\bibnamefont {Nahar}}, \bibinfo {author} {\bibfnamefont {D.}~\bibnamefont {Tupkary}}, \bibinfo {author} {\bibfnamefont {Y.}~\bibnamefont {Zhao}}, \bibinfo {author} {\bibfnamefont {N.}~\bibnamefont {Lütkenhaus}},\ and\ \bibinfo {author} {\bibfnamefont {E.~Y.-Z.}\ \bibnamefont {Tan}},\ }\bibfield  {journal} {\bibinfo  {journal} {PRX Quantum}\ }\textbf {\bibinfo {volume} {5}},\ \href {https://doi.org/10.1103/prxquantum.5.040315} {10.1103/prxquantum.5.040315} (\bibinfo {year} {2024})\BibitemShut {NoStop}%
\bibitem [{\citenamefont {Kamin}\ \emph {et~al.}(2025{\natexlab{b}})\citenamefont {Kamin}, \citenamefont {Tupkary},\ and\ \citenamefont {Lütkenhaus}}]{larsPS}%
  \BibitemOpen
  \bibfield  {author} {\bibinfo {author} {\bibfnamefont {L.}~\bibnamefont {Kamin}}, \bibinfo {author} {\bibfnamefont {D.}~\bibnamefont {Tupkary}},\ and\ \bibinfo {author} {\bibfnamefont {N.}~\bibnamefont {Lütkenhaus}},\ }\href {https://arxiv.org/abs/2502.05382} {\bibinfo {title} {Improved finite-size effects in qkd protocols with applications to decoy-state qkd}} (\bibinfo {year} {2025}{\natexlab{b}}),\ \Eprint {https://arxiv.org/abs/2502.05382} {arXiv:2502.05382 [quant-ph]} \BibitemShut {NoStop}%
\bibitem [{Note2()}]{Note2}%
  \BibitemOpen
  \bibinfo {note} {Note that \cite [Section V. B.]{shlok} describes a proof sketch to address detector memory effects via entropy accumulation-based proof techniques.}\BibitemShut {Stop}%
\bibitem [{\citenamefont {Kamin}\ and\ \citenamefont {Lütkenhaus}(2024)}]{lars}%
  \BibitemOpen
  \bibfield  {author} {\bibinfo {author} {\bibfnamefont {L.}~\bibnamefont {Kamin}}\ and\ \bibinfo {author} {\bibfnamefont {N.}~\bibnamefont {Lütkenhaus}},\ }\bibfield  {journal} {\bibinfo  {journal} {Physical Review Research}\ }\textbf {\bibinfo {volume} {6}},\ \href {https://doi.org/10.1103/physrevresearch.6.043223} {10.1103/physrevresearch.6.043223} (\bibinfo {year} {2024})\BibitemShut {NoStop}%
\bibitem [{Note3()}]{Note3}%
  \BibitemOpen
  \bibinfo {note} {In our work, we more accurately bound the number of multi-photon rounds input into Bob's detection setup.}\BibitemShut {Stop}%
\bibitem [{\citenamefont {Nahar}\ \emph {et~al.}(2025)\citenamefont {Nahar}, \citenamefont {Tupkary},\ and\ \citenamefont {Lütkenhaus}}]{shlok}%
  \BibitemOpen
  \bibfield  {author} {\bibinfo {author} {\bibfnamefont {S.}~\bibnamefont {Nahar}}, \bibinfo {author} {\bibfnamefont {D.}~\bibnamefont {Tupkary}},\ and\ \bibinfo {author} {\bibfnamefont {N.}~\bibnamefont {Lütkenhaus}},\ }\href {https://arxiv.org/abs/2503.06328} {\bibinfo {title} {Imperfect detectors for adversarial tasks with applications to quantum key distribution}} (\bibinfo {year} {2025}),\ \Eprint {https://arxiv.org/abs/2503.06328} {arXiv:2503.06328 [quant-ph]} \BibitemShut {NoStop}%
\bibitem [{\citenamefont {Tamaki}\ \emph {et~al.}(2017)\citenamefont {Tamaki}, \citenamefont {Lo}, \citenamefont {Mizutani}, \citenamefont {Kato}, \citenamefont {Lim}, \citenamefont {Azuma},\ and\ \citenamefont {Curty}}]{iterativesifting}%
  \BibitemOpen
  \bibfield  {author} {\bibinfo {author} {\bibfnamefont {K.}~\bibnamefont {Tamaki}}, \bibinfo {author} {\bibfnamefont {H.-K.}\ \bibnamefont {Lo}}, \bibinfo {author} {\bibfnamefont {A.}~\bibnamefont {Mizutani}}, \bibinfo {author} {\bibfnamefont {G.}~\bibnamefont {Kato}}, \bibinfo {author} {\bibfnamefont {C.~C.~W.}\ \bibnamefont {Lim}}, \bibinfo {author} {\bibfnamefont {K.}~\bibnamefont {Azuma}},\ and\ \bibinfo {author} {\bibfnamefont {M.}~\bibnamefont {Curty}},\ }\href {https://doi.org/10.1088/2058-9565/aa89bd} {\bibfield  {journal} {\bibinfo  {journal} {Quantum Science and Technology}\ }\textbf {\bibinfo {volume} {3}},\ \bibinfo {pages} {014002} (\bibinfo {year} {2017})}\BibitemShut {NoStop}%
\bibitem [{\citenamefont {Pfister}\ \emph {et~al.}(2016)\citenamefont {Pfister}, \citenamefont {Lütkenhaus}, \citenamefont {Wehner},\ and\ \citenamefont {Coles}}]{Pfister_2016}%
  \BibitemOpen
  \bibfield  {author} {\bibinfo {author} {\bibfnamefont {C.}~\bibnamefont {Pfister}}, \bibinfo {author} {\bibfnamefont {N.}~\bibnamefont {Lütkenhaus}}, \bibinfo {author} {\bibfnamefont {S.}~\bibnamefont {Wehner}},\ and\ \bibinfo {author} {\bibfnamefont {P.~J.}\ \bibnamefont {Coles}},\ }\href {https://doi.org/10.1088/1367-2630/18/5/053001} {\bibfield  {journal} {\bibinfo  {journal} {New Journal of Physics}\ }\textbf {\bibinfo {volume} {18}},\ \bibinfo {pages} {053001} (\bibinfo {year} {2016})}\BibitemShut {NoStop}%
\bibitem [{\citenamefont {Currás-Lorenzo}\ \emph {et~al.}(2025{\natexlab{a}})\citenamefont {Currás-Lorenzo}, \citenamefont {Pereira}, \citenamefont {Nahar},\ and\ \citenamefont {Tupkary}}]{deviceimp}%
  \BibitemOpen
  \bibfield  {author} {\bibinfo {author} {\bibfnamefont {G.}~\bibnamefont {Currás-Lorenzo}}, \bibinfo {author} {\bibfnamefont {M.}~\bibnamefont {Pereira}}, \bibinfo {author} {\bibfnamefont {S.}~\bibnamefont {Nahar}},\ and\ \bibinfo {author} {\bibfnamefont {D.}~\bibnamefont {Tupkary}},\ }\href {https://arxiv.org/abs/2507.03549} {\bibinfo {title} {Security of quantum key distribution with source and detector imperfections through phase-error estimation}} (\bibinfo {year} {2025}{\natexlab{a}}),\ \Eprint {https://arxiv.org/abs/2507.03549} {arXiv:2507.03549 [quant-ph]} \BibitemShut {NoStop}%
\bibitem [{\citenamefont {Kawakami}\ \emph {et~al.}(2025)\citenamefont {Kawakami}, \citenamefont {Taniguchi}, \citenamefont {Tonomura}, \citenamefont {Takasugi},\ and\ \citenamefont {Azuma}}]{shun}%
  \BibitemOpen
  \bibfield  {author} {\bibinfo {author} {\bibfnamefont {S.}~\bibnamefont {Kawakami}}, \bibinfo {author} {\bibfnamefont {A.}~\bibnamefont {Taniguchi}}, \bibinfo {author} {\bibfnamefont {Y.}~\bibnamefont {Tonomura}}, \bibinfo {author} {\bibfnamefont {K.}~\bibnamefont {Takasugi}},\ and\ \bibinfo {author} {\bibfnamefont {K.}~\bibnamefont {Azuma}},\ }\href {https://arxiv.org/abs/2507.04248} {\bibinfo {title} {Security of the bb84 protocol with receiver's passive biased basis choice}} (\bibinfo {year} {2025}),\ \Eprint {https://arxiv.org/abs/2507.04248} {arXiv:2507.04248 [quant-ph]} \BibitemShut {NoStop}%
\bibitem [{Note4()}]{Note4}%
  \BibitemOpen
  \bibinfo {note} {Here, there is some ambiguity in how to map double-clicks to bits. A common choice is to assign them randomly to $0$ or $1$.}\BibitemShut {Stop}%
\bibitem [{Note5()}]{Note5}%
  \BibitemOpen
  \bibinfo {note} {This analysis becomes slightly more involved in the presence of imperfect detectors \cite {devEUR}, but the underlying intuition remains the same.}\BibitemShut {Stop}%
\bibitem [{\citenamefont {Serfling}(1974)}]{serfling_probability_1974}%
  \BibitemOpen
  \bibfield  {author} {\bibinfo {author} {\bibfnamefont {R.~J.}\ \bibnamefont {Serfling}},\ }\href {https://www.jstor.org/stable/2958379?seq=1} {\bibfield  {journal} {\bibinfo  {journal} {The Annals of Statistics}\ ,\ \bibinfo {pages} {39}} (\bibinfo {year} {1974})}\BibitemShut {NoStop}%
\bibitem [{\citenamefont {Zhang}\ \emph {et~al.}(2021)\citenamefont {Zhang}, \citenamefont {Coles}, \citenamefont {Winick}, \citenamefont {Lin},\ and\ \citenamefont {Lütkenhaus}}]{fss}%
  \BibitemOpen
  \bibfield  {author} {\bibinfo {author} {\bibfnamefont {Y.}~\bibnamefont {Zhang}}, \bibinfo {author} {\bibfnamefont {P.~J.}\ \bibnamefont {Coles}}, \bibinfo {author} {\bibfnamefont {A.}~\bibnamefont {Winick}}, \bibinfo {author} {\bibfnamefont {J.}~\bibnamefont {Lin}},\ and\ \bibinfo {author} {\bibfnamefont {N.}~\bibnamefont {Lütkenhaus}},\ }\bibfield  {journal} {\bibinfo  {journal} {Physical Review Research}\ }\textbf {\bibinfo {volume} {3}},\ \href {https://doi.org/10.1103/physrevresearch.3.013076} {10.1103/physrevresearch.3.013076} (\bibinfo {year} {2021})\BibitemShut {NoStop}%
\bibitem [{\citenamefont {Li}\ and\ \citenamefont {Lütkenhaus}(2020)}]{nicky}%
  \BibitemOpen
  \bibfield  {author} {\bibinfo {author} {\bibfnamefont {N.~K.~H.}\ \bibnamefont {Li}}\ and\ \bibinfo {author} {\bibfnamefont {N.}~\bibnamefont {Lütkenhaus}},\ }\bibfield  {journal} {\bibinfo  {journal} {Physical Review Research}\ }\textbf {\bibinfo {volume} {2}},\ \href {https://doi.org/10.1103/physrevresearch.2.043172} {10.1103/physrevresearch.2.043172} (\bibinfo {year} {2020})\BibitemShut {NoStop}%
\bibitem [{\citenamefont {Nahar}\ \emph {et~al.}(2023)\citenamefont {Nahar}, \citenamefont {Upadhyaya},\ and\ \citenamefont {Lütkenhaus}}]{nahar_imperfect_2023}%
  \BibitemOpen
  \bibfield  {author} {\bibinfo {author} {\bibfnamefont {S.}~\bibnamefont {Nahar}}, \bibinfo {author} {\bibfnamefont {T.}~\bibnamefont {Upadhyaya}},\ and\ \bibinfo {author} {\bibfnamefont {N.}~\bibnamefont {Lütkenhaus}},\ }\href {https://doi.org/10.1103/PhysRevApplied.20.064031} {\bibfield  {journal} {\bibinfo  {journal} {Physical Review Applied}\ }\textbf {\bibinfo {volume} {20}},\ \bibinfo {pages} {064031} (\bibinfo {year} {2023})},\ \bibinfo {note} {publisher: American Physical Society}\BibitemShut {NoStop}%
\bibitem [{Note6()}]{Note6}%
  \BibitemOpen
  \bibinfo {note} {As we see in \protect \cref {sec:multibound}, we want to compute the minimum eigenvalue for the joint multi-click POVM element. But since we use all multi-click outcomes for the subspace estimation, the corresponding multi-click POVM element has an identity on Alice's system. Therefore, we can focus on Bob's multi-click POVM only. We drop the superscript $B$ for simplicity in this section}\BibitemShut {NoStop}%
\bibitem [{\citenamefont {Curty}\ \emph {et~al.}(2004)\citenamefont {Curty}, \citenamefont {Lewenstein},\ and\ \citenamefont {Lütkenhaus}}]{sourcerep1}%
  \BibitemOpen
  \bibfield  {author} {\bibinfo {author} {\bibfnamefont {M.}~\bibnamefont {Curty}}, \bibinfo {author} {\bibfnamefont {M.}~\bibnamefont {Lewenstein}},\ and\ \bibinfo {author} {\bibfnamefont {N.}~\bibnamefont {Lütkenhaus}},\ }\bibfield  {journal} {\bibinfo  {journal} {Physical Review Letters}\ }\textbf {\bibinfo {volume} {92}},\ \href {https://doi.org/10.1103/physrevlett.92.217903} {10.1103/physrevlett.92.217903} (\bibinfo {year} {2004})\BibitemShut {NoStop}%
\bibitem [{\citenamefont {Ferenczi}\ and\ \citenamefont {Lütkenhaus}(2012)}]{sourcerep2}%
  \BibitemOpen
  \bibfield  {author} {\bibinfo {author} {\bibfnamefont {A.}~\bibnamefont {Ferenczi}}\ and\ \bibinfo {author} {\bibfnamefont {N.}~\bibnamefont {Lütkenhaus}},\ }\bibfield  {journal} {\bibinfo  {journal} {Physical Review A}\ }\textbf {\bibinfo {volume} {85}},\ \href {https://doi.org/10.1103/physreva.85.052310} {10.1103/physreva.85.052310} (\bibinfo {year} {2012})\BibitemShut {NoStop}%
\bibitem [{Note7()}]{Note7}%
  \BibitemOpen
  \bibinfo {note} {The source is still assumed to be a perfect phase-randomized WCP source. For compatibility with source imperfections, see~\protect \cref {sec:sourceimp}.}\BibitemShut {Stop}%
\bibitem [{\citenamefont {Hwang}(2003)}]{hwang2003}%
  \BibitemOpen
  \bibfield  {author} {\bibinfo {author} {\bibfnamefont {W.-Y.}\ \bibnamefont {Hwang}},\ }\bibfield  {journal} {\bibinfo  {journal} {Physical Review Letters}\ }\textbf {\bibinfo {volume} {91}},\ \href {https://doi.org/10.1103/physrevlett.91.057901} {10.1103/physrevlett.91.057901} (\bibinfo {year} {2003})\BibitemShut {NoStop}%
\bibitem [{\citenamefont {Lo}\ \emph {et~al.}(2005)\citenamefont {Lo}, \citenamefont {Ma},\ and\ \citenamefont {Chen}}]{lo2005}%
  \BibitemOpen
  \bibfield  {author} {\bibinfo {author} {\bibfnamefont {H.-K.}\ \bibnamefont {Lo}}, \bibinfo {author} {\bibfnamefont {X.}~\bibnamefont {Ma}},\ and\ \bibinfo {author} {\bibfnamefont {K.}~\bibnamefont {Chen}},\ }\bibfield  {journal} {\bibinfo  {journal} {Physical Review Letters}\ }\textbf {\bibinfo {volume} {94}},\ \href {https://doi.org/10.1103/physrevlett.94.230504} {10.1103/physrevlett.94.230504} (\bibinfo {year} {2005})\BibitemShut {NoStop}%
\bibitem [{\citenamefont {Wang}(2005)}]{XBWang}%
  \BibitemOpen
  \bibfield  {author} {\bibinfo {author} {\bibfnamefont {X.-B.}\ \bibnamefont {Wang}},\ }\href {https://doi.org/10.1103/PhysRevLett.94.230503} {\bibfield  {journal} {\bibinfo  {journal} {Phys. Rev. Lett.}\ }\textbf {\bibinfo {volume} {94}},\ \bibinfo {pages} {230503} (\bibinfo {year} {2005})}\BibitemShut {NoStop}%
\bibitem [{\citenamefont {Ma}\ \emph {et~al.}(2005)\citenamefont {Ma}, \citenamefont {Qi}, \citenamefont {Zhao},\ and\ \citenamefont {Lo}}]{ma2005}%
  \BibitemOpen
  \bibfield  {author} {\bibinfo {author} {\bibfnamefont {X.}~\bibnamefont {Ma}}, \bibinfo {author} {\bibfnamefont {B.}~\bibnamefont {Qi}}, \bibinfo {author} {\bibfnamefont {Y.}~\bibnamefont {Zhao}},\ and\ \bibinfo {author} {\bibfnamefont {H.-K.}\ \bibnamefont {Lo}},\ }\bibfield  {journal} {\bibinfo  {journal} {Physical Review A}\ }\textbf {\bibinfo {volume} {72}},\ \href {https://doi.org/10.1103/physreva.72.012326} {10.1103/physreva.72.012326} (\bibinfo {year} {2005})\BibitemShut {NoStop}%
\bibitem [{\citenamefont {Hayashi}\ and\ \citenamefont {Nakayama}(2014)}]{hayashi2014}%
  \BibitemOpen
  \bibfield  {author} {\bibinfo {author} {\bibfnamefont {M.}~\bibnamefont {Hayashi}}\ and\ \bibinfo {author} {\bibfnamefont {R.}~\bibnamefont {Nakayama}},\ }\href {https://doi.org/10.1088/1367-2630/16/6/063009} {\bibfield  {journal} {\bibinfo  {journal} {New Journal of Physics}\ }\textbf {\bibinfo {volume} {16}},\ \bibinfo {pages} {063009} (\bibinfo {year} {2014})}\BibitemShut {NoStop}%
\bibitem [{\citenamefont {Curty}\ \emph {et~al.}(2014)\citenamefont {Curty}, \citenamefont {Xu}, \citenamefont {Cui}, \citenamefont {Lim}, \citenamefont {Tamaki},\ and\ \citenamefont {Lo}}]{curty2014}%
  \BibitemOpen
  \bibfield  {author} {\bibinfo {author} {\bibfnamefont {M.}~\bibnamefont {Curty}}, \bibinfo {author} {\bibfnamefont {F.}~\bibnamefont {Xu}}, \bibinfo {author} {\bibfnamefont {W.}~\bibnamefont {Cui}}, \bibinfo {author} {\bibfnamefont {C.~C.~W.}\ \bibnamefont {Lim}}, \bibinfo {author} {\bibfnamefont {K.}~\bibnamefont {Tamaki}},\ and\ \bibinfo {author} {\bibfnamefont {H.-K.}\ \bibnamefont {Lo}},\ }\bibfield  {journal} {\bibinfo  {journal} {Nature Communications}\ }\textbf {\bibinfo {volume} {5}},\ \href {https://doi.org/10.1038/ncomms4732} {10.1038/ncomms4732} (\bibinfo {year} {2014})\BibitemShut {NoStop}%
\bibitem [{\citenamefont {Lim}\ \emph {et~al.}(2014)\citenamefont {Lim}, \citenamefont {Curty}, \citenamefont {Walenta}, \citenamefont {Xu},\ and\ \citenamefont {Zbinden}}]{charles}%
  \BibitemOpen
  \bibfield  {author} {\bibinfo {author} {\bibfnamefont {C.~C.~W.}\ \bibnamefont {Lim}}, \bibinfo {author} {\bibfnamefont {M.}~\bibnamefont {Curty}}, \bibinfo {author} {\bibfnamefont {N.}~\bibnamefont {Walenta}}, \bibinfo {author} {\bibfnamefont {F.}~\bibnamefont {Xu}},\ and\ \bibinfo {author} {\bibfnamefont {H.}~\bibnamefont {Zbinden}},\ }\bibfield  {journal} {\bibinfo  {journal} {Physical Review A}\ }\textbf {\bibinfo {volume} {89}},\ \href {https://doi.org/10.1103/physreva.89.022307} {10.1103/physreva.89.022307} (\bibinfo {year} {2014})\BibitemShut {NoStop}%
\bibitem [{\citenamefont {Tomamichel}\ and\ \citenamefont {Leverrier}(2017)}]{tomamichel2017}%
  \BibitemOpen
  \bibfield  {author} {\bibinfo {author} {\bibfnamefont {M.}~\bibnamefont {Tomamichel}}\ and\ \bibinfo {author} {\bibfnamefont {A.}~\bibnamefont {Leverrier}},\ }\href {https://doi.org/10.22331/q-2017-07-14-14} {\bibfield  {journal} {\bibinfo  {journal} {Quantum}\ }\textbf {\bibinfo {volume} {1}},\ \bibinfo {pages} {14} (\bibinfo {year} {2017})}\BibitemShut {NoStop}%
\bibitem [{\citenamefont {Tomamichel}(2016)}]{tomamichelbook}%
  \BibitemOpen
  \bibfield  {author} {\bibinfo {author} {\bibfnamefont {M.}~\bibnamefont {Tomamichel}},\ }\href {https://doi.org/10.1007/978-3-319-21891-5} {\emph {\bibinfo {title} {Quantum Information Processing with Finite Resources}}}\ (\bibinfo  {publisher} {Springer International Publishing},\ \bibinfo {year} {2016})\BibitemShut {NoStop}%
\bibitem [{\citenamefont {Tupkary}\ \emph {et~al.}(2024)\citenamefont {Tupkary}, \citenamefont {Nahar}, \citenamefont {Sinha},\ and\ \citenamefont {Lütkenhaus}}]{devEUR}%
  \BibitemOpen
  \bibfield  {author} {\bibinfo {author} {\bibfnamefont {D.}~\bibnamefont {Tupkary}}, \bibinfo {author} {\bibfnamefont {S.}~\bibnamefont {Nahar}}, \bibinfo {author} {\bibfnamefont {P.}~\bibnamefont {Sinha}},\ and\ \bibinfo {author} {\bibfnamefont {N.}~\bibnamefont {Lütkenhaus}},\ }\href {https://arxiv.org/abs/2408.17349} {\bibinfo {title} {Phase error rate estimation in qkd with imperfect detectors}} (\bibinfo {year} {2024}),\ \Eprint {https://arxiv.org/abs/2408.17349} {arXiv:2408.17349 [quant-ph]} \BibitemShut {NoStop}%
\bibitem [{\citenamefont {Tomamichel}\ and\ \citenamefont {Renner}(2011)}]{EUR1}%
  \BibitemOpen
  \bibfield  {author} {\bibinfo {author} {\bibfnamefont {M.}~\bibnamefont {Tomamichel}}\ and\ \bibinfo {author} {\bibfnamefont {R.}~\bibnamefont {Renner}},\ }\bibfield  {journal} {\bibinfo  {journal} {Physical Review Letters}\ }\textbf {\bibinfo {volume} {106}},\ \href {https://doi.org/10.1103/physrevlett.106.110506} {10.1103/physrevlett.106.110506} (\bibinfo {year} {2011})\BibitemShut {NoStop}%
\bibitem [{\citenamefont {Azuma}(1967)}]{azuma}%
  \BibitemOpen
  \bibfield  {author} {\bibinfo {author} {\bibfnamefont {K.}~\bibnamefont {Azuma}},\ }\href@noop {} {\bibfield  {journal} {\bibinfo  {journal} {Tohoku Mathematical Journal, Second Series}\ }\textbf {\bibinfo {volume} {19}},\ \bibinfo {pages} {357} (\bibinfo {year} {1967})}\BibitemShut {NoStop}%
\bibitem [{\citenamefont {Kato}(2020)}]{kato}%
  \BibitemOpen
  \bibfield  {author} {\bibinfo {author} {\bibfnamefont {G.}~\bibnamefont {Kato}},\ }\href {https://arxiv.org/abs/2002.04357} {\bibfield  {journal} {\bibinfo  {journal} {arXiv preprint arXiv:2002.04357}\ } (\bibinfo {year} {2020})}\BibitemShut {NoStop}%
\bibitem [{\citenamefont {Currás-Lorenzo}\ \emph {et~al.}(2021)\citenamefont {Currás-Lorenzo}, \citenamefont {Navarrete}, \citenamefont {Azuma}, \citenamefont {Kato}, \citenamefont {Curty},\ and\ \citenamefont {Razavi}}]{curras2021tight}%
  \BibitemOpen
  \bibfield  {author} {\bibinfo {author} {\bibfnamefont {G.}~\bibnamefont {Currás-Lorenzo}}, \bibinfo {author} {\bibfnamefont {{\'A}.}~\bibnamefont {Navarrete}}, \bibinfo {author} {\bibfnamefont {K.}~\bibnamefont {Azuma}}, \bibinfo {author} {\bibfnamefont {G.}~\bibnamefont {Kato}}, \bibinfo {author} {\bibfnamefont {M.}~\bibnamefont {Curty}},\ and\ \bibinfo {author} {\bibfnamefont {M.}~\bibnamefont {Razavi}},\ }\bibfield  {journal} {\bibinfo  {journal} {npj Quantum Information}\ }\textbf {\bibinfo {volume} {7}},\ \href {https://doi.org/10.1038/s41534-020-00345-3} {10.1038/s41534-020-00345-3} (\bibinfo {year} {2021})\BibitemShut {NoStop}%
\bibitem [{\citenamefont {Narasimhachar}(2011)}]{var}%
  \BibitemOpen
  \bibfield  {author} {\bibinfo {author} {\bibfnamefont {V.}~\bibnamefont {Narasimhachar}},\ }\emph {\bibinfo {title} {Study of realistic devices for quantum key-distribution}},\ \href@noop {} {Master's thesis},\ \bibinfo  {school} {University of Waterloo} (\bibinfo {year} {2011})\BibitemShut {NoStop}%
\bibitem [{\citenamefont {Burenkov}\ \emph {et~al.}(2010)\citenamefont {Burenkov}, \citenamefont {Qi}, \citenamefont {Fortescue},\ and\ \citenamefont {Lo}}]{burenkov2010}%
  \BibitemOpen
  \bibfield  {author} {\bibinfo {author} {\bibfnamefont {V.}~\bibnamefont {Burenkov}}, \bibinfo {author} {\bibfnamefont {B.}~\bibnamefont {Qi}}, \bibinfo {author} {\bibfnamefont {B.}~\bibnamefont {Fortescue}},\ and\ \bibinfo {author} {\bibfnamefont {H.-K.}\ \bibnamefont {Lo}},\ }\href {https://arxiv.org/abs/1005.0272} {\bibinfo {title} {Security of high speed quantum key distribution with finite detector dead time}} (\bibinfo {year} {2010}),\ \Eprint {https://arxiv.org/abs/1005.0272} {arXiv:1005.0272 [quant-ph]} \BibitemShut {NoStop}%
\bibitem [{Note8()}]{Note8}%
  \BibitemOpen
  \bibinfo {note} {The measurement in the filter can be more than click/no-click measurement. For instance, one can include the 0-photon subspace measurement inside the filter.}\BibitemShut {Stop}%
\bibitem [{Note9()}]{Note9}%
  \BibitemOpen
  \bibinfo {note} {One can further optimize over parameters controlled by Bob, such as the beam splitter ratio \( s \), the decoy intensities \( \mu _k\), and the probabilities \( p_{\mu _k} \) associated with choosing each intensity. This is left for future work}\BibitemShut {NoStop}%
\bibitem [{\citenamefont {Pereira}\ \emph {et~al.}(2022)\citenamefont {Pereira}, \citenamefont {Currás-Lorenzo}, \citenamefont {Álvaro Navarrete}, \citenamefont {Mizutani}, \citenamefont {Kato}, \citenamefont {Curty},\ and\ \citenamefont {Tamaki}}]{sourceimp1}%
  \BibitemOpen
  \bibfield  {author} {\bibinfo {author} {\bibfnamefont {M.}~\bibnamefont {Pereira}}, \bibinfo {author} {\bibfnamefont {G.}~\bibnamefont {Currás-Lorenzo}}, \bibinfo {author} {\bibnamefont {Álvaro Navarrete}}, \bibinfo {author} {\bibfnamefont {A.}~\bibnamefont {Mizutani}}, \bibinfo {author} {\bibfnamefont {G.}~\bibnamefont {Kato}}, \bibinfo {author} {\bibfnamefont {M.}~\bibnamefont {Curty}},\ and\ \bibinfo {author} {\bibfnamefont {K.}~\bibnamefont {Tamaki}},\ }\href {https://arxiv.org/abs/2210.11754} {\bibinfo {title} {Modified bb84 quantum key distribution protocol robust to source imperfections}} (\bibinfo {year} {2022}),\ \Eprint {https://arxiv.org/abs/2210.11754} {arXiv:2210.11754 [quant-ph]} \BibitemShut {NoStop}%
\bibitem [{\citenamefont {Tamaki}\ \emph {et~al.}(2014)\citenamefont {Tamaki}, \citenamefont {Curty}, \citenamefont {Kato}, \citenamefont {Lo},\ and\ \citenamefont {Azuma}}]{sourceimp2}%
  \BibitemOpen
  \bibfield  {author} {\bibinfo {author} {\bibfnamefont {K.}~\bibnamefont {Tamaki}}, \bibinfo {author} {\bibfnamefont {M.}~\bibnamefont {Curty}}, \bibinfo {author} {\bibfnamefont {G.}~\bibnamefont {Kato}}, \bibinfo {author} {\bibfnamefont {H.-K.}\ \bibnamefont {Lo}},\ and\ \bibinfo {author} {\bibfnamefont {K.}~\bibnamefont {Azuma}},\ }\bibfield  {journal} {\bibinfo  {journal} {Physical Review A}\ }\textbf {\bibinfo {volume} {90}},\ \href {https://doi.org/10.1103/physreva.90.052314} {10.1103/physreva.90.052314} (\bibinfo {year} {2014})\BibitemShut {NoStop}%
\bibitem [{\citenamefont {Currás-Lorenzo}\ \emph {et~al.}(2025{\natexlab{b}})\citenamefont {Currás-Lorenzo}, \citenamefont {Pereira}, \citenamefont {Kato}, \citenamefont {Curty},\ and\ \citenamefont {Tamaki}}]{sourceimp3}%
  \BibitemOpen
  \bibfield  {author} {\bibinfo {author} {\bibfnamefont {G.}~\bibnamefont {Currás-Lorenzo}}, \bibinfo {author} {\bibfnamefont {M.}~\bibnamefont {Pereira}}, \bibinfo {author} {\bibfnamefont {G.}~\bibnamefont {Kato}}, \bibinfo {author} {\bibfnamefont {M.}~\bibnamefont {Curty}},\ and\ \bibinfo {author} {\bibfnamefont {K.}~\bibnamefont {Tamaki}},\ }\href {https://arxiv.org/abs/2305.05930} {\bibinfo {title} {Security of high-speed quantum key distribution with imperfect sources}} (\bibinfo {year} {2025}{\natexlab{b}}),\ \Eprint {https://arxiv.org/abs/2305.05930} {arXiv:2305.05930 [quant-ph]} \BibitemShut {NoStop}%
\bibitem [{\citenamefont {Zapatero}\ \emph {et~al.}(2023)\citenamefont {Zapatero}, \citenamefont {Álvaro Navarrete},\ and\ \citenamefont {Curty}}]{sourceimp4}%
  \BibitemOpen
  \bibfield  {author} {\bibinfo {author} {\bibfnamefont {V.}~\bibnamefont {Zapatero}}, \bibinfo {author} {\bibnamefont {Álvaro Navarrete}},\ and\ \bibinfo {author} {\bibfnamefont {M.}~\bibnamefont {Curty}},\ }\href {https://arxiv.org/abs/2310.20377} {\bibinfo {title} {Implementation security in quantum key distribution}} (\bibinfo {year} {2023}),\ \Eprint {https://arxiv.org/abs/2310.20377} {arXiv:2310.20377 [quant-ph]} \BibitemShut {NoStop}%
\bibitem [{\citenamefont {Currás-Lorenzo}\ \emph {et~al.}(2023)\citenamefont {Currás-Lorenzo}, \citenamefont {Nahar}, \citenamefont {Lütkenhaus}, \citenamefont {Tamaki},\ and\ \citenamefont {Curty}}]{sourceimp5}%
  \BibitemOpen
  \bibfield  {author} {\bibinfo {author} {\bibfnamefont {G.}~\bibnamefont {Currás-Lorenzo}}, \bibinfo {author} {\bibfnamefont {S.}~\bibnamefont {Nahar}}, \bibinfo {author} {\bibfnamefont {N.}~\bibnamefont {Lütkenhaus}}, \bibinfo {author} {\bibfnamefont {K.}~\bibnamefont {Tamaki}},\ and\ \bibinfo {author} {\bibfnamefont {M.}~\bibnamefont {Curty}},\ }\href {https://doi.org/10.1088/2058-9565/ad141c} {\bibfield  {journal} {\bibinfo  {journal} {Quantum Science and Technology}\ }\textbf {\bibinfo {volume} {9}},\ \bibinfo {pages} {015025} (\bibinfo {year} {2023})}\BibitemShut {NoStop}%
\bibitem [{\citenamefont {Gottesman}\ \emph {et~al.}(2004)\citenamefont {Gottesman}, \citenamefont {Lo}, \citenamefont {Lütkenhaus},\ and\ \citenamefont {Preskill}}]{sourceimp6}%
  \BibitemOpen
  \bibfield  {author} {\bibinfo {author} {\bibfnamefont {D.}~\bibnamefont {Gottesman}}, \bibinfo {author} {\bibfnamefont {H.-K.}\ \bibnamefont {Lo}}, \bibinfo {author} {\bibfnamefont {N.}~\bibnamefont {Lütkenhaus}},\ and\ \bibinfo {author} {\bibfnamefont {J.}~\bibnamefont {Preskill}},\ }\href {https://arxiv.org/abs/quant-ph/0212066} {\bibinfo {title} {Security of quantum key distribution with imperfect devices}} (\bibinfo {year} {2004}),\ \Eprint {https://arxiv.org/abs/quant-ph/0212066} {arXiv:quant-ph/0212066 [quant-ph]} \BibitemShut {NoStop}%
\bibitem [{\citenamefont {Navarrete}\ and\ \citenamefont {Curty}(2022)}]{sourceimp7}%
  \BibitemOpen
  \bibfield  {author} {\bibinfo {author} {\bibfnamefont {{\'A}.}~\bibnamefont {Navarrete}}\ and\ \bibinfo {author} {\bibfnamefont {M.}~\bibnamefont {Curty}},\ }\href {https://doi.org/10.1088/2058-9565/ac74dc} {\bibfield  {journal} {\bibinfo  {journal} {Quantum Science and Technology}\ }\textbf {\bibinfo {volume} {7}},\ \bibinfo {pages} {035021} (\bibinfo {year} {2022})}\BibitemShut {NoStop}%
\bibitem [{\citenamefont {Zapatero}\ \emph {et~al.}(2021)\citenamefont {Zapatero}, \citenamefont {Navarrete}, \citenamefont {Tamaki},\ and\ \citenamefont {Curty}}]{sourceimp8}%
  \BibitemOpen
  \bibfield  {author} {\bibinfo {author} {\bibfnamefont {V.}~\bibnamefont {Zapatero}}, \bibinfo {author} {\bibfnamefont {{\'A}.}~\bibnamefont {Navarrete}}, \bibinfo {author} {\bibfnamefont {K.}~\bibnamefont {Tamaki}},\ and\ \bibinfo {author} {\bibfnamefont {M.}~\bibnamefont {Curty}},\ }\href {https://doi.org/10.22331/q-2021-12-07-602} {\bibfield  {journal} {\bibinfo  {journal} {Quantum}\ }\textbf {\bibinfo {volume} {5}},\ \bibinfo {pages} {602} (\bibinfo {year} {2021})}\BibitemShut {NoStop}%
\bibitem [{Note10()}]{Note10}%
  \BibitemOpen
  \bibinfo {note} {For \( a \), both upper and lower bounds are required. For \( \delta \), \( q_Z \), and \( \lambda _{\min } \), only worst-case lower bounds are needed.}\BibitemShut {Stop}%
\bibitem [{\citenamefont {Fung}\ \emph {et~al.}(2011)\citenamefont {Fung}, \citenamefont {Chau},\ and\ \citenamefont {Lo}}]{Fung_2011}%
  \BibitemOpen
  \bibfield  {author} {\bibinfo {author} {\bibfnamefont {C.-H.~F.}\ \bibnamefont {Fung}}, \bibinfo {author} {\bibfnamefont {H.~F.}\ \bibnamefont {Chau}},\ and\ \bibinfo {author} {\bibfnamefont {H.-K.}\ \bibnamefont {Lo}},\ }\bibfield  {journal} {\bibinfo  {journal} {Physical Review A}\ }\textbf {\bibinfo {volume} {84}},\ \href {https://doi.org/10.1103/physreva.84.020303} {10.1103/physreva.84.020303} (\bibinfo {year} {2011})\BibitemShut {NoStop}%
\bibitem [{\citenamefont {Gittsovich}\ \emph {et~al.}(2014)\citenamefont {Gittsovich}, \citenamefont {Beaudry}, \citenamefont {Narasimhachar}, \citenamefont {Alvarez}, \citenamefont {Moroder},\ and\ \citenamefont {Lütkenhaus}}]{Gittsovich_2014}%
  \BibitemOpen
  \bibfield  {author} {\bibinfo {author} {\bibfnamefont {O.}~\bibnamefont {Gittsovich}}, \bibinfo {author} {\bibfnamefont {N.~J.}\ \bibnamefont {Beaudry}}, \bibinfo {author} {\bibfnamefont {V.}~\bibnamefont {Narasimhachar}}, \bibinfo {author} {\bibfnamefont {R.~R.}\ \bibnamefont {Alvarez}}, \bibinfo {author} {\bibfnamefont {T.}~\bibnamefont {Moroder}},\ and\ \bibinfo {author} {\bibfnamefont {N.}~\bibnamefont {Lütkenhaus}},\ }\bibfield  {journal} {\bibinfo  {journal} {Physical Review A}\ }\textbf {\bibinfo {volume} {89}},\ \href {https://doi.org/10.1103/physreva.89.012325} {10.1103/physreva.89.012325} (\bibinfo {year} {2014})\BibitemShut {NoStop}%
\bibitem [{Note11()}]{Note11}%
  \BibitemOpen
  \bibinfo {note} {Technically $h(x)$ is the binary entropy function when $x \leq 1/2$, and $1$ otherwise}\BibitemShut {NoStop}%
\bibitem [{Note12()}]{Note12}%
  \BibitemOpen
  \bibinfo {note} {In \protect \cref {eq:D9} in \protect \cref {appendix:adelta}, we also denote this minimum efficiency as $\eta _{\protect \text {min}}$.}\BibitemShut {Stop}%
\bibitem [{Note13()}]{Note13}%
  \BibitemOpen
  \bibinfo {note} {We choose to further simplify the bound by removing the square root structure of ${\protect \tilde {F}^{(\protect \text {M})}_X}$ and ${\protect \tilde {F}^{(\protect \text {M})}_Z}$. One can explicit compute the bound \protect \cref {eq:D1} as it might be tighter than the current approach in this work.}\BibitemShut {Stop}%
\bibitem [{\citenamefont {Bhatia}(2013)}]{bhatia2013matrix}%
  \BibitemOpen
  \bibfield  {author} {\bibinfo {author} {\bibfnamefont {R.}~\bibnamefont {Bhatia}},\ }\href@noop {} {\emph {\bibinfo {title} {Matrix analysis}}},\ Vol.\ \bibinfo {volume} {169}\ (\bibinfo  {publisher} {Springer Science \& Business Media},\ \bibinfo {year} {2013})\BibitemShut {NoStop}%
\end{thebibliography}%

\appendix
\crefalias{section}{appendix}
\section{\texorpdfstring{Reduction to (1$\leq \pnm \leq \pn$)-photon subspace and variable length security}{}}\label{appendix:security}
In this appendix, we prove the 
 following theorem regarding the security of the protocol from \cref{sec:basicprotdescription} from the obtained bounds (\cref{eq:phaseerroridea}).

	\begin{theorem}\label{theorem:security}
Consider the protocol described in \cref{sec:basicprotdescription},  such that it produces a key of length 
\begin{equation}
\begin{aligned}
&l(\errzobs,\nx,\nk,\Nxobs,\nmc) \coloneqq \max\big\{\K(\nk,\nmc)\biggl(1-h(\bimp(\nx,\nk,\Nxobs,\nmc))\biggr)\\&\phantom{l(\nx,\nk,\Nxobs,\nmc) \coloneqq \max} - \lambda_{\text{EC}}(\errzobs,\nx,\nk,\Nxobs,\nmc) - 2\text{log}(1/2\eppa) - \text{log}(2/\eppa), 0\big\},
\end{aligned}
\end{equation}
upon the event $\event{\errzobs,\Nxobs,\nx,\nk,\nmc,\nkone} \wedge \Omega_\text{EV}$, where $\Omega_\text{EV}$ denotes the event that error-verification passes.   Here $h(\cdot)$ is the binary entropy function \footnote{Technically $h(x)$ is the binary entropy function when $x \leq 1/2$, and $1$ otherwise}, and  $\bimp(\cdot)$ and $\K(\cdot)$ are the bounds we obtained in \cref{sec:phase error estimation} which satisfy \cref{eq:memorylessbounds}. Then the QKD protocol is $(2\epat+ \eppa + \epev)$-secure. 
			\end{theorem}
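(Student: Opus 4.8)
The plan is to establish security by decomposing it into correctness and secrecy, following the standard variable-length template of Ref.~\cite{devEUR}. Correctness is immediate: the error-verification step hashes the corrected strings to $\log(2/\epev)$ bits with a two-universal family, so the probability that error verification passes while Alice's and Bob's keys differ is at most $\epev$, giving $\epev$-correctness. It then remains to prove $(2\epat+\eppa)$-secrecy, which is the bulk of the argument, and combine the two via the triangle inequality.

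For secrecy I would invoke the leftover hashing lemma~\cite[Proposition 9]{tomamichel2017}: the privacy-amplified key is $(2\epat+\eppa)$-close to ideal provided the extracted length satisfies $l \leq \sminentropy{Z_{\text{key}}}{E\,C} - 2\log(1/2\eppa)$, where $C$ collects all classical communication and the smoothing parameter is $\epat$. Using the chain rule for smooth min-entropy to strip off the $\lec$ bits of error-correction information and the $\log(2/\epev)$-bit verification hash reduces the task to lower bounding the smooth min-entropy of the raw $Z$-key given Eve's quantum side information and the basis/sifting announcements by $\K(\nk,\nmc)\bigl(1-h(\bimp(\nx,\nk,\Nxobs,\nmc))\bigr)$, which is exactly the quantity appearing in the length formula.

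To obtain this entropy bound I would use the measurement decomposition of~\cref{diag:equivalent}. By the QND decomposition together with the property~\cite[Lemma 6.7]{tomamichelbook} that $H_{\min}(XY|Z)\geq H_{\min}(X|YZ)$, the min-entropy of the full key string is lower bounded by that of the $(1\leq\pnm\leq\pn)$-photon substring $Z^{\nkone}$, after handing the $0$- and $(>\pn)$-photon registers to the conditioning system. On this substring I would apply the entropic uncertainty relation with $c_q=1$ (verified in~\cref{appendix:cq}), which contributes one bit per single-photon key round, giving $\sminentropy{Z^{\nkone}}{E\,C} \geq \nkone - \smaxentropy{X^{\nkone}}{X_B}$, and then bound the max-entropy of the virtual $X$-pattern by $\nkone\, h(\errks)$. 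The crucial input is~\cref{eq:phaseerroridea}: the event that either $\errks \geq \bimp$ or $\nkoneb \leq \K$ has probability at most $\epat^2$, so this bad event contributes a smoothing of $\sqrt{\epat^2}=\epat$ (converting trace probability to purified distance), yielding $\sminentropy{Z^{\nkone}}{E\,C} \geq \K\bigl(1-h(\bimp)\bigr)$ on the good event with smoothing $\epat$, as required.

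The main obstacle I anticipate is the variable-length bookkeeping. Since $l$ and the functions $\K,\bimp$ all depend on the observed values $(\errzobs,\Nxobs,\nx,\nk,\nmc)$, the entropy and hashing statements must be made conditional on the length-determining event $\event{\errzobs,\Nxobs,\nx,\nk,\nmc}$ and then recombined into a single unconditional security bound. This requires care in defining the conditional smoothing parameter $\sqrt{\kappa(\bar\Omega)}$ relative to the conditioning event, in ensuring that the probabilistic phase-error bound~\cref{eq:phaseerroridea} survives the conditioning without accumulating additional failure probability, and in correctly ordering the ``measure, condition, smooth'' steps so that the EUR is applied to a genuine pre-measurement state rather than one where Bob's key measurement has already been performed. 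Once the conditional secrecy bound $2\epat+\eppa$ is shown to hold uniformly over all length-determining events and is combined with the $\epev$-correctness bound, the claimed $(2\epat+\eppa+\epev)$-security follows.
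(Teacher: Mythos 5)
Your proposal is correct and follows essentially the same route as the paper: a correctness/secrecy split, leftover hashing, reduction via the chain rule to the $(1\leq\pnm\leq\pn)$-photon key substring, the EUR with $c_q=1$, and the phase-error bound \cref{eq:phaseerroridea} supplying the smoothing. The variable-length bookkeeping you flag as the main obstacle is exactly what the paper handles by invoking \cite[Theorem 4]{devEUR} directly, choosing $S(\nx,\nk,\Nxobs,\nmc)=\{\nkone:\nkone>\K(\nk,\nmc)\}$ and $\kappa(\tilde{\Omega})$ equal to the conditional probability of the bad event, and then verifying the two hypotheses of that theorem.
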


\begin{proof}
 In order to prove security, we rely on 
 \cite[Theorem 4]{devEUR}.
 In order to use this theorem, we need to setup some notation. Let $\tilde{\Omega}=\event{\errzobs,\Nxobs,\nx,\nk,\nmc,\nkone}$ denote well-defined events that can take place in the protocol, where we fix the value  $\nkone$ in the event. Let $Z^{\nk}$ denote the raw key register, let ${E^n}$ denote Eve's quantum system, and let ${C}$ denote public announcements (excluding error-correction and error-verification). Let $S(\nx,\nk,\Nxobs,\nmc)$  denote the subset of possible values of the tuple $\nkone$ (which we will explicitly construct later). Suppose there exist values of $\kappa{\tilde{(\Omega)}}\geq 0$ such that 
\begin{equation} \label{eq:genericvariableone}
				\begin{aligned}
				&	 \sminentropynew{Z_A^{\nk}}{CE}{\tilde{\Omega}}_{\rho | \tilde{\Omega}} \geq \K(\nk,\nmc)\biggl(1-h(\bimp(\nx,\nk,\Nxobs,\nmc))\biggr)\quad \forall \nkone \in S(\nx,\nk,\Nxobs,\nmc)
					\end{aligned}
			\end{equation}
and 
\begin{equation} \label{eq:genericvariabletwo}
\sum_{\errzobs,\nx,\nk,\Nxobs,\nmc\;} \left( \sum_{\nkone \in S(\nx,\nk,\Nxobs,\nmc)} \Pr(\tilde{\Omega}) \kappa{(\tilde{\Omega})}  + \sum_{\nkone\notin S(\nx,\nk,\Nxobs,\nmc)} \Pr(\tilde{\Omega}) \right) \leq \epat^2 .
\end{equation}
Then,  \cite[Theorem 4]{devEUR} directly guarantees the security of the protocol. 

The idea here is that for while we do not know the values of $\nkone$, they either belong to $S(\nx,\nk,\Nxobs,\nmc)$ or they don't. If they belong to $S(\nx,\nk,\Nxobs,\nmc)$, then the smooth min-entropy of the raw key string can be suitably bounded with an appropriate choice of the smoothing parameter. The second inequality above ensures that these smoothing parameters (in an averaged sense) are not too large, and the probability of $\nkone$ not belonging to $S(\nx,\nk,\Nxobs,\nmc)$ is also not too large. 

To satisfy \cref{eq:genericvariableone,eq:genericvariabletwo} we choose $S(\nx,\nk,\Nxobs,\nmc)$ to be the set of values of $\nkone$ such that the computed bounds from \cref{eq:memorylessbounds} hold. That is, 
            \begin{align}
        S(\nx,\nk,\Nxobs,\nmc) = \{\nkone: \nkone > \K(\nk,\nmc)\}.
            \end{align}
We set $\kappa(\tilde{\Omega})$ to be
\begin{equation}
    \kappa(\tilde{\Omega}) = \Pr\biggl(\errks \geq \bimp(\nxb,\nkb,\Nxobsb,\nmcb) \ \lor\  \nkoneb \leq \K(\nkb,\nmcb)\biggr)_{| \event{\errzobs,\Nxobs,\nx,\nk,\nmc,\nkone}}
\end{equation}
We then show that, with this choice of $S(\nx,\nk,\Nxobs,\nmc)$ and $\kappa(\tilde{\Omega})$, \cref{eq:genericvariableone,eq:genericvariabletwo} are satisfied in \cref{appendix:cq,subsec:boundingprob} respectively. With this, we conclude the proof.  

\end{proof}

\subsection{Bounding the smooth min-entropy}\label{appendix:cq}
In this subsection, we prove \cref{eq:genericvariableone}. To bound the smooth min-entropy, we begin by showing that the $c_q$ factor in the EUR statement~\cite{EUR1} that we will use is 1.
To obtain Alice's POVM element, we need to break $\{\GtXeq,\GtXneq\}$ ($\{\GtZeq,\GtZneq\}$) down. Due to the tensor product structure between Alice and Bob, we can first let Alice perform her measurement with $\idd$ on Bob's side. We show the calculation for the $X$-basis measurement. $P$ are some positive semi-definite matrices that we pick, living on the orthogonal support of $\ftx$.
Note that in most of the cases we consider, $\ftx$ and $\ftz$ have full support on the ($1\leq\pnm\leq\pn$)-photon subspace. Thus, we ignore the term $(\idd^{(\pn)}-\Pi_{\ftx})$ from \cref{lemma:twostep}.
\begin{align}
    \begin{split}
    \GtXeq &= \sqrt{\ftx}^+\sqrt{\Fonesc}^+ \POVMXeqM\sqrt{\Fonesc}^+\sqrt{\ftx}^+\\
           &=\sqrt{\ftx}^+\sqrt{\Fonesc}^+ \biggl(\pxa\kb{+}\otimes \POVMBXzeroM + \pxa\kb{-}\otimes \POVMBXoneM\biggr)\sqrt{\Fonesc}^+\sqrt{\ftx}^+\\
    \GtXneq &= \sqrt{\ftx}^+\sqrt{\Fonesc}^+ \POVMXneqM\sqrt{\Fonesc}^+\sqrt{\ftx}^+\\
           &=\sqrt{\ftx}^+\sqrt{\Fonesc}^+ \biggl(\pxa\kb{+}\otimes \POVMBXoneM + \pxa\kb{-}\otimes \POVMBXzeroM\biggr)\sqrt{\Fonesc}^+\sqrt{\ftx}^+,
    \end{split}
\end{align}
where $\POVMBXzeroM$ and $\POVMBXoneM$ are Bob's POVM elements in ($1\leq\pnm\leq\pn$)-photon subspace.
We define:
\begin{align}
    \begin{split}
        &\POVMBXM:=\POVMBXzeroM+\POVMBXoneM\\
        &\POVMBZM:=\POVMBZzeroM+\POVMBZoneM\\
        &\POVMBconM:=\pxa\POVMBXM+\pza\POVMBZM
    \end{split}
\end{align}
Recall from \cref{eq:definePOVMA}, \cref{eq:definejoint}, \cref{eq:definefx} and \cref{eq:defineftx}, one can rewrite $\ftx$ and $\Fonesc$ as follows:
\begin{align}
    \begin{split}
    &\ftx = \pxa\idd_A \otimes \sqrt{\POVMBconM}^+ \POVMBXM \sqrt{ \POVMBconM}^+\\
    &\Fonesc = \idd_A \otimes \POVMBconM.
    \end{split}
\end{align}
Therefore, the POVM elements corresponding to Alice sending $\ket{+}$ are as follows:
\begin{align}
    \begin{split}
    \tilde{G}_{A,\ket{+}} &= \sqrt{\ftx}^+\sqrt{\Fonesc}^+ \biggl(\pxa\kb{+}\otimes (\POVMBXzeroM+\POVMBXoneM)\biggr)\sqrt{\ftx}^+\sqrt{\Fonesc}^+\\
    &= \sqrt{\ftx}^+\sqrt{\Fonesc}^+ \biggl(\pxa\kb{+}\otimes \POVMBXM\biggr)\sqrt{\ftx}^+\sqrt{\Fonesc}^+\\
    &=\kb{+}\otimes \sqrt{ \sqrt{\POVMBconM}^+ \POVMBXM \sqrt{ \POVMBconM}^+}^+  \sqrt{\POVMBconM}^+ \POVMBXM \sqrt{ \POVMBconM}^+ \sqrt{ \sqrt{\POVMBconM}^+ \POVMBXM \sqrt{ \POVMBconM}^+}^+\\
    & = \kb{+}\otimes \idd_B^{(\pn)}
    \end{split}
\end{align}
Similarly,
\begin{align}
    \begin{split}
    \tilde{G}_{A,\ket{-}}&= \kb{-}\otimes \idd_B^{(\pn)}\\
    \tilde{G}_{A,\ket{0}}&= \kb{0}\otimes \idd_B^{(\pn)}\\
    \tilde{G}_{A,\ket{1}}&= \kb{1}\otimes \idd_B^{(\pn)}
    \end{split}
\end{align}
Thus, we can compute $c_q$ for EUR statement:
\begin{align}
    c_q = \text{log}_2\frac{1}{c} =1,\;\;\;\;\text{where } c: = \max\limits_{x,z}\infnorm{\sqrt{M_x}\sqrt{N_z}}^2\\
    M_x \in \{\tilde{G}_{A,\ket{+}},\tilde{G}_{A,\ket{-}}\} ;\;\;\;  M_z \in \{\tilde{G}_{A,\ket{0}},\tilde{G}_{A,\ket{1}}\}\notag
\end{align}

Next, we bound the smooth min-entropy.
The state we are interested in is
% We use the leftover hashing lemma \cite[Proposition 9]{tomamichel2017} to prove security. Thus, we aim to lower bound the smooth min-entropy of the key rounds. Let 
\[
\rho_{|\tilde{\Omega}} = \rho_{A^{\nkone}B^{\nkone}Z_A^{\tilde{n}_{({K,0})}}Z_A^{\tilde{n}_{(K,>\pn)}}E^nC  |\tilde{\Omega}(\errzobs,\Nxobs,\nx,\nk,\nmc,\nkone)}.
\]
This is the state before Alice and Bob complete their $Z$-basis measurements for the $(1 \leq \pnm \leq \pn)$ rounds, conditioned on the event $\tilde{\Omega}(\errzobs,\Nxobs,\nx,\nk,\nmc,\nkone)$. The $Z$-basis measurements on the $0$-photon rounds and ($>M$)-photon rounds are completed.
 This state corresponds precisely to the state after step \circletext{4} in \cref{diag:equivalent}. Alice performs a $Z$-basis measurement on $\rho_{\condi}$ to obtain the actual key rounds $Z_A^{\nkone}$. On the other hand, in the virtual $X$-basis measurement, Alice and Bob perform $X$-basis measurements on $\rho_{|\tilde{\Omega}}$ to obtain $X_A^{\nkone}$ and $X_B^{\nkone}$.
Since we condition on a specific value of $\nkone$, we can split up $Z^{\nk}$ as $Z^{\nkone}Z^{\text{rest}}$. We can also split up $X_A^{\nkone}$ and $X_B^{\nkone}$ in the same way.

We aim to bound the following smooth min-entropy for all key rounds:
\begin{align*}
     \sminentropynew{Z_A^{\nk}}{CE^n}{\tilde{\Omega}}_{\rho_{|\tilde{\Omega}}} &=  \sminentropynew{Z^{\nkone}Z^{\text{rest}}}{CE^n}{\tilde{\Omega}}_{\rho_{|{\tilde{\Omega}}}} \\
    &\geq  \sminentropynew{Z_A^{\nkone}}{CE^n}{{\tilde{\Omega}}}_{\rho_{|{\tilde{\Omega}}}}\\
     &\geq \nkone - \smaxentropynew{X_A^{\nkone}}{B^{\nkone}}{{\tilde{\Omega}}}_{\rho_{|{\tilde{\Omega}}}}\\
             &\geq  \nkone - \smaxentropynew{X_A^{\nkone}}{X_B^{\nkone}}{{\tilde{\Omega}}}_{\rho_{|{\tilde{\Omega}}}}\\
             &\geq \nkone(1-h( \bimp(\nx,\nk,\Nxobs,\nmc)))\\
             &\geq \K(\nk,\nmc)\biggl(1-h(\bimp(\nx,\nk,\Nxobs,\nmc))\biggr)\;\;\;\forall \nkone \in S
\end{align*}
The first inequality follows from the chain rule \cite[Lemma 6.7]{tomamichel2017}. The second inequality follows from the EUR statement \cite{EUR1}. The third inequality follows from the data-processing inequality \cite[Theorem 6.2]{tomamichelbook}. The fourth inequality is due to \cite[Lemma 6]{devEUR}~\cite{tomamichel2017} and \cref{eq:mainresult}, and our choice of $\kappa(\tilde{\Omega})$. The final inequality follows from the fact that we consider $\nkone$ in the set $S(\nx,\nk,\Nxobs,\nmc)$. 
\subsection{Bounding the probabilites} \label{subsec:boundingprob}
In this subsection, we prove \cref{eq:genericvariabletwo} via straightforward algebra. We have 
\begin{align}
    \begin{split}
    &\sum_{\errzobs,\nx,\nk,\Nxobs,\nmc\;}\sum_{\nkone\in S}\text{Pr}({\tilde{\Omega}})\kappa{{(\tilde{\Omega})}} \\
    =&  \sum_{\errzobs,\nx,\nk,\Nxobs,\nmc\;}\sum_{\nkone\in S}\text{Pr}({\tilde{\Omega}})\text{Pr}\biggl(\errks\geq \bimp(\nx,\nk,\Nxobs,\nmc) \lor \nkone \leq  \K(\nk,\nmc)\biggr)_{|(\tilde{\Omega})}\\
    =&  \sum_{\errzobs,\nx,\nk,\Nxobs,\nmc\;}\sum_{\nkone\in S}\text{Pr}(\tilde{\Omega})\text{Pr}\biggl(\errks\geq \bimp(\nx,\nk,\Nxobs,\nmc)\land \nkone >  \K(\nk,\nmc)\biggr)_{|(\tilde{\Omega})}\\
    =& \text{Pr}(\errks\geq \bimp(\nxb,\nkb,\Nxobsb,\nmcb)\land \nkoneb > \K(\nkb,\nmcb)).
    \end{split}
\end{align}
Furthermore,
\begin{align}
    \begin{split}
    &\sum_{\errzobs,\nx,\nk,\Nxobs,\nmc\;}\sum_{j\notin S_i}\text{Pr}(\tilde{\Omega}) 
    =\text{Pr}(\nkoneb \leq \K(\nkb,\nmcb)).
    \end{split}
\end{align}
Therefore,
\begin{align}
    \begin{split}
    &\sum_{\errzobs,\nx,\nk,\Nxobs,\nmc\;}\sum_{\nkone\in S}\text{Pr}(\tilde{\Omega})\kappa{(\tilde{\Omega})} + \sum_{\errzobs,\nx,\nk,\Nxobs,\nmc\;}\sum_{\nkone\notin S}\text{Pr}(\tilde{\Omega})\\
    &= \text{Pr}\biggl(\errks\geq \bimp(\nxb,\nkb,\Nxobsb,\nmcb)\land \nkoneb > \K(\nkb,\nmcb)\biggr) + \text{Pr}\biggl(\nkoneb \leq K(\nkb,\nmcb)\biggr)\\
    &= \text{Pr}\biggl(\errks\geq \bimp(\nxb,\nkb,\Nxobsb,\nmcb)\lor \nkoneb \leq \K(\nkb,\nmcb)\biggr)\\
    &\leq \epatsingle^2\\
    & \epatsingle^2:= \epazua^2 + \epazub^2 +\eppntb^2+\epzero^2 
    \end{split}
\end{align}
We use the fact that $\Pr(\Omega_1 \land \Omega_2^c)+\Pr(\Omega_2) = \Pr(\Omega_1 \lor \Omega_2)$, where $\Omega^c $ denotes the complement of $\Omega$. The last inequality comes from our bound \cref{eq:mainresult} in the memoryless case.

\section{Decoy analysis and variable length security for decoy-state protocol}\label{appendix:decoy}
We demonstrate the decoy-state analysis for the decoy-state BB84 passive protocol~\cite{hwang2003,lo2005,ma2005,hayashi2014,curty2014,charles} with memoryless detectors. The security proof follows from a similar analysis as \cref{appendix:security}. The decoy analysis and the security proof for correlated detectors follow by performing the exact same method on the random variables obtained from the kept rounds.

We consider the case where Alice does not use a perfect single-photon source, but instead uses a phase-randomized WCP to encode bits. The idea is to first bound the phase error rate in terms of quantities corresponding to Alice sending single photons. This follows the same analysis as in the memoryless case. We then use decoy-state analysis to bound those single-photon (on Alice's side) quantities using observations.
Our decoy-state analysis follows the approach in \cite{charles}.

To implement the decoy-state protocol, when Alice prepares a state, she additionally selects an intensity $\muk \in \{\mu_1, \mu_2, \mu_3\}$ with corresponding probabilities $p_{\muk}$. We assume $\mu_1 > \mu_2 + \mu_3$ and $\mu_2 > \mu_3 \geq 0$. She then prepares a phase-randomized weak coherent pulse with the chosen intensity.

The decoy-state protocol description modifies the following steps of the protocol described in \cref{sec:basicprotdescription}.
\begin{description}
     \item [State Preparation]Alice sends a state in $X$ ($Z$) basis with probability $\pxa$ $(\pza)$. She additionally selects an intensity $\muk \in \{\mu_1, \mu_2, \mu_3\}$ with corresponding probabilities $p_{\muk}$. We assume $\mu_1 > \mu_2 + \mu_3$ and $\mu_2 > \mu_3 \geq 0$. She then prepares a phase-randomized weak coherent pulse with the chosen intensity.
    \item[Classical Announcement and Sifting] Alice and Bob announce the bases they used and perform sifting. They discard basis-mismatched rounds. All $X$-basis rounds are used for testing. They denote the number of $X$-basis rounds with intensity $\muk$ as $n_{X,\muk}$ and the number of erroneous outcomes in $X$-basis rounds with intensity $\muk$ as $n_{X\neq,\muk}$. A small portion of $Z$-basis rounds is revealed for error correction. The error rate in these rounds is denoted as $\errzobs$. The rest of the $Z$-basis rounds are used for key generation. They denote the number of rounds used in key generation with intensity $\muk$ as $n_{K,\muk}$. For brevity, let us us the notation $n_{O,\mukvec} = (n_{O,\mu_1},n_{O,\mu_2},n_{O,\mu_3})$ and and $n_{O}= n_{O,\mu_1}+n_{O,\mu_2}+n_{O,\mu_3}$, where $O\in\{X\neq,X,K\}$.
\end{description}
The remaining steps are similar to \cref{sec:basicprotdescription}. In particular, Alice and Bob perform one-way error correction with $\lec(\errzobs,n_{X\neq,\mukvec},\nxmuvecobs,\nkmuvecobs,\nmc)$ bits of information, followed by error-verification, and privacy amplification to produce a key of $l(\errzobs,n_{X\neq,\mukvec},\nxmuvecobs,\nkmuvecobs,\nmc)$. Note that Alice generates keys from all three intensities.\\

Let us first define some notations.
\begin{itemize}
    \item $\errkss$: phase error rate in the key set that Alice sends in 1 photon and Bob receives $1\leq\pnm\leq\pn$ photons.
    \item $\nxbs$: number of rounds in the test set that Alice sends in 1 photon.
    \item $\nkbs$: number of rounds in the key set that Alice sends in 1 photon.
    \item $\xnoeobss$: number of errors in the test set that Alice sends in 1 photon.
    \item $\nomuvec:=(\nomu{1},\nomu{2},\nomu{3})$: $O \in \{X\neq,X,K\}$  number of errors in test set ($X\neq$), number of rounds in test set ($X$), number of rounds in key set ($K$) at different intensities.
    \item $\no:=\nomu{1}+\nomu{2}+\nomu{3}$: number of rounds result in outcome $O$.
    \item $p_{m|\muk} := e^{-\muk}\frac{\muk^m}{m!}$: The probability Alice sends $\pnm$ photons given she chooses $\muk$. 
    \item $\tau_m := \sum_{\muk} p_{\muk}p_{m|\muk}$ The probability that Alice sends $\pnm$ photons.
\end{itemize}
% The notation here is updated to demonstrate the fact that we have two different notions of number of photons The $\pn$ indicates that Bob receives ( $1\leq\pnm\leq\pn$)-photon signal. The letter (single) indicates that Alice sends single photon signals.
% To prove our protocol secure, we need these two bounds:
The notation here is updated to reflect the fact that we consider two distinct notions of photon number. The symbol $\pn$ indicates that Bob receives a $(1 \leq \pnm \leq \pn)$-photon signal, while the term “single” refers to the case where Alice sends single-photon signals.
To prove the security of our protocol, we require the following two bounds:

\begin{align}
    \begin{split}
   &\text{Pr}\biggl(\errkss \geq  \bimp(\nxbs,\nkbs,\xnoeobss,\nmcb) \lor \nkbss \leq \K(\nkbs,\nmcb)\biggr)\leq \epats^2\\
    &\epats^2:=\epazua^2 + \epazub^2 +\eppntb^2+\epzero^2 
    \end{split}
\end{align}
\begin{remark}
    Technically, we should use the number of multi-click outcomes that Alice sends in single photons $\bd{n_{(\textbf{mc},\textbf{single})}}$. But here we use the monotonicity of the bound, and we replace it with the total number of multi-click outcomes $\nmcb$. 
\end{remark}

We follow the steps in \cite[Appendix A]{charles}. We present the results here.
\begin{align}
      &\text{Pr}\biggl(\xnoeobss \geq \bmaxone(\nxneqmuvec)\;\lor\;\nxbs \geq \bmaxone(\nxmuvec)\lor\;\nkbs \leq \bminone(\nkmuvec) 
      %\;\lor\; \nkbs \geq\bmaxone(\nkmuvec)
      \biggr) \leq 9\epatd^2
\end{align}
where
\begin{align}\label{eq:BmaxBmin}
    \begin{split}
     \bd{n^{\pm}_{O,\muk}}& := \frac{e^{\muk}}{p_{\muk}}\Biggl(\nomu{k}\pm\sqrt{\frac{\no}{2}\ln\biggl(\frac{2}{\epatd^2}\biggr)}\Biggr)\\
    \bminzero(\nomuvec)& := \tau_0\frac{\mu_2\nomuminus{3}-\mu_3\nomuplus{2}}{\mu_2-\mu_3}\\
    \bminone(\nomuvec)& := \biggl(\frac{\mu_1\tau_1}{\mu_1(\mu_2-\mu_3)-\mu_2^2+\mu_3^2}\biggr) \times\\
    &\;\;\;\;\;\;\biggl(\nomuminus{2}-\nomuplus{3}-\frac{\mu_2^2-\mu_3^2}{\mu_1^2}\big(\nomuplus{1}-\frac{\bminzero(\nomuvec)}{\tau_0}\big)\biggr)\\
    \bmaxone(\nomuvec)& := \tau_1\frac{\nomuplus{2}-\nomuminus{3}}{\mu_2-\mu_3}.
    \end{split}
\end{align}
Combining all bounds together, we have the following bounds:
\begin{align}
    \begin{split}
   & \text{Pr}\biggl(\errkss \geq \be(\nxmuvec,\nkmuvec,\nxneqmuvec,\nmcb) \lor \bone(\nkmuvec,\nmcb)\biggr) \leq \epats^2+9\epatd^2 \\
    % &\text{Pr}\biggl(\nkbss \leq \bone(\nkmuvec,\nmcb)\biggr) \leq \epatt^2+9\epatd^2
    \end{split}
\end{align}
where
\begin{align}\label{eq:BeB1}
    \begin{split}
    &\bone(\nkmuvec,\nmcb) = \bminone(\nkmuvec) - nq_Z-\frac{\nmcb}{\lm}-\sqrt{{-\ln(\epzero)}n}\\&\phantom{\nkmuvec,\nmcb;;;}-\frac{\sqrt{(-\log(\eppntb))(-\log(\eppntb)+4\lm\nmcb)}}{2{\lm}^2}-\biggl(\frac{-2\log(\eppntb)}{4{\lm}^2}\biggr)\\[5pt]
    &\be(\nxmuvec,\nkmuvec,\nxneqmuvec,\nmcb) =\\ 
    &\frac{\bmaxone(\nxneqmuvec)}{a\bone(\nkmuvec,\nmcb)}+\frac{\delta}{a}\biggl(\frac{\bmaxone(\nxmuvec)}{\bone(\nkmuvec,\nmcb)}+1\biggr)+\sqrt{\frac{-2\ln(\epazua^2)}{a}\biggl(\frac{\bmaxone(\nxmuvec)}{\bone(\nkmuvec,\nmcb)^2}+\frac{1}{\bone(\nkmuvec,\nmcb)}\biggr)}
    \\
    &+\sqrt{{-2\ln(\epazub^2)}\biggl(\frac{\nxb}{\bone(\nkmuvec,\nmcb)^2}+\frac{1}{\bone(\nkmuvec,\nmcb)}\biggr)}
    \end{split}
\end{align}
$\bmaxone(\cdot)$ and $\bminone(\cdot)$ are define above in \cref{eq:BmaxBmin}. 
Here, we use the monotonicity of the bounds $\bimp(\cdot)$ and $\K(\cdot)$ to replace the quantities corresponding to Alice sending single photons with the bounds from decoy analysis.
To show security, one can follow a similar method as \cref{appendix:security}. We state the key rate formula without proof here.
We first define the security parameter for the decoy-state BB84 protocol with memoryless passive detectors: 
\begin{equation}\label{eq:epat}
    \epat^2 := \epats^2 +9\epatd^2 = \epazua^2 + \epazub^2 +\eppntb^2+\epzero^2 +9\epatd^2.
\end{equation}
For key length, we set:
\begin{align}\label{eq:keyratedecoy}
    \begin{split}
    &l(e_{Z}^{\text{obs}},\nxmuvecobs,\nkmuvecobs,\nxneqmuvecobs,\nmc)\\
    &:= \text{max}\{\bone(\nkmuvecobs,\nmc)\biggl(1-h(\be(\nxmuvecobs,\nkmuvecobs,\nxneqmuvecobs,\nmc))\biggr)\\&-\lambda_{\text{EC}}(e_{Z}^{\text{obs}},\nxmuvecobs,\nkmuvecobs,\nxneqmuvecobs,\nmc)-2 \text{log}(1/2\eppa)- \text{log}(2/\epev),0\},
    \end{split}
\end{align}
where $\bone(\cdot)$ and $\be(\cdot)$ are bounds we find in \cref{eq:BeB1}.
By a similar analysis as in \cite[Theorem 4]{devEUR}, our decoy-state BB84 protocol with memoryless detectors is $(2\epat+\eppa+\epev)$-secure.

For decoy-state BB84 protocol with correlated detectors, since the decoy analysis \cite{charles} we use in this work does not depend on the detectors, one can perform the same decoy analysis on the outcomes obtained from the rounds Bob does not reject. Then, one can prove security in the same way as the above analysis. We again state the key length without proof:

\begin{align}\label{eq:keyratecorrelated}
    \begin{split}
    &l(e_{Z}^{\text{obs}},\nxmuvecobscor,\nkmuvecobscor,\nxneqmuvecobscor,\nmccor)\\
    &:= \text{max}\{\bone^{cor}(\nkmuvecobscor,\nmccor)\biggl(1-h({\be^{cor}}(\nxmuvecobscor,\nkmuvecobscor,\nxneqmuvecobscor,\nmccor))\biggr)\\&-\lambda_{\text{EC}}(e_{Z}^{\text{obs}},\nxmuvecobscor,\nkmuvecobscor,\nxneqmuvecobscor,\nmccor)-2\text{log}(1/2\eppa)- \text{log}(2/\epev),0\}\\
    & \epat^2= \epazua^2 + \epazub^2 +\eppntb^2+\epzero^2 +9\epatd^2,
    \end{split}
\end{align}
where
\begin{align}
    \begin{split}
    &\bone^{cor}(\nkmuveccor,\nmcbcor) = \bminone(\nkmuveccor) - nq_Z-\frac{\nmcb}{\lm}\\&\phantom{(\nkmuveccor,\nmcbcor)11111}-2\sqrt{{-\ln(\epzero)}n}-\frac{-2\log(\eppntb)}{4{\lm}^2}-\frac{\sqrt{(-\log(\eppntb))(-\log(\eppntb)+4\lm\nmcb)}}{2{\lm}^2}\\[5pt]
    &\be^{cor}(\nxmuveccor,\nkmuveccor,\nxneqmuveccor,\nmcbcor) =\\ 
    &\frac{\bmaxone(\nxneqmuveccor)}{a\bone^{cor}(\nkmuveccor,\nmcbcor)}+\frac{\delta}{a}\biggl(\frac{\bmaxone(\nxmuveccor)}{\bone^{cor}(\nkmuveccor,\nmcbcor)}+1\biggr)\\&+\sqrt{\frac{-2\ln(\epazua^2)}{a}\biggl(\frac{\bmaxone(\nxmuveccor)}{\bone^{cor}(\nkmuveccor,\nmcbcor)^2}+\frac{1}{\bone^{cor}(\nkmuveccor,\nmcbcor)}\biggr)}\\
    &+\sqrt{{-2\ln(\epazub^2)}\biggl(\frac{\nxbcor}{\bone^{cor}(\nkmuveccor,\nmcbcor)^2}+\frac{1}{\bone^{cor}(\nkmuveccor,\nmcbcor)}\biggr)}
    \end{split}
\end{align}
$\bmaxone(\cdot)$, $\bminone(\cdot)$ are defined in \cref{eq:BmaxBmin} from decoy analysis. $\nkmuveccor$, $\nxmuveccor$, $\nxneqmuveccor$, and $\nmcbcor$ are obtained from rounds that Alice and Bob keep (see how to obtain kept rounds in \cref{sec:correlated protocol description}). The decoy-state passive BB84 protocol with correlated detectors is $(2\epat+\eppa+\epev)$-secure.

\section{\texorpdfstring{$\lm(\Pign\Gamma_{\mathrm{mc}}\Pign)$ calculation}{}}\label{appendix:lmcalculation}

In this appendix, we find a lower bound on $\lm(\Pign\Gamma_{\mathrm{mc}}\Pign)$ for the detection setup depicted in \cref{diag:passivedetection}, where the parameters describing the setup are the detector efficiencies $\{\eta_i\}_{i=1}^4$ and dark count rates $\{d_i\}_{i=1}^4$ of each detector, and the beam-splitting ratio $s$. Furthermore, we lower bound $\lm(\Pign\Gamma_{\mathrm{mc}}\Pign)$ when parameters $\{\eta_i\}_{i=1}^4$,$\{d_i\}_{i=1}^4$ and $s$ are only known to be in some ranges. Technically, $\Gmc$ is the joint POVM element for multi-click outcomes on Alice and Bob's systems. But it has an identity on Alice's system. Therefore, we only need to take care of Bob's multi-click POVM element.

We remind the reader of the notations:
\begin{itemize}
    \item \( \Gamma_{\mathrm{mc}}^{(\vec{\eta},\vec{d})} \):  
    Bob's multi-click POVM element corresponding to the original setup with both losses and dark counts.  
    We denote \(\Gamma_{\mathrm{mc}}^{(\vec{\eta},\vec{d})} = \Gmc \).  
    Here \( \vec{\eta} = [\eta_1, \ldots, \eta_k] \) and \( \vec{d} = [d_1, \ldots, d_k] \) are the detector efficiencies and dark count rates.
    
    \item \( \Gamma_{\mathrm{mc}}^{\vec{\eta}} \):  
    The multi-click POVM element corresponding to the setup with losses only (i.e., without dark counts).
    
    \item \( \Gamma_{\mathrm{mc}}^{\etacom} \):  
    The multi-click POVM element corresponding to the setup without dark counts in which all detectors are assigned the same efficiency  
    \( \etacom = \min\{\eta_1, \ldots, \eta_k\} \) \footnote{In \cref{eq:D9} in \cref{appendix:adelta}, we also denote this minimum efficiency as $\etamin$.}. 
    That is, we increase the loss of each detector until they match the most lossy one.
    
    \item \( \Gamma_{\mathrm{mc}}^{\text{perfect}} \):  
    The multi-click POVM element corresponding to the setup with perfect threshold detectors.
\end{itemize}
To compute the minimum eigenvalue $\lm(\Pign\Gamma_{\mathrm{mc}}\Pign) = \lm(\Pign\Gamma_{\mathrm{mc}}^{(\vec{\eta},\vec{d})}\Pign)$, we follow the following steps:
\begin{enumerate}
    \item We model dark counts as a post-processing map on the outcomes obtained by setups without dark counts. It maps no-click or single-click outcomes to multi-click outcomes, but it cannot map multi-click outcomes back to no-click or single-click outcomes. Thus, we have 
    \begin{align}
\Gamma_{\mathrm{mc}}^{\vec{\eta}}\leq \Gamma_{\mathrm{mc}}^{(\vec{\eta},\vec{d})}.
    \end{align} 
    This is formally proved via \cref{lemma:mcdarkcounts} and \cref{cor:darkcounts}. 
    \item Next, we show that the multi-click POVM element is not increasing as losses increase in \cref{lemma:mcmonoton} and \cref{cor:lmmono}. Thus, we can increase the loss of each detector to the most lossy one. That is, 
    \begin{align}
\Gamma_{\mathrm{mc}}^{{\etacom}}\leq\Gamma_{\mathrm{mc}}^{\vec{\eta}}.
    \end{align}
    \item Lastly, we compute the minimum eigenvalue $\lm(\Pign\Gamma_{\mathrm{mc}}^{\etacom}\Pign)$. In fact, this can be related to a function of the common loss $\etacom$ and the minimum eigenvalue $\lm(\Pign\Gamma_{\mathrm{mc}}^{\text{perfect}}\Pign)$.
\end{enumerate}

We prove the above statements through the following lemmas and corollaries.

\begin{lemma}\label{lemma:mcdarkcounts}
Let \( \{\vec{\Gamma}^{(\vec{\eta},\vec{d})}\} \) be the POVM of the original detection setup, and let  
\( \{\vec{\Gamma}^{\vec{\eta}}\} \) be the POVM with loss but without dark counts.  
Let \( \{m_1,\ldots,m_k\} \) represent a click pattern, where \( m_l = 1 \) if the \( l^{\text{th}} \) detector clicks and \( m_l = 0 \) otherwise.  
We partition all click patterns into two sets, \( \mathcal{A} \) and \( \mathcal{E} \).  
For any click pattern \( c \in \mathcal{E} \), if changing any 0 (no-click) element in \( c \) to 1 (click) yields another pattern \( c' \) that also lies in \( \mathcal{E} \), then the following relation holds:
    \begin{align}
\sum_{c\in\mathcal{E}}{\Gamma}_c^{\vec{\eta}}\leq \sum_{c\in\mathcal{E}}{\Gamma}_c^{(\vec{\eta},\vec{d})}.
    \end{align}
\end{lemma}

\begin{proof}
    We can express \( \sum_{c\in\mathcal{E}}{\Gamma}_c^{(\vec{\eta},\vec{d})} \) in terms of the POVM elements from the loss-only case together with probabilities describing the dark count post-processing map:
    \begin{align}
    \begin{split}
        \sum_{c\in\mathcal{E}}{\Gamma}_c^{(\vec{\eta},\vec{d})} &= \sum_{c\in\mathcal{E}}\left(\sum_{c'\in\mathcal{E}}\mathscr{P}_{c'\rightarrow c}{\Gamma}_{c'}^{\vec{\eta}}\right) + \sum_{c\in\mathcal{E}}\left(\sum_{c'\in\mathcal{A}}\mathscr{P}_{c'\rightarrow c}{\Gamma}_{c'}^{\vec{\eta}}\right)\\
        &\geq\sum_{c'\in\mathcal{E}}\left(\sum_{c\in\mathcal{E}}\mathscr{P}_{c'\rightarrow c}\right){\Gamma}_{c'}^{\vec{\eta}},
    \end{split}  
    \end{align}
  where \( \mathscr{P}_{c' \rightarrow c} \) denotes the probability that the outcome \( c' \) is mapped to the outcome \( c \) due to dark counts, and the inequality follows as $\sum_{c\in\mathcal{E}}\left(\sum_{c'\in\mathcal{A}}\mathscr{P}_{c'\rightarrow c}{\Gamma}_{c'}^{\vec{\eta}}\right)$ is a positive semidefinite operator.
  Now, first note that any classical post-processing can be modelled as a stochastic matrix, i.e., $\sum_{c}\mathscr{P}_{c'\rightarrow c} = 1$ for any click pattern $c'$ and where the sum ranges over all click patterns.
  Further, note that since dark counts only convert a no-click (0) in a given detector to a click (1), and never remove existing clicks, any dark count post-processing can only increase the number of clicks. This directly implies that $\mathscr{P}_{c'\rightarrow c} = 0$ for all $c\in\mathcal{E}$ and $c'\in\mathcal{A}$.
  % Moreover, because \( c' \) lies in the set \( \mathcal{E} \), the new pattern \( c \) obtained by turning some 0 elements into 1s also remains in \( \mathcal{E} \).
  Therefore, 
    \begin{align}
        1= \sum_{c}\mathscr{P}_{c'\rightarrow c} = \sum_{c\in\mathcal{E}}\mathscr{P}_{c'\rightarrow c},   
    \end{align}
    and we conclude our proof
     \begin{align}
\sum_{c\in\mathcal{E}}{\Gamma}_c^{\vec{\eta}}\leq \sum_{c\in\mathcal{E}}{\Gamma}_c^{(\vec{\eta},\vec{d})}.
    \end{align}
\end{proof}
\begin{corollary}\label{cor:darkcounts}
For any detection setup with multiple detectors, a multi-click outcome is defined as the outcome with more than one detector click.
     Let \( \Gamma^{(\vec{\eta},\vec{d})}_{\mathrm{mc}} \) be the multi-click POVM element of the original setup with losses and dark counts, and let  
    \( \Gamma^{\vec{\eta}}_{\mathrm{mc}} \) be the multi-click POVM element of the setup without dark counts. Then, we have the following relation:
\begin{align}
\Gamma^{\vec{\eta}}_{\mathrm{mc}}\leq\Gamma^{(\vec{\eta},\vec{d})}_{\mathrm{mc}}
\end{align}
\begin{proof}
    It is clear that any multi-click pattern \( c \) can only be mapped to another multi-click pattern \( c' \) by turning additional no-click elements in \( c \) into clicks. The result follows from \cref{lemma:mcdarkcounts}.
\end{proof}
\end{corollary}
\begin{figure}
    \centering
    \scalebox{1}{\begin{tikzpicture}
\tikzset{
  pics/detector/.style args={#1}{
    code={
      \draw[line width=0.5mm] (0,-1) -- (0,1);
      % Draw small horizontal lines
      \draw[line width=0.5mm] (0,1) -- ++(0.1,0);
      \draw[line width=0.5mm] (0,-1) -- ++(0.1,0);
      \draw[line width=0.5mm] (0.1,1) arc (90:-90:1);
      \node[align=center] (#1) at (0,0) {};
    }
  }
}
\tikzset{
  pics/beamsplitter/.style args={#1,#2,#3}{
    code={
    % Define coordinates
    \coordinate (A) at (0.3,0.3);
    \coordinate (B) at (-0.3,-0.3);
    \coordinate (#1) at (0,0);
    % Draw line
    \draw[line width=0.3mm] (A) -- (B);
    \node[align=center,font=\fontsize{7}{7}\selectfont] at #3 {#2};
    }
  }
}
\tikzset{
  pics/pulse/.style args = {#1}{
    code ={
    \draw[#1] plot[smooth,tension=1] coordinates {(0,0) (0.3,0.2) (0.5,1) (0.7,0.2) (1,0)};
    }
  }
}
\tikzstyle{process} = [rectangle, line width=0.3mm, minimum width=2cm, minimum height=6cm, text centered, text width=1cm, draw=black]

% % 50/50 BS and HV-basis measurement
% \pic{beamsplitter={5050,(1-s)/s,(0,-0.5)}};
% \pic[above of = 5050, yshift = 2cm]{beamsplitter={HV,PBS,(-0.5,0)}};
% \pic [right of = HV,xshift = 1cm,scale = 0.6]{detector={H, $H$, black}};
% \pic [above of = HV,yshift = 1cm,scale = 0.6,rotate=90]{detector={V, $V$, black}};

% % AD-basis measurement
% \node[process,right of = 5050,xshift = 2cm](PolRot){Pol.\\Rot.};
% \pic[right of = PolRot, xshift = 2cm]{beamsplitter={AD,PBS,(0,-0.5)}};
% \pic[above of = AD, yshift = 1cm, ,scale = 0.6, rotate = 90]{detector = {A,$A$, black}};
% \pic[right of = AD, xshift = 1cm, scale = 0.6]{detector = {D,$D$, black}};

\node[process,font=\large](LinOptics){\(\Phi_{\substack{\text{lossy}\\\text{optics}}}\)};
\node[left =2cm of LinOptics,font=\Large](rho){\(\rho\)};
\pic[right=4cm of LinOptics, yshift = 2.4cm, ,scale = 0.5]{detector={d1}};

\pic[right=4cm of LinOptics, yshift = 1.2cm, ,scale = 0.5]{detector={d2}};

\pic[right=4cm of LinOptics, yshift = 0cm, ,scale = 0.5]{detector={d3}};

\pic[right=4cm of LinOptics, yshift = -2.4cm, ,scale = 0.5]{detector={d4}};
%draw lines
\draw (rho) -- (LinOptics.west);
\draw ([yshift=2.4cm]LinOptics.east) -- (d1);
\draw ([yshift=1.2cm]LinOptics.east) -- (d2);
\draw ([yshift=0cm]LinOptics.east) -- (d3);
\draw ([yshift=-2.4cm]LinOptics.east) -- (d4);
\draw (3.6,2) --(4.2,2.8); %extra loss
\node at (3.7,2.6){\(\Delta\eta\)};
%box around gamma
\draw[red] (-2,-3.8) -- (-2,3.8);
\draw[red] (6.5,-3.8) -- (6.5,3.8);
\draw[red] (-2,-3.8) -- (6.5,-3.8);
\draw[red] (-2,3.8) -- (6.5,3.8);
\node at (6.2,-4.2)[font=\Large]{\(\{\vec{\Gamma}^{\vec{\eta}\;'}\}\)};
%box around Pi
\draw[blue] (4.5,-3.3) -- (4.5,3.3);
\draw[blue] (6,-3.3) -- (6,3.3);
\draw[blue] (6,-3.3) --(4.5,-3.3);
\draw[blue] (6,3.3) --(4.5,3.3);
\node at (4.2,-3.5)[font=\Large]{\(\{\vec{\Pi}\}\)};
% rho'
\draw[dashed] (2.5,-3)--(2.5,3.3);
\node at (2.5,-3.5)[font=\Large]{\(\rho'\)};

%dots
\node at (1.8,-1)[font=\Large] {$\vdots$};
\node at (1.8,-1.4)[font=\Large] {$\vdots$};

% b modes
\node at (1.8,2.6){\(b_1\)};
\node at (1.8,1.4){\(b_2\)};
\node at (1.8,0.2){\(b_3\)};
\node at (1.8,-2.2){\(b_k\)};

\end{tikzpicture}}
\caption{Definition of \( \rho' \): the state after the linear optics and detector losses, but before the additional loss. In this example, the extra loss is applied to mode \( b_1 \).}
    \label{fig:mc_mononicity}
\end{figure}
\begin{lemma}\label{lemma:mcmonoton}
    Let $\{\vec{\Gamma}^{\vec{\eta}}\}$ be Bob's POVM in the loss-only case, where \( \vec{\eta} = [\eta_1, \ldots, \eta_k] \) represents the efficiency of each detector. Let $\{\vec{\Gamma}^{\vec{\eta}'}\}$ be the POVM in the loss-only case, where $\vec{\eta}{\;'}=[\eta_1,\; ...,\; \eta_{i-1},\;\; \Delta\eta\times\eta_i, \;\;\eta_{i+1},\; ...,\; \eta_k]$. That is, we increase the loss at the $i^{\text{th}}$ detector. Let the set $\{m_1,...,m_k\}$ represent a click pattern, where $m_l =1$ if the $l^{\text{th}}$ detector clicks and $m_l =0$ otherwise. We separate all the click patterns into two sets $\mathcal{A}$ and $\mathcal{E}$ with the following properties. 
If $c \in \mathcal{A}$ and the $i$-th element of $c$ is 1, then the modified pattern $c'$ obtained by setting the $i$-th element of $c$ to 0 also belongs to $\mathcal{A}$. 

Then, we have the following relation:
\begin{align}\label{eq:C1}
\sum_{c\in\mathcal{E}}\Gamma_{c}^{\vec{\eta}}{'} \leq \sum_{c\in\mathcal{E}}\Gamma_{c}^{\vec{\eta}}.
\end{align}
\begin{proof}
     Consider the case where the detectors have efficiencies \( \vec{\eta} \). 
We model these losses as beam splitters with the corresponding splitting ratios placed in front of perfect threshold detectors. 
For any input state \( \rho \) entering the detection setup, it first passes through the linear optics and then through the beam splitters modeling the losses. Let \( \rho' \) denote the state at this stage, i.e., immediately before the extra loss ($\Delta\eta$) on the \( i^{\text{th}} \) mode and before the perfect threshold detectors. $\rho'$ is some state depending on the input state, the linear optical setups, and the losses. See~\cref{fig:mc_mononicity} for a diagrammatic illustration.
We then have the following expression in terms of projectors onto the output modes \( b_j \):
    \begin{align}
\Trace\left(\rho\sum_{c\in\mathcal{E}}\Gamma_{c}^{\vec{\eta}}\right) = \Trace\left(\rho{'}\sum_{c\in\mathcal{E}}\Pi_{c}\right) ,\quad \Trace\left(\rho\sum_{c\in\mathcal{E}}\Gamma_{c}^{\vec{\eta}'}\right) = \Trace\left(\Phi_{\Delta\eta}(\rho{'})\sum_{c\in\mathcal{E}}\Pi_{c}\right)\quad\forall \rho,
    \end{align}
    where
    \begin{align}\label{eq:C10}
        \begin{split}
        &\sum_{c\in\mathcal{E}}\Pi_{c} = \idd-\sum_{c\in\mathcal{A}}\Pi_{c}\\
        &\Pi_c = \bigotimes_{l=1}^kP_{m_l,b^{'}_l}\\
        &P_{m_l,b^{'}_l}=\begin{cases}
            \kb{0}_{b^{'}_l}  & \text{if } m_l=0, \\
             \idd_{b^{'}_l}-\kb{0}_{b^{'}_l}  & \text{if } m_l=1, \\
        \end{cases}
        \end{split}
    \end{align}
and \( \Phi_{\Delta\eta} \) is a lossy channel with transmittance \( \Delta\eta \) acting only on the \( i^{\text{th}} \) mode. For the modes $b^{'}_l$, $b^{'}_l=b_l$ if $l\neq i$.
Note that \( \Phi_{\Delta\eta} \) acts as the identity on all other modes. Instead of handling the lossy channel \( \Phi_{\Delta\eta} \), we look at its dual map on the projectors:
\begin{align}
     \Trace\left(\Phi_{\Delta\eta}(\rho{'})\sum_{c\in\mathcal{E}}\Pi_{c}\right) = \Trace\left(\rho{'}\Phi^{\dagger}_{\Delta\eta}\left(\sum_{c\in\mathcal{E}}\Pi_{c}\right)\right).
\end{align}
Thus, if we wish to prove the monotonic relation \cref{eq:C1}, we can turn our attention to proving the monotonic relation on the projectors:
\begin{align}
\sum_{c\in\mathcal{A}}\Pi_{c}
    \leq
    \Phi_{\Delta\eta}^{\dagger}\!\left(\sum_{c\in\mathcal{A}}\Pi_{c}\right)
&\iff
\sum_{c\in\mathcal{E}}\Pi_{c}
    \geq
    \Phi_{\Delta\eta}^{\dagger}\!\left(\sum_{c\in\mathcal{E}}\Pi_{c}\right)\label{eq:C12}
\\[4pt]
&\iff
\Tr\!\left(
    \rho'\,
    \Phi_{\Delta\eta}^{\dagger}\!\left(\sum_{c\in\mathcal{E}}\Pi_{c}\right)
\right)
\leq
\Tr\!\left(
    \rho'\,
    \sum_{c\in\mathcal{E}}\Pi_{c}
\right)
\qquad \forall\, \rho'
\label{eq:C2}
\\[4pt]
&\implies
\Tr\!\left(
    \rho \sum_{c\in\mathcal{E}} \Gamma_{c}^{\vec{\eta}'}
\right)
\leq
\Tr\!\left(
    \rho \sum_{c\in\mathcal{E}} \Gamma_{c}^{\vec{\eta}}
\right)
\qquad \forall\, \rho
\label{eq:C3}
\\[4pt]
&\iff
\sum_{c\in\mathcal{E}}\Gamma_{c}^{\vec{\eta}'}
    \leq
    \sum_{c\in\mathcal{E}}\Gamma_{c}^{\vec{\eta}} .\label{eq:C4}
\end{align}

The implication in \cref{eq:C3} follows from the fact that, to establish the monotonicity relation for the POVM elements \( \Gamma_c^{\vec{\eta}} \), it is sufficient that \cref{eq:C2} holds for some state \( \rho' \). In other words, once the ordering is preserved at the level of the projectors \( \Pi_c \), the corresponding ordering for the POVM elements \( \Gamma_c \) follows for all states \( \rho \).
The lossy channel \( \Phi_{\Delta\eta} \) has the following Kraus form:
    \begin{align}
    \begin{split}
    &\Phi_{\Delta\eta}(\rho{'}) = \sum_{n=0}^{\infty}K_n\rho'K_n^{\dagger}\\
    & K_n = \idd_{b_1 ... b_{i-1}}\otimes\sum_{m=n}^{\infty}\sqrt{{m\choose n}\left(1-\Delta\eta\right)^{n} \left(\Delta\eta\right)^{m-n}} \ket{m-n}_{b^{'}_l}\bra{m}_{b_i} \otimes \idd_{b_{i+1} ... b_k}.
     \end{split}
    \end{align}
Let us consider the structure the projector $\sum_{c\in\mathcal{A}}\Pi_{c}$. Consider any click pattern \( c \in \mathcal{A} \) such that the \( i^{\text{th}} \) entry of \( c \) is \( 1 \) (i.e., \( m_i = 1 \)).  
By assumption, the pattern \( c' \), obtained from \( c \) by changing only the \( i^{\text{th}} \) entry from \( 1 \) to \( 0 \) (i.e., \( m_i = 0 \)), also lies in \( \mathcal{A} \).  
In other words, we can naturally pair each such pattern \( c \) with its corresponding pattern \( c' \). Thus, from the construction of the projector in \cref{eq:C10},
\begin{align}
    \Pi_c+\Pi_{c'} = G_{b_1..b_{i-1}} \otimes \idd_{b_i} \otimes  H_{b_1..b_{i-1}}
\end{align}
for some positive operator $ G_{b_1..b_{i-1}}$ and $H_{b_1..b_{i-1}}$.
Since the channel $\Phi_{\Delta\eta}$ 
\begin{align}
    \Phi^{\dagger}_{\Delta\eta}\left(\Pi_c+\Pi_{c'}\right) = \Pi_c+\Pi_{c'}
\end{align}
For the rest of the click patterns $c''$ in $\mathcal{A}$, we know
\begin{align}
   \Pi_{c''} = F_{b_1..b_{i-1}} \otimes \kb{0}_{b_i} \otimes  J_{b_1..b_{i-1}}
\end{align}
for some positive operators $ F_{b_1..b_{i-1}}$ and $J_{b_1..b_{i-1}}$. Then,
\begin{align}
    \Phi^{\dagger}_{\Delta\eta}\left(\Pi_{c''}\right) = 
F_{b_1..b_{i-1}} \otimes  \left(\sum_{n=0}^{\infty}\left(1-\Delta\eta\right)^n\kb{n}_{b_i}\right) \otimes  J_{b_1..b_{i-1}}\geq \Pi_{c''}.
\end{align}
Thus, we conclude that $\sum_{c\in\mathcal{A}}\Pi_{c}\leq\Phi_{\Delta\eta}^{\dagger}\left(\sum_{c\in\mathcal{A}}\Pi_{c}\right)$. By \cref{eq:C12,eq:C2,eq:C3,eq:C4}, we conclude the result.
\end{proof}
\end{lemma}
\begin{corollary}\label{cor:lmmono}
    Let \( \Gamma^{\vec{\eta}}_{\mathrm{mc}} \) be the multi-click POVM element in the loss-only case, and let  
    \( \Gamma^{\etacom}_{\mathrm{mc}} \) be the multi-click POVM element obtained by increasing the loss of each detector to the most lossy one.  
    Then,
    \begin{align}
        \lambda_{\min}\bigl( \Pign \Gamma^{\etacom}_{\mathrm{mc}} \Pign \bigr)
        \;\leq\;
        \lambda_{\min}\bigl( \Pign \Gamma^{\vec{\eta}}_{\mathrm{mc}} \Pign \bigr).
    \end{align}
\end{corollary}

\begin{proof}
It is easy to verify that the single-click and no-click patterns forming the set \( \mathcal{A} \) satisfy this condition.  
For the \( i^{\text{th}} \) detector, the only pattern in which that detector clicks is the single–\( i^{\text{th}} \)-click pattern.  
Changing the \( i^{\text{th}} \) entry of this pattern to a no-click yields the global no-click pattern, which also lies in \( \mathcal{A} \).  
Thus, the multi-click POVM element is monotonic under increasing loss in any detector. In order to show the eigenvalue inequality for the multi-click POVM elements,
    we increase the loss of each detector to \( \etacom \) one at a time.  
    By~\cref{lemma:mcmonoton}, each such replacement induces an operator inequality of the form
    \(
        \Gamma^{\vec{\eta}'}_{\mathrm{mc}}
        \leq
        \Gamma^{\vec{\eta}}_{\mathrm{mc}},
    \)
    where \( \vec{\eta}\;' \) is obtained by replacing one more detector efficiency with \( \etacom \).
    Chaining these inequalities yields
    \begin{align}
        \Gamma^{\etacom}_{\mathrm{mc}}
        \;\leq\;
        \Gamma^{\vec{\eta}}_{\mathrm{mc}}.
    \end{align}

    Let \( \rho^{*} \) be a density operator that achieves the minimum in  
    \( \lambda_{\min}\!\bigl( \Pign \Gamma^{\vec{\eta}}_{\mathrm{mc}} \Pign \bigr) \).  
    Using the operator inequality above, we obtain
    \begin{align}
        \lambda_{\min}\bigl( \Pign \Gamma^{\etacom}_{\mathrm{mc}} \Pign \bigr)
        &\leq
        \Tr\!\left( \Pign \Gamma^{\etacom}_{\mathrm{mc}} \Pign \, \rho^{*} \right)
        \leq
        \Tr\!\left( \Pign \Gamma^{\vec{\eta}}_{\mathrm{mc}} \Pign \, \rho^{*} \right)
        =
        \lambda_{\min}\bigl( \Pign \Gamma^{\vec{\eta}}_{\mathrm{mc}} \Pign \bigr),
    \end{align}
    completing the proof.
\end{proof}
Now the task reduces to computing the minimum eigenvalues of multi-click POVM elements with common loss in the $(>\pn)$-photon subspace.
Since the common losses commute with the linear optics, without loss of generality, we can view that the input state to our passive detectors first goes through a beam splitter with splitting ratio $\etacom$, followed by a lossless passive detection setup. 

Intuitively, the beam splitter (or equivalently the lossy channel) maps an $N$-photon state to a state with $L$ photons with probability ${N\choose L}(1-\etacom)^{N-L}(\etacom)^{L}$. Then, we can simply relate this probability to the minimum eigenvalue for multi-click POVM elements for the lossless detection setup to obtain the desired bound. We use the following lemma to formalize our intuition.

\begin{lemma}
    Let $\rhoN$ be a normalized state with $N$ photons. Let $\Pi_N$ be the projector onto the $N$ photon subspcae ($\Pi_N\rhoN\Pi_N=\rhoN$). Let $\Phi_{\eta}$ be a lossy channel with transmittance $\eta$. Then, the output state $\Phi_{\eta}(\rhoN)$ has the following form:
    \begin{align}
        \begin{split}
            \Phi_{\eta}(\rhoN) = \bigoplus_{L=0}^{N}p_L\rho^{(L)},
        \end{split}
    \end{align}
where $p_L = {N\choose L}(1-\eta)^{N-L}(\eta)^{L}$ and $\rho^{(L)}$ is some normalized state with $L$ photons across all modes.
\end{lemma} 
\begin{proof}
Given that the lossless detection setup consists of threshold detectors, we can assume $\Phi_{\eta}\left(\rhoN\right)$ is block diagonal in photon number:
    \begin{align}
        \begin{split}
            \Phi_{\eta}(\rhoN) = \bigoplus_{L=0}^{N}p_L\rho^{(L)},
        \end{split}
    \end{align}
where $p_L\geq0$ is some probability distribution. Specifically,
\begin{align}
\begin{split}
    p_L &= \Trace\left(  \Phi_{\eta}(\rhoN)\Pi_L\right)\\
    \Pi_L &=  \sum\limits_{\substack{m_1+...+m_k=L}} \ket{m_1}\bra{m_1}_{a_1}\otimes...\otimes\ket{m_k}\bra{m_k}_{a_k}.
   \end{split} 
\end{align}
Because of the projector $\Pi_L$, to compute $p_L$, we only care about the diagonal terms $\ket{m_i}\bra{m_i}_{a_i}$ in $ \Phi_{\eta}(\rhoN)$.
A state with $N$ photons across $k$ modes can be written in the following form:
\begin{align}
    \rhoN = \sum\limits_{\substack{m_1+...+m_k=N\\n_1+...n_k=N}} \alpha_{m_1...m_k n_1...n_k}\ket{m_1}\bra{n_1}_{a_1}\otimes...\otimes\ket{m_k}\bra{n_k}_{a_k}
\end{align}
The common loss channel acts identically on all modes. Moreover, the off-diagonal components remain off-diagonal after the channel:
\begin{align}
\Phi_{\eta}\!\left(\ket{m_i}\bra{n_i}_{a_i}\right)
= \sum_{l_i=0}^{\min\{m_i,n_i\}}
\sqrt{{m_i \choose l_i}{n_i \choose l_i}}\,
\eta^{\frac{m_i+n_i}{2}}\,(1-\eta)^{l_i}\,
\ket{m_i-l_i}\bra{n_i-l_i}.
\end{align}
Therefore, for the purpose of computing \( p_L \), we may, without loss of generality, assume that \( \rho_N \) has nonzero entries only on the diagonal:
\begin{align}
    \rhoN
    = \sum_{\substack{m_1 + \cdots + m_k = N}}
    \alpha_{m_1 \ldots m_k}\,
    \ket{m_1}\bra{m_1}_{a_1}
    \otimes \cdots \otimes
    \ket{m_k}\bra{m_k}_{a_k}.
\end{align}
The output state after the channel is
\begin{align}
\begin{split}
    \Phi_{\eta}(\rhoN)
    = \sum_{\substack{m_1 + \cdots + m_k = N}}
    \alpha_{m_1 \ldots m_k}\,
    \left(\sum_{l_1=0}^{m_1}{m_1 \choose l_1}\eta^{m_1}(1-\eta)^{l_1}\ket{m_1-l_1}\bra{m_1-l_1}_{a_1}\right)\otimes ...\\...\otimes\left(\sum_{l_k=0}^{m_k}{m_k \choose l_k}\eta^{m_k}(1-\eta)^{l_k}\ket{m_k-l_k}\bra{m_k-l_k}_{a_k}\right)
    \end{split}
\end{align}
Then,
\begin{align}
\begin{split}
    p_L 
    &= \Tr\!\left( \Phi_{\eta}(\rho_N)\, \Pi_L \right) \\[3pt]
    &= \sum_{\substack{m_1 + \cdots + m_k = N}}
        \alpha_{m_1 \ldots m_k}\,
        \eta^{N}(1-\eta)^{N-L}
        \sum_{\substack{l_1+\cdots+l_k = N-L}}
        {m_1 \choose l_1}\cdots{m_k \choose l_k}
        \\[2pt]
    &= \eta^N(1-\eta)^{N-L}\biggl(\sum_{\substack{m_1 + \cdots + m_k = N}}\alpha_{m_1 \ldots m_k}\biggr) {N\choose {N-L}}\\
        &={N\choose {N-L}}\eta^N(1-\eta)^{N-L},
\end{split}
\end{align}
where the third equality follows from applying Vandermonde's identity, and the third equality follows from the fact that 
\( \Tr(\rhoN) = \sum_{m_1 + \cdots + m_k = N} \alpha_{m_1 \ldots m_k} = 1 \), 
since \( \rhoN \) is a density operator.
\end{proof}
Therefore, we can now compute 
\begin{align}
\begin{split}
    \lm(\Pign \;{\Gamma^{\etacom}_{\mathrm{mc}}}\; \Pign) &\geq \min\limits_{N>\pn}\{\lm(\Pi_N \;{\Gamma^{\etacom}_{\mathrm{mc}}}\; \Pi_N)\}\\
    &=\min\limits_{N>\pn}\{\min\limits_{\rhoN}\Trace(\rhoN\Pi_N \;{\Gamma^{\etacom}_{\mathrm{mc}}}\; \Pi_N)\}\\
     &=\min\limits_{N>\pn}\{\min\limits_{\rhoN}\Trace(\Phi_{\etacom}(\rhoN)\;{\Gamma^{\text{perfect}}_{\mathrm{mc}}}\;)\}\\
     &=\min\limits_{N>\pn}\left\{\sum_{L=0}^{N}p_L\lm\left(\Pi_L\Gamma^{\text{perfect}}_{\mathrm{mc}}\Pi_L\right)\right\},
     \end{split}
\end{align}
where $\Gamma^{\text{perfect}}_{\mathrm{mc}}$ is the multi-click POVM for the detection setup without loss. This is easy to compute. In fact, Ref.~\cite[Theorem 1]{lars} provides a way to compute this quantity for arbitrary linear optical setups. For our case, $\lm\left(\Pi_L\Gamma^{\text{perfect}}_{\mathrm{mc}}\Pi_L\right) = 1-s^{L}-(1-s)^{L}$ for $L>1$. Thus, we can compute a lower bound on $\lm(\Pign {\Gamma}_{\mathrm{mc}} \Pign)$ for the case $M=1$ and imperfection parameters are in the ranges \( \eta_i \in [\eta^l_i,\eta^u_i] \) 
and \( \pxb \in [s^* - \theta, s^* + \theta] \):
\begin{align}\label{eq:lmformula}
    \begin{split}
        \lm(\Pigone {\Gamma}_{\mathrm{mc}} \Pigone)&\geq 
         \etamin^2\left(1-{s}^2-(1-s)^2\right)\\
         &= \etamin^2 2s(1-s)\\
       &\geq\etamin^2 2\bar{s}(1-\bar{s}),
    \end{split}
\end{align}
where \( \eta_{\min} \) is the minimum efficiency across all detectors, and 
\( \bar{s} = \tfrac{1}{2} + \max\!\left\{ \lvert s^{*} - \theta - \tfrac{1}{2} \rvert,\; \lvert s^{*} + \theta - \tfrac{1}{2} \rvert \right\} \), 
which corresponds to the value of \( s \) in the allowed range that is farthest from \( \tfrac{1}{2} \).
\section{\texorpdfstring{Bounding $a,\delta$}{}}\label{appendix:adelta}
In this appendix, we show how to choose $a$ and bound $\delta$ in the case $\pn=1$, which quantifies the efficiency mismatch between different bases.
We restate the definition of $\delta$ from \cref{deltabound}, with $\pn=1$:
\begin{align}\label{eq:D0}
\delta:=\infnormlong{\sqrt{\ftxone} \GtXneqone \sqrt{\ftxone} - a\sqrt{\ftzone} \GtXneqone \sqrt{\ftzone}},
\end{align}
where $a>0$ can be freely picked.
In particular, we upper bound $\delta$ when the imperfections are partially characterised; that is, the beam splitting ratio $s$, the efficiencies $\eta_i$ and dark count rates $d_i$ are in some ranges (\( \eta_i \in [\eta^l_i,\eta^u_i] \), 
\( d_i \in [d^l_i,d^u_i] \), 
and \( \pxb \in [s^* - \theta, s^* + \theta] \)).
Computing $\delta$ directly from the definition is a challenging task. Thus, we first simplify the expression to make the task easier. We list the simplifications to provide an overview of the rest of the appendix before explaining the details.
\begin{enumerate}
    \item 
    % Both of the terms ($\sqrt{\ftxone} \GtXneqone \sqrt{\ftxone}$ and $a\sqrt{\ftzone} \GtXneqone \sqrt{\ftzone}$) have the sandwich structure with a common part $\GtXneqone$. 
    We use the triangle inequality to remove the dependence of $\delta$ on the common term $\GtXneqone$ and obtain a bound in terms of $\sqrt{\ftxone}$ and $\sqrt{\ftzone}$ only:
\begin{align}\label{eq:D1}
    \delta \leq (1+\sqrt{a})\infnormlong{\sqrt{\ftxone}-\sqrt{a}\sqrt{\ftzone}}.
\end{align}
    \item For ease of computation \footnote{We choose to further simplify the bound by removing the square root structure of $\ftx$ and $\ftz$. One can explicit compute the bound \cref{eq:D1} as it might be tighter than the current approach in this work. }, we reduce the bound in \cref{eq:D1} so that it depends only on ${\ftxone}$ and ${\ftzone}$ as
    \begin{align}\label{eq:D2}
         \delta \leq \frac{(1+\sqrt{a})}{2}\infnorm{\ftxone-a\ftzone}\;\times\max\biggl\{\frac{1}{\sqrt{\lm(\ftxone)}},\frac{1}{\sqrt{a\lm(\ftzone)}}\biggr\}.
    \end{align}
    \item The bound now depends on the joint POVM elements. We reduce the bound in \cref{eq:D2} to a simpler expression that depends on Bob's POVM elements and Alice's basis choice probability ($\pza,\pxa$) as follows
    \begin{align}\label{eq:D3}
        \delta \leq &\frac{(1+\sqrt{a})}{2}\; \pxa\infnormlong{\sqrt{\POVMBcon}^+}^2\infnormlong{\POVMBX-a\frac{\pza}{\pxa}\POVMBZ}\notag \\&\times\max\left\{\frac{1}{\pxa \lm\biggl(\sqrt{\POVMBcon}^+\POVMBX \sqrt{\POVMBcon}^+\biggr)},\frac{1}{a\pza \lm\biggl(\sqrt{\POVMBcon}^+\POVMBZ \sqrt{\POVMBcon}^+\biggr)}\right\}.
    \end{align}
    We define the coarse-graining of Bob's POVM elements $\POVMBcon,\POVMBX,\POVMBZ$ later in this section. 
    % Intuitively this is to be expected since all the imperfections we consider here are on Bob's side.
    \item We then bound the terms in \cref{eq:D3} separately:
    \begin{itemize}
        \item $\infnormlong{\POVMBX-a\frac{\pza}{\pxa}\POVMBZ}$
        \item $\infnorm{\sqrt{\POVMBcon}^+}^2$
        \item The minimum eigenvalues $\lm\left(\sqrt{\POVMBcon}^+(\POVMBZ/\POVMBX)\sqrt{\POVMBcon}^+\right)$
    \end{itemize}
\end{enumerate}
\subsection{\texorpdfstring{Derivations of \cref{eq:D1} to \cref{eq:D3}}{}}
We first use the triangle inequality and the submultiplicativity of the infinity-norm to remove the dependence of $\delta$ on the common term $\GtXneqone$ and derive \cref{eq:D1} as follows
\begin{align}
    \delta =&\infnormlong{\sqrt{\ftxone} \GtXneqone\sqrt{\ftxone} - a\sqrt{\ftxone} \GtXneqone\sqrt{\ftzone}}\label{eq:D4} \\
    =& \infnormlong{\sqrt{\ftxone}\GtXneqone\sqrt{\ftxone} - \sqrt{a}\sqrt{\ftxone} \GtXneqone\sqrt{\ftzone} +\sqrt{a} \sqrt{\ftxone} \GtXneqone\sqrt{\ftzone} - a\sqrt{\ftzone} \GtXneqone\sqrt{\ftzone}}\label{eq:D5}\\
    \leq& \infnormlong{\sqrt{\ftxone}\GtXneqone}\infnormlong{\sqrt{\ftxone}-\sqrt{a}\sqrt{\ftzone}}+\sqrt{a}\infnormlong{\sqrt{\ftxone}-\sqrt{a}\sqrt{\ftzone}}\infnormlong{\GtXneqone\sqrt{\ftzone}}\label{eq:D6}\\
    \leq& (1+\sqrt{a})\infnormlong{\sqrt{\ftxone}-\sqrt{a}\sqrt{\ftzone}}\label{eq:D7},
\end{align}
where \cref{eq:D7} follows from the fact that 
$\infnormlong{\sqrt{\ftxone}\GtXneqone} \leq 1$ and $\infnormlong{\GtXneqone\sqrt{\ftzone}}\leq 1$.
\cref{eq:D2} follows directly from \cite[Eq.(X.46)]{bhatia2013matrix}:
\begin{align}\label{eq:D8}
\begin{split}
    \infnorm{A^r-B^r}\leq rt^{r-1}\infnorm{A-B}\\
    t = \min\{\lm(A),\lm(B)\},
    \end{split}
\end{align}
and identifying 
\begin{equation}
    A \leftrightarrow \ftxone,\;\; B \leftrightarrow \ftzone,\;\; r \leftrightarrow \half.
\end{equation}

The last simplifying step comes from explicitly describing the joint POVM elements defined in \cref{eq:definePOVMA,eq:definejoint,eq:definefx}, and defining convenient notation to represent the bound as follows. First, the joint POVM elements are given as
   \begin{align}
        \begin{split}
        & \GoneXeq = \pxa\kb{+}\otimes\Gamma^{(B,1)}_{(X,0)} + \pxa\kb{-}\otimes\Gamma^{(B,1)}_{(X,1)}\\
        & \GoneXneq = \pxa\kb{+}\otimes\Gamma^{(B,1)}_{(X,1)} + \pxa\kb{-}\otimes\Gamma^{(B,1)}_{(X,0)}\\
        & \GoneZeq = \pza\kb{0}\otimes\Gamma^{(B,1)}_{(Z,0)} + \pza\kb{1}\otimes\Gamma^{(B,1)}_{(Z,1)}\\
        & \GoneZneq = \pza\kb{0}\otimes\Gamma^{(B,1)}_{(Z,1)} + \pza\kb{1}\otimes\Gamma^{(B,1)}_{(Z,0)}\\[10pt]
        &\fxone = \GoneXeq + \GoneXneq = \idd_A\otimes\pxa\big(\Gamma^{(B,0)}_{(X,0)}+\Gamma^{(B,1)}_{(X,1)}\big)\\
        &\fzone = \GoneZeq + \GoneZneq=\idd_A\otimes\pza\big(\Gamma^{(B,0)}_{(Z,0)}+\Gamma^{(B,1)}_{(Z,1)}\big)\\
        & \Fonescone = \fxone + \fzone = \idd_A\otimes \biggl(\pxa\big(\Gamma^{(B,0)}_{(X,0)}+\Gamma^{(B,1)}_{(X,1)}\big)+\pza\big(\Gamma^{(B,0)}_{(Z,0)}+\Gamma^{(B,1)}_{(Z,1)}\big)\biggr),
        \end{split}
    \end{align}
where $\Gamma^{(B,1)}_{(X/Z,0/1)}$ corresponds to Bob's POVM element $\Gamma^{B}_{(X/Z,0/1)}$ in the single-photon subspace.
For notational simplicity, define
\begin{align}\label{eq:definePOVMBcon}
    \begin{split}
        &\POVMBX:=\Gamma^{(B,1)}_{(X,0)}+\Gamma^{(B,1)}_{(X,1)}\\
        &\POVMBZ:=\Gamma^{(B,1)}_{(Z,0)}+\Gamma^{(B,1)}_{(Z,1)}\\
        &\POVMBcon:=\pxa\POVMBX+\pza\POVMBZ.
    \end{split}
\end{align}
 With this simplified notation, we write
\begin{align}
    \begin{split}
        \ftxone &= \sqrt{\idd_A\otimes\POVMBcon}^+ \big(\idd_A\otimes\pxa\POVMBX\big)\sqrt{\idd_A\otimes\POVMBcon}^+ \\
        &=\idd_A\otimes \pxa \sqrt{\POVMBcon}^+\POVMBX \sqrt{\POVMBcon}^+.
    \end{split}
\end{align}
Similarly, we simply $\ftzone$ to
\begin{align}
     \ftzone =\idd_A\otimes \pza \sqrt{\POVMBcon}^+\POVMBZ \sqrt{\POVMBcon}^+.
\end{align}
Thus, we simplify $\infnorm{\ftxone-a\ftzone}$ from \cref{eq:D2}:
\begin{align}\label{eq:D10}
    \begin{split}
        \infnorm{\ftxone-a\ftzone} &= \infnormlong{\idd_A\otimes \pxa \sqrt{\POVMBcon}^+\POVMBX \sqrt{\POVMBcon}^+-a\idd_A\otimes \pza \sqrt{\POVMBcon}^+\POVMBZ \sqrt{\POVMBcon}^+}\\
        &=\infnormlong{\pxa \sqrt{\POVMBcon}^+\POVMBX \sqrt{\POVMBcon}^+-a\pza \sqrt{\POVMBcon}^+\POVMBZ \sqrt{\POVMBcon}^+}\\
        &=\pxa\infnormlong{\sqrt{\POVMBcon}^+\big(\POVMBX-a\frac{\pza}{\pxa}\POVMBZ\big)\sqrt{\POVMBcon}^+}\\
        &\leq\pxa\infnormlong{\sqrt{\POVMBcon}^+}^2\infnormlong{\POVMBX-a\frac{\pza}{\pxa}\POVMBZ},
    \end{split}
\end{align}
where we used the submultiplicativity of the infinity-norm to obtain the last inequality. Furthermore, we can also simplify the expressions for $\frac{1}{\sqrt{\lm(\ftxone)}}$ and $\frac{1}{\sqrt{a\lm(\ftzone)}}$ as
\begin{align}\label{eq:D11}
    \begin{split}
        \lm(\ftxone) &= \lm\biggl(\idd_A\otimes \pxa \sqrt{\POVMBcon}^+\POVMBX \sqrt{\POVMBcon}^+\biggr)\\
        & =\pxa \lm\biggl(\sqrt{\POVMBcon}^+\POVMBX \sqrt{\POVMBcon}^+\biggr),
    \end{split}
\end{align}
and similarly
\begin{align}\label{eq:D12}
    \lm(a\ftzone) = a\pza \lm\biggl(\sqrt{\POVMBcon}^+\POVMBZ \sqrt{\POVMBcon}^+\biggr).
\end{align}
Combining \cref{eq:D2,eq:D10,eq:D11,eq:D12} gives us \cref{eq:D3}.

\subsection{Bounding the individual terms}
In the remaining subsection, we show the bounds on 
\begin{itemize}[topsep=2pt, itemsep=2pt, parsep=2pt]
    \item \( \left\| \POVMBX - a \frac{\pza}{\pxa} \POVMBZ \right\|_{\infty} \)
    \item \( \left\| \sqrt{\POVMBcon}^{+} \right\|_{\infty} \)
    \item \( \lambda_{\min}\left( \sqrt{\POVMBcon}^{+} \left( {\POVMBX}/{\POVMBZ} \right) \sqrt{\POVMBcon}^{+} \right) \)
\end{itemize}
separately. The computations in this subsection depend on the explicit model of the POVM elements. Thus, before deriving the requisite bounds, we define some notation and explicitly state Bob's POVM elements. Let us denote
\begin{align}
    \begin{split}
       \etazzero = \eta_1\quad
        \etazone = \eta_2\quad
         \etaxzero = \eta_3\quad
          \etaxone = \eta_4\quad\\
          \dzzero = d_1\quad
          \dzone = d_2\quad
          \dxzero = d_3\quad
          \dxone = d_4\quad
    \end{split}
\end{align}
so that it is easier to track the efficiencies and dark count rates for different detectors. Recall that the basis-choice beam splitting ratio is $\pxb$.
Then, Bob's POVM elements are the following:
\begin{align}\label{eq:writedownPOVMB}
    \begin{split}
    &\POVMBZ = \\
     &(\pzb)\biggl((1-\etazzero)(\dzzero+\dzone-2\dzzero\dzone)(1-\dxzero)(1-\dxone)
     \\&\quad\quad\quad\;\;+{\etazzero(1-\dzone)(1-\dxzero)(1-\dxone)}\biggr)\kb{0}\\
     +&(\pzb)\biggl((1-\etazone)(\dzzero+\dzone-2\dzzero\dzone)(1-\dxzero)(1-\dxone)
     \\&\quad\quad\quad\;\;+{\etazone(1-\dzzero)(1-\dxzero)(1-\dxone)}\biggr)\kb{1}\\
     +&\pxb\biggl((1-\etaxzero)(\dzzero+\dzone-2\dzzero\dzone)(1-\dxzero)(1-\dxone)\biggr)\kb{+}\\
     +&\pxb\biggl((1-\etaxone)(\dzzero+\dzone-2\dzzero\dzone)(1-\dxzero)(1-\dxone)\biggr)\kb{-}\\[10pt]  
    &\POVMBX =\\
    &(\pzb)\biggl((1-\etazzero)(1-\dzzero)(1-\dzone)(\dxzero+\dxone-2\dxzero\dxone)\biggr)\kb{0}\\
    +&(\pzb)\biggl((1-\etazone)(1-\dzzero)(1-\dzone)(\dxzero+\dxone-2\dxzero\dxone)\biggr)\kb{1}\\
    +&\pxb\biggl((1-\etaxzero)(1-\dzzero)(1-\dzone)(\dxzero+\dxone-2\dxzero\dxone)
    \\&\quad+{\etaxzero(1-\dzzero)(1-\dzone)(1-\dxone)}\biggr)\kb{+}\\
    +&\pxb\biggl((1-\etaxone)(1-\dzzero)(1-\dzone)(\dxzero+\dxone-2\dxzero\dxone)
    \\&\quad+{\etaxone(1-\dzzero)(1-\dzone)(1-\dxzero)}\biggr)\kb{-}
    \end{split}
\end{align}
We remind the reader that for our partially characterised detectors, we have
\( \eta_{(i,j)} \in [\eta^l_{(i,j)},\eta^u_{(i,j)}] \), 
\( d_{(i,j)}\in [d^l_{(i,j)},d^u_{(i,j)}] \), 
and \( \pxb \in [s^* - \theta, s^* + \theta] \).
We define
\begin{align}\label{eq:D9}
    \begin{split}
    % &\bar{\eta} = \max\limits_{\substack{i\in\{X,Z\} \\ j\in\{0,1\}}}\{\eta_{i,j}\};\\
    &\etazmin:=\min\limits_{\substack{j\in\{0,1\}}}\{\eta_{(Z,j)}\};\;
    \etaxmin:=\min\limits_{\substack{j\in\{0,1\}}}\{\eta_{(X,j)}\};\;
   \\
   &\etazmax:=\max\limits_{\substack{j\in\{0,1\}}}\{\eta_{(Z,j)}\};\;
   \etaxmax:=\min\limits_{\substack{j\in\{0,1\}}}\{\eta_{(X,j)}\};\;
  \\
    &\etamin:=\min\limits_{\substack{i\in\{X,Z\} \\ j\in\{0,1\}}}\{\eta_{(i,j)}\};\;
    \etamax:=\max\limits_{\substack{i\in\{X,Z\} \\ j\in\{0,1\}}}\{\eta_{(i,j)}\};\\
    &\dmin:=\min\limits_{\substack{i\in\{X,Z\} \\ j\in\{0,1\}}}\{d_{(i,j)}\};\;
    \dmax:=\max\limits_{\substack{i\in\{X,Z\} \\ j\in\{0,1\}}}\{d_{(i,j)}\},
    \end{split}
\end{align}
where the above minimization or maximization is performed over the ranges \( [\eta^l_{(i,j)},\eta^u_{(i,j)}],
 [d^l_{(i,j)},d^u_{(i,j)}] \). We further define the following real numbers $l_Z$,$l_X$,$u_Z$ and $u_X$ such that they satisfy the following operator inequalities
 \begin{align}\label{eq:definelz}
    \begin{split}
    l_Z\idd_2:=& \biggl[\biggl((1-s^*-\theta)(1-\etazmax)+(s^*-\theta)(1-\etaxmax)\biggr)(2\dmin-2\dmax^2)(1-\dmax)^2\\&+(1-s^*-\theta)\etazmin(1-\dmax)^3\biggr]\idd_2 \\ \leq& \;\POVMBZ\\
   l_X\idd_2:=& \biggl[\biggl((1-s^*-\theta)(1-\etazmax)+(s^*-\theta)(1-\etaxmax)\biggr)(2\dmin-2\dmax^2)(1-\dmax)^2\\&+(s^*-\theta)\etaxmin(1-\dmax)^3\biggr]\idd_2
    \\ \leq & \;\POVMBX\\
    u_Z\idd_2:=& \biggl[\biggl((1-s^*+\theta)(1-\etazmin)+(s^*+\theta)(1-\etaxmin)\biggr)(2\dmax-2\dmin^2)(1-\dmin)^2\\&+(1-s^*+\theta)\etazmax(1-\dmin)^3\biggr]\idd_2 \\\geq&\;\POVMBZ\\
   u_X\idd_2:=& \biggl[\biggl((1-s^*+\theta)(1-\etazmin)+(s^*+\theta)(1-\etaxmin)\biggr)(2\dmax-2\dmin^2)(1-\dmin)^2\\&+(s^*+\theta)\etaxmax(1-\dmin)^3\biggr]\idd_2 \\ \geq&\;\POVMBX,
    \end{split}
\end{align}  
which can be seen straightforwardly from the definitions in \cref{eq:writedownPOVMB} and \cref{eq:D9}.

% \subsection{Bounding $\bm{\infnormlong{\POVMBX - a\frac{\pza}{\pxa}\POVMBZ}}$}
\subsubsection{\texorpdfstring{Bounding $\bm{\infnormlong{\POVMBX - a\frac{\pza}{\pxa}\POVMBZ}}$}{}}
To bound $\bm{\infnormlong{\POVMBX - a\frac{\pza}{\pxa}\POVMBZ}}$,
we need to fix our choice of $a$ for which we make the heuristic choice 
\begin{align}\label{eq:definea}
    a = \frac{\pxa\pxb}{\pza(\pzb)}.
\end{align}
The description of Bob's POVM elements given in \cref{eq:writedownPOVMB} has many terms. Thus, we simplify the calculations by using the triangle inequality as
\begin{align}\label{eq:definedelta1}
    \begin{split}
         &\infnorm{\POVMBX - \frac{\pxb}{(\pzb)}\POVMBZ} \leq \Delta_1 +\Delta_2\eqqcolon\zeta,
    \end{split}
\end{align}
where $\Delta_1$ and $\Delta_2$ are
\begin{align}
        \Delta_1 &=\pxb\norm{
        \begin{aligned}
        {\etaxzero(1-\dzzero)(1-\dzone)(1-\dxone)\kb{+}+\etaxone(1-\dzzero)(1-\dzone)(1-\dxzero)\kb{-}}\\
    -(\etazzero(1-\dxzero)(1-\dxone)(1-\dzone)\kb{0}+\etazone(1-\dxzero)(1-\dxone)(1-\dzzero)\kb{1})
    \end{aligned}}_\infty\\[10pt]
    \Delta_2 &= \norm{\begin{aligned}
    &(\pzb)\biggl((1-\etazzero)(1-\dzzero)(1-\dzone)(\dxzero+\dxone-2\dxzero\dxone)\biggr)\kb{0}\\+&(\pzb)\biggl((1-\etazone)(1-\dzzero)(1-\dzone)(\dxzero+\dxone-2\dxzero\dxone)\biggr)\kb{1}\\ -& \pxb\biggl((1-\etazzero)(\dzzero+\dzone-2\dzzero\dzone)(1-\dxzero)(1-\dxone))\biggr)\kb{0}\\ -& \pxb\biggl((1-\etazone)(\dzzero+\dzone-2\dzzero\dzone)(1-\dxzero)(1-\dxone)\biggr)\kb{1}\\ +&\pxb\biggl((1-\etaxzero)(1-\dzzero)(1-\dzone)(\dxzero+\dxone-2\dxzero\dxone)\biggr)\kb{+}\\ +&
    \pxb\biggl((1-\etaxone)(1-\dzzero)(1-\dzone)(\dxzero+\dxone-2\dxzero\dxone)\biggr)\kb{-}\\ -&
\frac{{\pxb}^2}{\pzb}\biggl((1-\etaxzero)(\dzzero+\dzone-2\dzzero\dzone)(1-\dxzero)(1-\dxone)\biggr)\kb{+}\\-& 
\frac{{\pxb}^2}{\pzb}\biggl((1-\etaxone)(\dzzero+\dzone-2\dzzero\dzone)(1-\dxzero)(1-\dxone)\biggr)\kb{-}
    \end{aligned}}_\infty.
\end{align}

We bound $\Delta_1$ by adding and subtracting $\etamax(1-\dmin)^3\idd_2$ (in other words, using $\infnorm{A-B} = \infnorm{A-P+P-B} \leq \infnorm{A-P} + \infnorm{P-B}$) which results in
\begin{align}
     \begin{split}
    \Delta_1 &\leq \pxb\biggl(2\etamax(1-\dmin)^3-(\etaxmin+\etazmin)(1-\dmax)^3\biggr)\\
    &\leq (s^*+\theta)\biggl(2\etamax(1-\dmin)^3-(\etaxmin+\etazmin)(1-\dmax)^3\biggr).
     \end{split}
\end{align}
A routine, but tedious calculation with multiple uses of the triangle inequality gives us the following bound on $\Delta_2$,
\begin{align}
     \begin{split}
    \Delta_2 \leq& \biggl( \pxb(1-\etamin)+\frac{{\pxb}^2}{\pzb}(1-\etamin) \biggr) \biggl( \dzzero+\dzone-2\dzzero\dzone \biggr) (1-\dxzero)(1-\dxone)\\
    &+\biggl((\pzb)(1-\etamin)+\pxb(1-\etamin)\biggr)(1-\dzzero)(1-\dzone)\biggl(\dxzero+\dxone-2\dxzero\dxone\biggr)\\
    \leq& \frac{1}{\pzb}(1-\etamin)\biggl(2\dmax-2\dmin^2 \biggr) (1-\dmin)^2\\
    \leq& \frac{1}{1-s^*-\theta}(1-\etamin)\biggl(2\dmax-2\dmin^2 \biggr) (1-\dmin)^2.
     \end{split}
\end{align}
\subsubsection{\texorpdfstring{Bounding $\bm{\infnormlong{\sqrt{\POVMBcon}^+}}^2$}{}}
To bound $\bm{\infnormlong{\sqrt{\POVMBcon}^+}}^2$, we 
recall from \cref{eq:definePOVMBcon} that 
\begin{align}
    {\POVMBcon} =&  {\pxa \POVMBX+\pza \POVMBZ}.
\end{align} 
We use the spectral decomposition of $\POVMBcon$ to obtain
\begin{align}\label{eq:reducetolm}
    \infnorm{\sqrt{\POVMBcon}^+} = \frac{1}{\min\{\sqrt{\lambda(\POVMBcon)}\}},
\end{align}
where $\lambda(\POVMBcon)$ denotes the set of positive eigenvalues of $\POVMBcon$. Hence, the problem reduces to finding the smallest non-zero eigenvalue of $\POVMBcon$.
\begin{lemma}\label{lmlemma}
    Let $A$ and $B$ be two finite dimensional matrices.
    If $A\leq B$, $\lm(A)\leq\lm(B)$, where $\lm(\cdot)$ is the minimal eigenvalue.
\end{lemma}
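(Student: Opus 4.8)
The plan is to prove this via the Rayleigh--Ritz (variational) characterisation of the minimum eigenvalue, which converts the operator inequality $A\leq B$ directly into the scalar eigenvalue inequality. The statement is precisely the special case of Weyl's monotonicity theorem for the smallest eigenvalue, so the argument will be short once the right characterisation is in hand.

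First I would make explicit the standing assumption that $A$ and $B$ are Hermitian, which is implicit both in writing $A\leq B$ and in speaking of ``the minimal eigenvalue''; in our application $\POVMBcon$ and the operators bounding it are Hermitian (in fact positive semidefinite), so this costs nothing. I would then recall that $A\leq B$ means $B-A$ is positive semidefinite, i.e. $\langle x, A x\rangle \leq \langle x, B x\rangle$ for every vector $x$.

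Next I would invoke the formula $\lm(M)=\min_{\|x\|=1}\langle x, M x\rangle$ for Hermitian $M$, and let $x^\star$ be a normalised eigenvector of $B$ attaining the minimum, so that $\langle x^\star, B x^\star\rangle=\lm(B)$. The conclusion then follows by chaining three facts: that $x^\star$ is merely one admissible test vector for $A$, giving $\lm(A)\leq \langle x^\star, A x^\star\rangle$; that $\langle x^\star, A x^\star\rangle \leq \langle x^\star, B x^\star\rangle$ by $A\leq B$; and that $\langle x^\star, B x^\star\rangle=\lm(B)$ by the choice of $x^\star$. Composing these yields $\lm(A)\leq\lm(B)$.

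The proof itself has no genuine obstacle; the main point requiring care is the interpretation of $\lm$ in the downstream application. Monotonicity holds for the \emph{true} minimum eigenvalue (including a possible zero), but it can fail for the smallest \emph{non-zero} eigenvalue: a diagonal example such as $A=\diag(0,1)\leq\diag(\tfrac12,1)=B$ shows the smallest non-zero eigenvalue can increase under $A\leq B$. This matters because the quantity actually needed is $\infnorm{\sqrt{\POVMBcon}^+}$, governed via \cref{eq:reducetolm} by the smallest non-zero eigenvalue. The resolution I would flag is that in our setting $\POVMBcon$ is lower bounded by a strictly positive multiple of $\idd_2$ (cf.\ \cref{eq:definelz}), so it is full rank, all its eigenvalues are strictly positive, and the distinction between ``minimal'' and ``minimal non-zero'' disappears --- making \cref{lmlemma} applicable exactly as stated.
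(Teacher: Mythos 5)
Your proof is correct and follows essentially the same route as the paper: the paper writes $\lm(A)=\min_{\rho}\Tr(A\rho)\leq\Tr(A\rho')\leq\Tr(B\rho')=\lm(B)$ with $\rho'$ the minimizer for $B$, which is the Rayleigh--Ritz argument phrased over density matrices rather than unit vectors. Your closing remark that monotonicity can fail for the smallest \emph{non-zero} eigenvalue, and that the application is saved because $\POVMBcon$ is bounded below by a positive multiple of $\idd_2$, is a valid and worthwhile point of care that the paper does not spell out.
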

\begin{proof}
    $\lm(A) = \min\limits_{\rho}\Trace{(A\rho)} \leq \Trace{(A\rho')} \leq \Trace{(B\rho')}
    = \min\limits_{\rho}\Trace{(B\rho)} = \lm(B)$, where $\rho'$ achieves the minimum eigenvalue of B.
\end{proof}
Recall from \cref{eq:definelz} that $l_Z\idd_2 \leq \POVMBZ$ and  $l_X\idd_2 \leq \POVMBX$. Thus,
\begin{align}
    \begin{split}
    {\POVMBcon}=\pxa\POVMBX+\pza\POVMBZ \geq \; (\pxa l_X+\pza l_Z)\idd_2 \implies& \lm(\POVMBcon) \geq \pxa l_X+\pza l_Z
    \end{split}
\end{align}
Thus, by \cref{eq:reducetolm}
\begin{align}\label{eq:definek}
 \infnorm{\sqrt{\POVMBcon}^+}^2 \leq k^2 :=\begin{cases}
     (\pxa l_X+\pza l_Z)^{-1} &\text{ if } \pxa l_X+\pza l_Z>0\\
      \infty &\text{ if } \pxa l_X+\pza l_Z\leq0\\
 \end{cases}
\end{align}
Note that in most realistic cases, $\pxa l_X+\pza l_Z>0$.

\subsubsection{\texorpdfstring{Bounding $\bm{\lm(\sqrt{\POVMBcon}^+  (\POVMBX/\POVMBZ) \sqrt{ \POVMBcon}^+)}$}{Text}}
First, we check if $l_Z + l_X= 0$. If this is the case simply use the trivial lower bound,\\ $\lm(\sqrt{\POVMBcon}^+  (\POVMBX/\POVMBZ) \sqrt{ \POVMBcon}^+) \geq 0$. If $l_Z + l_X >0$, then we have $\lm(\POVMBcon) > 0$. This is the typical case, and we bound $\lm(\sqrt{\POVMBcon}^+  (\POVMBX/\POVMBZ) \sqrt{ \POVMBcon}^+)$ using the following method.

Recall from \cref{eq:definelz} that
\begin{align}
    l_Z\idd_2\leq \POVMBZ \quad\text{and}\quad  l_X\idd_2\leq \POVMBX.
\end{align}
Therefore,
\begin{align}
    \begin{split}
   \sqrt{\POVMBcon}^+\POVMBZ \sqrt{ \POVMBcon}^+\geq \sqrt{\POVMBcon}^+ &l_Z\idd_2 \sqrt{ \POVMBcon}^+=l_Z{\POVMBcon}^+\\
   \sqrt{\POVMBcon}^+\POVMBX \sqrt{ \POVMBcon}^+\geq \sqrt{\POVMBcon}^+ &l_X\idd_2 \sqrt{ \POVMBcon}^+=l_X{\POVMBcon}^+
\end{split}
\end{align}
where we use the fact that $ \sqrt{\POVMBcon}^+  (A-B) \sqrt{ \POVMBcon}^+ \geq 0$ if  $A-B \geq 0$.
Thus, applying \cref{lmlemma} directly gives
\begin{align}
    \begin{split}
            \lm( \sqrt{\POVMBcon}^+\POVMBZ \sqrt{ \POVMBcon}^+) \geq l_Z\lm({\POVMBcon}^+) =& l_Z\frac{1}{\max\{\lambda(\POVMBcon)\}} = l_Z\infnorm{\POVMBcon}^{-1}\\
    \lm( \sqrt{\POVMBcon}^+\POVMBX \sqrt{ \POVMBcon}^+) \geq l_X\lm({\POVMBcon}^+) =& l_X\frac{1}{\max\{\lambda(\POVMBcon)\}} = l_X\infnorm{\POVMBcon}^{-1}.
    \end{split}
\end{align}
We now use the following lemma to bound $\infnorm{\POVMBcon}^{-1}$.
\begin{lemma}Let $A$ and $B$ be finite dimensional matrices.
    If $A \geq B$, then $\infnorm{A}\geq\infnorm{B}$
    \begin{proof}
        $\infnorm{A} = \max\limits_{\rho}\Trace(\rho A) \geq \Trace(\rho^* A) \geq \Trace(\rho^* B) =  \max\limits_{\rho}\Trace(\rho B)=\infnorm{B}$ where $\rho^*$ achieves the maximum for $\Trace(\rho B)$.
    \end{proof}
\end{lemma}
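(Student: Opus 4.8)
The plan is to prove the claim through the variational characterization of the operator norm for positive semidefinite operators, which is precisely the characterization already used implicitly in the preceding lemmas of this appendix. Since every operator to which this lemma is applied (e.g.\ $\POVMBcon$ and its coarse-grainings) is Hermitian and positive semidefinite, the infinity-norm $\infnorm{A}$ coincides with the largest eigenvalue $\lambda_{\max}(A)$, which in turn admits the representation $\lambda_{\max}(A)=\max_{\rho}\Trace(\rho A)$, where the maximization ranges over all density operators $\rho$ (positive semidefinite with $\Trace(\rho)=1$). Establishing this representation is the first step; it follows from the spectral theorem, since the maximum of $\Trace(\rho A)$ over density operators is attained by taking $\rho$ to be the rank-one projector onto a top eigenvector of $A$, and by convexity the maximum over mixed states equals the maximum over pure states.

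Next I would fix a density operator $\rho^*$ that attains the maximum for $B$, so that $\Trace(\rho^* B)=\infnorm{B}$; concretely, $\rho^*$ is the projector onto a top eigenvector of $B$. The hypothesis $A \geq B$ means precisely that $A-B$ is positive semidefinite, and hence $\Trace(\rho^*(A-B))\geq 0$, i.e.\ $\Trace(\rho^* A)\geq \Trace(\rho^* B)$. Chaining this with the variational bound gives
\begin{align}
\infnorm{A}=\max_{\rho}\Trace(\rho A)\ \geq\ \Trace(\rho^* A)\ \geq\ \Trace(\rho^* B)=\infnorm{B},
\end{align}
where the first inequality simply uses that $\rho^*$ is one feasible choice in the maximization defining $\infnorm{A}$. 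This completes the argument, and no further estimates are required.

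The only genuine subtlety -- and the step I would flag as the main obstacle -- is the restriction to positive semidefinite (or at least suitably Hermitian) operators needed for the identity $\infnorm{A}=\max_{\rho}\Trace(\rho A)$ to hold. For a general matrix the infinity-norm is the largest singular value rather than the largest eigenvalue, and $\max_{\rho}\Trace(\rho A)$ controls only the top eigenvalue; the two quantities can differ, for instance when $A$ has a negative eigenvalue of large magnitude. In our setting this gap never arises, because all relevant operators are built from POVM elements and are therefore positive semidefinite. I would accordingly state positive semidefiniteness as the standing assumption under which the lemma is invoked, after which the proof reduces to the short chain of inequalities above.
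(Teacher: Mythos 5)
Your proposal is correct and follows essentially the same variational argument as the paper: pick $\rho^*$ attaining the maximum for $B$, use $A\geq B$ to get $\Trace(\rho^*A)\geq\Trace(\rho^*B)$, and bound $\infnorm{A}$ from below by the feasible choice $\rho^*$. You also rightly flag that the identity $\infnorm{A}=\max_\rho\Trace(\rho A)$ requires positive semidefiniteness (the lemma as literally stated for arbitrary matrices is false, e.g.\ $A=0\geq B=-2\idd$), a hypothesis the paper leaves implicit but which holds for all operators to which the lemma is applied.
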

% If $\POVMBcon$ has eigenvalue 0, then we have the trivial bound ($ \lm( \sqrt{\POVMBcon}^+\POVMBZ \sqrt{ \POVMBcon}^+) \geq0$). This is useless. But this also reflects that the imperfections in the detectors are very bad.
Recall \cref{eq:definelz} that
\begin{align}
   \POVMBZ \leq u_Z\idd_2 \quad  \POVMBX \leq u_X\idd_2.
\end{align}
Thus,
\begin{align}
    \begin{split}
   &\POVMBcon \leq (\pxa u_X + \pza u_Z) \idd_2\\
    &\infnorm{\POVMBcon } \leq \pxa u_X + \pza u_Z \\\implies &\infnorm{\POVMBcon }^{-1} \geq (\pxa u_X + \pza u_Z )^{-1}.
    \end{split}
\end{align}
Again, typically, $\pxa u_X + \pza u_Z\geq0$. If $\pxa u_X + \pza u_Z\leq0$, we obtain a trivial bound.
Thus,
\begin{align}
    \begin{split}
    \lm( \sqrt{\POVMBcon}^+\POVMBZ \sqrt{ \POVMBcon}^+) \geq& l_Z\infnorm{\POVMBcon}^{-1}\geq l_Z (\pxa u_X + \pza u_Z)^{-1}\\
    \lm( \sqrt{\POVMBcon}^+\POVMBX \sqrt{ \POVMBcon}^+) \geq& l_X\infnorm{\POVMBcon}^{-1}
    \geq l_X(\pxa u_X + \pza u_Z)^{-1}.
    \end{split}
\end{align}

\subsection{Summary of bounds}

Finally, we summarize the computations in this appendix.
As stated in \cref{eq:definea}, we choose $a$ to be 
\begin{align}
    a = \frac{\pxa\pxb}{\pza(\pzb)}.
\end{align}
to compute a bound on $\delta$:
\begin{align}
   \delta \leq (1+\sqrt{a})\frac{\pxa\zeta k^2}{2}    \max\biggl\{\sqrt{\frac{\pxa u_X + \pza u_Z}{{\pxa}l_Z}},\sqrt{\frac{\pxa u_X + \pza u_Z}{a{\pza}l_X}}\biggr\}.
\end{align}

$\pxa$ and $\pza$ are the probabilities that Alice chooses the $X$ and $Z$ basis, respectively. $l_Z$, $l_X$, $u_Z$ and $u_Z$ are defined in \cref{eq:definelz}. $k^2$ is defined in \cref{eq:definek}. $\zeta$ is defined in \cref{eq:definedelta1}.
We do not know $s$ exactly. Thus, we need to replace $a$ with its upper and lower bounds:
\begin{align}
      & \frac{\pxa(s^*-\theta)}{\pza(1-s^*+\theta)} \leq a \leq \frac{\pxa(s^*+\theta)}{\pza(1-s^*-\theta)}.
\end{align}
Note that we must replace \( a \) with its bounds not only in the computation of \( \delta \) but also in the phase-error rate bound in~\cref{NumberofErrorBound}.
Thus,
\begin{align}
   \delta \leq \left(1+\sqrt{\frac{\pxa(s^*+\theta)}{\pza(1-s^*-\theta)}}\right)\frac{\pxa \zeta k^2}{2}    \max\left\{\sqrt{\frac{\pxa u_X + \pza u_Z}{{\pxa}l_Z}},\sqrt{\frac{\pxa u_X + \pza u_Z}{\frac{\pxa(s^*-\theta)}{\pza(1-s^*+\theta)}{\pza}l_X}}\right\}.
\end{align}
\section{\texorpdfstring{Bounding $q_Z$}{}}\label{appendix:qz}
Recall from \cref{sec:zerobound} that we upper bound the number of 0-photon key rounds, $\nkzero$. This bound is obtained by upper bounding the infinity norm of the POVM elements \( \POVMZeq^{(0)} + \POVMZneq^{(0)} \), as defined in \cref{eq:defineqz}, where \( q_Z \) quantifies the fraction of  0-photon key rounds. In this section, we explicitly upper bound \( q_Z \) in terms of the detector imperfection parameters for passive BB84 detection setups.

The POVM elements in the 0-photon subspace corresponding to the $Z$-basis detection outcomes can be written as
\begin{align}
    \begin{split}
    \gmzne^{(0)}+\gmze^{(0)} &=\pza (1-\dxzero)(1-\dxone)\biggl(\dzzero+\dzone-2\dzzero\dzone\biggr)\idd_A\otimes\kb{\text{vac}}\\
     &\leq \pza(1-\dmin)^2\biggl(2\dmax-2\dmin^2\biggr)\idd_A\otimes\kb{\text{vac}}
    \end{split}
  \end{align}
Thus,
\begin{equation}
    q_Z\leq \pza(1-\dmin)^2\biggl(2\dmax-2\dmin^2\biggr)
\end{equation}
\section{Proof for Correlated Imperfect Detectors}\label{appendix:commute}

 We prove the statement \cref{NumberofErrorBound2}, \cref{step2multibound2} and \cref{step3zerobound2} in this appendix.
 The estimation protocol used for this analysis is in \cref{fig:correlated_equivalent}.
\begin{enumerate}[label=\circletext{\arabic*}, leftmargin=*]
\item They perform the filtering process. For the \( i^{\text{th}} \) round, based on the outcomes of the previous \( l_c \) rounds, 
they decide whether to keep or reject the current round. 
If the round is kept, they apply the filter measurement. 
If it is rejected, they perform the full measurement (using the correlated POVM). 
In either case, they determine whether the round results in a click or a no-click outcome. 
This information is then used to decide whether to reject the next rounds. 
They repeat this process for all \( n \) rounds. 

After they complete the filtering process, they perform a QND measurement with $\{\Pi_0,\Pi_1,\Pi_{>1}\}$ on those kept and clicked rounds. They complete the measurement in 0-photon subspace with $\{\vec{F}^{(0,\text{keep})}\}$. 
% At this stage, \( \ngNbcor \) and \( \bd{\tilde{n}_{1,\textbf{keep,click}}} \) are the number of kept and clicked rounds in single-photon and ($>1$)-photon subspaces.
\( \nkzerocor \) is the number of key rounds from the 0-photon subspace that are kept.
We prove the following in \cref{sec:zerobound2}:
    \begin{align}
        \text{Pr}\biggl(\nkzerocor \geq n(q_Z+2\gmzero(n,\epzero))\biggr) \leq \epzero^2\;,
    \end{align}
    where $\gmzero(n,\epzero) = \sqrt{\frac{-\ln(\epzero)}{n}}$.
    \item They then complete the multi-click measurement $\{ F_{\mathrm{mc}}^{(>1,\text{keep})},\idd- F_{\mathrm{mc}}^{(>1,\text{keep})}\}$ on the (clicked, kept, $>1$-photon) rounds $\ngNbcor$. The proof is in \cref{sec:multibound2}:
    \begin{align}
        &\text{Pr}\biggl(\ngNbcor > \frac{-2\log(\eppntb)}{4{\lm}^2}+\frac{\nmctbcor}{\lm}+\frac{\sqrt{(-\log(\eppntb))(-\log(\eppntb)+4\lm\nmctbcor)}}{2{\lm}^2}\biggr) \leq \eppntb^2\;.
    \end{align}
They then complete all remaining measurements on \( (>1) \)-photon signals using some POVM \( \{\vec{G}^{(>1,\text{keep})}\} \). The specific structure of $\{\vec{G}^{(>1,\text{keep})}\}$ is not relevant to the present proof.
    \item They then perform a coarse-grained measurement
    $\{F^{(1,\text{keep})}_{\mathrm{sc}},F^{(1,\text{keep})}_{\mathrm{mc}},F^{(1,\text{keep})}_{\text{basis mismatch}}\}$ on single-photon signals. Note that due to the filter, we no longer have the no-click POVM element here.
    Recall from \cref{eq:definefxcor} and \cref{eq:defineftx},
    \begin{align}
        \begin{split}
        &\fx^{(1)} = \GoneXeq + \GoneXneq; \;\;  \fz^{(1)} = \GoneZeq + \GoneZneq;\\
        &F^{(1)}_{sc} =  \fx^{(1)}+\fz^{(1)}
        \end{split}
    \end{align} 
    Then,
  \begin{align}\label{eq:definefxcor}
    \begin{split}
         F^{(1,\text{keep})}_{\mathrm{sc}} = &\sqrt{\Fc^{(1)}}^+ \big(\GoneXeq + \GoneXneq + \GoneZeq + \GoneZneq\big)\sqrt{\Fc^{(1)}}^+\\
        &=  \sqrt{\Fc^{(1)}}^+ \big(\fx^{(1)} + \fz^{(1)} \big)\sqrt{\Fc^{(1)}}^+\\
        &= \sqrt{\Fc^{(1)}}^+ F^{(1)}_{sc} \sqrt{\Fc^{(1)}}^+
    \end{split}
    \end{align}
    from \cref{lemma:twostep}.
    \item They complete the measurements for single-photon signals. In particular, they first perform another coarse-grained measurement \( \{\ftxcor, \ftzcor\} \), where
    \begin{align}\label{eq:defineftxcor}
        \begin{split}
        &\ftxcor  =\sqrt{\FonescCor}^+\sqrt{\Fcm}^+ \fx \sqrt{\Fcm}^+\sqrt{\FonescCor}^+\\
        &\ftzcor  =\sqrt{\FonescCor}^+\sqrt{\Fcm}^+ \fz \sqrt{\Fcm}^+\sqrt{\FonescCor}^++\idd-\Pi_{\FonescCor}
         \end{split}
    \end{align}
    from \cref{lemma:twostep}.
    Then, upon the outcome \( \ftxcor \), they complete the measurement using \( \{\GtXeqcor, \GtXneqcor\} \). We denote \( \Nxonebcor \) as the number of erroneous outcomes corresponding to \( \GtXneqcor \) from $X$-basis measurements for single-photon signals.
    \begin{remark}
         For proving security using EUR statement, we apply EUR on the state with the result $\ftzcor$ after this stage conditioned on event $\Omega(\errzobs,\Nxobscor,\nxcor,\nkcor,\nmccor,\nkonecor)$, namely \\${\rho_{A^{\nkonecor}B^{\nkonecor}Z_A^{\tilde{n}_{({K,0})}}Z_A^{\tilde{n}_{({K,>1})}}E^nC}}_{\condi(\errzobs,\Nxobscor,\nxcor,\nkcor,\nmccor,\nkonecor)}$.
    \end{remark}
    \item They complete the $Z$-basis measurement for single-photon signals using 
\( \{\GtZeqcor, \GtZneqcor\} \). 
The number of phase errors, \( \Nkonebcor \), is defined as the number of 
\( \GtXneqcor \) outcomes if they had instead completed the measurement using 
\( \{\GtXeqcor, \GtXneqcor\} \). 
\begin{align}
    &\text{Pr}\biggl(
        \Nkonebcor \geq 
        \frac{
            \Nxonebcor 
            + \sqrt{-2 \log(\epazua^2) (\nxonebcor + \nkonebcor)} 
            + \delta (\nxonebcor + \nkonebcor)
        }{a} \notag \\
    &\hspace{30pt}
    + \sqrt{-2 \log(\epazub^2) (\nxonebcor + \nkonebcor)}
    \biggr) \notag \\
    &\leq \epazua^2 + \epazub^2 ,
\end{align}
where \( \nxonebcor \) and \( \nkonebcor \) denote the number of signals measured by 
$X$- and $Z$-basis measurements, respectively, for single-photon signals. The proof is in \cref{sec:NumberofErrorBound2}:
\end{enumerate}
\subsection{\texorpdfstring{Upper bound on the number of phase errors~(\cref{NumberofErrorBound2})}{}}\label{sec:NumberofErrorBound2}
We can use the same analyses in the memoryless case in \cref{sec:phase error estimation} to bound the number of phase errors.
Consider the state $\filteredstate$ after the filtering process. Alice and Bob apply the following measurement on this global state.
\begin{align}
    \begin{split}
    \{&\sqrt{\Fc}^+\Gamma_{X,=}\sqrt{\Fc}^+,\;\sqrt{\Fc}^+\Gamma_{X,\neq}\sqrt{\Fc}^+,\;\sqrt{\Fc}^+\Gamma_{Z,=}\sqrt{\Fc}^+,\;\sqrt{\Fc}^+\Gamma_{Z,\neq}\sqrt{\Fc}^+,\\&\sqrt{\Fc}^+\Gamma_{\mathrm{mc}}\sqrt{\Fc}^+,\;\sqrt{\Fc}^+\Gamma_{\text{other}}\sqrt{\Fc}^+\},
     \end{split}
\end{align}
which has a block-diagonal structure in photon number. Furthermore, we construct the estimation protocol in \cref{fig:correlated_equivalent} with the same break down as in \cref{lemma:NumberofErrorBound}. Therefore, we can directly apply \cref{lemma:NumberofErrorBound} on\\ $\filteredstate$ to obtain a bound on the number of phase errors $\Nkonebcor$. 

The remaining task is to pick $a$ and compute $\delta$. One can explicitly compute \( a \) and \( \delta \) from \( \ftxcor \) and \( \ftzcor \) by writing down POVM elements following steps from \cref{lemma:twostep}. We can also utilize properties of these two POVM elements to simplify the \( a \) and \( \delta \) computation.
We now show that \( \ftxcor \) and \( \ftzcor \) are the same as $\ftx$ and $\ftz$ in the memoryless case if specific POVM elements commute. Recall from \cref{eq:defineftx}, \cref{eq:definefxcor}, and \cref{eq:defineftxcor} when we construct the coarse-grained POVM for single-click, basis-matched POVM, and basis choice POVM:
\begin{align}
    \begin{split}
% &\FonescCoruncor\overset{\cref{eq:definefx}}{=}\fx+\fz;\\
  &\ftxone  \overset{\cref{eq:defineftx}}{=} \sqrt{\FonescCoruncor}^+ \POVMABXcon^{(1)} \sqrt{\FonescCoruncor}^+;\\
    &\FonescCor \overset{\cref{eq:definefxcor}}{=}  \sqrt{ \sqrt{\Fcm}^+ \FonescCoruncor \sqrt{\Fcm}^+}^+;\\[5pt]
    &\ftxcor  \overset{\cref{eq:defineftxcor}}{=}\sqrt{\FonescCor}^+\sqrt{\Fcm}^+ \POVMABXcon^{(1)} \sqrt{\Fcm}^+\sqrt{\FonescCor}^+\\
    &\;\;\;\;\;\;\;\;= \sqrt{ \sqrt{\Fcm}^+ \FonescCoruncor \sqrt{\Fcm}^+}^+\sqrt{\Fcm}^+ \POVMABXcon^{(1)} \sqrt{\Fcm}^+ \sqrt{ \sqrt{\Fcm}^+ \FonescCoruncor \sqrt{\Fcm}^+}^+\;\;,
    \end{split}
\end{align}
where $F^+$ is the pseudo-inverse of $F$.
Notice that if \( \Fcm \) and \( \FonescCoruncor \) commute, then \( \FonescCoruncor \) and \( \sqrt{\Fcm}^+ \) also commute. Under the assumption that  \( \Fcm \) and \( \FonescCoruncor \) commute, $\ftxcor=\ftxone$ as shown below:
\begin{align}
    \begin{split}
    \ftxcor  &= \sqrt{ \sqrt{\Fcm}^+ \FonescCoruncor \sqrt{\Fcm}^+}^+\sqrt{\Fcm}^+ \POVMABXcon^{(1)} \sqrt{\Fcm}^+ \sqrt{ \sqrt{\Fcm}^+ \FonescCoruncor \sqrt{\Fcm}^+}^+\\
     &=\sqrt{ {\Fcm}^+ \FonescCoruncor}^+\sqrt{\Fcm}^+ \POVMABXcon^{(1)} \sqrt{ {\Fcm}^+ \FonescCoruncor}^+\sqrt{\Fcm}^+ \\
    &=\sqrt{ {\Fcm}^+\Fcm \FonescCoruncor}^+ \POVMABXcon^{(1)} \sqrt{ {\Fcm}^+\Fcm \FonescCoruncor}^+ \\
    &=\sqrt{ \Pi_{\Fcm} \FonescCoruncor}^+ \POVMABXcon^{(1)} \sqrt{  \Pi_{\Fcm}\FonescCoruncor}^+ \\
    &=\sqrt{\FonescCoruncor}^+ \POVMABXcon^{(1)} \sqrt{\FonescCoruncor}^+ \\
    &=\ftxone\;\;,
    \end{split}
\end{align}
where $\Pi_{\Fcm}$ is the projector on the support of $\Fcm$.  

Note that if dark counts can be modelled as a post-processing map applied to the outcomes obtained without dark counts, then
\begin{align}
    \begin{split}
    \FonescCoruncor &= \mathscr{P}_{{\mathrm{sc}} \rightarrow {\mathrm{sc}}}\,{\FonescCoruncor}^* 
    + \mathscr{P}_{{\mathrm{nc}} \rightarrow {\mathrm{sc}}}\,{\Gamma_{{\mathrm{nc}}}^{(1)}}^*, \\
    \Fcm &= \FonescCoruncor 
    + \mathscr{P}_{{\mathrm{sc}} \rightarrow {\mathrm{mc}}}\,{\FonescCoruncor}^* 
    + \left( \mathscr{P}_{{\mathrm{nc}} \rightarrow {\mathrm{sc}}} 
    + \mathscr{P}_{{\mathrm{nc}} \rightarrow {\mathrm{mc}}} \right)\,{\Gamma_{\mathrm{nc}}^{(1)}}^*,
    \end{split}
\end{align}
where \( \mathscr{P}_{a \rightarrow b} \) denotes the probability of mapping outcome \( a \) (single click, multi-click, no-click) to outcome \( b \). 
The elements \( {\FonescCoruncor}^* \) and \( {\Gamma_{\mathrm{nc}}^{(1)}}^* \) are the POVM elements in the absence of dark counts. 
Moreover, without dark counts, only single-click and no-click outcomes occur for single-photon signals:
\begin{align}
    {\FonescCoruncor}^* + {\Gamma_{\mathrm{nc}}^{(1)}}^* = \Pi_1 \;
    \implies \quad \left[ {\FonescCoruncor}^*, {\Gamma_{\mathrm{nc}}^{(1)}}^* \right] = 0.
\end{align}
Thus,
\begin{align}
    \begin{split}
    [\Fcm,\FonescCoruncor] &= [\mathscr{P}_{{\mathrm{nc} \rightarrow \mathrm{mc}}}{\FonescCoruncor}^* +(\mathscr{P}_{{\mathrm{nc} \rightarrow \mathrm{sc}}}+\mathscr{P}_{{\mathrm{nc} \rightarrow \mathrm{mc}}}){\Gamma_{\mathrm{nc}}^{(1)}}^*,\mathscr{P}_{{\mathrm{sc} \rightarrow \mathrm{sc}}}{\FonescCoruncor}^*+\mathscr{P}_{{\mathrm{nc} \rightarrow \mathrm{sc}}}{\Gamma_{\mathrm{nc}}^{(1)}}^*]\\
    &= \mathscr{P}_{{\mathrm{sc} \rightarrow \mathrm{mc}}} \mathscr{P}_{{\mathrm{nc} \rightarrow \mathrm{sc}}} [{\FonescCoruncor}^*,{\Gamma_{\mathrm{nc}}^{(1)}}^*] + (\mathscr{P}_{{\mathrm{nc} \rightarrow sc}}+\mathscr{P}_{{nc \rightarrow \mathrm{mc}}})\mathscr{P}_{{\mathrm{sc} \rightarrow \mathrm{sc}}}[{\Gamma_{\mathrm{nc}}^{(1)}}^*,{\FonescCoruncor}^*]\\
    &=0\;\;.
    \end{split}
\end{align}

Thus, one can use the same \( a \) and \( \delta \) from the memoryless case calculated for \( \pn = 1 \) in \cref{appendix:adelta}. 
Using \cref{lemma:NumberofErrorBound} and summing over all $n_{(\text{keep,click})}$, we obtain the following bound:
  \begin{align}
    \begin{split}
        &\text{Pr}\biggl(\Nkonebcor \geq \frac{\Nxonebcor+\sqrt{-2\ln(\epazua^2)(\nxonebcor+\nkonebcor)}+\delta(\nxonebcor+\nkonebcor)}{a}\notag\\+&\sqrt{-2\ln(\epazub^2)(\nxonebcor+\nkonebcor)}\biggr)\\
        &\leq \epazua^2 + \epazub^2.
    \end{split}    
    \end{align}
\subsection{\texorpdfstring{Upper bound on the number of ($>1$)-photon signals (\cref{step2multibound2})}{}}\label{sec:multibound2}
We can use the same analyses in the memoryless case in \cref{sec:phase error estimation} to bound the number of phase errors.
Consider the state $\filteredstate$ after the filtering process. Alice and Bob apply the following measurement on this global state.
\begin{align}
    \begin{split}
    \{&\sqrt{\Fc}^+\Gamma_{X,=}\sqrt{\Fc}^+,\;\sqrt{\Fc}^+\Gamma_{X,\neq}\sqrt{\Fc}^+,\;\sqrt{\Fc}^+\Gamma_{Z,=}\sqrt{\Fc}^+,\;\sqrt{\Fc}^+\Gamma_{Z,\neq}\sqrt{\Fc}^+,\\&\sqrt{\Fc}^+\Gamma_{\mathrm{mc}}\sqrt{\Fc}^+,\;\sqrt{\Fc}^+\Gamma_{\text{other}}\sqrt{\Fc}^+\},
     \end{split}
\end{align}
which has a block-diagonal structure in photon number. Thus, we use \cref{lemma:multibound} to bound the number of multi-photon kept signals $\ngNbcor$.
The missing step is finding the minimum eigenvalue of the multi-click POVM element $F_{mc}^{(>1),\text{keep}}$.
Note the following relation:
\begin{align}
    \begin{split}
    F_{mc}^{(>1,\text{keep})} &= \sqrt{\Fcgm}^+\Gamma_{\mathrm{mc}}^{(>1)}\sqrt{\Fcgm}^+ + P\\ 
                    &\geq \lm\sqrt{\Fcgm}^+\Pi_{>1}\sqrt{\Fcgm}^++P\\
                    & = \lm\Pi_{>1}{\Fcgm}^++P\;\;,
    \end{split}
\end{align} where we can pick any $P$ satisfying $0\leq P \leq \idd$.
We pick $P = \lm\Pi_{{\Fcgm}^{\perp}}$, where $\Pi_{{\Fcgm}^{\perp}}$ is the projector onto the orthogonal
complement of the support of $\Fcgm$, then we obtain the following:
\begin{align}
      F_{mc}^{(>1,\text{keep})} \geq \lm\Pi_{>1}\;\;.
\end{align}
Thus, by applying \cref{lemma:multibound}, replacing $\nmctbcor$ by $\nmcbcor$ (using monotonicity of the bound) and summing over all possible $n_{(\text{keep,click})}$:
\begin{align}
        &\text{Pr}\biggl(\ngNbcor > \frac{\nmcbcor}{\lm}+\frac{\sqrt{(-\log(\eppntb))(-\log(\eppntb)+4\lm\nmcbcor)}}{2{\lm}^2}+\biggl(\frac{-2\log(\eppntb)}{4{\lm}^2}\biggr)\biggr) \leq \eppntb^2\;.
\end{align}
\subsection{\texorpdfstring{Upper bound on the number of 0-photon signals in key generation rounds (\cref{step3zerobound2})}{}}\label{sec:zerobound2}
% We begin by explaining why the method \cref{sec:zerobound} in the memoryless case does not work. Consider the case where we set the filter to be simply the QND measurement followed by click/no-click measurements in each subspace. Then, Alice and Bob completed the 0-photon subspace measurement with some POVM on the rounds clicked and have 0 photons. This bound turns out to be useless because the POVM elements followed from \cref{lemma:twostep} after the click/no-click filter in 0-photon subspace are no longer small. We can not use \cref{lemma:smallPOVM} as in \cref{sec:zerobound} to obtain a useful bound.

We begin by explaining why the method in~\cref{sec:zerobound}, which works in the memoryless case, does not apply here. The POVM elements that arise from~\cref{lemma:twostep} after applying the click/no-click filter in the 0-photon subspace are no longer small. As a result, we cannot use \cref{lemma:zerobound} to obtain a meaningful upper bound.

Therefore, we to estimate the key rounds in the 0-photon subspace before the filtering using Azuma's inequality.

\begin{restatable}{lemma}{azuma}\label{lemma:azuma}[Azuma's inequality]
    Let $\tX_1...\tX_n$ be $n$ random variables. $\rvx$ takes value $a \in \{a_1,...,a_j\}$. Denote $\Xhis{i-1} = \textbf{X}_1 ... \textbf{X}_{i-1}$ as a sequence of random variables. Define random variables
    \begin{align}
        &\rvy := \begin{cases}
            1, &\text{if } \rvx = a_l\\
            0, &\text{otherwise}
        \end{cases} \:\:\:\:,
        % \rvyt := \begin{cases}
        %     1, &\text{if } \rvx = (a_1,b_0)\\
        %     0, &\text{otherwise}
        % \end{cases}\\
    \end{align}
where $a_l$ is the outcome for which we want to estimate the number of occurrences. Given $\epsilon>0$, we have the following relations:
\begin{align}
    \begin{split}
    &\text{Pr}\biggl(\sum_{i=1}^n\rvy - \sum_{i=1}^n\rvE{\rvy|\Xhis{i-1}}\leq -\sqrt{-2\ln(\epsilon^2)n}\biggr)\leq \epsilon^2\\
    &\text{Pr}\biggl(\sum_{i=1}^n\rvy - \sum_{i=1}^n\rvE{\rvy|\Xhis{i-1}}\geq \sqrt{-2\ln(\epsilon^2)n}\biggr) \leq \epsilon^2,
    \end{split}
\end{align}
where the random variable $\rvE{\textbf{A}|\textbf{B}}$ is the conditional expected value of $\textbf{A}$ given $\textbf{B}$. 
\end{restatable}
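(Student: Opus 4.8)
The plan is to recognize the quantity $\sum_{i=1}^n \rvy - \sum_{i=1}^n \rvE{\rvy|\Xhis{i-1}}$ as the terminal value of a martingale with bounded increments, and then run the standard Azuma--Hoeffding argument. Concretely, I would fix the natural filtration $\mathcal{F}_k = \sigma(\textbf{X}_1,\dots,\textbf{X}_k)$ and define the centered increments $D_i := \rvy - \rvE{\rvy|\Xhis{i-1}}$ together with the partial sums $M_k := \sum_{i=1}^k D_i$ (with $M_0 = 0$). Since $\rvE{D_i \mid \mathcal{F}_{i-1}} = \rvE{\rvy \mid \mathcal{F}_{i-1}} - \rvE{\rvy \mid \Xhis{i-1}} = 0$ by construction, the sequence $\{M_k\}$ is a martingale with respect to $\{\mathcal{F}_k\}$, and $M_n$ is exactly the random variable whose two tails we must control.

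The second ingredient is boundedness of the increments. Because $\rvy$ is an indicator of the event $\{\rvx = a_l\}$ it takes values in $\{0,1\}$, while its conditional expectation $\rvE{\rvy|\Xhis{i-1}}$ lies in $[0,1]$; hence $\abs{D_i} \leq 1$ almost surely. With this bound in hand I would invoke the conditional form of Hoeffding's lemma: for any $\lambda > 0$, since $D_i$ has conditional mean zero and $\abs{D_i}\le 1$, one has $\rvE{e^{\lambda D_i}\mid \mathcal{F}_{i-1}} \le e^{\lambda^2/2}$. Iterating this pointwise bound through the tower property of conditional expectation then gives $\rvE{e^{\lambda M_n}} \le e^{n\lambda^2/2}$.

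From here the tail bound follows by the Chernoff method. Markov's inequality yields $\Pr(M_n \ge t) \le e^{-\lambda t + n\lambda^2/2}$ for every $\lambda>0$, and optimizing over $\lambda$ (choosing $\lambda = t/n$) gives $\Pr(M_n \ge t) \le \exp\!\bigl(-t^2/(2n)\bigr)$. Setting $t = \sqrt{-2\ln(\epsilon^2) n}$ makes the right-hand side equal to $\epsilon^2$, which is precisely the second (upper-tail) inequality of the statement. The first (lower-tail) inequality follows by applying the identical argument to $-M_n$, i.e.\ to the increments $-D_i$, which are likewise centered and bounded by $1$.

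I anticipate no genuine obstacle here, as this is the textbook Azuma--Hoeffding inequality and could simply be cited to~\cite{azuma}; the only step needing a little care is the conditional Hoeffding lemma, where one must check that the exponential-moment bound holds uniformly over the conditioning $\mathcal{F}_{i-1}$ rather than merely on average. This is dispatched by applying the usual convexity (tangent-line) argument pointwise to the regular conditional distribution of $D_i$ given $\mathcal{F}_{i-1}$, which is legitimate precisely because the almost-sure bound $\abs{D_i}\le 1$ holds conditionally. Note that the constant in the stated deviation $\sqrt{-2\ln(\epsilon^2)n}$ corresponds to using the crude range $D_i \in [-1,1]$; a factor-of-two improvement in the exponent is available by instead exploiting the sharper interval $D_i \in [-\rvE{\rvy|\Xhis{i-1}},\,1-\rvE{\rvy|\Xhis{i-1}}]$ of length one, but the looser constant is what is needed downstream in \cref{step3zerobound2}.
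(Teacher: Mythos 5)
Your proposal is correct and follows essentially the same route as the paper: both construct the Doob-type martingale with increments $\rvy - \rvE{\rvy|\Xhis{i-1}}$, observe the increments are bounded by $1$ in absolute value, and conclude via Azuma's inequality, the only difference being that the paper cites Azuma as a black box while you unfold its Chernoff/conditional-Hoeffding proof. Your closing remark about the sharper constant available from the length-one increment range is a valid aside but not needed, since the lemma as stated only requires the looser bound.
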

\begin{remark}
    Note that conditional expectation is a random variable. Let $\textbf{A}$ and $\textbf{B}$ be two random variables. One can interpret $\rvE{\textbf{A}|\textbf{B}}$ as a random variable taking the value $\text{E}(\textbf{A}|\textbf{B}=b)$, which is the expected value of \textbf{A} given \textbf{B}=b, with probability $\text{Pr}(\textbf{B}=\text{b})$.
\end{remark}

Let us define the random variable \( \rvx \) as the outcome of the filtering process, the QND measurement, and the 0-photon measurement for all \( n \) rounds, \( 1 \leq i \leq n \). That is, the outcome at stage \circletext{2} in \cref{fig:correlated_equivalent}. 
Let \( \textbf{K}_i \) be the indicator variable that equals 1 if the result is a $Z$-basis click, 
Alice and Bob's bases match, and Bob decides to keep the round. 
\begin{align}
     \textbf{K}_i = \begin{cases}
        1,\;\; \rvx = ({K},0,\text{keep})\\
        0, \;\;\text{otherwise}
    \end{cases}
\end{align}
Using \cref{lemma:azuma}, we have the following bound on $\nkzero$:
\begin{align}
    &\text{Pr}\biggl(\sum_{i=1}^n  \textbf{K}_i - \sum_{i=1}^n\rvE{\textbf{K}_i|\Xhis{i-1}}\geq \sqrt{-2\ln(\epzero^2)n}\biggr) \leq \epzero^2\;.
\end{align}
We can relate the conditional expectation to $q_Z$.:
\begin{align}
    \begin{split}
    \rvE{\textbf{K}_i|\Xhis{i-1}} &= \textbf{Pr}(\textbf{K}_i|\Xhis{i-1})\\
        &\leq \Trace\biggl(\big(F_{(Z,=)}^{(0,\text{keep})}+F_{(Z,\neq)}^{(0,\text{keep})}\big)\;\Pi_0\sqrt{\Fc}\rho_{|\Xhis{i-1}}\sqrt{\Fc}\Pi_0\biggr)\\
        &= \Trace\biggl({\sqrt{\Fc^{(0)}}^+}\big(\GzeroZeq+\GzeroZneq\big){\sqrt{\Fc^{(0)}}^+}\;\Pi_0\sqrt{\Fc}\rho_{|\Xhis{i-1}}\sqrt{\Fc}\Pi_0\biggr)\\
        &= \Trace\biggl({\sqrt{\Fc^{(0)}}^+}\big(\GzeroZeq+\GzeroZneq\big){\sqrt{\Fc^{(0)}}^+}\;\sqrt{\Fc^{(0)}}\rho_{|\Xhis{i-1}}\sqrt{\Fc^{(0)}}\biggr)\\
        &= \Trace\biggl(\Pi_0\big(\GzeroZeq+\GzeroZneq\big)\Pi_0\rho_{|\Xhis{i-1}}\biggr)\\
        &= \Trace\biggl(\big(\GzeroZeq+\GzeroZneq\big)\rho_{|\Xhis{i-1}}\biggr)\\
        &\leq q_Z\;,
    \end{split}
\end{align}
where $\rho_{|\Xhis{i-1}}$ is the state of $i^{\text{th}}$ round conditioned on the previous outcomes $\Xhis{i-1}$. The first inequality comes from the fact that Alice and Bob may reject this round.
Thus, we obtain a bound on $\nkzerocor$:
\begin{align}
    &\text{Pr}\biggl(\nkzerocor \geq n\biggl(q_Z+2\gmzero(n,\epzero)\biggr)\biggr) \leq \epzero^2,
\end{align}
where $\gmzero(n,\epzero) = \sqrt{\frac{-\ln(\epzero)}{n}}$. We pick up a factor of 2 in the finite-size correction term compared to the memoryless case.
\section{Recipe for computing key rates}
We provide a recipe for computing key rates both in the memoryless case and the correlated case with $\pn=1$.
\subsection{The memoryless case}\label{appendix:recipememoryless}
We begin with the joint POVM for Alice and Bob \( \{\POVMXeq,\; \POVMXneq,\; \POVMZeq,\; \POVMZneq,\; \Gamma_{{\mathrm{mc}}},\; \Gamma_{\mathrm{nc}},\; \Gamma_{\text{basis mismatch}}\} \). Here, \( X \) and \( Z \) denote the basis choices of Alice and Bob, respectively, while \( = \) and \( \neq \) indicate whether the outcome is an error. \( \Gamma_{\mathrm{mc}} \) corresponds to the multi-click POVM. $\Gamma_{\mathrm{nc}}$ is the no-click POVM. We do not use $\Gamma_{\text{basis mismatch}}$ in this work.
We assume that the POVM elements are block-diagonal in photon number. Thus, for each POVM element, we can write
\begin{align}
    \Gamma_i = \bigoplus_{\pnm=0}^{\infty}\Gamma^{(\pnm)}_i,
\end{align}
where $\Gamma^{(\pnm)}_i$ lives in the $\pnm$-photon subspace.
\begin{enumerate}
    \item Compute 
    \begin{align}
               \lm \leq \lm(\Pigone\Gmc\Pigone)
    \end{align}
    We provide a way to compute $\lm$ in \cref{eq:lmformula}.
    \item Compute
    \begin{align}
        q_Z \geq \infnorm{\POVMZeq^{(0)} + \POVMZneq^{(0)}}
    \end{align}
    \item Consider a new POVM $\{\ftxone,\ftzone\}$, where
    \begin{align}
        \begin{split}
            &F_{\mathrm{sc}}^{(1)} = \GoneXeq+ \GoneXneq + \GoneZeq+ \GoneZneq\\
            &\ftxone = \sqrt{F_{\mathrm{sc}}^{(1)}}^+\big(\GoneXeq+ \GoneXneq\big)\sqrt{F_{\mathrm{sc}}^{(1)}}^+\\
             &\ftzone = \sqrt{F_{\mathrm{sc}}^{(1)}}^+\big(\GoneZeq+ \GoneZneq\big)\sqrt{F_{\mathrm{sc}}^{(1)}}^+
        \end{split}
    \end{align}
    \item Consider another POVM $\{\GtXeqone,\GtXneqone\}$
       \begin{align}
        \begin{split}
            \GtXneqone = \sqrt{\ftxone}^+\sqrt{F_{sc}^{(1)}}^+\POVMXneq^{(1)}\sqrt{F_{sc}^{(1)}}^+\sqrt{\ftxone}^+
        \end{split}
    \end{align}
    \item Pick any convenient $a$ such that $\delta$ is minimize, where
    \begin{equation}
         \delta = \infnormlong{\sqrt{\ftxone} \GtXeqone \sqrt{\ftxone} - a\sqrt{\ftzone} \GtXneqone \sqrt{\ftzone}}
    \end{equation}
    We provide a way to pick $a$ and compute $q_Z$ and $\delta$ in \cref{appendix:adelta} when imperfection parameters ($\eta_i,d_i$) are not fully known.
    \item Use key rate formula \cref{eq:keyratedecoy} to compute key rate.
\end{enumerate}
\subsection{The correlated case}\label{appendix:recipecorrelated}
Similar to the memoryless case, we begin with the joint POVM for Alice and Bob. If the previous $l_c$ rounds are no-click outcomes, the joint POVM is the uncorrelated one: \( \{\POVMXeq,\; \POVMXneq,\; \POVMZeq,\; \POVMZneq,\; \Gamma_{\mathrm{mc}},\; \Gamma_{\mathrm{nc}},\; \Gamma_{\text{basis mismatch}}\} \). 
\begin{enumerate}
    \item The first and second steps are the same as the memoryless case in \cref{appendix:recipememoryless}.
    \item Consider a new POVM $\{\ftxcor,\ftzcor\}$  
    \begin{align}
        \begin{split}
    &\Fc^{(1)} = \idd_2-\Gamma_{\mathrm{nc}}^{(1)}\\
    &F_{\mathrm{sc}}^{(1)} = \GoneXeq+ \GoneXneq + \GoneZeq+ \GoneZneq\\
 &\FonescCor =  \sqrt{ \sqrt{\Fcm}^+ \FonescCoruncor \sqrt{\Fcm}^+}^+\\
    &\ftxcor  =\sqrt{\FonescCor}^+\sqrt{\Fcm}^+ \POVMABXcon^{(1)} \sqrt{\Fcm}^+\sqrt{\FonescCor}^+\\
    &\ftzcor  =\sqrt{\FonescCor}^+\sqrt{\Fcm}^+ \POVMABZcon^{(1)} \sqrt{\Fcm}^+\sqrt{\FonescCor}^+
        \end{split}
    \end{align}
    \item Consider another POVM $\{\GtXeqcor,\GtXneqcor\}$
       \begin{align}
        \begin{split}
            \GtXneqcor = \sqrt{\ftxcor}^+\sqrt{\FonescCor}^+\sqrt{\Fc^{(1)}}^+\POVMXneq^{(1)}\sqrt{\Fc^{(1)}}^+\sqrt{\FonescCor}^+\sqrt{\ftxcor}^+
        \end{split}
    \end{align}
      \item Pick any convenient $a$ such that $\delta$ is minimize, where
    \begin{equation}
         \delta = \infnormlong{\sqrt{\ftxcor} \GtXneqcor \sqrt{\ftxcor} - a\sqrt{\ftzcor} \GtXneqcor \sqrt{\ftzcor}}
    \end{equation}
    We show in \cref{sec:NumberofErrorBound2} that if one models dark counts as a post-processing map on the outcomes without dark counts, one can use the same $a$ and $\delta$ from the memoryless case.
    \item Use key rate formula \cref{eq:keyratecorrelated} to compute key rate.
\end{enumerate}

\section{Technical Statements}\label{technical}
We denote $S_{\circ}(Q)$ to be the set of normalized states on $Q$, where $Q$ is a quantum register.
\azuma*
\begin{proof}
    We begin by constructing the martingale we will use in this proof:
    \begin{align}
        &\textbf{M}_0 := 0 \:\:\:\:\:\:\:\:
        \rvm := \textbf{M}_{i-1} + \rvy - \rvE{\rvy|\Xhis{i-1}}.
    \end{align}
It is not hard to show that $\rvm$ is a martingale with respect to the sequence $\textbf{X}_1 ... \textbf{X}_{i-1}$:
\begin{align}
     \begin{split}
    \text{E}(\rvm|\xhis{i-1}) = &\text{E}(\textbf{M}_{i-1}|\xhis{i-1}) + \text{E}(\rvy|\xhis{i-1}) - \text{E}(\rvE{\rvy|\Xhis{i-1}}|\xhis{i-1})\\
    = & m_{i-1} + \text{E}(\rvy|\xhis{i-1}) - \text{E}(\rvE{\rvy|\xhis{i-1}})\\
    = & m_{i-1} + \text{E}(\rvy|\xhis{i-1}) - \text{E}({\rvy|\xhis{i-1}})\\
    = & m_{i-1} \\
    \implies \text{E}(\rvm|\Xhis{i-1})& = \textbf{M}_{i-1}.
     \end{split}
\end{align}
 Note that $|\rvm-\textbf{M}_{i-1}|\leq1$. Thus, we can apply Azuma's inequality \cite{azuma} on $\rvmn$. It gives the following bounds.
\begin{align}
    \begin{split}
    &\text{Pr}(\sum_{i=1}^n\rvy - \sum_{i=1}^n\rvE{\rvy|\Xhis{i-1}}\leq -\sqrt{-2\ln(\epsilon^2)n})\leq \epsilon^2\\
    &\text{Pr}(\sum_{i=1}^n\rvy - \sum_{i=1}^n\rvE{\rvy|\Xhis{i-1}}\geq \sqrt{-2\ln(\epsilon^2)n}) \leq \epsilon^2
     \end{split}
\end{align}
\end{proof}
\operatorazuma*
\begin{proof}\label{proof:operatorazuma}
     Let $\tX_1...\tX_{n}$ be $n$ random variables where $\rvx$ records the outcomes (a, b, other) of measurement $\{\Gamma_{\text{a}},\Gamma_{\text{b}},...\}$. We define the following random variables:
     \begin{align}
    &\rvy := \begin{cases}
        1, &\text{if } \rvx = {\text{a}}\\
        0, &\text{otherwise}
    \end{cases}, \:\:\:\:
    \rvyt := \begin{cases}
        1, &\text{if } \rvx = {\text{b}}\\
        0, &\text{otherwise}
    \end{cases}
\end{align}
Notice that by construction, $\bm{n_{\textbf{a}}} = \sum_{i=1}^n\rvy$ and $\bm{n_{\textbf{b}}} = \sum_{i=1}^n\rvyt$.
By \cref{lemma:azuma}, we have the following relations:
\begin{align}
    &\text{Pr}\biggl(\bd{n_{\textbf{a}}} - \sum_{i=1}^{n}\rvE{\rvy|\Xhis{i-1}}\leq -\sqrt{-2\ln(\epazua^2)n}\biggr)\leq \epazua^2\label{proof:resultazuma1}\\
    &\text{Pr}\biggl(\bd{n_{\textbf{b}}} - \sum_{i=1}^{n}\rvE{\rvyt|\Xhis{i-1}}\geq \sqrt{-2\ln(\epazub^2)n}\biggr)\leq \epazub^2\label{proof:resultazuma2}
\end{align}
We can then obtain the following operator inequality since:
\begin{align}
      \begin{split}
    \infnorm{\Gamma_{\text{a}}-a\Gamma_{\text{b}}}\leq\delta
    \implies \Gamma_{\text{a}} \geq a\Gamma_{\text{b}} - \delta\idd\label{proof:operator}
      \end{split}
\end{align}
By construction, since $\rvy$ and $\rvyt$ are binary, we can connect the conditional expectations to probabilities:
\begin{align}
        \rvE{\rvy|\Xhis{i-1}} = \text{Pr}(\rvy|\Xhis{i-1}), \;\;\; \rvE{\rvyt|\Xhis{i-1}} = \text{Pr}(\rvyt|\Xhis{i-1})
\end{align}
We connect these probabilities through operator inequality~\cref{proof:operator}:
\begin{align}
      \begin{split}
        \text{Pr}(\rvy|\Xhis{i-1}) =& \Trace(\Gamma_{\text{a}}\rho_{|\Xhis{i-1}})\\
     \geq& \Trace(a\Gamma_{\text{b}} \rho_{|\Xhis{i-1}}) - \delta\\
     =& a\text{Pr}(\rvyt|\Xhis{i-1}) - \delta \;\;\;\;\; \forall \rho_{|\Xhis{i-1}},
      \end{split}
\end{align}
where $\rho_{|\Xhis{i-1}}$ is the state at the $i^{\text{th}}$ round conditioned on previous outcomes. Thus,
\begin{equation}
    \sum_{n}\rvE{\rvy|\Xhis{i-1}} \geq a\sum_{n}\rvE{\rvyt|\Xhis{i-1}}-\delta n
\end{equation}
Combine with~\cref{proof:resultazuma1}~\cref{proof:resultazuma2} we have the following bound:
\begin{align}
      \begin{split}
    &\text{Pr}\biggl(\bd{n_{\textbf{a}}} - a\sum_{i=1}^{n}\rvE{\rvyt|\Xhis{i-1}}-\delta n\leq -\sqrt{-2\ln(\epazua^2)n}\biggr)\leq \epazua^2\\
    &\text{Pr}\biggl(\bd{n_{\textbf{b}}} - \sum_{i=1}^{n}\rvE{\rvyt|\Xhis{i-1}}\geq  \sqrt{-2\ln(\epazub^2)n}\biggr) \leq \epazub^2
      \end{split}
\end{align}
Using the union bound, we combine these two inequalities:
\begin{equation}
   \text{Pr}\biggl(\bd{n_{\textbf{b}}} \geq \frac{\bd{n_{\textbf{a}}}+\sqrt{-2\ln(\epazua^2)n}+\delta n}{a}+\sqrt{-2\ln(\epazub^2)n}\biggr)\leq \epazua^2 + \epazub^2  
\end{equation}
\end{proof}

\section{Combining Bounds}\label{appendix:combine bounds}
\subsection{Combining bounds for memoryless case}
In this section, we show how we combine bounds together using the union bound to obtain \cref{eq:mainresult} in the single-photon source. 

From \cref{step3zerobound}, \cref{step3multibound} and \cref{eq:uncorrelated nk}, we have the following:
\begin{align}
    \begin{split}
        &\text{Pr}(\nkoneb < \nkb - \nkzero-\ngNb)=0\\
        &\text{Pr}\biggl(\nkzero \geq n\biggl(q_Z+\gmzero(n,\epzero)\biggr)\biggr) \leq \epzero^2\\
        &\text{Pr}\biggl(\ngNb > \frac{-2\log\eppntb}{4{\lm}^2}+\frac{\nmcb}{\lm}+\frac{\sqrt{(-\log\eppntb)(-\log\eppntb+4\lm\nmcb)}}{2{\lm}^2}\biggr) \leq \eppntb^2.
    \end{split}
\end{align}
Then,
\begin{align}
    \begin{split}
    &\text{Pr}\biggl(\nkoneb \leq \K(\nkb,\nmcb)\biggr)\\[15pt]
    =&\text{Pr}\biggl(\nkoneb \leq \nkb - n\biggl(q_Z+\gmzero(n,\epzero)\biggr)-\frac{-2\log\eppntb}{4{\lm}^2}-\frac{\nmcb}{\lm}-\frac{\sqrt{(-\log\eppntb)(-\log\eppntb+4\lm\nmcb)}}{2{\lm}^2}\biggr)\\[15pt]
    =&\text{Pr}\biggl(\nkoneb \leq \nkb - n\biggl(q_Z+\gmzero(n,\epzero)\biggr)-\frac{-2\log\eppntb}{4{\lm}^2}-\frac{\nmcb}{\lm}-\frac{\sqrt{(-\log\eppntb)(-\log\eppntb+4\lm\nmcb)}}{2{\lm}^2}\;\land\;\\
    & \nkzero \geq n\biggl(q_Z+\gmzero(n,\epzero)\biggr)\biggr) \\
    &+\text{Pr}\biggl(\nkoneb \leq \nkb - n\biggl(q_Z+\gmzero(n,\epzero)\biggr)-\frac{-2\log\eppntb}{4{\lm}^2}-\frac{\nmcb}{\lm}-\frac{\sqrt{(-\log\eppntb)(-\log\eppntb+4\lm\nmcb)}}{2{\lm}^2}\;\land\;\\
    & \nkzero < n\biggl(q_Z+\gmzero(n,\epzero)\biggr)\biggr)\\[15pt]
    \leq& \text{Pr}\biggl(\nkzero \geq n\biggl(q_Z+\gmzero(n,\epzero)\biggr)\biggr) \\
    &+\text{Pr}\biggl(\nkoneb \leq \nkb - \nkzero-\frac{-2\log\eppntb}{4{\lm}^2}-\frac{\nmcb}{\lm}-\frac{\sqrt{(-\log\eppntb)(-\log\eppntb+4\lm\nmcb)}}{2{\lm}^2}\biggr)\\[15pt]
    \leq&\epzero^2 + \text{Pr}\biggl(\nkoneb \leq \nkb - \nkzero-\frac{-2\log\eppntb}{4{\lm}^2}-\frac{\nmcb}{\lm}-\frac{\sqrt{(-\log\eppntb)(-\log\eppntb+4\lm\nmcb)}}{2{\lm}^2}\;\land\;\\
    & \ngNb > \frac{-2\log\eppntb}{4{\lm}^2}+\frac{\nmcb}{\lm}+\frac{\sqrt{(-\log\eppntb)(-\log\eppntb+4\lm\nmcb)}}{2{\lm}^2}\biggr) \\
    &+ \text{Pr}\biggl(\nkoneb \leq \nkb - \nkzero-\frac{-2\log\eppntb}{4{\lm}^2}-\frac{\nmcb}{\lm}-\frac{\sqrt{(-\log\eppntb)(-\log\eppntb+4\lm\nmcb)}}{2{\lm}^2}\;\land\;\\
    & \ngNb \leq \frac{-2\log\eppntb}{4{\lm}^2}+\frac{\nmcb}{\lm}+\frac{\sqrt{(-\log\eppntb)(-\log\eppntb+4\lm\nmcb)}}{2{\lm}^2}\biggr) \\[15pt]
    \leq& \epzero^2 + \text{Pr}\biggl(\ngNb > \frac{-2\log\eppntb}{4{\lm}^2}+\frac{\nmcb}{\lm}+\frac{\sqrt{(-\log\eppntb)(-\log\eppntb+4\lm\nmcb)}}{2{\lm}^2}\biggr)\\
    &+\text{Pr}(\nkoneb < \nkb - \nkzero-\ngNb)\\[15pt]
    \leq& \epzero^2 + \eppntb^2.
    \end{split}
\end{align}
From \cref{eq:numberoferror}, we have:
\begin{align}
    \begin{split}
    &\text{Pr}\biggl(\errks \geq \frac{\Nxobsb}{a\nkoneb}+\sqrt{\frac{-2\ln(\epazua^2)}{a}\biggl(\frac{\nxb}{\nkoneb^2}+\frac{1}{\nkoneb}\biggr)}\\&\phantom{\text{Pr}\biggl(\errks\geq}+\frac{\delta}{a}(\frac{\nxb}{\nkoneb}+1)+\sqrt{{-2\ln(\epazub^2)}\biggl(\frac{\nxb}{\nkoneb^2}+\frac{1}{\nkoneb}\biggr)}\biggr)\\
    \leq& \epazua^2 + \epazub^2.
\end{split}
\end{align}
Thus,
\begin{align}
    \begin{split}
    \text{Pr}&\biggl(\errks \geq \frac{\Nxobsb}{a\nkoneb}+\sqrt{\frac{-2\ln(\epazua^2)}{a}\biggl(\frac{\nxb}{\nkoneb^2}+\frac{1}{\nkoneb}\biggr)}+\frac{\delta}{a}(\frac{\nxb}{\nkoneb}+1)\\&+\sqrt{{-2\ln(\epazub^2)}\biggl(\frac{\nxb}{\nkoneb^2}+\frac{1}{\nkoneb}\biggr)}\;\lor \; \nkoneb \leq \K(\nkb,\nmcb)\biggr)\\
    \leq& \epazua^2 + \epazub^2+\epzero^2+\eppntb^2.
\end{split}
\end{align}
This implies
\begin{align}
    \begin{split}
    \text{Pr}&\biggl(\errks < \frac{\Nxobsb}{a\nkoneb}+\sqrt{\frac{-2\ln(\epazua^2)}{a}\biggl(\frac{\nxb}{\nkoneb^2}+\frac{1}{\nkoneb}\biggr)}+\frac{\delta}{a}(\frac{\nxb}{\nkoneb}+1)\\&+\sqrt{{-2\ln(\epazub^2)}\biggl(\frac{\nxb}{\nkoneb^2}+\frac{1}{\nkoneb}\biggr)}\;\land \; \nkoneb > \K(\nkb,\nmcb)\biggr)\\
    \leq& 1-(\epazua^2 + \epazub^2+\epzero^2+\eppntb^2).
\end{split}
\end{align}
Using the monotonicity of the bound on the phase error rate, we have
\begin{align}
    \begin{split}
    \text{Pr}&\biggl(\errks <  \frac{\Nxobsb}{a \K(\nkb,\nmcb)}+\sqrt{\frac{-2\ln(\epazua^2)}{a}\biggl(\frac{\nxb}{\K(\nkb,\nmcb)^2}+\frac{1}{\K(\nkb,\nmcb)}\biggr)}\\
    &+\frac{\delta}{a}\biggl(\frac{\nxb}{\K(\nkb,\nmcb)}+1\biggr)+\sqrt{{-2\ln(\epazub^2)}\biggl(\frac{\nxb}{\K(\nkb,\nmcb)^2}+\frac{1}{\K(\nkb,\nmcb)}\biggr)}\\
    & \land \; \nkoneb > \K(\nkb,\nmcb)\biggr)\\
    \leq& 1-(\epazua^2 + \epazub^2+\epzero^2+\eppntb^2).
\end{split}
\end{align}
Thus at the end we obtain the bound 
\begin{align}
    \begin{split}
    \text{Pr}&\biggl(\errks \geq  \bimp(\nxb,\nkb,\Nxobsb,\nmcb)\lor \; \nkoneb \leq \K(\nkb,\nmcb)\biggr)\\
    \leq& \epazua^2 + \epazub^2+\epzero^2+\eppntb^2.
\end{split}
\end{align}
\subsection{Combining bounds for source imperfections}
For source imperfections, we have the following two bounds:
\begin{align}
    \begin{split}
    & \Prdep\left({\bd{{e}_{{{X,1}}}^{\textbf{key}}}} > \Edep\big(\bm{\vec{\tilde{n}}_{{X},1}},\nksb\big)\right) \leq \epssrc^2+\epsATb^2+\epsATc^2\\
     &\;\;\;\text{Pr}\bigg((\bm{\vec{\tilde{n}}_{{X},1}},\nksb)\notin  \mathcal{S}(\bm{\vec{\tilde{n}}_{{X}}},\nkb)\bigg) \leq \epsilon^2_{\text{bounds}}.
    \end{split}
\end{align}
Then, 
\begin{align}
    \begin{split}
        &\text{Pr}\bigg({\bd{{e}_{{{X,1}}}^{\textbf{key}}}} > \Edep\big(\bm{\vec{\tilde{n}}_{{X},1}},\nksb\big)\; \lor \;(\bm{\vec{\tilde{n}}_{{X},1}},\nksb)\notin  \mathcal{S}(\bm{\vec{\tilde{n}}_{{X}}},\nkb)\bigg) \leq  \epssrc^2+\epsATb^2+\epsATc^2+\epsilon^2_{\text{bounds}}\\
         &\text{Pr}\bigg({\bd{{e}_{{{X,1}}}^{\textbf{key}}}} \leq \Edep\big(\bm{\vec{\tilde{n}}_{{X},1}},\nksb\big)\; \land \;(\bm{\vec{\tilde{n}}_{{X},1}},\nksb)\in  \mathcal{S}(\bm{\vec{\tilde{n}}_{{X}}},\nkb)\bigg) \geq 1- \epssrc^2-\epsATb^2-\epsATc^2-\epsilon^2_{\text{bounds}}
    \end{split}
\end{align}
since $\Pr(A\lor B)\leq\Pr(A)+\Pr(B)$ and $\Pr(\neg(A\lor B)) = \Pr(\neg A \land \neg B)$.
Thus, we have the following:
\begin{align}
    \begin{split}
    &\Prdep\left({\bd{{e}_{{{X,1}}}^{\textbf{key}}}} \leq \max\limits_{({\vec{\tilde{n}}_{{X},1}},\nks)\in \mathcal{S}(\bm{\vec{\tilde{n}}_{{X}}},\nkb)}\;\Edep\big({\vec{\tilde{n}}_{{X},1}},\nks\big) \land \nksb >\K(\nkb,\nmcb)\right) \\ &\geq 1-\epssrc^2-\epsATb^2-\epsATc^2 - \epsilon^2_{\text{bounds}}\\[10pt]
    &\Prdep\left({\bd{{e}_{{{X,1}}}^{\textbf{key}}}} > \max\limits_{({\vec{\tilde{n}}_{{X},1}},\nks)\in \mathcal{S}(\bm{\vec{\tilde{n}}_{{X}}},\nkb)}\;\Edep\big({\vec{\tilde{n}}_{{X},1}},\nks\big)\lor\nksb \leq \K(\nkb,\nmcb)\right) \\& \leq \epssrc^2+\epsATb^2+\epsATc^2 + \epsilon^2_{\text{bounds}}.
\end{split}
\end{align}
\end{document}